\newtheorem{theorem}{Theorem}
\numberwithin{theorem}{section}
\newtheorem{corollary}[theorem]{Corollary}
\newtheorem{proposition}[theorem]{Proposition}
\newtheorem{conjecture}[theorem]{Conjecture}
\newtheorem{claim}[theorem]{Claim}
\newtheorem{lemma}[theorem]{Lemma}
\newtheorem{observation}[theorem]{Observation}
\numberwithin{lemma}{section}
\theoremstyle{definition}
\newtheorem{definition}{Definition}
\numberwithin{definition}{section}
\renewcommand{\a}{\mathfrak{a}}
\renewcommand{\b}{\mathfrak{b}}
\renewcommand{\c}{\mathfrak{c}}
\renewcommand{\d}{\mathfrak{d}}
\newcommand{\e}{\mathfrak{e}}
\newcommand{\f}{\mathfrak{f}}
\newcommand{\g}{\mathfrak{g}}
\newcommand{\h}{\mathfrak{h}}
\newcommand{\ii}{\mathfrak{i}}
\renewcommand{\j}{\mathfrak{j}}
\newcommand{\RR}{\mathbb{R}}			
\newcommand{\CC}{\mathbb{C}}			
\newcommand{\NN}{\mathbb{N}}			
\newcommand{\ZZ}{\mathbb{Z}}			
\newcommand{\Pauli}{\mathcal{P}}
\newcommand{\T}{\mathrm{T}}
\renewcommand{\AA}{\mathscr{A}}
\newcommand{\SU}{\mathrm{SU}}			
\renewcommand{\SL}{\mathrm{SL}}			
\newcommand{\GL}{\mathrm{GL}}			
\newcommand{\SO}{\mathrm{SO}}			
\newcommand{\U}{\mathrm{U}}			
\newcommand{\CNOT}{\mathrm{CNOT}}
\newcommand{\CZ}{\mathrm{CZ}}
\newcommand{\norm}[1]
	{\left\lVert#1\right\rVert}
\newcommand{\words}[1]{\langle #1 \rangle}	
\newcommand{\tr}{\mathrm{tr}}			
\newcommand{\opnorm}[1]%
	{\| #1 \|_{\mathrm{op}}}				
\newcommand{\odd}{\mathrm{odd}}
\newcommand{\even}{\mathrm{even}}
\newcommand{\IQP}{\mathrm{IQP}}
\newcommand{\CCC}{\mathrm{CCC}}
\newcommand{\TP}{\T_4 + \mathcal{P}}
\newcommand{\SampBPP}
	{\mathsf{SampBPP}}
\newcommand{\SampBQP}
	{\mathsf{SampBQP}}
\newcommand{\PostBPP}
	{\mathsf{PostBPP}}
\newcommand{\GadBQP}
	{\mathsf{GadBQP}}
\newcommand{\PostGadBQP}
	{\mathsf{PostGadBQP}}
\renewcommand{\S}{\mathcal{S}}          		
\newcommand{\anc}{\mathrm{anc}}			%
\newcommand{\post}{\mathrm{post}}			
\newcommand{\gad}{\mathrm{gad}}			
\renewcommand{\Pr}{\mathrm{Pr}}			
\renewcommand{\A}{\mathbf{A}}
\newcommand{\B}{\mathbf{B}}
\newcommand{\IsElementary}{\mathtt{IsElementary}}
\newcommand{\IsDiscrete}{\mathtt{IsDiscrete}}
\newcommand{\IsLoxodromic}{\mathtt{IsLoxodromic}}
\newcommand{\YES}{\mathtt{YES}}
\newcommand{\NO}{\mathtt{NO}}
\newcommand{\IDK}{\mathtt{IDK}}
\renewcommand{\C}{\mathbf{C}}			
\newcommand{\Q}{\mathbf{Q}}			
\renewcommand{\G}{\mathbf{G}}			
\newcommand{\defeq}{\coloneqq}			
\newcommand{\eref}[1]{Eq.~(\ref{#1})}		
\newcommand{\sref}[1]{Section~\ref{#1}}	
\newcommand{\appendref}[1]{Appendix~\ref{#1}}	
\newcommand{\thmref}[1]{Theorem~\ref{#1}}	
\newcommand{\algref}[1]{Algorithm~\ref{#1}}	
\newcommand{\propref}[1]%
{Proposition~\ref{#1}}					
\newcommand{\defref}[1]%
{Definition~\ref{#1}}					
\newcommand{\lemref}[1]{Lemma~\ref{#1}}	
\newcommand{\corref}[1]{Corollary~\ref{#1}}	
\newcommand{\conjref}[1]{Conjecture~\ref{#1}}	
\newcommand{\figref}[1]{Figure~\ref{#1}}		
\newcommand{\tabref}[1]{Table~\ref{#1}}		
\newcommand{\claimref}[1]{Claim~\ref{#1}}		
\title[A Criterion for Post-Selected Quantum Advantage]{A Criterion for Post-Selected Quantum Advantage}
\author[Chaitanya Karamchedu]{Chaitanya Karamchedu$^{\dagger}$}
\address{Department of Computer Science, University of Maryland}
\email{cdkaram@umd.edu}
\author[Matthew Fox]{Matthew Fox$^{\dagger}$}
\address{Department of Physics, University of Colorado Boulder}
\email{matthew.fox@colorado.edu}
\author[Daniel Gottesman]{Daniel Gottesman}
\address{Department of Computer Science, University of Maryland}
\email{dgottesm@umd.edu}
\thanks{$^\dagger$These authors contributed equally.}
\begin{document}

\begin{abstract}
Assuming the polynomial hierarchy is infinite, we prove a sufficient condition for determining if uniform and polynomial size quantum circuits over a non-universal gate set are not efficiently classically simulable in the weak multiplicative sense. Our criterion exploits the fact that subgroups of $\SL(2;\CC)$ are essentially either discrete or dense in $\SL(2;\CC)$. Using our criterion, we give a new proof that both instantaneous quantum polynomial (IQP) circuits and conjugated Clifford circuits (CCCs) afford a quantum advantage. We also prove that both commuting CCCs and CCCs over various fragments of the Clifford group afford a quantum advantage, which settles two questions of Bouland, Fitzsimons, and Koh. Our results imply that circuits over just $(U^\dagger \otimes U^\dagger) \CZ (U \otimes U)$ afford a quantum advantage for almost all $U \in \U(2)$.
\end{abstract}

\maketitle

\section{Introduction}

Quantum computers promise to outperform classical computers in certain computational tasks, such as simulating quantum systems \cite{Fey81} or solving abelian hidden subgroup problems like factoring integers \cite{CvD10, Sho99}. In computational complexity, this promise is largely encapsulated in the difficult conjecture that $\BPP \neq \BQP$ \cite{AB09, NC11}. However, experimentally implementing most quantum algorithms, especially those that decide suspected $\BQP \backslash \BPP$ languages, is very difficult. For example, the largest number ever factored using Shor's algorithm is $21$, and this occurred over a decade ago \cite{MLLL+12}.

Given both the theoretical challenges in formally proving a quantum advantage, as well as the experimental challenges in practically realizing fault-tolerant universal quantum computation, there is considerable interest in ``restricted models'' of quantum computation. These are models of quantum computation that are non-universal by design, but which are nevertheless able to perform computational tasks that no efficient classical computer can, at least under standard complexity assumptions.

Notable examples of restricted models include: non-adaptive linear optics \cite{AA11}, constant-depth quantum circuits \cite{TD04, BGK18}, instantaneous quantum polynomial (IQP) circuits \cite{BJS11}, conjugated Clifford circuits (CCCs) \cite{BFK18}, and quantum circuits with one clean qubit \cite{KL98, FKM+18}. Interestingly, under the standard complexity assumption that the polynomial hierarchy is infinite \cite{AB09}, all of these restricted models can perform \emph{sampling} tasks that no efficient classical computer can, and this is despite the fact, but in no way in contradiction to it, that some of these restricted models can probably not solve \emph{decision} tasks outside of $\BPP$.

From a theoretical perspective, restricted models are interesting because they seem to straddle the purported boundary between $\BPP$ and $\BQP$, analogous to how $\NP$-intermediate problems straddle the purported boundary between $\P$ and $\NP$ \cite{Lad75}. A deeper understanding of the computational complexity of restricted models, therefore, might inform the relationship between $\BPP$ and $\BQP$. 

From an experimental perspective, restricted models are particularly useful because they make the task of demonstrating a quantum advantage distinct from realizing a universal quantum computer. This follows because many restricted models that can perform ``hard'' sampling tasks are substantially easier to implement fault-tolerantly than a full-blown universal quantum computer \cite{AA11}.

There are many ways to formulate a restricted model, but arguably the most straightforward is to consider quantum circuits over a \emph{non}-universal gate set $\S$, such as
$$
\S_{\IQP} \defeq \big\{H T H, (H \otimes H) \CZ (H \otimes H)\big\},
$$ 
which underlies a subset of IQP circuits, and, for any single-qubit unitary $U$,
$$
\S_{\CCC}(U) \defeq \big\{U^\dagger H U, U^\dagger S U, (U^\dagger \otimes U^\dagger) \CNOT (U \otimes U)\big\},
$$ 
which underlies all $U$-CCCs. 

Non-universal gate sets are interesting for a number of reasons. On one hand, they are ``rare'' in the measure-theoretic sense that almost every set of qudit gates is universal \cite{Llo95}. On the other hand, non-universal gate sets are somewhat easy to contrive (particularly in the context of $U$-CCCs where the conjugating $U$ can be any single-qubit unitary) and while it is possible to decide if a gate set is universal \cite{Iva06}, doing so is generally difficult \cite{SK17}. What is especially interesting, though, is that for many non-universal gate sets $\S$, there is a \emph{post-selected} circuit over $\S$ that can decide a $\PP$-complete language. By a standard complexity argument that is detailed in our \propref{prop:hardness}, this implies that the original, \emph{non}-post-selected circuit over $\S$ can perform a sampling task that no efficient classical computer can, unless the polynomial hierarchy collapses. Given this, we ask: under standard complexity assumptions, is there a simple ``catch all'' (or at the very least a ``catch many'') criterion for understanding when post-selected circuits over a non-universal gate set can decide a $\PP$-complete language?

In this paper, we answer this question in the affirmative by exhibiting an elementary algorithm that does just this. Our techniques exploit the fact that the subgroups of $\SL(2;\CC)$ are stringently constrained. In particular, we show that for any non-universal gate set $\S$, if a finite collection $\Gamma$ of invertible, single-qubit, post-selection gadgets over $\S$ generates a group $\words{\Gamma}$, then $\words{\Gamma}$ is essentially either discrete or dense in a closed subgroup of $\SL(2;\CC)$. Thus, roughly speaking, if $\words{\Gamma}$ is \emph{non}-discrete, then when augmented with post-selection, circuits over $\S$ can simulate a universal gate set and therefore simulate any \emph{universal} quantum computer. Consequently, to determine if post-selected circuits over $\S$ can decide a $\PP$-complete language, it more or less suffices to determine if $\words{\Gamma}$ is discrete or not. Modulo several group-theoretic details, this is all that our algorithm does.

Altogether, we establish the following straightforward procedure for determining if uniform and polynomial size quantum circuits over a non-universal gate set $\S$ can perform a computational task that no efficient classical computer can:

\begin{enumerate}
    \item Using circuits over $\S$, build a set of invertible, single-qubit, post-selection gadgets $\Gamma$. 
    \item Check if $\Gamma$ is closed under inverses. If not, add more gadgets until it is.
    \item Using our criterion, perform a set of simple computations to check if $\words{\Gamma}$ is a dense subgroup of $\SL(2; \CC)$. If yes, then efficient classical computers cannot simulate uniform and polynomial size quantum circuits over $\S$ unless the polynomial hierarchy collapses.
\end{enumerate}

Unfortunately, there is no obvious systematic way to identify ``good'' gadgets in step 1 (in the sense that they will return ``yes'' in step 3), nor is there an obvious systematic way to build inverse gadgets in step 2. Nevertheless, if both ``good'' and inverse gadgets exist, then our experience suggests a na\"ive computer search is usually sufficient to identify them, especially for small gadget sizes.

Using this procedure, we answer two questions raised by Bouland et al. in \cite{BFK18} by proving that efficient classical computers can neither simulate \emph{commuting} conjugated Clifford circuits, which are in some sense the ``intersection'' of CCCs and IQP circuits, nor can they simulate CCCs when the interstitial Clifford circuit is restricted to one of several ``fragments'' of the Clifford group (as classified in \cite{GS22}), such as Clifford circuits made entirely of $Z$ and $\CZ$ gates, or even just $\CZ$ gates. We remark that the topological idea of finding sufficient conditions for $\Gamma$ to generate a dense subgroup of $\SL(2;\CC)$ (as opposed to, say, $\SU(2)$) was inspired by the techniques in \cite{AAEL08} and \cite{BMZ16}.

This paper is structured as follows. In \sref{sec:prelims}, we review several standard notions such as efficient classical and quantum computers, as well as their post-selected counterparts. There, we also introduce most of our notation and central definitions. In \sref{sec:mainresults}, we state and prove our criterion for post-selected quantum advantage. In \sref{sec:grouptheory}, we introduce several results to do with the group theory of $\SL(2;\CC)$, which together afford a practical algorithm that implements our criterion. In \sref{sec:applications}, we demonstrate the utility of our criterion by proving that IQP circuits, CCCs, commuting CCCs, and CCCs over several fragments of the Clifford group can all perform a sampling task that no efficient classical computer can, unless the polynomial hierarchy collapses. Finally, in \sref{sec:discussion}, we discuss several open questions, some of which serve as a means of improving our criterion.

\section{Preliminaries}
\label{sec:prelims}

In this paper, $\{0,1\}^n$ is the set of all (binary) strings with length $n$, $\{0,1\}^*$ is the set of all strings with finite length, $x_i$ is the $i$th bit of the string $x$, $x.y$ is the concatenation of the strings $x$ and $y$, $x_{[k\ :\ \ell]}$ is the substring $x_{k}\dots x_\ell$ of the string $x$, $[k]$ is the set $\{1,2,\dots, k\}$, $\NN$ is the set of positive integers, $\ZZ$ is the set of integers, $\RR$ is the set of real numbers, $(a,b)$ is the open interval $\{x \in \RR \mid a < x < b\}$, $[a,b]$ is the closed interval $\{x \in \RR \mid a \leq x \leq b\}$, $\CC$ is the set of complex numbers, $\words{\Gamma}$ is the group generated by a set $\Gamma$ (where it is implicitly assumed that $\Gamma$ is closed under inverses), $I_k$ is the $2^k \times 2^k$ identity matrix, and $\Sigma_k\P$ is the $k$th level of the polynomial hierarchy, $\PH$. All other notation is defined along the way, except for standard complexity classes like $\P$ and $\PP$ and standard Lie groups like $\SO(n)$ and $\SL(n; \CC)$. Finally, every topological statement like ``such-and-such is dense in $\SL(2;\CC)$'' is meant with respect to the operator norm topology on $\SL(2;\CC)$.

\subsection{Notions of Simulation}

Let $\A$ and $\B$ be classical or quantum algorithms that, for all $n \in \NN$, output $y \in \{0,1\}^{f(n)}$ on input $x \in \{0,1\}^n$ with probabilities $\Pr[\A(x) = y]$ and $\Pr[\B(x) = y]$, respectively. Here, $f : \NN \rightarrow \NN$ characterizes the length of the strings that $\A$ and $\B$ output on a given input size.

We begin by defining a particular type of simulation, known as a \emph{weak multiplicative simulation}, where the word ``weak'' refers to the fact that the simulation of $\B$ by $\A$ is only required to simulate the output distribution of $\B$ on a given input, as opposed to the much stronger notion where $\A$ is required to output the probability that $\B$ outputs a particular string on a given input.

\begin{definition}
We say \emph{$\A$ simulates $\B$ to within multiplicative error $\epsilon \geq 0$} if and only if (iff) for all $n \in \NN$, all $x \in \{0,1\}^n$, and all $y \in \{0,1\}^{f(n)}$, 
$$
\frac{1}{1 + \epsilon} \Pr[\B(x) = y] \leq \Pr[\A(x) = y] \leq (1 + \epsilon) \Pr[\B(x) = y].
$$
\end{definition}

In other words, $\A$ simulates $\B$ to within multiplicative error $\epsilon$ iff on every valuation (and hence in the worst case), $\A$ samples the output distribution of $\B$ to within $\epsilon$. While this is the notion of simulation we employ in this paper, we admit that it is rather restrictive. For example, if $\B$ has two likely events with probability $1/2 - 1/2^n$ and two unlikely events with probability $1/2^n$, insisting that $\A$ simulates $\B$ to within small multiplicative error means $\A$ must approximate the two unlikely events essentially just as well as it approximates the two likely events. However, it should be difficult to actually differentiate between the correct distribution underlying $\B$ and the distribution where the two likely events have probability $1/2$ each.

Incidentally, from an experimental perspective, a better notion of simulation is that of a \emph{weak additive simulation}. In this case, it is merely required that \emph{on the average} (as opposed to on every valuation) $\A$ samples the output distribution of $\B$ to within some error $\epsilon$. This notion is better experimentally because the threshold theorem for quantum fault tolerance only guarantees the robustness of quantum computation up to \emph{additive} error between the target and actualized output distributions \cite{AB97}.

Evidently, if $\A$ simulates $\B$ to within multiplicative error $\epsilon$, then $\A$ simulates $\B$ to within additive error $\epsilon$ as well. However, the converse is not generally true because $\A$ can sample the output distribution of $\B$ to within error $\epsilon$ \emph{on the average} but not on every valuation \cite{NLD+22}. Nevertheless, if $\epsilon = 0$, then $\Pr[\A(x) = y] = \Pr[\B(x) = y]$ for all $n \in \NN$, all $x \in \{0,1\}^n$, and all $y \in \{0,1\}^{f(n)}$. In this case, the weak additive and multiplicative senses entail the same thing, so we say that \emph{$\A$ exactly simulates $\B$}

We now discuss a variety of algorithmic models, starting with efficient classical computers.

\subsection{Classical Computers}

On account of the extended Church-Turing thesis \cite{AB09}, an \emph{efficient classical computer} $\C$ is a probabilistic and polynomial time Turing machine. For every $\C$, there is a polynomial $f_\C : \NN \rightarrow \NN$ that specifies the \emph{output size of $\C$}, i.e., the length of the output strings of $\C$ on a given input size. For all $n \in \NN$, the probability $\C$ outputs $y \in \{0,1\}^{f_\C(n)}$ on input $x \in \{0,1\}^n$ is $\Pr[\C(x) = y]$, where the probability is over the internal randomness of $\C$.

\subsection{Post-Selected Classical Computers}

We now define the notion of an ``efficient post-selected classical computer''.

\begin{definition}
An \emph{efficient post-selected classical computer} $\mathbf{C}_\post$ is a tuple $(\C, \post)$, where $\C$ is an efficient classical computer and $\post : \NN \rightarrow \{0,1\}^*$ is a total and polynomial time computable function that specifies a post-selection string on a given input size. For all $n \in \NN$ and all $x \in \{0,1\}^n$, $\C$ and $\post$ satisfy $f_{\C_\post}(n) \defeq f_\C(n) - |\post(n)| > 0$ and
$$
\Pr\big[\C(x)_{[f_{\C_\post}(n) + 1\ :\ f_{\C}(n)]} = \post(n)\big] = \sum_{z \in \{0,1\}^{f_{\C_\post}(n)}} \Pr\left[\C(x) = z.\post(n)\right] \neq 0.
$$
The former condition ensures that the \emph{output size of $\C_\post$, $f_{\C_\post}$,} is always positive, and the latter condition ensures that the probability that $\C$ outputs a string whose last $|\post(n)|$ bits equal $\post(n)$ is nonzero. This way, for all $n \in \NN$, the probability $\C_{\post}$ outputs $y \in \{0,1\}^{f_{\C_\post}(n)}$ on input $x \in \{0,1\}^n$ is well-defined:
\begin{align*}
\Pr\left[\C_\post(x) = y \right] &\defeq \Pr\big[\C(x)_{[1\ :\ f_{\C_\post}(n)]} = y \mid \C(x)_{[f_{\C_\post}(n) + 1\ :\ f_{\C}(n)]} = \post(n)\big]\\
&= \frac{\Pr\left[\C(x) = y.\post(n)\right]}{\Pr\big[\C(x)_{[f_{\C_\post}(n) + 1\ :\ f_\C(n)]} = \post(n)\big]}.
\end{align*}
Notice, this is the conditional probability that the first $f_{\C_\post}(n)$ bits of $\C(x)$ equal $y$, given that the last $|\post(n)|$ bits of $\C(x)$ equal $\post(n)$.
\end{definition}

The set of languages that efficient post-selected classical computers can decide is often called $\PostBPP$ \cite{BJS11}, but this class is also known as $\BPP_{\text{path}}$ \cite{HHT97}. 
\begin{definition}
The class $\PostBPP$ consists of all languages $L$ for which there is an efficient post-selected classical computer $\C_\post$ and $\delta \in (0,1/2)$ such that for all $x \in \{0,1\}^*$, $\Pr[\C_\post(x)_1 = L(x)] \geq 1/2 + \delta,$ where $L(\cdot)$ is the indicator function of $L$.
\end{definition}

Crucially, $\PostBPP$ is actually independent of $\delta$ because efficient post-selected classical computers can compute the majority function, and hence they can amplify the acceptance probabilities \cite{BJS11}. This implies that if $L \in \PostBPP$, then \emph{for all} $\delta \in (0, 1/2)$, there is an efficient post-selected classical computer $\C_\post$ that decides $L$ to within bounded error $1/2 + \delta$.

\subsection{Quantum Computers}

We now recall several notions to do with the quantum circuit model.

\begin{definition}
\label{def:gateset}
A \emph{gate set} $\S$ is a finite subset of $\bigcup_{\ell \in [k]}\U(2^\ell)$ for some fixed $k > 1$ that contains at least one entangling gate.\footnote{If $\S$ does not contain an entangling gate, then uniform and polynomial size circuits over $\S$ with pure state inputs are efficiently classically simulable, so there is no hope for a quantum advantage \cite{JL03}.} We say $\S$ is \emph{universal} iff there exists $\ell \in [k]$ for which $\words{\S \cap \U(2^\ell)}$ is dense in $\U(2^\ell)$.\footnote{Interestingly, a universal gate set $\S$ can approximate all gates \emph{efficiently} as a function of accuracy, even if $\S$ is not closed under inverses, due to the \emph{inverse-free} Solovay-Kitaev theorem \cite{BG21}.} Otherwise, $\S$ is \emph{non-universal}.
\end{definition}

Note, whenever we say ``gate set'' in this paper, we implicitly mean a ``not necessarily universal gate set''.

Given a gate set, one can build circuits over it.

\begin{definition}
An \emph{$n$-qubit quantum circuit $Q_n$ over a gate set $\S$} is an operator in $\U(2^n)$ that admits the product decomposition
$$
Q_n = U_{d_n} \dots U_1.
$$
Here, each $U_1, \dots, U_{d_n} \in \U(2^n)$ is a tensor product of operators in $\S \cup \{I_1\}$. The total number of operators in $\S$ that make up $Q_n$ is the \emph{size of $Q_n$} and $d_n$ is the \emph{depth of $Q_n$}.
\end{definition}

We now define what we mean by an ``efficient quantum computer''.

\begin{definition}
An \emph{efficient quantum computer $\Q$ over a gate set $\S$} is a triple $(Q, \S, \anc)$, where $\anc : \NN \rightarrow \{0,1\}^*$ is a total and polynomial time computable function whose purpose is to specify an ancilla string on a given input size and $Q = (Q_n)_{n \in \NN}$ is a uniform family of polynomial size, $(n + |\anc(n)|)$-qubit quantum circuits $Q_n$ over $\S$. For all $n \in \NN$, the \emph{output size of $\Q$} is the polynomial $f_\Q(n) \defeq n + |\anc(n)|$ and the probability that $\Q$ outputs $y \in \{0,1\}^{f_\Q(n)}$ on input $x \in \{0,1\}^n$ is $\Pr\left[\Q(x) = y\right] \defeq \left|\bra{y}Q_{n}\ket{x} \otimes \ket{\anc(n)}\right|^2.$
\end{definition}

\subsection{Post-Selected Quantum Computers}

We now introduce the notion of an ``efficient post-selected quantum computer''.

\begin{definition}
An \emph{efficient post-selected quantum computer $\Q_\post$ over a gate set $\S$} is a 4-tuple $(Q,\S,\anc, \post)$, where $\Q = (Q,\S,\anc)$ is an efficient quantum computer and $\post : \NN \rightarrow \{0,1\}^*$ is a total and polynomial time computable function whose purpose is to specify a post-selection string on a given input size. For all $n \in \NN$ and all $x \in \{0,1\}^n$, $\Q$ and $\post$ satisfy $f_{\Q_\post}(n) \defeq f_\Q(n) - |\post(n)| > 0$ and
$$
\Pr\big[\Q(x)_{[f_{\Q_\post}(n) + 1\ :\ f_\Q(n)]} =  \post(n)\big] = \sum_{z \in \{0,1\}^{f_{\Q_\post}(n)}} \Pr[\Q(x) = z.\post(n)] \neq 0.
$$
The former condition ensures that the \emph{output size of $\Q_\post$, $f_{\Q_\post}$,} is always positive, and the latter condition ensures that the probability that $\Q$ outputs a string whose last $|\post(n)|$ bits equal $\post(n)$ is nonzero. This way, the probability the last $|\post(n)|$ registers of $\Q(x)$ have $\ket{\post(n)}$ as the state is non-zero, which is necessary for the probability that $\Q_{\post}$ outputs $y \in \{0,1\}^{f_{\Q_\post}(n)}$ on input $x \in \{0,1\}^n$ to be well-defined:
\begin{align*}
\Pr\left[\Q_\post(x) = y \right] &\defeq \Pr\big[\Q(x)_{[1\ :\ f_{\Q_\post}(n)]} = y \mid \Q(x)_{[f_{\Q_\post}(n) + 1\ :\ f_{\Q}(n)]} = \post(n)\big]\\
&= \frac{\Pr\left[\Q(x) = y.\post(n)\right]}{\Pr\big[\Q(x)_{[f_{\Q_\post}(n) + 1\ :\ f_{\Q}(n)]} = \post(n)\big]}.
\end{align*}
Notice, this is the conditional probability that the upper $f_{\Q_\post}(n)$ registers of $\Q(x)$ have $\ket{y}$ as the state, given that the lower $|\post(n)|$ registers of $\Q(x)$ have $\ket{\post(n)}$ as the state.
\end{definition}

The set of languages that efficient post-selected quantum computers over a universal gate set $\S$ can decide is usually called $\PostBQP$ \cite{Aar05}. We extend this definition as follows, which allows for non-universal $\S$.
\begin{definition}
Let $\S$ be a gate set. The class $\PostBQP(\S)$ consists of all languages $L$ for which there is an efficient post-selected quantum computer $\Q_\post$ over $\S$ and $\delta \in (0,1/2)$ such that for all $x \in \{0,1\}^*$, $\Pr[\Q_\post(x)_1 = L(x)] \geq 1/2 + \delta.$
\end{definition}

As already remarked, if $\S$ is universal, then $\PostBQP(\S) = \PostBQP$. In this case, $\PostBQP$ is independent of $\delta$ because efficient post-selected quantum computers over a universal gate set can compute the majority function \cite{Aar05}. Therefore, if $L \in \PostBQP$, then \emph{for all} $\delta \in (0,1/2)$, there is an efficient post-selected quantum computer $\Q_\post$ that decides $L$ to within bounded error $1/2 + \delta$. However, for non-universal $\S$, $\PostBQP(\S)$ is not necessarily independent of $\delta$ in this sense, because $\S$ may be sufficiently constrained that no circuit over $\S$ can approximate the majority function in any useful sense. Therefore, a priori, if $L \in \PostBQP(\S)$ for some non-universal $\S$, then there merely \emph{exists} $\delta \in (0,1/2)$ and an efficient post-selected quantum computer $\Q_\post$ that decides $L$ to within bounded error $1/2 + \delta$. 

\subsection{Postselection Gadgets and Gadget Quantum Computers}

We now define ``post-selection gadgets'', which give rise to particular post-selected quantum computers.
\begin{definition}
Let $\S$ be a gate set. A \emph{$j$-to-$k$ post-selection gadget over $\S$} (or, if $\S$ is contextually clear, a \emph{$j$-to-$k$ post-selection gadget}, or just a \emph{$j$-to-$k$ gadget}) is a map $\g_{j, k} : \CC^{2^j} \rightarrow \CC^{2^j}$ that acts on a $k$-qubit system as follows:
\begin{enumerate}[(i)]
\item Introduce a set $A$ of $j - k$ ancillae in the state $\ket{a_1 \dots a_{j - k}}_A$, where each $a_i \in \{0,1\}$.
\item Apply a $j$-qubit circuit $Q(\g_{j,k}) : \CC^{2^{j}} \rightarrow \CC^{2^{j}}$ over $\S$ to both the system and ancillae.
\item Post-select on a set $B$ of $j - k$ qubits being in the state $\ket{b_1 \dots b_{j - k}}_B$, where each $b_i \in \{0,1\}$.
\end{enumerate}
For example, the following circuit fragment is a $4$-to-$1$ post-selection gadget $\g_{4,1}$ over $\S$:
\newline
\begin{center}
\begin{quantikz}
\lstick{single-qubit input} & \gate[4]{\quad \text{$Q(\g_{4,1}) : \CC^{16} \rightarrow \CC^{16}$} \quad} & \qw \rstick{\bra{b_1}} \\
\lstick{\ket{a_1}} & \ghost{S} & \qw \rstick{\bra{b_2}} \\
\lstick{\ket{a_2}} & \ghost{S} & \qw \rstick{\bra{b_3}}\\
\lstick{\ket{a_3}} & \ghost{S} & \qw \rstick{single-qubit output}
\end{quantikz}
\end{center}
\end{definition}

Given a $j$-to-$k$ post-selection gadget $\g_{j, k}$, we define its \emph{action} $\AA(\g_{j,k})$ as the $2^k \times 2^k$ matrix 
$$
\AA(\g_{j, k}) \defeq \prescript{}{B}{\bra{b_1 \dots b_{j - k}}}\, Q(\g_{j,k}) \ket{a_1 \dots a_{j - k}}_A.
$$
If $\det \AA(\g_{j,k}) \neq 0$, then the \emph{normalized action} of $\g_{j,k}$ is the unit-determinant $2^k \times 2^k$ matrix
$$
\tilde{\AA}(\g_{j,k}) \defeq \frac{\AA(\g_{j,k})}{(\det \AA(\g_{j,k}))^{2^{-k}}}.
$$
Finally, the set $\gad_k(\S)$ consists of the normalized actions of all $j$-to-$k$ post-selection gadgets over $\S$ with non-zero determinant, i.e.,
$$
\gad_k(\S) \defeq \bigcup_{j \in \NN} \big\{\tilde{\AA}(\g_{j,k}) \mid \text{$\g_{j,k}$ is a $j$-to-$k$ post-selection gadget with $\det \AA(\g_{j,k}) \neq 0$}\big\}.
$$

\begin{observation}
For every $k \in \NN$ and every gate set $\S$, $I_k \in \gad_k(\S)$ and $\gad_k(\S) \subset \SL(2^k; \CC)$. Therefore, if $\Gamma \subset \gad_k(\S)$ and $\words{\Gamma}$ is closed under inverses, then $\words{\Gamma}$ is a subgroup of $\SL(2^k;\CC)$.
\end{observation}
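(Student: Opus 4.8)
My plan is to verify the three claims in the statement in turn, each of which reduces to unwinding the definitions of a gadget, its action, and its normalized action; I do not expect a genuine obstacle, only one bookkeeping subtlety noted at the end.

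\textbf{Step 1: $I_k \in \gad_k(\S)$.} I would exhibit an explicit $k$-to-$k$ gadget $\g_{k,k}$ realizing $I_k$. Take $j = k$, which is legitimate since $k \geq 1$ gives $j \in \NN$; introduce no ancillae in step (i) and post-select on nothing in step (iii); and let $Q(\g_{k,k}) = I_1^{\otimes k}$, which is a valid depth-one, size-zero $k$-qubit circuit over any gate set $\S$ because each tensor factor is the permitted identity $I_1$. Then $\AA(\g_{k,k}) = I_k$, whose determinant is $1 \neq 0$, so $\g_{k,k}$ contributes to $\gad_k(\S)$, and its normalized action is $\tilde{\AA}(\g_{k,k}) = I_k/1^{2^{-k}} = I_k$.

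\textbf{Step 2: $\gad_k(\S) \subset \SL(2^k;\CC)$.} Here I would simply compute a determinant. An arbitrary element of $\gad_k(\S)$ has the form $\tilde{\AA}(\g_{j,k}) = \AA(\g_{j,k})/c$ for some $j$-to-$k$ gadget $\g_{j,k}$ with $\det\AA(\g_{j,k}) \neq 0$, where $c$ is a fixed $2^k$-th root of $\det\AA(\g_{j,k})$. Since $\AA(\g_{j,k})$ is $2^k \times 2^k$, scaling it by $1/c$ scales its determinant by $c^{-2^k}$, so $\det\tilde{\AA}(\g_{j,k}) = \det\AA(\g_{j,k})/c^{2^k} = 1$. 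Hence every element of $\gad_k(\S)$ is a $2^k \times 2^k$ complex matrix of unit determinant, i.e.\ lies in $\SL(2^k;\CC)$.

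\textbf{Step 3: the subgroup conclusion.} The group $\SL(2^k;\CC)$ is closed under matrix multiplication and inversion and contains $I_k$, and by Steps 1 and 2 we have $\Gamma \subset \gad_k(\S) \subset \SL(2^k;\CC)$. Consequently $\words{\Gamma}$, being built from elements of $\SL(2^k;\CC)$, is a subset of $\SL(2^k;\CC)$ that contains $I_k$ and is closed under products; once we add the standing hypothesis that it is closed under inverses (which in particular holds whenever $\Gamma$ itself is), it meets the subgroup criterion, so $\words{\Gamma}$ is a subgroup of $\SL(2^k;\CC)$. The only point worth flagging is that $(\det\AA(\g_{j,k}))^{2^{-k}}$ is multivalued over $\CC$, so a branch must be chosen to define $\tilde{\AA}(\g_{j,k})$; but the unit-determinant conclusion of Step 2 holds for every branch, since by construction $c^{2^k} = \det\AA(\g_{j,k})$, so nothing in the argument depends on the convention.
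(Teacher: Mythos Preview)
Your proposal is correct and complete. The paper does not supply a proof of this observation at all---it is stated and left as self-evident from the definitions---so your unwinding of the gadget, action, and normalized-action definitions is exactly the kind of routine verification the paper implicitly leaves to the reader, and your flagged branch-choice subtlety for the $2^k$-th root is a nice point that the paper does not mention.
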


In practice, proving that $\words{\Gamma}$ is closed under inverses is difficult. This is because while it is easy to contrive gadgets, it is not always easy to contrive their ``inverse''.
\begin{definition}
\label{def:inversegadget}
Let $\g = \g_{j,k}$ be a $j$-to-$k$ gadget over a gate set $\S$ such that $\det \AA(\g) \neq 0$. We call a $j'$-to-$k$ gadget $\g^{-1} = \g_{j',k}$ over $\S$ an \emph{inverse gadget of $\g$} iff $\AA(\g)^{-1} \in \words{\AA(\g), \AA(\g^{-1})}$.
\end{definition}

A means of improving our results is to show that for any $j$-to-$k$ post-selection gadget $\g = \g_{j,k}$ over a gate set $\S$ such that $\det \AA(\g) \neq 0$, there exists an inverse gadget of $\g$ over $\S$. If this is true, then for any finite set $\Gamma \subset \gad_k(\S)$, there exists another finite set $\Gamma' \subset \gad_k(\S)$ such that $\Gamma \subseteq \Gamma'$ and $\words{\Gamma'}$ is a subgroup of $\SL(2;\CC)$. We leave this as an open question (see \conjref{conj:gadgetconj}).

We now define ``gadget quantum circuits'', which are essentially quantum circuits defined over $\S \cup \Gamma$ for some gate set $\S$ and some finite collection of gadgets $\Gamma \subset \bigcup_{k \in \NN}\gad_k(\S)$.

\begin{definition}
Let $\Gamma \subset \bigcup_{\ell \in [k]} \SL(2^\ell;\CC)$ be a finite subset for some fixed $k > 1$. An \emph{$n$-qubit gadget quantum circuit $Q_n$ over $\Gamma$} is an operator in $\SL(2^n;\CC)$ that admits the product decomposition
$$
Q_n = \omega_{d_n} \dots \omega_1,
$$
where each $\omega_1, \dots, \omega_{d_n} \in \SL(2^n;\CC)$ is a tensor product of operators in $\Gamma \cup \{I_1\}$. The total number of operators in $\Gamma$ that make up $Q_n$ is the \emph{size of $Q_n$} and $d_n$ is the \emph{depth of $Q_n$}. 
\end{definition}

Gadget quantum circuits naturally define a ``gadget quantum computer''.
\begin{definition}
An \emph{efficient gadget quantum computer $\G$ over a gate set $\S$} is a 4-tuple $(G, \S, \Gamma, \anc)$, where $\Gamma \subset \bigcup_{k \in \NN} \gad_k(\S)$ is a finite set, $\anc : \NN \rightarrow \{0,1\}^*$ is a total and polynomial time computable function, and $Q = (Q_n)_{n \in \NN}$ is a uniform family of polynomial size $(n + |\anc(n)|)$-qubit gadget quantum circuits $Q_n$ over $\S \cup \Gamma$. For all $n \in \NN$, the \emph{output size of $\Q_\gad$} is the polynomial $f_{\Q_\gad}(n) \defeq n + |\anc(n)|$ and the probability that $\Q_\gad$ outputs $y \in \{0,1\}^{f_{\Q_\gad}(n)}$ on input $x \in \{0,1\}^n$ is $\Pr[\Q_\gad(x) = y] \defeq \left|\bra{y}Q_{n}\ket{x} \otimes \ket{\anc(n)}\right|^2.$
\end{definition}

\subsection{Post-Selected Gadget Quantum Computers}

Finally, we introduce the notion of a ``post-selected gadget quantum computer'', which is to a gadget quantum computer what a post-selected quantum computer is to a quantum computer.

\begin{definition}
An \emph{efficient post-selected gadget quantum computer $\G_{\post}$ over a gate set $\S$} is a 5-tuple $(G,\S,\Gamma, \anc, \post)$, where $\G = (G,\S, \Gamma,\anc)$ is an efficient gadget quantum computer and $\post : \NN \rightarrow \{0,1\}^*$ is a total and polynomial time computable function whose purpose is to specify a post-selection string on a given input size. For all $n \in \NN$ and all $x \in \{0,1\}^n$, $\G$ and $\post$ satisfy $f_{\G_{\post}}(n) \defeq f_{\G}(n) - |\post(n)| > 0$ and
$$
\Pr\big[{\G(x)}_{[f_{\G_\post}(n) + 1\ :\ f_{\G}(n)]} = \post(n)\big] = \sum_{z \in \{0,1\}^{f_{\G_\post}(n)}} \Pr[\G(x) = z.\post(n)] \neq 0.
$$
The former condition ensures that the \emph{output size of $\G_\post$, $f_{\G_\post}$,} is always positive, and the latter condition ensures that the probability that $\G$ outputs a string whose last $|\post(n)|$ bits equal $\post(n)$ is nonzero. This way, the probability the last $|\post(n)|$ registers of $\G(x)$ have $\ket{\post(n)}$ as the state is non-zero, which is necessary for the probability that $\G_\post$ outputs $y \in \{0,1\}^{f_{\G_\post}(n)}$ on input $x \in \{0,1\}^n$ to be well-defined:
\begin{align*}
\Pr\left[\G_\post(x) = y \right] &\defeq \Pr\big[\G(x)_{[1\ :\ f_{\G_\post}(n)]} = y \mid \G(x)_{[f_{\G_\post}(n) + 1\ :\ f_{\G}(n)]} = \post(n)\big]\\
&= \frac{\Pr\left[\G(x) = y.\post(n)\right]}{\Pr\big[\G(x)_{[f_{\G_\post}(n) + 1\ :\ f_{\G}(n)]} = \post(n)\big]}.
\end{align*}
Notice, this is the conditional probability that the upper $f_{\G_\post}(n)$ registers of $\G(x)$ have $\ket{y}$ as the state, given that the lower $|\post(n)|$ registers of $\G(x)$ have $\ket{\post(n)}$ as the state.
\end{definition}

We now introduce the complexity class $\GadBQP(\S)$, which characterizes the languages decidable by efficient post-selected gadget quantum computers over the gate set $\S$.
\begin{definition}
Let $\S$ be a gate set. The class $\GadBQP(\S)$ consists of all languages $L$ for which there is an efficient post-selected gadget quantum computer $\G_\post$ over $\S$ and $\delta \in (0,1/2)$ such that for all $x \in \{0,1\}^*$, $\Pr[\G_\post(x)_1 = L(x)] \geq 1/2 + \delta.$
\end{definition}

It is plain that for every gate set $\S$, $\PostBQP(\S) \subseteq \GadBQP(\S) \subseteq \PostBQP.$ Thus, if $\S$ is universal, then $\PostBQP(\S) = \GadBQP(\S) = \PostBQP$. In this case, $\GadBQP(\S)$ is independent of $\delta$ for the same reason that $\PostBQP$ is. However, for non-universal $\S$, $\GadBQP(\S)$ is not necessarily independent of $\delta$, because $\S$ may be sufficiently constrained that no gadget quantum computer over $\S$ can approximate the majority function in any useful sense. Therefore, a priori, if $L \in \GadBQP(\S)$ for some non-universal $\S$, then there merely \emph{exists} $\delta \in (0, 1/2)$ and an efficient post-selected gadget quantum computer $\G_\post$ that decides $L$ to within bounded error $1/2 + \delta$.

To better understand the relationship between $\PostBQP(\S)$ and $\GadBQP(\S)$, observe that every efficient gadget quantum computer is an efficient post-selected quantum computer over the same gate set (simply substitute every gadget for its corresponding element of $\Gamma$). Therefore, every efficient \emph{post-selected} gadget quantum computer is an efficient post-selected quantum computer. This observation implies the following proposition.

\begin{proposition}
For every gate set $\S$, $\PostBQP(\S) = \GadBQP(\S) \subseteq \PostBQP$.
\label{prop:mainprop}
\end{proposition}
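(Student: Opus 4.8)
The plan is to prove the two inclusions $\GadBQP(\S) \subseteq \PostBQP$ and $\PostBQP(\S) \subseteq \GadBQP(\S)$ separately, and then note that the reverse inclusion $\PostBQP(\S) \supseteq \GadBQP(\S)$ is what requires real work; the containment $\PostBQP(\S) \subseteq \GadBQP(\S)$ is trivial since every efficient post-selected quantum computer over $\S$ is in particular an efficient post-selected \emph{gadget} quantum computer over $\S$ (take $\Gamma = \emptyset$, so that the gadget circuits are just ordinary circuits over $\S$). So the crux is to show $\GadBQP(\S) \subseteq \PostBQP(\S)$.

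The key idea, already flagged in the paragraph preceding the statement, is that a gadget is \emph{itself} a little post-selected circuit over $\S$. Concretely, suppose $L \in \GadBQP(\S)$ via an efficient post-selected gadget quantum computer $\G_\post = (G, \S, \Gamma, \anc, \post)$ with bias $\delta$. Each generator $\omega$ appearing in a layer of a gadget circuit $Q_n$ is a tensor product of elements of $\Gamma \cup \{I_1\}$, and each such $\gamma \in \Gamma$ is, by definition of $\gad_k(\S)$, the normalized action $\tilde{\AA}(\g)$ of some $j$-to-$k$ post-selection gadget $\g$ over $\S$ — that is, $\gamma$ is (up to a nonzero scalar) obtained from a genuine circuit $Q(\g)$ over $\S$ by introducing fixed ancillae $\ket{a_1 \cdots a_{j-k}}$ and post-selecting a fixed register onto $\ket{b_1 \cdots b_{j-k}}$. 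I would expand the whole gadget circuit: replace each occurrence of each gadget by its defining sub-circuit over $\S$, appending the requisite fresh ancilla qubits to $\anc$ and appending the requisite post-selection bits to $\post$. Since the circuit is uniform and of polynomial size and each gadget uses a constant ($O(1)$, or at most polynomially bounded) number of ancilla/post-selection bits, the resulting circuit is still uniform, polynomial size, over $\S$, and its post-selection and ancilla strings are still total and polynomial-time computable. This produces an efficient post-selected quantum computer $\Q_\post$ over $\S$.

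The next step is to check that the output distribution is preserved. Because $\bra{b}Q(\g)\ket{a} = \AA(\g)$ is a \emph{nonzero} matrix (we only include gadgets with $\det \AA(\g) \neq 0$) and normalization by the scalar $(\det \AA(\g))^{2^{-k}}$ is just an overall phase-and-scale that cancels in all the Born-rule ratios defining $\Pr[\Q_\post(x) = y]$, conditioning on the combined post-selection string exactly reproduces, layer by layer, the action of each $\omega_i$ on the computational-basis amplitudes; the nonzero-probability side condition in the definition of a post-selected quantum computer is inherited from the fact that $\G_\post$ is well-defined and each $\AA(\g) \neq 0$. Hence $\Pr[\Q_\post(x) = y] = \Pr[\G_\post(x) = y]$ for the relevant output registers, so $\Q_\post$ decides $L$ with the same bias $\delta$, giving $L \in \PostBQP(\S)$. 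Combined with $\PostBQP(\S) \subseteq \GadBQP(\S)$ and the already-observed $\GadBQP(\S) \subseteq \PostBQP$, this yields $\PostBQP(\S) = \GadBQP(\S) \subseteq \PostBQP$.

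I expect the main obstacle to be purely bookkeeping: tracking how the ancilla function $\anc$, the post-selection function $\post$, and the output size $f$ transform when each of the (polynomially many) gadget occurrences is inlined, and verifying that the bit-indices of the "real" output registers versus the "gadget-internal" post-selected registers are threaded through correctly so that the conditional Born probabilities match. There is also a minor point to dispatch — that the normalization constant $(\det \AA(\g))^{2^{-k}}$, being a nonzero complex scalar, genuinely drops out of every ratio $\Pr[\Q(x) = y.\post(n)] / \Pr[\Q(x)_{\ldots} = \post(n)]$ — but this is immediate once one writes the amplitudes out. No topology or group theory from the later sections is needed here; the proposition is essentially a definitional unfolding.
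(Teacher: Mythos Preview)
Your proposal is correct and takes essentially the same approach as the paper: the paper's proof is just the observation immediately preceding the proposition that ``every efficient gadget quantum computer is an efficient post-selected quantum computer over the same gate set (simply substitute every gadget for its corresponding element of $\Gamma$),'' together with the already-noted trivial containments $\PostBQP(\S) \subseteq \GadBQP(\S) \subseteq \PostBQP$. Your write-up is more explicit about the bookkeeping (ancilla/post-selection threading, the normalization scalar cancelling in Born ratios), but the conceptual content is identical.
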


Ultimately, we are interested in those \emph{non}-universal gate sets $\S$ for which $\GadBQP(\S) = \PostBQP$. In this case, $\GadBQP(\S)$ becomes independent of $\delta$, so that if $L \in \PostBQP(\S) = \GadBQP(\S)$, then \emph{for all} $\delta \in (0, 1/2)$, there exists an efficient gadget quantum computer that decides $L$ to within bounded error $1/2 + \delta$. This fact is paramount to \propref{prop:hardness} below.

\section{Statement and Proof of Main Result}
\label{sec:mainresults}

Our main result affords a sufficient condition for $\GadBQP(\S) = \PostBQP$ for a non-universal gate set $\S$. Together with the following proposition and the reasonable assumption that the polynomial hierarchy does not collapse, we obtain a sufficient condition for determining if efficient classical computers cannot simulate efficient quantum computers over a non-universal gate set $\S$ in the weak multiplicative sense.

\begin{proposition}
\label{prop:hardness}
Let $\S$ be a gate set. If $\GadBQP(\S) = \PostBQP$ and for every efficient quantum computer $\Q$ over $\S$ there exists an efficient classical computer $\C$ that simulates $\Q$ to within multiplicative error $\epsilon < \sqrt{2} - 1$, then $\PH \subseteq \Sigma_3\P$. Therefore, if the polynomial hierarchy is infinite and $\GadBQP(\S) = \PostBQP$, then efficient classical computers cannot simulate efficient quantum computers over $\S$ to within multiplicative error $\epsilon < \sqrt{2} - 1$.
\end{proposition}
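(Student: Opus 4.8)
The plan is to derive from the two hypotheses that $\PP \subseteq \PostBPP$, and then to close the argument with Toda's theorem together with the known placement of $\PostBPP$ inside the third level of the polynomial hierarchy.

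First I would combine the hypothesis $\GadBQP(\S) = \PostBQP$ with \propref{prop:mainprop} and Aaronson's theorem $\PostBQP = \PP$ \cite{Aar05} to obtain $\PostBQP(\S) = \GadBQP(\S) = \PostBQP = \PP$. The crucial point is that, since $\GadBQP(\S)$ now coincides with $\PostBQP$, it inherits the $\delta$-independence of $\PostBQP$ noted just before the proposition, and so does $\PostBQP(\S)$; hence for every $L \in \PP$ and every $\delta \in (0,1/2)$ there is an efficient post-selected quantum computer $\Q_\post = (Q,\S,\anc,\post)$ over $\S$ with $\Pr[\Q_\post(x)_1 = L(x)] \geq 1/2 + \delta$ for all $x$.

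Next I would apply the second hypothesis to the underlying (non-post-selected) quantum computer $\Q = (Q,\S,\anc)$, obtaining an efficient classical computer $\C$ that simulates $\Q$ to within multiplicative error $\epsilon < \sqrt{2} - 1$, and then post-select $\C$ on the same string to form $\C_\post \defeq (\C,\post)$. A one-line estimate shows that the probability $\C(x)$ ends in $\post(n)$ is at least $(1+\epsilon)^{-1}$ times the probability $\Q(x)$ ends in $\post(n)$, hence positive, so $\C_\post$ is well-defined; and conditioning inflates the multiplicative error by at most one further factor of $1+\epsilon$ (the numerator grows by $1+\epsilon$ while the normalizing denominator shrinks by $1+\epsilon$), so $\C_\post$ simulates $\Q_\post$ to within multiplicative error $\eta \defeq (1+\epsilon)^2 - 1$. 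The threshold $\epsilon < \sqrt{2} - 1$ is exactly what makes $\eta < 1$. Choosing $\delta \in (\eta/2, 1/2)$ --- a nonempty interval precisely because $\eta < 1$ --- gives $\Pr[\C_\post(x)_1 = L(x)] \geq (1+\eta)^{-1}\Pr[\Q_\post(x)_1 = L(x)] \geq (1+\eta)^{-1}(1/2+\delta) > 1/2$, so $L \in \PostBPP$, and therefore $\PP \subseteq \PostBPP$.

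Finally, since $\PostBPP = \BPP_{\text{path}} \subseteq \P^{\Sigma_2\P}$ \cite{HHT97}, we get $\PP \subseteq \P^{\Sigma_2\P}$, whereupon Toda's theorem $\PH \subseteq \P^{\PP}$ gives $\PH \subseteq \P^{\P^{\Sigma_2\P}} = \P^{\Sigma_2\P} \subseteq \Sigma_3\P$; the second assertion of the proposition is then just the contrapositive. I expect the only genuine subtleties to be bookkeeping ones: verifying that post-selection compounds the multiplicative error to at most $(1+\epsilon)^2$ and no worse, and checking that the $\delta$-amplification unlocked by the hypothesis $\GadBQP(\S) = \PostBQP$ really is strong enough to survive that blow-up --- which is precisely why the hypothesis is stated as an equality with $\PostBQP$ rather than as a mere claim that $\GadBQP(\S)$ is \emph{large}.
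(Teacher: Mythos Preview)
Your proposal is correct and follows essentially the same approach as the paper: both use $\GadBQP(\S) = \PostBQP$ together with \propref{prop:mainprop} to obtain the $\delta$-independence of $\PostBQP(\S)$, simulate the underlying $\Q$ classically, observe that post-selection inflates the multiplicative error to $(1+\epsilon)^2$, and then choose $\delta$ large enough that this blow-up still leaves a gap above $1/2$, concluding $\PostBQP \subseteq \PostBPP$ and hence $\PH \subseteq \Sigma_3\P$ via Toda and the $\PostBPP \subseteq \P^{\Sigma_2\P}$ bound. The only cosmetic differences are that you phrase the target as $\PP \subseteq \PostBPP$ rather than $\PostBQP \subseteq \PostBPP$, and you are a bit more explicit about why $\C_\post$ is well-defined and why the interval $(\eta/2,1/2)$ is nonempty.
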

\begin{proof}
Aaronson proved that $\PostBQP = \PP$ \cite{Aar05}, Toda proved that $\PH \subseteq \P^\PP$ \cite{Tod91}, and Han, Hemaspaandra, and Thierauf proved that $\PostBPP \subseteq \P^{\Sigma_2\P}$ \cite{HHT97}. Therefore, if $\PostBPP = \PostBQP$, then $\PH \subseteq \P^{\PostBPP} \subseteq \P^{\P^{\Sigma_2\P}} \subseteq \Sigma_3\P$. Since $\PostBPP \subseteq \PostBQP$ unconditionally, it suffices to show that our additional assumptions imply $\PostBQP \subseteq \PostBPP$.

To this end, consider any $L \in \PostBQP$. By the assumption $\GadBQP(\S) = \PostBQP$ and \propref{prop:mainprop}, it holds that $\PostBQP(\S) = \PostBQP$. Thus, $L \in \PostBQP(\S)$, so there exists an efficient post-selected quantum computer $\Q_\post = (\Q, \post) = (Q, \S, \anc, \post)$ that decides $L$ to within bounded error $\delta \in (0, 1/2)$. Since $L \in \PostBQP$, we can choose any $\delta$ in this range. 

By assumption, there is an efficient classical computer $\C$ that simulates $\Q$ to within multiplicative error $\epsilon < \sqrt{2} - 1$. Therefore, for all $n \in \NN$, all $x \in \{0,1\}^n$, and all $y \in \{0,1\}^{f_{\Q_\post}(n)}$,
$$
\frac{1}{1 + \epsilon} \Pr[\C(x) = y.\post(n)] \leq \Pr[\Q(x) = y.\post(n)] \leq (1 + \epsilon) \Pr[\C(x) = y.\post(n)].
$$
Consequently,
\begin{align*}
\Pr[\Q_\post(x) = y] &= \frac{\Pr\left[\Q(x) = y.\post(n)\right]}{\Pr\big[\Q(x)_{[f_{\Q_\post}(n) + 1\ :\ f_\Q(n)]} = \post(n)\big]}\\
& \leq \frac{(1 + \epsilon) \Pr\left[\C(x) = y.\post(n)\right]}{\Pr\big[\C(x)_{[f_{\C_\post}(n) + 1\ :\ f_\C(n)]} = \post(n)\big] / (1 + \epsilon)}\\
& = (1 + \epsilon)^2\Pr[\C_\post(x) = y],
\end{align*}
where $\C_\post$ is the efficient post-selected classical computer $(\C, \post)$. Thus, for every $x \in \{0,1\}^*$,
$$
\Pr[\C_\post(x)_1 = L(x)] \geq \frac{1}{(1 + \epsilon)^2} \left(\frac{1}{2} + \delta\right) = \frac{1}{2} \cdot \frac{1 + 2\delta}{(1 + \epsilon)^2}.
$$
Consequently, $\C_\post$ decides $L$ in the sense of $\PostBPP$ provided $(1 + \epsilon)^2 < 1 + 2\delta$. Since $\delta$ can be any value satisfying $0 < \delta < 1/2$, it suffices for $0 \leq \epsilon < \sqrt{2} - 1$ to ensure $L \in \PostBPP$.
\end{proof}

We now state and prove our main technical result, which gives a sufficient condition for a gate set $\S$ to satisfy $\GadBQP(\S) = \PostBQP$.

\begin{theorem}
\label{thm:mainone}
Let $\S$ be a gate set. If there exists a finite subset $\Gamma \subset \gad_1(\S)$ that densely generates $\SL(2;\CC)$, then $\GadBQP(\S) = \PostBQP$.
\end{theorem}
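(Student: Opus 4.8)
The plan is to reduce the theorem to the inclusion $\PostBQP \subseteq \GadBQP(\S)$ — the reverse inclusion $\GadBQP(\S) \subseteq \PostBQP$ having already been noted — and then to establish that inclusion by showing that polynomial-size circuits over $\S$, augmented with the gadgets in $\Gamma$ and with post-selection, can efficiently simulate an arbitrary post-selected quantum computer over a genuine universal gate set. To that end I would fix once and for all a universal gate set $\S_0$ whose gates have algebraic entries of bounded height (say $\{H,T,\CNOT\}$), so that every nonzero post-selection probability of a polynomial-size circuit over $\S_0$ is at least $2^{-\mathrm{poly}(n)}$. Given $L \in \PostBQP$, I would invoke Aaronson's theorem that $\PostBQP = \PP$ and is independent of its bounded-error parameter \cite{Aar05} to fix an efficient post-selected quantum computer $\Q_\post = (Q,\S_0,\anc_0,\post_0)$, with circuits $Q_n$ of size $s(n) = \mathrm{poly}(n)$, that decides $L$ with $\Pr[\Q_\post(x)_1 = L(x)] \geq 1 - 2^{-n}$. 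It then suffices to produce an efficient post-selected gadget quantum computer over $\S$ whose output distribution agrees with that of $\Q_\post$ to within $1/4$, since such a computer decides $L$ with bounded error at least $1/2 + 1/8$, which is all that $\GadBQP(\S)$ demands.

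The construction has two ingredients. \emph{Single-qubit gates:} since $\words{\Gamma}$ is dense in $\SL(2;\CC)$ and $\SU(2) \subset \SL(2;\CC)$, the closure of $\words{\Gamma}$ contains every single-qubit unitary, and a Solovay–Kitaev-type argument then makes this quantitative, so that each $u \in \SU(2)$ lies within operator norm $\epsilon$ of a word in $\Gamma$ of length $\mathrm{polylog}(1/\epsilon)$; such a word is a composition of $1$-to-$1$ gadgets over $\S$, hence is itself realizable as a gadget over $\S$, its normalized action being exactly the word (as $\words{\Gamma} \subset \SL(2;\CC)$). \emph{The entangling gate:} by definition $\S$ contains an entangling gate $E$, and it is standard that $\{E\}$ together with all single-qubit unitaries forms a universal gate set; so the $\CNOT$ of $\S_0$ is, by Solovay–Kitaev in the fixed-dimensional group generated by $E$ and the single-qubit unitaries, $\epsilon$-approximable by a circuit of length $\mathrm{polylog}(1/\epsilon)$ built from $E$ and exact single-qubit unitaries, into which I would then substitute the $\Gamma$-word approximations of the latter. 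Replacing each of the $s(n)$ gates of $Q_n$ by its gadget block, with common per-gate precision $\epsilon(n) = 2^{-p(n)}$ for a polynomial $p$ to be chosen, yields a gadget quantum circuit over $\S \cup \Gamma$ of size $s(n)\cdot\mathrm{polylog}(1/\epsilon(n)) = \mathrm{poly}(n)$; because Solovay–Kitaev is constructive, the family stays uniform. Taking $\anc_0$ and $\post_0$, enlarged by the $O(\mathrm{poly}(n))$ ancilla and post-selection bits the blocks introduce, as the ancilla and post-selection strings defines the desired post-selected gadget quantum computer $\G_\post$ over $\S$.

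The error analysis is routine. The operator $\widetilde Q_n$ actually computed is a product of $\SL(2;\CC)$ factors, each within $\epsilon(n)$ of the corresponding factor of $Q_n$ and of operator norm at most $1+\epsilon(n)$, so a telescoping estimate gives $\opnorm{\widetilde Q_n - Q_n} = O(\mathrm{poly}(n))\cdot\epsilon(n)$. Projecting onto the post-selected subspace (and onto its ``first output bit $= b$'' part) only decreases norms, so both the numerator and the denominator of every post-selected output probability change by $O(\opnorm{\widetilde Q_n - Q_n})$; since the idealized denominator is at least $2^{-\mathrm{poly}(n)}$ by the choice of $\S_0$, choosing $p$ large enough that this error is at most $2^{-\mathrm{poly}(n)}$ forces $|\Pr[\G_\post(x)_1 = b] - \Pr[\Q_\post(x)_1 = b]| \leq 1/4$ for both $b$ and all sufficiently long $x$, the finitely many short inputs being hard-wired. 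Hence $\Pr[\G_\post(x)_1 = L(x)] \geq 1 - 2^{-n} - 1/4 > 1/2$, so $L \in \GadBQP(\S)$.

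The step I expect to be the real obstacle is the Solovay–Kitaev argument for single-qubit gates: the textbook theorem is stated for compact Lie groups, whereas here the ambient group $\SL(2;\CC)$ is non-compact and the generators in $\Gamma$ are genuinely non-unitary, so establishing an efficient (polylogarithmic-length) approximation of $\SU(2)$ by words in $\Gamma$ takes work — this is essentially where the group-theoretic structure of $\SL(2;\CC)$ must be exploited. What should make it go through is that one only ever needs to approximate elements of the compact submanifold $\SU(2)$ (global phases being irrelevant) while $\words{\Gamma}$ is dense in all of $\SL(2;\CC)$, so the commutator/net-refinement can be run inside a fixed compact neighborhood of the identity; the remaining points — that the operator-norm bookkeeping survives the small non-unitarity, and that a composition of $1$-to-$1$ gadgets is again a $1$-to-$1$ gadget with the product as its action — are mechanical. (Were every gadget known to admit an inverse gadget, one could instead route the argument through $\PostBQP(\S)$ using \propref{prop:mainprop}, but that is not needed here.)
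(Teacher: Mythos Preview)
Your approach is correct and follows essentially the same strategy as the paper, with one avoidable detour and one genuine technical point that the paper resolves by citation rather than by argument.

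The detour: you fix $\S_0 = \{H,T,\CNOT\}$ and then must approximate $\CNOT$ by a circuit in $E$ and single-qubit unitaries, into which you substitute $\Gamma$-words. The paper instead takes the universal gate set to be $\S' = \{E, \tilde H, \tilde T\}$, where $E \in \S$ is the entangling gate that $\S$ contains by definition and $\tilde H, \tilde T$ are the $\SU(2)$ versions of Hadamard and $T$. Since $E$ is already in $\S$, only the two single-qubit gates need approximation by $\Gamma$-words, and your second Solovay--Kitaev layer disappears entirely.

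The obstacle you flag --- efficient approximation of single-qubit unitaries by words in a finite $\Gamma \subset \SL(2;\CC)$ that is non-unitary and not assumed closed under inverses --- is exactly the content of the paper's \thmref{thm:nonunitarySK}, the inverse-free non-unitary Solovay--Kitaev theorem of Bouland and Giurgic\u{a}-Tiron \cite{BG21}: for any finite $\Gamma$ densely generating $\SL(2;\CC)$ and any target $\omega \in \SL(2;\CC)$, a $\delta$-approximating word of length $O(\log^c(1/\delta))$ exists and can be found in $\mathrm{polylog}(1/\delta)$ time. So your instinct that the commutator refinement can be confined to a compact neighborhood is right in spirit, but the paper does not rederive this; it simply invokes \cite{BG21} and moves on. With that citation in hand, your error analysis (telescoping, exponentially small per-gate error beating the at-most-exponentially-small post-selection probability) is the standard one and matches the paper's.
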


To prove this, we rely on a result due to Bouland and Giurgic\u a-Tiron \cite{BG21}, which establishes an inverse-free version of the non-unitary Solovay-Kitaev theorem of Aharanov, Arad, Eban, and Landau \cite{AAEL08}. In the following, $\opnorm{\cdot}$ is the operator norm.

\begin{theorem}[Corollary of \cite{BG21}]
Let $\Gamma \subset \SL(2;\CC)$ be a finite subset that densely generates $\SL(2;\CC)$. There exists a constant $c > 0$ such that for all $\delta > 0$ and all $\omega \in \SL(2;\CC)$, there is a sequence $\sigma_\omega$ of operators from $\Gamma$ of length $O(\log^c 1 / \delta)$ such that $\opnorm{\omega - \sigma_\omega} < \delta$. Moreover, there is an algorithm that constructs $\sigma_\omega$ in polylogarithmic time in the parameter $1 / \delta$.
\label{thm:nonunitarySK}
\end{theorem}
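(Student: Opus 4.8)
The plan is to obtain this statement essentially verbatim from the inverse-free non-unitary Solovay--Kitaev theorem of Bouland and Giurgic\u a-Tiron \cite{BG21}, which is precisely the inverse-free strengthening of the non-unitary Solovay--Kitaev theorem of Aharonov, Arad, Eban, and Landau \cite{AAEL08}. First I would recall the exact statement proved in \cite{BG21}: for a finite set $\Gamma \subset \SL(2;\CC)$ that generates a \emph{topologically dense} subgroup --- with no assumption that $\Gamma$ is closed under inverses --- every $\omega \in \SL(2;\CC)$ is $\delta$-approximated in operator norm by a word in $\Gamma$ of length $\mathrm{polylog}(1/\delta)$, and moreover a classical algorithm outputs such a word in $\mathrm{polylog}(1/\delta)$ time. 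The corollary is then a pure hypothesis-matching exercise: ``$\Gamma$ densely generates $\SL(2;\CC)$'' is exactly the density hypothesis of \cite{BG21}, the accuracy parameter is renamed $\delta$, and the exponent $c$ in $O(\log^c 1/\delta)$ is read off from the depth of the recursion in \cite{BG21} (it is uniform in $\omega$, with only the implied constant depending on $\omega$).

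The only substantive content beyond citation is to confirm that the non-compactness of $\SL(2;\CC)$ --- which is what distinguishes this from the familiar compact Solovay--Kitaev theorem over $\SU(2)$ --- is genuinely covered rather than merely asserted. This is handled inside \cite{AAEL08} and \cite{BG21} via the Cartan ($KAK$) decomposition $\SL(2;\CC) = \SU(2)\cdot A \cdot \SU(2)$, where $A$ is the one-parameter group of positive unit-determinant diagonal matrices: the target $\omega = k_1 a k_2$ is approximated by treating the compact factors $k_1, k_2 \in \SU(2)$ with the compact inverse-free recursion and the non-compact ``boost'' $a$ with the recursion on the hyperbolic space $\SL(2;\CC)/\SU(2)$. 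The inverse-free modification of \cite{BG21} grafts onto this because the recursion's only use of inverses is the group commutator $xyx^{-1}y^{-1}$, together with the observation that in a dense subgroup a word of length $\mathrm{polylog}(1/\epsilon)$ already comes within $\epsilon$ of the inverse of any fixed generator; simulating $\Gamma \cup \Gamma^{-1}$ thus costs only a polylogarithmic overhead. I expect the main obstacle, were one to redo this in detail rather than cite it, to be the bookkeeping in the non-compact step: the commutator error bounds acquire factors polynomial in $\opnorm{a}$ and in $\opnorm{a^{-1}}$, so the sub-targets must be approximated to accuracy $\delta/\mathrm{poly}(\opnorm{\omega})$; one must check that this only inflates the word length by a $\mathrm{polylog}$ factor (for each fixed target $\omega$) and that the three approximations compose to total error $O(\delta)$ while keeping the construction algorithm $\mathrm{polylog}(1/\delta)$-time.

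Since \cite{BG21} already carries out exactly this argument, I would present the proof of the corollary as follows: quote the theorem of \cite{BG21}, observe that its hypotheses are met by any finite $\Gamma$ densely generating $\SL(2;\CC)$, and extract the constant $c$ and the algorithmic claim from its statement. No new mathematical ideas are needed; the work is entirely in verifying that the present formulation is a faithful restatement, with the non-compact case subsumed rather than re-derived.
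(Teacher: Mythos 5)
Your proposal matches the paper exactly: the paper offers no independent proof of this statement, but simply imports it as a corollary of the inverse-free non-unitary Solovay--Kitaev machinery of \cite{BG21} (building on \cite{AAEL08}), which is precisely the citation-plus-hypothesis-matching route you describe. Your additional remarks on the Cartan decomposition and the $\omega$-dependence of the implied constant are consistent with the cited works and introduce no gap.
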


This implies \thmref{thm:mainone}.

\begin{proof}[Proof of \thmref{thm:mainone}]
By \propref{prop:mainprop}, $\GadBQP(\S) \subseteq \PostBQP$, so it remains to show the reverse containment. To this end, let $E$ be an entangling gate in $\S$, which exists by our definition of ``gate set'' (see \defref{def:gateset}). Then, $\S' = \{E, \tilde{H}, \tilde{T}\}$ is a universal gate set, where 
$$
\tilde{H} = 
\frac{i}{\sqrt{2}}
\begin{pmatrix}
1 & 1\\
1 & -1
\end{pmatrix}
\quad \text{and} \quad
\tilde{T} = 
\begin{pmatrix}
e^{-i \pi / 8} & 0\\
0 & e^{i \pi / 8}
\end{pmatrix}
$$
are the $\SU(2)$ versions of the usual Hadamard and $\pi/8$ gates, respectively. Consequently, $\PostBQP(\S') = \PostBQP$, so $\PostBQP \subseteq \GadBQP(\S)$ iff $\PostBQP(\S') \subseteq \GadBQP(\S)$.

To prove the latter containment, let $L \in \PostBQP(\S')$. Then, there is an efficient post-selected quantum computer $(Q' = (Q'_n)_{n \in \NN}, \S', \anc', \post')$ of size $s'(n)$ that decides $L$. By assumption, there exists a finite subset $\Gamma \subset \gad_1(\S)$ that densely generates $\SL(2;\CC)$. Thus, by \thmref{thm:nonunitarySK}, there exists $c > 0$ such that for all $n \in \NN$, there are sequences $\sigma_{\tilde{H}}(n)$ and $\sigma_{\tilde{T}}(n)$ of gates from $\Gamma$ with lengths $O(\log^{c} (s'(n)2^{n}))$ such that $\opnorm{\tilde{H} - \sigma_{\tilde{H}}(n)} < 2^{-n} / s'(n)$ and $\opnorm{\tilde{T} - \sigma_{\tilde{T}}(n)} < 2^{-n} / s'(n)$. By \thmref{thm:nonunitarySK}, both $\sigma_{\tilde{H}}(n)$ and $\sigma_{\tilde{T}}(n)$ can be constructed from $\Gamma$ in polylogarithmic time $s'(n) 2^n$. Therefore, both $\sigma_{\tilde{H}}(n)$ and $\sigma_{\tilde{T}}(n)$ can be constructed from $\Gamma$ in polynomial time with respect to the parameter $n$. Consequently, for each $n \in \NN$, replacing every $\tilde{H}$ and $\tilde{T}$ gate in the $(n + |\anc'(n)|)$-qubit quantum circuit $Q_n'$ with $\sigma_{\tilde{H}}(n)$ and $\sigma_{\tilde{T}}(n)$, respectively, yields an $(n + |\anc'(n)|)$-qubit gadget quantum circuit $Q_n$ over $\S \cup \Gamma$ such that $\opnorm{Q'_n - Q_n} < 2^{-n}$. Thus, $Q \defeq (Q_n)_{n \in \NN}$ is a uniform and polynomial size family of $(n + |\anc'(n)|)$-qubit gadget quantum circuits over $\S \cup \Gamma$ that approximates $Q'$ to exponential accuracy. It holds, therefore, that $(Q, \S, \Gamma, \anc', \post')$ is an efficient post-selected gadget quantum computer over $\S$ that decides $L$, so $L \in \GadBQP(\S)$.
\end{proof}

Consequently, if a finite number of single-qubit post-selection gadgets $\Gamma$ over a gate set $\S$ densely generates $\SL(2;\CC)$, then efficient quantum computers over $\S$ can perform a sampling task that no efficient classical computer can. The problem, then, is to understand when $\Gamma$ generates a dense subset of $\SL(2;\CC)$.

To do this, we study the restricted problem of understanding when $\Gamma$ generates a dense \emph{subgroup} of $\SL(2;\CC)$. Understanding this turns out to be algorithmically straightforward thanks to a result of Sullivan \cite{Sul85} (our \thmref{thm:SullivanTheorem}), which proves that the so-called ``non-elementary'' subgroups of $\SL(2;\CC)$ are essentially either discrete or dense in $\SL(2;\CC)$. Note that the theorem below is actually a simplified version of Sullivan's theorem, which we explain in more detail in \sref{sec:loxodromic}. Note also that while Sullivan's theorem is couched in several hitherto undefined terms, the point to keep in mind is that each premise is algorithmically easy to check.

\begin{restatable}[Simplified version of \thmref{thm:SullivanTheorem}]{theorem}{sullivan}
\label{thm:sullivanone}
Let $\Gamma$ be a finite subset of $\SL(2;\CC)$. If $\words{\Gamma}$ is a non-elementary, non-discrete, and strictly loxodromic subgroup of $\SL(2;\CC)$, then $\words{\Gamma}$ is dense in $\SL(2;\CC)$.
\end{restatable}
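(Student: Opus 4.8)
The plan is to show that $H := \overline{\words{\Gamma}}$, the closure of $\words{\Gamma}$ in the operator norm topology on $\SL(2;\CC)$, equals $\SL(2;\CC)$. Being a closed subgroup of a Lie group, $H$ is itself a Lie subgroup; write $H^0$ for its identity component, a normal subgroup of $H$ that is open in $H$. If $H$ were discrete then so would be its subgroup $\words{\Gamma}$, contrary to hypothesis; hence $H$ is non-discrete and $H^0 \neq \{I_1\}$. Now I invoke the (classical) classification of closed connected subgroups of $\SL(2;\CC)$: up to conjugation in $\SL(2;\CC)$, $H^0$ is either (a) solvable --- hence, by the Lie--Kolchin theorem, contained in the subgroup of upper-triangular matrices --- or (b) $\SU(2)$, or (c) $\SL(2;\RR)$, or (d) all of $\SL(2;\CC)$. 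I will rule out cases (a)--(c); case (d) then gives $H = H^0 = \SL(2;\CC)$, which is the claim.

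Case (a) is ruled out by non-elementarity. If a conjugate of $H^0$ lies in the subgroup of upper-triangular matrices, then $H^0$ fixes a point of $\widehat{\CC} := \CC \cup \{\infty\}$. Since $H^0$ is a nontrivial connected group it contains an element acting as a non-identity M\"obius transformation, which fixes at most two points of $\widehat{\CC}$; hence the common fixed set $F := \mathrm{Fix}_{\widehat{\CC}}(H^0)$ is nonempty with $|F| \leq 2$. As $H^0$ is normal in $H$, the group $H$, and therefore $\words{\Gamma} \subseteq H$, permutes $F$; thus $\words{\Gamma}$ has an orbit of size at most $2$ on $\widehat{\CC}$, contradicting the hypothesis that $\words{\Gamma}$ is non-elementary.

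Cases (b) and (c) are ruled out by strict loxodromy. In either case $H^0$ is conjugate to a subgroup ($\SU(2)$ or $\SL(2;\RR)$) all of whose elements have real trace, and $H^0$ contains an element $e$ with $\tr(e) \in (-2,2)$; such an $e$ is elliptic, and so is every element of $H^0$ sufficiently near $e$, since its trace stays real and in $(-2,2)$. Because $H^0$ is open in $H$ and $\words{\Gamma}$ is dense in $H$, the set $\words{\Gamma} \cap H^0$ is dense in $H^0$, so $\words{\Gamma}$ contains an element whose trace lies in $(-2,2)$ --- a non-identity elliptic element. This contradicts the hypothesis that $\words{\Gamma}$ is strictly loxodromic, since a loxodromic isometry is never elliptic.

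The one genuine subtlety --- the ``hard part'', such as it is --- is lining up the paper's precise definitions of ``non-elementary'' and ``strictly loxodromic'' with exactly the two exclusions above, and stating the subgroup classification in the form needed; once those are fixed, the proof is just the short Lie-theoretic dichotomy sketched here. (In the unabridged \thmref{thm:SullivanTheorem} one keeps track of the remaining conjugacy classes of $\overline{\words{\Gamma}}$ explicitly, rather than collapsing them via the single hypothesis of strict loxodromy.)
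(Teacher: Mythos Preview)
Your overall strategy---classify the identity component $H^0$ of the closure $\overline{\words{\Gamma}}$ via the closed-subgroup theorem---is essentially a sketch of how one would \emph{prove} Sullivan's result, whereas the paper simply quotes \thmref{thm:SullivanTheorem} as a black box and then dispatches the $\SL(2;\RR)$ alternative in one line: every element of any conjugate of $\SL(2;\RR)$ has real trace, so such a group contains no strictly loxodromic element.

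Your argument, however, has a genuine gap in cases (b) and (c), and it is not merely the definitional alignment you flag at the end. In the paper, a subgroup is \emph{strictly loxodromic} iff it \emph{contains some} $g$ with $\tr(g)\in\CC\setminus\RR$ (see \sref{sec:loxodromic}); it does \emph{not} mean that every nontrivial element is loxodromic. Hence exhibiting an elliptic element of $\words{\Gamma}$ gives no contradiction whatsoever. The correct repair is to show that \emph{every} element of $H$ (not just of $H^0$) has real trace, which then contradicts the existence of even one strictly loxodromic element. In case (b) this works, since $N_{\SL(2;\CC)}(\SU(2))=\SU(2)$ forces $H=H^0$. In case (c) it does not go through so easily: the normalizer of $\SL(2;\RR)$ in $\SL(2;\CC)$ has two components, and the non-identity coset $\mathrm{diag}(i,-i)\cdot\SL(2;\RR)$ contains elements with nonzero purely imaginary trace---which \emph{are} strictly loxodromic in the paper's sense. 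So along your route you must still exclude the possibility that $H$ meets this second component, and that is precisely the work absorbed by invoking Sullivan's theorem directly.
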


Together, \propref{prop:hardness}, \thmref{thm:mainone}, and \thmref{thm:sullivanone} entail our criterion for concluding if no efficient classical computer can simulate an efficient quantum computer over a non-universal gate set to within small multiplicative error:

\begin{theorem}[Criterion for Post-Selected Quantum Advantage]
\label{thm:maintwo}
Let $\S$ be a gate set and suppose the polynomial hierarchy is infinite. If there exists a finite subset $\Gamma \subset \gad_1(\S)$ such that $\words{\Gamma}$ is a non-elementary, non-discrete, and strictly loxodromic subgroup of $\SL(2;\CC)$, then efficient classical computers cannot simulate efficient quantum computers over $\S$ to within multiplicative error $\epsilon < \sqrt{2} - 1$.
\end{theorem}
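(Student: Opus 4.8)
The plan is to prove \thmref{thm:maintwo} as a direct composition of the three results already established in the excerpt: \thmref{thm:sullivanone} (the simplified Sullivan dichotomy), \thmref{thm:mainone} (the reduction $\PostBQP \subseteq \GadBQP(\S)$ from a densely generating gadget set), and \propref{prop:hardness} (the collapse consequence of $\GadBQP(\S) = \PostBQP$ together with efficient classical simulation). There is essentially no new mathematical content; the work is in threading the hypotheses through correctly.

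Concretely, I would argue as follows. Suppose $\S$ is a gate set and the polynomial hierarchy is infinite, and let $\Gamma \subset \gad_1(\S)$ be a finite set such that $\words{\Gamma}$ is a non-elementary, non-discrete, strictly loxodromic subgroup of $\SL(2;\CC)$. First, apply \thmref{thm:sullivanone} to $\Gamma$: its three hypotheses are exactly the premises we have assumed, so we conclude that $\words{\Gamma}$ is dense in $\SL(2;\CC)$. In other words, $\Gamma$ densely generates $\SL(2;\CC)$ in the sense required by \thmref{thm:mainone}. Since moreover $\Gamma \subset \gad_1(\S)$ is finite, \thmref{thm:mainone} applies verbatim and yields $\GadBQP(\S) = \PostBQP$.

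It remains to invoke \propref{prop:hardness}. We now have both of its standing assumptions available in contrapositive form: the polynomial hierarchy is infinite (so in particular $\PH \not\subseteq \Sigma_3\P$), and $\GadBQP(\S) = \PostBQP$. Hence the proposition's conclusion applies, and efficient classical computers cannot simulate efficient quantum computers over $\S$ to within multiplicative error $\epsilon < \sqrt{2} - 1$, which is the claim. (Equivalently: if such a classical simulation existed, \propref{prop:hardness} would force $\PH \subseteq \Sigma_3\P$, contradicting the infinitude of $\PH$.)

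I expect no genuine obstacle here; the only points requiring care are bookkeeping ones. One must check that "densely generates $\SL(2;\CC)$" as used in \thmref{thm:mainone} is literally the statement "$\overline{\words{\Gamma}} = \SL(2;\CC)$" delivered by \thmref{thm:sullivanone}, and that the hypothesis of \thmref{thm:sullivanone} already presupposes $\words{\Gamma}$ is a subgroup (so that $\Gamma$ being closed under inverses, implicit in the notation $\words{\Gamma}$, is part of what is being assumed rather than something to be separately verified). Given those, the proof is a two-line chain of implications and could even be folded into a remark rather than a standalone proof.
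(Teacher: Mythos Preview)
Your proposal is correct and matches the paper's approach exactly: the paper does not even write out a separate proof of \thmref{thm:maintwo}, instead simply stating that \propref{prop:hardness}, \thmref{thm:mainone}, and \thmref{thm:sullivanone} together entail the criterion. Your chain of implications---Sullivan gives density, density gives $\GadBQP(\S)=\PostBQP$, and then \propref{prop:hardness} in contrapositive form yields the hardness conclusion---is precisely the intended argument.
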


Of course, since we have yet to define what the terms ``non-elementary'' and ``strictly loxodromic'' mean, it is not apparent if our criterion is actually of any use. However, as we detail in the next section, these terms are not complicated, and there is in fact a simple algorithm that checks if a finitely generated subgroup $\words{\Gamma}$ of $\SL(2;\CC)$ is non-elementary, non-discrete, and strictly loxodromic (and hence dense in $\SL(2;\CC)$ {\`a} la \thmref{thm:sullivanone}). Consequently, on account of \thmref{thm:maintwo}, there is a simple algorithm that checks if efficient classical computers cannot simulate efficient quantum computers over a non-universal gate set $\S$ in the weak multiplicative sense. In \sref{sec:applications}, we illustrate the power of our criterion by proving several new quantum advantage results in addition to some well-known old ones.

We emphasize that this approach provides a \textit{sufficient} criterion to infer the classical intractability of a quantum gate set, but not a strictly necessary one. In other words, it could happen that a non-universal gate set does not satisfy the premises of \thmref{thm:maintwo}, but nevertheless engenders circuits that no efficient classical computer can simulate.\footnote{This could happen, for example, if $\Gamma$ generates a dense subgroup of $\SU(2)$ as opposed to $\SL(2;\CC)$.} There are also a few ways we think our criterion \thmref{thm:maintwo} can be improved; see Conjectures~\ref{conj:gadgetconj} and \ref{conj:densityresult}.

\section{Some Properties of Subgroups of $\SL(2;\CC)$}
\label{sec:grouptheory}

There are several important properties that a subgroup of $\SL(2;\CC)$ can possess. The first property is \emph{elementarity}. 

\subsection{Elementary Subgroups of $\SL(2;\CC)$}

Given an element $x$ of a set $X$ and a group $H$ that acts on $X$ from the left via the operation $\alpha : H \times X \rightarrow X$, recall that the \emph{$H$-orbit of $x$} is the set $\{\alpha(h,x) \mid h \in H\}$. We say there exists a \emph{finite $H$-orbit in $X$} iff there is $x \in X$ such that the $H$-orbit of $x$ is a finite set.

\begin{definition}
A subgroup $H \leq \SL(2;\CC)$ is \emph{elementary} iff there exists a finite $H$-orbit in $\RR^3$.
\end{definition}

For example, $\SU(2) \leq \SL(2;\CC)$ is elementary. To see this, consider the representation $\rho \defeq R \circ \pi$, where $\pi : \SU(2) \rightarrow \SO(3) \cong \SU(2) / \mathbb{Z}_2$ is the covering homomorphism and $R : \SO(3) \rightarrow \GL(3; \RR)$ is the defining (i.e. vector) representation of $\SO(3)$. Since every rotation in $\RR^3$ fixes the origin $(0,0,0)^T \in \RR^3$, $\{\rho(U) (0,0,0)^T \mid U \in \SU(2)\}$ is a finite $\SU(2)$-orbit in $\RR^3$.

Elementary subgroups are important in the theory of M\"obius transformations, for which a standard introduction is a textbook by Bearden \cite{Bea83}. Rather than digressing into this theory, however, we shall simply quote a series of useful results for determining when a subgroup of $\SL(2;\CC)$ is elementary. The first of these is an exercise in Bearden's textbook.

\begin{proposition}[Exercise 5.1, Problem 2 in \cite{Bea83}]
A subgroup $H \leq \SL(2;\CC)$ is elementary iff for all $g, h \in H$, the two-generated subgroup $\words{g, h}$ is elementary.
\label{prop:elemprop}
\end{proposition}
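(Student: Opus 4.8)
\emph{Proof proposal.} The forward implication holds for an arbitrary group action and uses no special structure of $\SL(2;\CC)$: if $H$ is elementary, fix $x \in \RR^3$ whose $H$-orbit is finite; then for any $g, h \in H$ the $\words{g,h}$-orbit of $x$ is contained in the $H$-orbit of $x$, hence finite, so $\words{g,h}$ is elementary. I would dispose of this direction in one line.

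For the converse I would argue by contraposition: assuming $H$ is non-elementary, I will produce $g, h \in H$ with $\words{g,h}$ non-elementary. The plan is to invoke two standard facts from the theory of M\"obius transformations, both found in \cite{Bea83}: (i) a non-elementary subgroup of $\SL(2;\CC)$ contains two loxodromic elements $a,b$ whose fixed-point sets on the boundary sphere $\partial\mathbb{H}^3$ are disjoint; and (ii) the classification of elementary subgroups, namely that a subgroup of $\SL(2;\CC)$ is elementary iff it fixes a point of hyperbolic $3$-space $\mathbb{H}^3$ (the interior of the $\RR^3$-ball on which it acts) or stabilizes a subset of $\partial\mathbb{H}^3$ of cardinality at most two. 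Granting (i), I take $g=a$ and $h=b$ and show $\words{g,h}$ cannot be elementary by ruling out both alternatives in (ii). If $\words{g,h}$ fixed a point of $\mathbb{H}^3$, it would be conjugate into $\SU(2)$, so every element would be elliptic, contradicting that $g$ is loxodromic. If $\words{g,h}$ fixed a single point $p\in\partial\mathbb{H}^3$, then $p$ would be a common fixed point of $g$ and $h$, contradicting $\mathrm{fix}(g)\cap\mathrm{fix}(h)=\emptyset$. If $\words{g,h}$ stabilized a two-point set $\{p,q\}\subset\partial\mathbb{H}^3$, then $g$ would permute $\{p,q\}$; it cannot swap $p$ and $q$, since that forces a fixed point on the geodesic joining them and hence makes $g$ elliptic, so $g$ fixes both $p$ and $q$, and being loxodromic it has no further fixed points, whence $\mathrm{fix}(g)=\{p,q\}$; the identical argument gives $\mathrm{fix}(h)=\{p,q\}$, again contradicting disjointness. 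Hence $\words{g,h}$ is non-elementary, which completes the contrapositive and the proof.

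The only genuinely substantive ingredient is fact (i); everything else is bookkeeping. If one were to prove (i) from scratch, I expect the main obstacle to be two-fold: first, establishing that a non-elementary group contains \emph{any} loxodromic element at all (a non-elementary group cannot consist entirely of elliptic and parabolic transformations, but this requires a separate argument); and second, upgrading from one loxodromic $g$ with fixed points $\{p,q\}$ — together with some $k\in H$ moving $\{p,q\}$, which exists because $H$ cannot stabilize the finite set $\{p,q\}$ — to a pair of loxodromics with \emph{disjoint} fixed-point sets, which requires handling the case where $k\{p,q\}$ meets $\{p,q\}$ in exactly one point, e.g. by a further conjugation or by replacing a generator with a suitable power or commutator. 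Since both points are handled carefully in \cite{Bea83}, I would simply cite them and keep the proof of this proposition to a few lines.
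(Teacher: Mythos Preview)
The paper does not actually prove this proposition: it is stated as Exercise~5.1, Problem~2 in \cite{Bea83} and then simply used. So there is no ``paper's own proof'' to compare against; your proposal supplies what the paper omits.

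Your argument is correct and is essentially the standard route. The forward direction is indeed a one-liner about suborbits. For the converse, your fact~(i) is Theorem~5.1.3 in Beardon, and your fact~(ii) is the usual trichotomy for elementary groups (with the implicit step that a group stabilizing a finite boundary set of size $\geq 3$ is finite, hence fixes an interior point). The case analysis ruling out each alternative for $\words{g,h}$ is clean, and your remark about where the real work lies---producing a loxodromic at all, then a second one with disjoint fixed points---is accurate and honest. Since the paper is content to cite Beardon, your write-up is already more than the paper provides; it would be appropriate to compress it to a short paragraph citing Theorem~5.1.3 and the elementary-group classification from \cite{Bea83}, exactly as you suggest in your final sentence.
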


Therefore, the elementarity of a group depends solely on the information contained in all of its rank-two subgroups. Unfortunately, if $H \leq \SL(2;\CC)$ is finitely generated, then it is not necessarily the case that $H$ is elementary if every two-generated subgroup of generators is elementary.\footnote{\label{foot:one}A counterexample follows from the fact that any two involutions generate an elementary subgroup \cite{Koh24}.} Nevertheless, the logical inverse of this statement is true by \propref{prop:elemprop}, so there is a sufficient, purely generator-based condition for a finitely generated subgroup of $\SL(2;\CC)$ to be \emph{non}-elementary:

\begin{corollary}
Let $H = \words{\Gamma} \leq \SL(2;\CC)$ be finitely generated by $\Gamma \subset \SL(2;\CC)$. If there exist $g, h \in \Gamma$ such that $\words{g,h}$ is non-elementary, then $H$ is non-elementary.
\label{cor:elemcor}
\end{corollary}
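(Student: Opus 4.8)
The plan is to obtain this as an immediate contrapositive of \propref{prop:elemprop}. That proposition states that a subgroup $H \leq \SL(2;\CC)$ is elementary if and only if every rank-two subgroup $\words{g,h}$ with $g,h \in H$ is elementary. Negating both sides yields: $H$ is non-elementary if and only if there exist $g, h \in H$ such that $\words{g,h}$ is non-elementary. So the entire content of the corollary is the observation that the witnesses $g,h$ can be taken from the generating set $\Gamma$ rather than from all of $H$.

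Concretely, I would argue as follows. Suppose $g, h \in \Gamma$ are such that $\words{g,h}$ is non-elementary. Since $\Gamma \subseteq \words{\Gamma} = H$, we have $g, h \in H$. Were $H$ elementary, \propref{prop:elemprop} would force $\words{g,h}$ to be elementary, contradicting the hypothesis. Hence $H$ is non-elementary. No further case analysis or computation is needed; in particular we do not need the converse direction (which, as the paper already notes in Footnote~\ref{foot:one}, fails: a non-elementary $H$ can be generated by elements all of whose pairs generate elementary subgroups, e.g.\ involutions).

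I expect there to be essentially no obstacle here: the only thing to be careful about is the logical direction. The subtlety worth flagging — and which the surrounding text already flags — is that this is a one-way implication. The generator-based condition ``some pair of generators generates a non-elementary group'' is \emph{sufficient} but not \emph{necessary} for $H$ to be non-elementary, precisely because \propref{prop:elemprop} quantifies over all of $H$, not just over $\Gamma$. So in writing the proof I would simply invoke \propref{prop:elemprop} in its contrapositive form and note the containment $\Gamma \subseteq H$, and then remark that the reverse implication is not claimed.
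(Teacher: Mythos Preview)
Your proposal is correct and is precisely the argument the paper intends: the corollary is stated immediately after \propref{prop:elemprop} with the remark that it is the logical inverse of that proposition, using nothing beyond $\Gamma \subseteq H$ and the contrapositive. There is no additional content in the paper's proof to compare against.
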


The next proposition by Baribeau and Ransford \cite{BR00} entails a very useful necessary and sufficient condition for a two-generated subgroup of $\SL(2;\CC)$ to be elementary.
\begin{proposition}[Proposition 2.1 in \cite{BR00}]\label{thm:Baribeau}
For all $g, h \in \SL(2;\CC)$, define $\beta(g) \defeq \tr^2(g) - 4$ and $\gamma(g,h) \defeq \tr(ghg^{-1}h^{-1}) - 2$. Then, $\words{g, h}$ is an elementary subgroup of $\SL(2;\CC)$ iff one of the following three conditions hold:
\begin{enumerate}[(i)]
\item $\beta(g), \beta(h) \in [-4,0]$, and $\gamma(g,h) \in [-\beta(g)\beta(h) / 4, 0]$,
\item $\gamma(g,h) = 0$,
\item $\beta(g) = \gamma(g,h)$ and $\beta(h) = -4$, or $\beta(g) = -4$ and $\beta(h) = \gamma(g,h)$, or $\beta(g) = -4$ and $\beta(h) = -4$.
\end{enumerate}
\label{prop:elementarycondition}
\end{proposition}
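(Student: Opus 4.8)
The plan is to combine the classical classification of elementary subgroups of $\SL(2;\CC)$ with a pair of trace identities. The two ingredients I would use are the Fricke relation $\tr(ghg^{-1}h^{-1}) = \tr^2 g + \tr^2 h + \tr^2(gh) - \tr(g)\tr(h)\tr(gh) - 2$ and the standard fact that $\words{g,h}$ is reducible (i.e., $g$ and $h$ share an eigenvector) if and only if $\tr(ghg^{-1}h^{-1}) = 2$, that is, $\gamma(g,h) = 0$. A reducible pair is conjugate into the upper-triangular subgroup, which fixes a point of $\hat{\CC}$ and is therefore elementary; so condition (ii) immediately gives ``elementary'', and conversely a reducible elementary group is exactly one with $\gamma(g,h) = 0$. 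Since elementarity of $\words{g,h}$ depends only on its image in $\mathrm{PSL}(2;\CC)$ — which is unchanged under $g\mapsto -g$, $h \mapsto -h$ — and since the conjugacy class of an \emph{irreducible} pair is determined by $(\tr g, \tr h, \tr gh)$, the problem reduces to deciding elementarity of irreducible pairs purely from the invariants $\beta(g), \beta(h), \gamma(g,h)$.

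For the forward direction I would invoke the classification (see \cite{Bea83}): an irreducible elementary subgroup of $\SL(2;\CC)$ is, up to conjugacy, either contained in $\SU(2)$ (this covers every finite orbit in hyperbolic $3$-space, hence all finite groups) or contained in the group $D$ of diagonal and antidiagonal unit-determinant matrices, i.e., the stabilizer of a two-point subset of $\hat{\CC}$. In the first case $\tr g, \tr h$ are real and lie in $[-2,2]$, so $\beta(g), \beta(h) \in [-4,0]$; writing $g = c_1 I + i\,\vec{s}_1\cdot\vec{\sigma}$ and $h = c_2 I + i\,\vec{s}_2\cdot\vec{\sigma}$ with $\vec{s}_i \in \RR^3$ and $c_i^2 + |\vec{s}_i|^2 = 1$, one computes $\tr(ghg^{-1}h^{-1}) = 2 - 4|\vec{s}_1 \times \vec{s}_2|^2$, and $0 \le |\vec{s}_1 \times \vec{s}_2|^2 \le |\vec{s}_1|^2|\vec{s}_2|^2 = \beta(g)\beta(h)/16$ then gives $\gamma(g,h) \in [-\beta(g)\beta(h)/4,\,0]$, which is (i). In the second case $\words{g,h}$ is not contained in the diagonal torus (else it is reducible), so at least one generator, say $g$, is antidiagonal, whence $\tr g = 0$ and $\beta(g) = -4$; if $h$ is also antidiagonal then $\beta(h) = -4$ (third clause of (iii)), while if $h$ is diagonal then $ghg^{-1} = h^{-1}$, so $ghg^{-1}h^{-1} = h^{-2}$ and $\gamma(g,h) = \tr(h^{-2}) - 2 = \tr^2 h - 4 = \beta(h)$ (the clause $\beta(g) = -4$, $\beta(h) = \gamma(g,h)$); swapping the roles of $g$ and $h$ gives the remaining clause.

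For the converse, (ii) is already handled, and (iii) is uniform: each clause forces two of $g, h, gh$ to have trace zero. For instance, $\beta(h) = -4$ and $\beta(g) = \gamma(g,h)$ gives $\tr h = 0$, and substituting this into the Fricke relation forces $\tr^2(gh) = 0$; then $g = (gh)h^{-1}$ exhibits $\words{g,h}$ as generated by the two $\mathrm{PSL}$-involutions $gh$ and $h$, and any group generated by two involutions is dihedral — it preserves the fixed-point set of their product — hence elementary. The \emph{main obstacle} is the converse of (i): one must show that $\beta(g), \beta(h) \in [-4,0]$ together with $\gamma(g,h) \in [-\beta(g)\beta(h)/4,0]$ forces $(g,h)$ to be conjugate into $\SU(2)$. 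When some $\beta$ vanishes the range condition collapses to $\gamma(g,h) = 0$ and we are back in the reducible (hence elementary) case; otherwise $g$ and $h$ are genuinely elliptic and I would run the $\SU(2)$ computation above in reverse, choosing axes $\vec{s}_1, \vec{s}_2$ with $|\vec{s}_1|^2 = -\beta(g)/4$, $|\vec{s}_2|^2 = -\beta(h)/4$, and relative angle fixed so that $|\vec{s}_1 \times \vec{s}_2|^2 = -\gamma(g,h)/4$ — which lies in $[0,|\vec{s}_1|^2|\vec{s}_2|^2]$ exactly by the range condition — thereby producing an $\SU(2)$ pair with the same trace triple $(\tr g, \tr h, \tr gh)$, hence conjugate to $(g,h)$ and elementary since $\SU(2)$ is. The delicate points are verifying that this reverse construction really matches the trace triple (which amounts to recognizing that the $\SU(2)$ commutator formula is the Fricke relation in disguise) and tracking the degenerate boundary cases ($\beta \in \{0\}$, $\gamma = 0$) where the three conditions meet; once the classification of elementary subgroups is granted, the forward direction is essentially bookkeeping.
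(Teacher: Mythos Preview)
The paper does not prove this proposition; it is quoted verbatim as Proposition~2.1 of Baribeau--Ransford \cite{BR00} and used as a black box to build the $\IsElementary$ algorithm. So there is no ``paper's own proof'' to compare against.

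That said, your sketch is essentially a correct self-contained proof along the lines one would expect. The forward direction is sound: the trichotomy (conjugate into $\SU(2)$) $\cup$ (reducible) $\cup$ (conjugate into the diagonal/antidiagonal group $D$) exhausts the elementary subgroups, and your trace computations in each case are right --- the $\SU(2)$ commutator identity $\tr[g,h] = 2 - 4|\vec s_1\times\vec s_2|^2$ does give exactly the interval in (i), and the $D$-case bookkeeping matches (iii). For the converse, your handling of (ii) and (iii) is clean; the reduction of each clause of (iii) to ``two of $g,h,gh$ are $\mathrm{PSL}$-involutions, hence the group is (infinite) dihedral and stabilizes the fixed-point set of their product'' is correct, including the parabolic edge case.

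The one place worth tightening is the converse of (i). You construct an $\SU(2)$ pair matching $(\beta(g),\beta(h),\gamma(g,h))$ and then want to match the full trace triple $(\tr g,\tr h,\tr gh)$ so that Fricke's rigidity for irreducible pairs applies. Two small checks make this go through: first, $\beta(g)\in[-4,0]$ forces $\tr g$ to be \emph{real} in $[-2,2]$ (so the sign ambiguity $g\mapsto -g$ is the only one); second, the Fricke relation, viewed as a quadratic in $\tr(gh)$ with the given real data, has discriminant $\beta(g)\beta(h)+4\gamma(g,h)$, which is nonnegative precisely by the left endpoint of (i) --- and your $\SU(2)$ construction, by choosing the angle between $\vec s_1,\vec s_2$ or its supplement, realizes \emph{both} real roots. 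Hence one choice matches $\tr(gh)$ exactly, and Fricke rigidity (in the irreducible case; the reducible case $\gamma=0$ is already covered) finishes the argument. With those two sentences added, the proof is complete.
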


Altogether, one gets \algref{alg:IsElementary}, $\IsElementary$, for determining if a finitely generated subgroup of $\SL(2;\CC)$ is non-elementary. This algorithm combines \corref{cor:elemcor} with the contrapositive of \propref{prop:elementarycondition}.

\begin{algorithm}
\caption{$\IsElementary$}
\label{alg:IsElementary}
\begin{algorithmic}[1]
\STATE \textbf{Input:} Finite set $\Gamma \subset \SL(2;\CC)$ such that $\words{\Gamma} \leq \SL(2;\CC)$
\STATE \textbf{Output:} $\NO$ if $\words{\Gamma}$ is non-elementary, $\IDK$ (``I don't know'') otherwise
\FOR {$g,h \in \Gamma$}
		\IF {$\beta(g) \not\in [-4,0]$ or $\beta(h) \not\in [-4,0]$ or $\gamma(g,h) \not\in [-\beta(g)\beta(h) / 4, 0]$}
            \IF {$\gamma(g,h) \neq 0$}
			\IF {$\beta(g) \neq \gamma(g,h)$ or $\beta(h) \neq -4$}
				\IF {$\beta(g) \neq -4$ or $\beta(h) \neq \gamma(g,h)$}
					\IF {$\beta(g) \neq -4$ or $\beta(h) \neq -4$}	
						\STATE return $\NO$
					\ENDIF
     \ENDIF
				\ENDIF
			\ENDIF
		\ENDIF
\ENDFOR
\STATE return $\IDK$
\end{algorithmic}
\end{algorithm}

\subsection{Discrete Subgroups of $\SL(2;\CC)$}

Another important property that a subgroup of $\SL(2;\CC)$ can possess is \emph{discreteness}. This property is the familiar topological one, so we do not define it here. Instead, we cite several famous results that help determine when a non-elementary subgroup of $\SL(2;\CC)$ is discrete.

Like elementarity, the first result proves that the discreteness of a non-elementary subgroup of $\SL(2;\CC)$ depends solely on the information contained in all of its rank-two subgroups.

\begin{proposition}[Theorem 5.4.2 in \cite{Bea83}]
A non-elementary subgroup $H \leq \SL(2;\CC)$ is discrete iff for all $g, h \in H$, the two-generated subgroup $\words{g, h}$ is discrete.
\label{prop:discreteprop}
\end{proposition}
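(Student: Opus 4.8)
The plan is to prove this as (the subgroup form of) J\o rgensen's Lemma. The forward implication is immediate, since every subgroup of a discrete group is discrete, so I would focus on the converse and prove its contrapositive: if $H$ is non-elementary and \emph{not} discrete, then some two-generated subgroup $\words{A,B} \leq H$ is not discrete. I would lean on two classical facts, both available in \cite{Bea83}. The first is \emph{J\o rgensen's inequality}: if $\words{A,B} \leq \SL(2;\CC)$ is discrete and non-elementary then, in the notation of \propref{thm:Baribeau} (so $\beta(A) = \tr^2(A) - 4$ and $\gamma(A,B) = \tr(ABA^{-1}B^{-1}) - 2$), one has $|\beta(A)| + |\gamma(A,B)| \geq 1$. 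The second is that a non-elementary subgroup of $\SL(2;\CC)$ contains infinitely many loxodromic elements with pairwise disjoint pairs of fixed points on the Riemann sphere $\overline{\CC} = \CC \cup \{\infty\}$; hence, given any finite set $F \subset \overline{\CC}$, it contains a loxodromic element both of whose fixed points avoid $F$. I would also record the elementary observation that two loxodromic elements $f,g$ with disjoint fixed-point sets generate a non-elementary group: any finite orbit of $\words{f,g}$ is $f$-invariant, hence contained in $\mathrm{fix}(f)$ (a loxodromic has no other finite invariant set), and likewise contained in $\mathrm{fix}(g)$, so contained in $\mathrm{fix}(f)\cap\mathrm{fix}(g) = \emptyset$, an absurdity.

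Assuming $H$ is non-elementary and not discrete, I would first pick $h_n \in H \setminus \{I\}$ with $h_n \to I$; after passing to a tail, $h_n \neq \pm I$, so each $h_n$ has exactly one or two fixed points on $\overline{\CC}$. The crux is the claim that \emph{there is a loxodromic $f \in H$ with $\mathrm{fix}(f)\cap\mathrm{fix}(h_n) = \emptyset$ for infinitely many $n$}. Granting this and restricting to that subsequence: since $h_n$ fixes neither point of $\mathrm{fix}(f)$ and $h_n \to I$ (so $h_n$ moves each of those two points only slightly, in particular not onto the other), the pair $h_n(\mathrm{fix}(f))$ is disjoint from $\mathrm{fix}(f)$. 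Hence $f$ and $h_n f h_n^{-1}$ are loxodromic with disjoint fixed-point sets, so $\words{f, h_n f h_n^{-1}} = \words{f, c_n}$ is non-elementary, where $c_n := h_n f h_n^{-1} f^{-1}$. But $c_n \to I$, so $\beta(c_n) \to 0$ and $\gamma(c_n, f) \to 0$, and thus $|\beta(c_n)| + |\gamma(c_n, f)| < 1$ for all large $n$; J\o rgensen's inequality then forces $\words{f, c_n}$ to be \emph{non-discrete}. Since $\words{f, c_n} \leq H$ is two-generated, that finishes the contrapositive.

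To prove the claim I would argue by contradiction: suppose that for \emph{every} loxodromic $f' \in H$ only finitely many $n$ satisfy $\mathrm{fix}(f')\cap\mathrm{fix}(h_n)=\emptyset$. Fixing loxodromics $f,g \in H$ with $\mathrm{fix}(f)\cap\mathrm{fix}(g)=\emptyset$, this would mean that for all but finitely many $n$ the set $\mathrm{fix}(h_n)$ meets both $\mathrm{fix}(f)$ and $\mathrm{fix}(g)$; since these are disjoint and $|\mathrm{fix}(h_n)|\leq 2$, $\mathrm{fix}(h_n)$ must then consist of exactly one point of $\mathrm{fix}(f)$ and one of $\mathrm{fix}(g)$, so $\mathrm{fix}(h_n)\subset\mathrm{fix}(f)\cup\mathrm{fix}(g)$ for all large $n$. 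Choosing a loxodromic $\ell \in H$ whose two fixed points avoid the four-point set $\mathrm{fix}(f)\cup\mathrm{fix}(g)$ then yields a contradiction, since the assumption forces $\mathrm{fix}(\ell)\cap\mathrm{fix}(h_n)\neq\emptyset$ for all but finitely many $n$.

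The hard part will be the claim itself --- locating one loxodromic $f \in H$ whose fixed pair is missed by infinitely many of the wandering sets $\mathrm{fix}(h_n)$. The argument above reduces it to the ``abundance'' of fixed points of loxodromics in a non-elementary group (the ability to dodge any prescribed finite set of points), and the one place I would defer entirely to \cite{Bea83} is the careful derivation of that abundance from the classification of elementary subgroups --- this proposition is, after all, Beardon's Theorem~5.4.2. Everything else, namely J\o rgensen's inequality and the bookkeeping around $c_n \to I$, is routine.
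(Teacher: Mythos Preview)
The paper does not give its own proof of this proposition; it is simply quoted as Theorem~5.4.2 in \cite{Bea83}. Your outline is a correct sketch of the standard argument (essentially Beardon's): the forward direction is trivial, and for the converse you take $h_n \to I$ in $H$, locate a loxodromic $f \in H$ whose fixed pair is missed by infinitely many $\mathrm{fix}(h_n)$, conclude that $\words{f, h_n f h_n^{-1}} = \words{f, c_n}$ is non-elementary, and then violate J\o rgensen's inequality with $c_n \to I$ to force non-discreteness. The pigeonhole step to produce such an $f$ --- using three loxodromics with pairwise disjoint fixed sets to pin down $\mathrm{fix}(h_n)$ --- is clean, and you correctly flag that the input you defer to Beardon for is the ``abundance'' of loxodromics in a non-elementary group (the ability to find loxodromics whose fixed points avoid any prescribed finite set). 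One tiny remark: your parenthetical ``$h_n$ moves each of those two points only slightly, in particular not onto the other'' tacitly passes to a further tail of the sequence, which is fine but worth making explicit in a full write-up.
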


Unfortunately, if $H \leq \SL(2;\CC)$ is finitely generated and non-elementary, then it is not necessarily the case that $H$ is discrete if every two-generated subgroup of generators is discrete.\footnote{\label{foot:two}A counterexample follows from the following three facts \cite{Koh24}: (1) there are arbitrary small triangles $T$ in the hyperbolic plane $\mathbb{H}^2$ with angles that are rational multiples of $\pi$, (2) the hyperbolic isometry group $G$ generated by reflections in the edges of $T$ is non-discrete if $T$ is small enough, and (3) any two generators of $G$ generate a finite group of isometries.} Nevertheless, the logical inverse of this statement is true by \propref{prop:discreteprop}, so there is a sufficient, purely generator-based condition for a finitely generated and non-elementary subgroup of $\SL(2;\CC)$ to be \emph{non}-discrete:
\begin{corollary}
\label{cor:nondiscrete}
Let $H = \words{\Gamma} \leq \SL(2;\CC)$ be non-elementary and finitely generated by $\Gamma \subset \SL(2;\CC)$. If there exist $g, h \in \Gamma$ such that $\words{g, h}$ is non-discrete, then $H$ is non-discrete. 
\end{corollary}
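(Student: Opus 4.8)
The plan is to derive this corollary as an immediate consequence of \propref{prop:discreteprop} (Beardon's Theorem 5.4.2), in exactly the same way that \corref{cor:elemcor} was derived from \propref{prop:elemprop}: the statement is simply the contrapositive of the ``only if'' direction of that proposition, specialized so that the witnessing pair of elements is drawn from the finite generating set $\Gamma$ rather than from all of $H$.

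Concretely, the steps I would carry out are: (1) observe that $\Gamma \subseteq H = \words{\Gamma}$, so the pair $g, h \in \Gamma$ provided by the hypothesis is in particular a pair of elements of $H$ for which $\words{g,h}$ is non-discrete; (2) apply \propref{prop:discreteprop}, which is legitimate precisely because $H$ is assumed non-elementary, to conclude that ``$\words{g',h'}$ is discrete for all $g', h' \in H$'' is equivalent to ``$H$ is discrete''; and (3) since step (1) exhibits a pair falsifying the left-hand side of this biconditional, conclude that $H$ is non-discrete, as claimed.

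I do not anticipate any real obstacle, since all of the mathematical content resides in the cited Theorem 5.4.2 of \cite{Bea83}, which we take as given; the argument is purely logical bookkeeping. The one point requiring care is that \propref{prop:discreteprop} genuinely needs the non-elementarity hypothesis on $H$ — without it the conclusion can fail, as the counterexample recorded in footnote~\ref{foot:two} shows — but non-elementarity is among the standing assumptions of the corollary, so the invocation is valid.
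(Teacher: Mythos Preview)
Your proposal is correct and follows exactly the approach the paper itself takes: the corollary is stated immediately after \propref{prop:discreteprop} as the ``logical inverse'' of that proposition, specialized to a pair of generators, with no further argument given. Your careful bookkeeping about $\Gamma \subseteq H$ and the necessity of the non-elementarity hypothesis matches the paper's surrounding discussion precisely.
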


Consequently, to determine if a finitely generated and non-elementary subgroup $H \leq \SL(2;\CC)$ is non-discrete, it suffices to interrogate the rank-two subgroups of $H$ that are generated by the generators. The most natural way to do this is to employ a famous result of J{\o}rgensen \cite{Jor76}, which concerns every discrete and non-elementary two-generated subgroup of $\SL(2;\CC)$.

\begin{proposition}[J{\o}rgensen's Inequality \cite{Jor76}]
\label{prop:Jorgensen}
Suppose $\words{g,h}$ generates a discrete and non-elementary subgroup of $\SL(2;\CC)$. Then 
\begin{equation}
\left| \tr^2(g) - 4 \right| + \left| \tr(ghg^{-1}h^{-1}) - 2 \right| \geq 1.
\label{eq:jorg}
\end{equation}
Therefore, if $\words{g,h}$ is a non-elementary subgroup of $\SL(2;\CC)$ and $g$ and $h$ violate \eref{eq:jorg}, then $\words{g,h}$ is non-discrete.
\end{proposition}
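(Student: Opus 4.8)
The second assertion (``Therefore $\ldots$'') is just the contrapositive of the first one restricted to non-elementary groups, so it suffices to prove: if $\words{g,h}$ is discrete and non-elementary, then $\mu_0 := |\tr^2(g) - 4| + |\tr(ghg^{-1}h^{-1}) - 2| \geq 1$. I would argue by contradiction, reproducing J{\o}rgensen's original iteration argument. Assume $\mu_0 < 1$. Conjugating the pair $(g,h)$ by a fixed element of $\SL(2;\CC)$ leaves discreteness, non-elementarity, and both of the relevant traces unchanged, so I would first put $g$ into a normal form: $g = \operatorname{diag}(u, u^{-1})$ when $g$ is diagonalizable (the elliptic and loxodromic cases), and $g$ unipotent upper-triangular when $g$ is parabolic; the case $g = \pm I$ makes $\words{g,h} = \words{h}$ elementary and is therefore excluded.

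The engine of the proof is the sequence $h_0 := h$ and $h_{n+1} := h_n g h_n^{-1}$, every term of which lies in $\words{g,h}$. Using the standard $\SL(2;\CC)$ trace identities, I would show, in the chosen normal form for $g$, that $\tr(g h_n g^{-1} h_n^{-1}) - 2$ is an explicit quadratic function of the ``off-axis'' entries of $h_n$, and that the J{\o}rgensen quantity $\mu_n := |\tr^2(g) - 4| + |\tr(g h_n g^{-1} h_n^{-1}) - 2|$ obeys a contraction-type recursion $\mu_{n+1} \leq r\,\mu_n$ with $r \leq \mu_0 < 1$; hence $\mu_n \to 0$ geometrically. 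The argument then splits into two cases. If some $h_n$ commutes with $g$ --- equivalently, shares the fixed-point set of $g$ on the boundary sphere $\partial\mathbb{H}^3$ --- then unwinding the recursion $h_{n+1} = h_n g h_n^{-1}$ back down to $h_0 = h$, and using that an elliptic or loxodromic $g$ preserves exactly one two-point subset of $\partial\mathbb{H}^3$ while a parabolic $g$ preserves exactly one point, forces $h$ itself to preserve the fixed-point set of $g$; then $\words{g,h}$ has a finite orbit in $\RR^3$ and is elementary, a contradiction. If no $h_n$ commutes with $g$, then one shows that the geometric decay of $\mu_n$, combined with a suitably balanced choice of normalization for $h_0$, produces a sequence of non-identity elements of $\words{g,h}$ that accumulates at the identity --- contradicting discreteness.

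I expect the second branch of this dichotomy to be the main obstacle. Extracting from ``$\mu_n \to 0$'' an honest sequence of distinct group elements converging to $I$ is subtle: when $g$ is strongly loxodromic the matrices $h_n$ themselves need not converge, even though $\tr(g h_n g^{-1} h_n^{-1}) \to 2$, so one must track a carefully chosen auxiliary sequence and normalize $h_0$ so that its two off-diagonal entries are comparable (after first disposing of the degenerate case where $h$ fixes a fixed point of $g$), and the parabolic and non-parabolic forms of $g$ need separate treatment. This is precisely the technical heart of J{\o}rgensen's theorem. By contrast, the trace-identity bookkeeping in the recursion for $\mu_n$ is routine, and the remaining ingredients are soft: discreteness just means the identity is isolated in $\words{g,h}$, and the passage to a finite orbit in the elementary case is elementary M{\"o}bius geometry.
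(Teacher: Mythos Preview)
The paper does not prove this proposition; it is quoted as a known result from J{\o}rgensen's original paper \cite{Jor76} and used as a black box in the $\IsDiscrete$ algorithm. Your sketch is a faithful outline of J{\o}rgensen's own iteration argument --- normalizing $g$, running the recursion $h_{n+1} = h_n g h_n^{-1}$, establishing the contraction $\mu_{n+1} \leq \mu_0\,\mu_n$, and then splitting into the ``some $h_n$ commutes with $g$'' (elementary) versus ``no $h_n$ commutes'' (non-discrete) dichotomy --- and your assessment of where the real work lies is accurate.
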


In fact, there are many interesting generalizations of J{\o}rgensen's inequality, such as the following few. For more, see \cite{BM81, CT09, AGM21}.

\begin{proposition}[Theorems 1 -- 3 in \cite{Tan89}]
\label{prop:Jorg_generalized}
Suppose $\words{g,h}$ generates a discrete subgroup of $\SL(2;\CC)$. 
\begin{enumerate}[(i)]
\item If $\tr(ghg^{-1}h^{-1}) \neq 1$, then 
$$
\left| \tr^2(g) - 2 \right| + \left| \tr(ghg^{-1}h^{-1}) - 1 \right| \geq 1.
$$
\item If $\tr(ghg^{-1}h^{-1}) = 1$ and $\tr^2(g) \neq 2$, then 
$$
\left| \tr^2(g)  - 2 \right| > \frac{1}{2}.
$$
\item If $\tr^2(g) \neq 1$, then 
$$
\left| \tr^2(g) - 1 \right| + \left| \tr(ghg^{-1}h^{-1}) \right| \geq 1.
$$
\item If $\tr^2(g) = 1$, then 
$$
\left| \tr(ghg^{-1}h^{-1}) \right| > \frac{1}{2} \quad \text{or} \quad \tr(ghg^{-1}h^{-1}) = 0
$$ 
and
$$
\left| \tr(ghg^{-1}h^{-1}) - 1 \right| > \frac{1}{2} \quad \text{or} \quad \tr(ghg^{-1}h^{-1}) = 1.
$$
\item If $\tr(ghg^{-1}h^{-1}) \neq 1$, then 
$$
\left| \tr^2(g) - \tr(ghg^{-1}h^{-1}) \right| + \left| \tr(ghg^{-1}h^{-1}) - 1 \right| \geq 1.
$$
\item If $\tr(ghg^{-1}h^{-1}) = 1$ and $\tr^2(g) \neq 1$, then 
$$
\left| \tr^2(g)  - 1 \right| > \frac{1}{2}.
$$
\end{enumerate}
\end{proposition}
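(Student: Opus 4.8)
The plan is to read all six inequalities as members of the J{\o}rgensen family (\propref{prop:Jorgensen}) and to prove each one by the same device: a conjugation-invariant iteration combined with the classical trace identities in $\SL(2;\CC)$. In every case I would argue by contradiction. Supposing the displayed inequality fails, so that $g$ and $h$ sit ``too close'' to a configuration in which a distinguished word --- in J{\o}rgensen's original inequality the commutator, driven toward a parabolic, and in the present variants toward a finite-order rotation --- is degenerate, I would build from $(g,h)$ a sequence in $\words{g,h}$ whose successive ratios tend to $I_1$; in a discrete group such a sequence is eventually constant, and unwinding that coincidence shows $\words{g,h}$ is elementary or that one of the exceptional trace values in the statement is attained. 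In the elementary case the desired inequality can be read off directly from the Baribeau--Ransford classification (\propref{prop:elementarycondition}), so the only way out is the exceptional value, which is exactly what each item excludes --- or permits as an alternative.

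For the setup I would fix the Fricke coordinates $x=\tr(g)$, $y=\tr(h)$, $z=\tr(gh)$, write $[g,h]=ghg^{-1}h^{-1}$, and use the identities
\begin{equation*}
\tr(gh)+\tr(gh^{-1})=\tr(g)\tr(h),\qquad \tr[g,h]=x^2+y^2+z^2-xyz-2,
\end{equation*}
which express the trace of any word in $g$ and $h$ as a polynomial in $x,y,z$. Because every quantity in the statement is a simultaneous-conjugation invariant, I would normalize $g$ to $\mathrm{diag}(\lambda,\lambda^{-1})$ when $\tr^2(g)\neq 4$ and to the standard parabolic $\left(\begin{smallmatrix}1&1\\0&1\end{smallmatrix}\right)$ when $\tr^2(g)=4$ and $g\neq\pm I_1$; the case $g=\pm I_1$ is immediate, since then $\tr^2(g)=4$ and $\tr[g,h]=2$, so each inequality holds with room to spare.

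The core of the argument is the iteration. For the pair $(\tr^2(g)-4,\ \tr[g,h]-2)$ that underlies J{\o}rgensen's inequality itself, one puts $h_0=h$ and $h_{n+1}=h_n g h_n^{-1}$; the identities above produce a polynomial recursion for $\gamma_n\defeq\tr[g,h_n]-2$ in terms of $\gamma_{n-1}$ and the fixed quantity $\beta\defeq\tr^2(g)-4$ (essentially $\gamma_n=\gamma_{n-1}(\gamma_{n-1}+\beta)$), and when $|\beta|+|\gamma_0|<1$ this forces $\gamma_n\to 0$ geometrically. A companion estimate on the matrix entries then gives $h_{n+1}h_n^{-1}\to I_1$, so in a discrete group $h_{N+1}=h_N$ for some $N$, i.e.\ $h_N$ commutes with $g$; tracing this backward through the recursion yields the elementarity/exceptional-value dichotomy above. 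For the five remaining pairs --- $(\tr^2(g)-2,\ \tr[g,h]-1)$ in (i)--(ii), $(\tr^2(g)-1,\ \tr[g,h])$ in (iii)--(iv), and $(\tr^2(g)-\tr[g,h],\ \tr[g,h]-1)$ in (v)--(vi) --- I would run the same scheme with the iterating word replaced by a suitable power of $g$ (or by $[g,h]$ itself), chosen so that the failure of the inequality drives the distinguished element toward a finite-order rotation instead of a parabolic. Each choice yields its own quadratic recursion in the relevant pair of invariants, which again contracts once the corresponding inequality is negated.

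The hard part will be the case-by-case bookkeeping of these recursions: for each of the six pairs one must exhibit an iteration whose recursion both contracts under the negation of the claimed inequality and has no attracting fixed point or $2$-cycle beyond those that genuinely contradict discreteness. Where a nontrivial stationary point survives --- a value of the tracked invariant at which the iteration freezes while the $h_n$ need not become distinct --- the argument can only conclude ``the inequality holds \emph{or} the invariant equals that value'', and this is exactly the source both of the weakened constant $1/2$ in (ii), (iv), and (vi) and of the disjunctions in (iv). A secondary nuisance is the parabolic normalization $\tr^2(g)=4$, where the diagonal recursion degenerates and the off-diagonal entries of $h_n$ must be bounded directly, just as in the parabolic branch of J{\o}rgensen's original proof; the borderline (equality) cases of each inequality then follow from a limiting form of the same estimate.
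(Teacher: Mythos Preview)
The paper does not prove this proposition at all: it is stated as a citation of Theorems~1--3 in \cite{Tan89} and is used as a black box in \algref{alg:discrete}. There is therefore no ``paper's own proof'' to compare your proposal against.

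That said, your sketch is broadly the right shape for how such J{\o}rgensen-type inequalities are actually established in the source you would be reproducing: one sets up a Shimizu--Leutbecher/J{\o}rgensen iteration $h_{n+1}=h_n g h_n^{-1}$ (or a variant tailored to the target trace value), derives a quadratic recursion in the relevant trace invariants, and shows that if the inequality fails then the sequence converges in a way incompatible with discreteness unless one lands on an exceptional trace value. Where your proposal is thin is precisely where the work lies: you assert that for each of the five non-classical pairs ``a suitable power of $g$ (or $[g,h]$ itself)'' yields a contracting recursion, but you do not identify which word to iterate in each case or verify that the resulting recursion actually contracts with the correct constant and the correct exceptional set. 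Items (ii), (iv), and (vi) in particular require you to pin down exactly which fixed points of the recursion survive and why they produce the constant $\tfrac12$ rather than $1$; this is not automatic and is the substance of Tan's arguments. As written, your proposal is a plausible outline but not yet a proof.
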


Altogether, one gets \algref{alg:discrete}, $\IsDiscrete$, for determining if a finitely generated and non-elementary subgroup of $\SL(2;\CC)$ is non-discrete. This algorithm combines \corref{cor:nondiscrete} with the contrapositive of J{\o}rgensen's inequality (\propref{prop:Jorgensen}) and the contrapositive of \propref{prop:Jorg_generalized}.

\begin{algorithm}
\caption{$\IsDiscrete$}
\label{alg:discrete}
\begin{algorithmic}[1]
\STATE \textbf{Input:} Finite set $\Gamma \subset \SL(2;\CC)$ such that $\words{\Gamma} \leq \SL(2;\CC)$
\STATE \textbf{Output:} $\NO$ if $\words{\Gamma}$ is non-discrete, $\IDK$ otherwise
\FOR {$g, h \in \Gamma$}
        \IF {$\IsElementary(\{g,h\}) = \NO$ and $\left| \tr^2(g) - 4 \right| + \left| \tr(ghg^{-1}h^{-1}) - 2 \right| < 1$}
		  \STATE return $\NO$
        \ENDIF
        \IF {$\tr(ghg^{-1}h^{-1}) \neq 1$ and $\left| \tr^2(g) - 2 \right| + \left| \tr(ghg^{-1}h^{-1}) - 1 \right| < 1$}
            \STATE return $\NO$
        \ENDIF
        \IF {$\tr(ghg^{-1}h^{-1}) = 1$ and $\tr^2(g) \neq 2$ and $\left| \tr^2(g)  - 2 \right| \leq 1/2$}
            \STATE return $\NO$
        \ENDIF
        \IF {$\tr^2(g) \neq 1$ and $\left|\tr^2(g) - 1\right| + \left|\tr(ghg^{-1}h^{-1})\right| < 1$}
            \STATE return $\NO$
        \ENDIF
        \IF {$\tr^2(g) = 1$ and $\left|\tr(ghg^{-1}h^{-1})\right| \leq 1/2$ and $\tr(ghg^{-1}h^{-1}) \neq 0$}
            \STATE return $\NO$
        \ENDIF
        \IF {$\tr^2(g) = 1$ and $\left|\tr(ghg^{-1}h^{-1}) - 1\right| \leq 1/2$ and $\tr(ghg^{-1}h^{-1}) \neq 1$}
            \STATE return $\NO$
        \ENDIF
        \IF {$\tr(ghg^{-1}h^{-1}) \neq 1$ and $\left|\tr^2(g) - \tr(ghg^{-1}h^{-1})\right| + \left|\tr(ghg^{-1}h^{-1}) - 1\right| < 1$}
            \STATE return $\NO$
        \ENDIF
        \IF {$\tr(ghg^{-1}h^{-1}) = 1$ and $\tr^2(g) \neq 1$ and $\left|\tr^2(g) - 1\right| \leq 1/2$}
            \STATE return $\NO$
        \ENDIF
\ENDFOR
\STATE return $\IDK$
\end{algorithmic}
\end{algorithm}

\subsection{Strictly Loxodromic Subgroups of $\SL(2;\CC)$}
\label{sec:loxodromic}

Another important property that a subgroup of $\SL(2;\CC)$ can possess is \emph{strict loxodromy}. As no subgroup of $\SL(2;\RR)$ is strictly loxodromic, this property is a purely complex phenomenon.

\begin{definition}
A subgroup $H \leq \SL(2;\CC)$ is \emph{loxodromic} iff there exists $g \in H$ such that $\tr^2(g) \in \CC \backslash [0,4]$. We say $H \leq \SL(2;\CC)$ is \emph{strictly loxodromic} iff there exists $g \in H$ such that $\tr(g) \in \CC \backslash \RR$. In this latter case, $g$ is called a \emph{strictly loxodromic element} of $\SL(2;\CC)$.
\end{definition}

Unfortunately, like discreteness and elementarity, if $H \leq \SL(2;\CC)$ is finitely generated, then it is not necessarily the case that $H$ is strictly loxodromic only if there is a strictly loxodromic generator.\footnote{A counterexample is the group $H = \words{\{\omega_1, \omega_2, \omega_1^{-1}, \omega_2^{-1}\}}$, where
$$
\omega_1 =
\begin{pmatrix}
i & 0\\
0 & -i
\end{pmatrix}
\quad \text{and} \quad
\omega_2 =
\begin{pmatrix}
1 & 2\\
-1 & -1
\end{pmatrix}.
$$
Evidently, $H$ is strictly loxodromic because $\tr(\omega_1\omega_2) \in \CC \backslash \RR$, however no generator of $H$ is strictly loxodromic.} Nevertheless, the converse of this statement is true definitionally:

\begin{lemma}
\label{lem:loxlem}
Let $H = \words{\Gamma} \leq \SL(2;\CC)$ be finitely generated by $\Gamma \subset \SL(2;\CC)$. If $\Gamma$ contains a strictly loxodromic element, then $H$ is strictly loxodromic.
\end{lemma}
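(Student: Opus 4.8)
The statement is essentially tautological given the definitions, so the plan is simply to unwind them. Recall that $H \leq \SL(2;\CC)$ is \emph{strictly loxodromic} iff there exists some $g \in H$ with $\tr(g) \in \CC \backslash \RR$. So the claim reduces to exhibiting a single such element of $H = \words{\Gamma}$.

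First I would invoke the hypothesis: $\Gamma$ contains a strictly loxodromic element, call it $g_0$. By the definition of a strictly loxodromic element, $\tr(g_0) \in \CC \backslash \RR$. Second, since $\words{\Gamma}$ denotes the group generated by $\Gamma$ and $g_0 \in \Gamma$, we have $g_0 \in \words{\Gamma} = H$. Hence $H$ contains an element, namely $g_0$ itself, whose trace is non-real. By the definition of a strictly loxodromic subgroup, this is exactly the condition for $H$ to be strictly loxodromic, and we are done.

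There is no real obstacle here — the lemma is the trivial direction of the (false) biconditional discussed in the surrounding text, recorded for use in the algorithmic pipeline (it justifies the ``loxodromic generator $\Rightarrow$ strictly loxodromic group'' check in $\IsLoxodromic$). The only thing to be careful about is not to overclaim: the lemma asserts the implication in one direction only, and the footnote's counterexample (with $\omega_1, \omega_2$ real-trace but $\tr(\omega_1 \omega_2)$ non-real) shows the converse fails, so the proof should not attempt to say anything beyond producing the witness $g_0$.
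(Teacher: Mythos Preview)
Your proposal is correct and matches the paper's treatment exactly: the paper does not even supply a proof, noting only that the statement ``is true definitionally,'' which is precisely the unwinding you give.
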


Altogether, \lemref{lem:loxlem} entails the very simple \algref{alg:lox}, $\IsLoxodromic$, for determining if a finitely generated subgroup of $\SL(2;\CC)$ is strictly loxodromic.

\begin{algorithm}
\caption{$\IsLoxodromic$}
\label{alg:lox}
\begin{algorithmic}[1]
\STATE \textbf{Input:} Finite set $\Gamma \subset \SL(2;\CC)$ such that $\words{\Gamma} \leq \SL(2;\CC)$
\STATE \textbf{Output:} $\YES$ if $\words{\Gamma}$ is strictly loxodromic, $\IDK$ otherwise
\FOR {$g \in \Gamma$}
	\IF {$\tr(g) \in \CC \backslash \RR$}
		\STATE return $\YES$
	\ENDIF
\ENDFOR
\STATE return $\IDK$
\end{algorithmic}
\end{algorithm}

\subsection{Dense Subgroups of $\SL(2;\CC)$}

If $H \leq \SL(2;\CC)$ is non-discrete, then it may be topologically dense in $\SL(2;\CC)$. Surprisingly, at least for non-elementary $H$, this is essentially always the case, according to a result of Sullivan \cite{Sul85} (c.f. Theorem 9.3 in \cite{AAEL08}).

\begin{theorem}[First proposition in \cite{Sul85}]
\label{thm:SullivanTheorem}
If $H \leq \SL(2;\CC)$ is non-elementary and non-discrete, then $H$ is either dense in $\SL(2;\CC)$ or conjugate to a dense subgroup of $\SL(2;\RR)$.
\end{theorem}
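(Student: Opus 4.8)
The plan is to study the topological closure $G := \overline{H}$ of $H$ in $\SL(2;\CC)$. By Cartan's closed-subgroup theorem $G$ is an embedded Lie subgroup, and since $H$ is dense in $G$, proving the theorem for $H$ amounts to showing that $G$ is either $\SL(2;\CC)$ itself or a conjugate of $\SL(2;\RR)$ (up to its index-two normalizer extension, which I will address). Because $H$ is non-discrete and dense in $G$, the group $G$ is not discrete, so $\dim G \geq 1$ and $\mathfrak{g} := \mathrm{Lie}(G)$ is a nonzero real Lie subalgebra of $\mathfrak{sl}(2;\CC)$. Every element of $G$, hence of $H$, preserves $\mathfrak{g}$ under the adjoint action, so $H \leq G \leq N(\mathfrak{g}) := \{g \in \SL(2;\CC) : \mathrm{Ad}(g)\mathfrak{g} = \mathfrak{g}\}$; this is how I will feed non-elementarity of $H$ into the argument.

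The crux is a dichotomy for $\mathfrak{g}$. First I would show that non-elementarity of $H$ forces $\mathrm{Ad}(H)$ to act irreducibly over $\CC$ on $\mathfrak{sl}(2;\CC) \cong \CC^3$. Contrapositively, a proper nonzero $\CC$-invariant subspace yields — after passing to the Killing-orthogonal complement if it is two-dimensional — an $\mathrm{Ad}(H)$-invariant complex line $\CC v$ with $v \neq 0$; such a $v$ is semisimple or nilpotent, and in either case $\mathrm{Ad}(h)v \in \CC v$ for all $h \in H$ says that each $h$ conjugates the one-parameter subgroup $t \mapsto \exp(tv)$ to one with the same fixed-point set on $\partial\mathbb{H}^3 \cong \hat{\CC}$; that fixed-point set is a single point (nilpotent case) or a pair of points (semisimple case), hence a finite $H$-orbit in $\overline{\mathbb{H}^3} \subset \RR^3$, so $H$ is elementary. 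Granting irreducibility, a short linear-algebra argument completes the dichotomy: for any real $\mathrm{Ad}(H)$-invariant subspace $U$, both $U \cap iU$ and $U + iU$ are complex submodules and hence equal $\{0\}$ or $\mathfrak{sl}(2;\CC)$, which forces $\dim_\RR U \in \{0, 3, 6\}$. Applied to the nonzero subalgebra $U = \mathfrak{g}$, this shows $\mathfrak{g}$ is all of $\mathfrak{sl}(2;\CC)$ or a three-dimensional real form of it, and the real forms of $\mathfrak{sl}(2;\CC)$ are, up to $\SL(2;\CC)$-conjugacy, the compact form $\mathfrak{su}(2)$ and the split form $\mathfrak{sl}(2;\RR)$.

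It remains to identify $G$ in each case. If $\mathfrak{g} = \mathfrak{sl}(2;\CC)$ then $G^\circ = \SL(2;\CC)$, so $G = \SL(2;\CC)$ and $H$ is dense. If $\mathfrak{g}$ is conjugate to $\mathfrak{su}(2)$, then $H \leq N(\mathfrak{g})$, which equals the corresponding conjugate of $\SU(2)$ (via the Cartan decomposition $\SL(2;\CC) = \SU(2)\cdot P$, no positive-definite Hermitian $P \neq I$ normalizes $\SU(2)$); but $\SU(2)$ is elementary, so $H$ would be elementary, a contradiction, and this case is excluded. If $\mathfrak{g}$ is conjugate to $\mathfrak{sl}(2;\RR)$, then $N(\mathfrak{g})$ is the preimage in $\SL(2;\CC)$ of the stabilizer of the totally geodesic plane in $\mathbb{H}^3$ with ideal boundary $\hat{\RR}$, which one computes to be $\SL(2;\RR) \sqcup \SL(2;\RR)\,\mathrm{diag}(i,-i)$ (up to the fixed conjugation); since $G$ is closed with Lie algebra $\mathfrak{sl}(2;\RR)$ and $\SL(2;\RR)$ is connected, $G^\circ$ is that copy of $\SL(2;\RR)$ and $G$ is either it or its index-two extension. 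In the first subcase $H$ is conjugate to a dense subgroup of $\SL(2;\RR)$, as claimed; the second subcase is the mild imprecision in the stated form of the theorem, which is removed by passing to $\mathrm{PSL}(2;\CC)$ (where the extension becomes $\mathrm{PGL}(2;\RR)$) or by noting that $H \cap G^\circ$ is already a conjugate of a dense subgroup of $\SL(2;\RR)$.

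I expect the main obstacle to be the irreducibility lemma, namely making precise the dictionary between $\mathrm{Ad}(H)$-invariant subspaces of $\mathfrak{sl}(2;\CC)$ and finite $H$-orbits on $\overline{\mathbb{H}^3}$, which is what converts the abstract non-elementarity hypothesis into representation-theoretic rigidity. The remaining ingredients — Cartan's closed-subgroup theorem, the uniqueness up to conjugacy of the real forms of $\mathfrak{sl}(2;\CC)$, and the two normalizer computations — are standard, but the normalizer of the copy of $\SL(2;\RR)$ should be worked out with care, as it is the source of the orientation-reversing subtlety above.
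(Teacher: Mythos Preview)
The paper does not prove this statement: Theorem~4.5 is quoted as the ``First proposition in \cite{Sul85}'' and used as a black box, with no argument supplied. So there is no paper proof to compare against; your proposal is an independent attempt to establish the cited result.

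As for the proposal itself, the Lie-algebra route via Cartan's closed-subgroup theorem is a standard and correct way to prove Sullivan's dichotomy, and your outline is sound. The step most in need of care is the irreducibility lemma: you should match your fixed-point argument to the paper's definition of elementarity (a finite $H$-orbit in $\RR^3$, which in Beardon's setup means $\overline{\mathbb{H}^3}$ in the upper-half-space model), not just on $\partial\mathbb{H}^3$; in the semisimple case with $\tr(v^2)>0$ the invariant line corresponds to a fixed geodesic, hence also a fixed interior point set, which keeps the argument consistent with the paper's phrasing. Your flag about the index-two normalizer $\SL(2;\RR)\sqcup \SL(2;\RR)\cdot\mathrm{diag}(i,-i)$ is exactly right and is a genuine imprecision in how the theorem is often stated at the $\SL$ level; your suggested fixes (pass to $\mathrm{PSL}$, or intersect $H$ with $G^\circ$) both work.
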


If $H$ is conjugate to a dense subgroup of $\SL(2;\RR)$, then there exists $\omega \in \SL(2;\CC)$ such that $\bar{H}$, the topological closure of $H$, equals $\omega^{-1} \SL(2;\RR) \omega$. While we conjecture that density in this sense should yield a statement similar to \thmref{thm:mainone} (see \conjref{conj:densityresult}), we were not able to prove this. Instead, we use strict loxodromy to restrict $H$ enough so that it is necessarily dense in $\SL(2;\CC)$. This was the content of \thmref{thm:sullivanone},  which we restate and prove below.

\sullivan*

\begin{proof}
For all $\omega \in \SL(2;\CC)$ and all $g \in \omega^{-1} \SL(2;\RR) \omega$, $\tr(g) \in \RR$. Thus, $\omega^{-1} \SL(2;\RR) \omega$ contains no strictly loxodromic elements, so $\words{\Gamma}$ must be dense in $\SL(2;\CC)$ by \thmref{thm:SullivanTheorem}.
\end{proof}

By composing the algorithms $\IsElementary$, $\IsDiscrete,$ and $\IsLoxodromic$, we obtain an algorithm for checking if a finitely generated subgroup of $\SL(2;\CC)$ is dense in $\SL(2;\CC)$. The exact composition is illustrated on the left side of \figref{fig:criterionschematic}. Put in the context of our criterion \thmref{thm:maintwo}, we get an algorithm for checking if efficient classical computers cannot simulate efficient quantum computers over a non-universal gate set to within small multiplicative error (assuming the polynomial hierarchy is infinite). This is depicted on the right side of \figref{fig:criterionschematic}.

\begin{figure}
\begin{subfigure}{0.49\textwidth}
\begin{tikzpicture}[>=latex']
        \tikzset{block/.style= {draw, rectangle, align=center,minimum width=2cm,minimum height=0.75cm},
        }
        \node (start) {$\Gamma \subset \SL(2;\CC)$};
        \node [block, below = 1cm of start] (ELEM){$\IsElementary$};
        \node [coordinate, below = 1cm of ELEM] (c2) {};
        \node [block, below right = 1cm and 0.3cm of c2] (JORG){$\IsDiscrete$};
        \node [below left = 1cm and 0.3cm of c2] (case2){unclear};
        \node [coordinate, below = 1cm of JORG] (c3) {};
        \node [block, below right = 1cm and 0.2cm of c3] (LOX){$\IsLoxodromic$};
        \node [below left = 1cm and 0.3cm of c3] (case3){unclear};
        \node [coordinate, below = 1cm of LOX] (c4) {};
        \node [below right = 1cm and 0.6cm of c4] (case5){$\begin{matrix} \text{$\words{\Gamma}$ dense}\\ \text{in $\SL(2;\CC)$}\end{matrix}$};
        \node [below left = 1cm and 0.3cm of c4] (case4){unclear};
        
        \draw[-stealth] (start) edge (ELEM);
        \draw[-] (ELEM) -- (c2);
        \path[-stealth]
        (c2) edge node[sloped, xshift=5pt, yshift=5pt] {$\IDK\,\,\,\,$} (case2)
        (c2) edge node[sloped, xshift=5pt, yshift=5pt] {$\NO\quad$} (JORG);
	\draw[-] (JORG) -- (c3);
	\path[-stealth]
        (c3) edge node[sloped, xshift=5pt, yshift=5pt] {$\IDK\,\,\,\,$} (case3)
        (c3) edge node[sloped, xshift=5pt, yshift=5pt] {$\NO\quad$} (LOX);
        	\draw[-] (LOX) -- (c4);
	\path[-stealth]
        (c4) edge node[sloped, xshift=5pt, yshift=5pt] {$\IDK\,\,\,\,$} (case4)
        (c4) edge node[sloped, xshift=5pt, yshift=5pt] {$\YES\quad$} (case5);
    \end{tikzpicture}
\end{subfigure}
\hspace*{\fill}
\begin{subfigure}{0.49\textwidth}
\begin{tikzpicture}[>=latex']
        \tikzset{block/.style= {draw, rectangle, align=center,minimum width=2cm,minimum height=0.75cm},
        }
        \node (start) {$\Gamma \subset \gad_1(\S)$};
        \node [block, below = 1cm of start] (ELEM){$\IsElementary$};
        \node [coordinate, below = 1cm of ELEM] (c2) {};
        \node [block, below right = 1cm and 0.3cm of c2] (JORG){$\IsDiscrete$};
        \node [below left = 1cm and 0.3cm of c2] (case2){unclear};
        \node [coordinate, below = 1cm of JORG] (c3) {};
        \node [block, below right = 1cm and 0.2cm of c3] (LOX){$\IsLoxodromic$};
        \node [below left = 1cm and 0.3cm of c3] (case3){unclear};
        \node [coordinate, below = 1cm of LOX] (c4) {};
        \node [below right = 1cm and 0.6cm of c4] (case5){$\begin{matrix} \text{classically}\\ \text{intractable}\end{matrix}$};
        \node [below left = 1cm and 0.3cm of c4] (case4){unclear};
        
        \draw[-stealth] (start) edge (ELEM);
        \draw[-] (ELEM) -- (c2);
        \path[-stealth]
        (c2) edge node[sloped, xshift=5pt, yshift=5pt] {$\IDK\,\,\,\,$} (case2)
        (c2) edge node[sloped, xshift=5pt, yshift=5pt] {$\NO\quad$} (JORG);
	\draw[-] (JORG) -- (c3);
	\path[-stealth]
        (c3) edge node[sloped, xshift=5pt, yshift=5pt] {$\IDK\,\,\,\,$} (case3)
        (c3) edge node[sloped, xshift=5pt, yshift=5pt] {$\NO\quad$} (LOX);
        	\draw[-] (LOX) -- (c4);
	\path[-stealth]
        (c4) edge node[sloped, xshift=5pt, yshift=5pt] {$\IDK\,\,\,\,$} (case4)
        (c4) edge node[sloped, xshift=5pt, yshift=5pt] {$\YES\quad$} (case5);
    \end{tikzpicture}
\end{subfigure}
\caption{(Left) A flowchart illustrating our algorithm for checking if a finite subset $\Gamma \subset \SL(2;\CC)$ generates a dense subgroup of $\SL(2;\CC)$. (Right) An identical flowchart, but placed in the context of our criterion, \thmref{thm:maintwo}. In this case, $\Gamma$ is a finite subset of $\gad_1(\S)$ for some gate set $\S$ and ``$\words{\Gamma}$ dense in $\SL(2;\CC)$'' is replaced by ``classically intractable'' in the sense of \thmref{thm:maintwo}.}
\label{fig:criterionschematic}
\end{figure}

\section{Applications}
\label{sec:applications}

In this section, we demonstrate the utility of our criterion by proving quantum advantage results for several types of restricted quantum computational models, including: instantaneous quantum polynomial (IQP) circuits, conjugated Clifford circuits (CCCs), \emph{commuting} CCCs, CCCs over various fragments of the Clifford group, and CCCs where the interstitial Clifford circuit is composed exclusively of $\CZ$ gates. Using our criterion, we re-derive the quantum advantage of IQP circuits and CCCs, which were originally found in \cite{BJS11} and \cite{BFK18}, respectively. The quantum advantage of commuting CCCs, CCCs over the Clifford fragments, and CCCs over just $\CZ$ gates are (as far as we know) new, and this speaks to the utility of our intractability criterion in settings where other standard approaches do not directly apply. 

In addition to these quantum advantage results (which are all statements of the form ``there exists a circuit in this restricted quantum model that no efficient classical computer can simulate unless the polynomial hierarchy collapses''), we also prove the full complexity classification of CCCs, commuting CCCs, and CCCs over some of the fragments of the Clifford group (which are all statements of the form ``if the polynomial hierarchy is infinite, then efficient classical computers can simulate circuits in this restricted quantum model if and only if the circuit takes such-and-such form''). These classification results make extensive use of Wolfram Mathematica 14.1. Our notebook is available online for anyone trying to reproduce our calculations \cite{KFG24}.

In what follows, we make extensive use of the following gates in the computational basis:
$$
H = \frac{1}{\sqrt{2}}
\begin{pmatrix}
1 & 1\\
1 & -1
\end{pmatrix},\quad
T = 
\begin{pmatrix}
1 & 0\\
0 & e^{i \pi / 4}
\end{pmatrix},\quad
S = 
\begin{pmatrix}
1 & 0\\
0 & i
\end{pmatrix},\quad \text{and} \quad
Z = 
\begin{pmatrix}
1 & 0\\
0 & -1
\end{pmatrix}.
$$ 
We also use the rotation matrices 
$$
R_x(\theta) = 
\begin{pmatrix}
\cos \frac{\theta}{2} & -i\sin \frac{\theta}{2}\\
-i\sin \frac{\theta}{2} & \cos \frac{\theta}{2}
\end{pmatrix}\quad \text{and} \quad
R_z(\theta) = 
\begin{pmatrix}
e^{-i \theta / 2} & 0\\
0 & e^{i \theta / 2}
\end{pmatrix}
$$
as well as the entangling gates
$$
\CZ = 
\begin{pmatrix}
1 & 0 & 0 & 0\\
0 & 1 & 0 & 0\\
0 & 0 & 1 & 0\\
0 & 0 & 0 & -1
\end{pmatrix}
\quad
\text{and}
\quad
\CNOT = 
\begin{pmatrix}
1 & 0 & 0 & 0\\
0 & 1 & 0 & 0\\
0 & 0 & 0 & 1\\
0 & 0 & 1 & 0
\end{pmatrix}.
$$

Finally, to help the reader navigate the structure of all of our quantum advantage and classification results, we have provided two dependency diagrams. One, \figref{fig:hardness_dependency}, indicates the statements that each quantum advantage result depends on, and the other, \figref{fig:classification_dependency}, indicates the statements that each classification result depends on.

\subsection{Instantaneous Quantum Polynomial Circuits}

At a high level, the IQP model describes quantum computations that only use commuting operations. Physically, this means that there is no temporal order to the computation, save the output measurement of the qubits, which happens last.

\begin{definition}
An \emph{IQP circuit} is an efficient quantum computer over a gate set $\S$ for which every $U \in \S$ is diagonal in the basis $\{\ket{0} \pm \ket{1}\}$. In other words, if $U \in \S$ is a $2^k \times 2^k$ matrix, then there exists $D \in \U(2^k)$ such that $D$ is diagonal in the computational basis and $U = H^{\otimes k} D H^{\otimes k}$.
\end{definition}

In \cite{BJS11}, Bremner, Jozsa, and Shepherd prove that when augmented with post-selection, $\IQP$ circuits can decide $\PP$-complete languages. Therefore, by an argument that is nearly identical to our \propref{prop:hardness}, one gets the following theorem.
\begin{theorem}[Corollary 1 in \cite{BJS11}]
\label{thm:IQP_hardness}
If the polynomial hierarchy is infinite, then efficient classical computers cannot simulate IQP circuits to within multiplicative error $\epsilon < \sqrt{2} - 1$.
\end{theorem}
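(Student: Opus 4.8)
The plan is to invoke our criterion, \thmref{thm:maintwo}, with the standard IQP gate set
$$
\S_{\IQP} = \big\{HTH,\ (H\otimes H)\CZ(H\otimes H)\big\}.
$$
This is a bona fide gate set in the sense of \defref{def:gateset}: it is finite, and $(H\otimes H)\CZ(H\otimes H)$ is entangling. It is moreover an IQP gate set, since $HTH$ and $(H\otimes H)\CZ(H\otimes H)$ are each diagonal in the $\{\ket{0} \pm \ket{1}\}$ basis, so every efficient quantum computer over $\S_{\IQP}$ is an IQP circuit. Hence, to prove the theorem it suffices to exhibit a finite subset $\Gamma \subset \gad_1(\S_{\IQP})$ such that $\words{\Gamma}$ is non-elementary, non-discrete, and strictly loxodromic; \thmref{thm:maintwo}, together with the hypothesis that $\PH$ is infinite, then yields the conclusion for circuits over $\S_{\IQP}$, which is a fortiori a statement about IQP circuits.

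Producing $\Gamma$ is where the work lies. Following step 1 of our procedure, I would run a short computer search over small single-qubit post-selection gadgets over $\S_{\IQP}$ — say $2$-to-$1$ or $3$-to-$1$ gadgets obtained by introducing one or two ancillae, applying a few layers of $(H\otimes H)\CZ(H\otimes H)$ interleaved with $HTH$ on the individual wires, and post-selecting the ancillae — computing the $2\times 2$ action $\AA(\g) = \prescript{}{B}{\bra{b}}\, Q(\g)\, \ket{a}_A$, discarding those with $\det\AA(\g) = 0$, and recording the normalized actions $\tilde{\AA}(\g) \in \SL(2;\CC)$. The essential feature is that post-selection makes these actions genuinely non-unitary, which is what lets $\words{\Gamma}$ escape both $\SU(2)$ and every conjugate of $\SL(2;\RR)$. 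Per step 2 one must also close $\Gamma$ under inverses; for IQP gadgets I expect an explicit inverse gadget is easy to write down, and in any case the same search will surface one.

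Given concrete generators, the three conditions are verified by the algorithms of \sref{sec:grouptheory}. Strict loxodromy is immediate once some $\tilde{\AA}(\g) \in \Gamma$ has $\tr\tilde{\AA}(\g) \in \CC\backslash\RR$ (\algref{alg:lox} and \lemref{lem:loxlem}); note this is exactly the property that fails for the bare gate $HTH$, whose normalized trace is the real number $2\cos(\pi/8)$, so the non-unitary gadgets are genuinely needed. Non-elementarity is certified by a pair $g,h\in\Gamma$ violating all three clauses of the Baribeau--Ransford criterion (\propref{prop:elementarycondition}), i.e. a pair whose $\beta$ and $\gamma$ invariants avoid the exceptional regions, so that \algref{alg:IsElementary} returns $\NO$. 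Non-discreteness is certified by a non-elementary generating pair $g,h\in\Gamma$ that violates J{\o}rgensen's inequality (\propref{prop:Jorgensen}) or one of Tan's generalizations (\propref{prop:Jorg_generalized}), so that \algref{alg:discrete} returns $\NO$. With all three in hand, \thmref{thm:sullivanone} forces $\words{\Gamma}$ to be dense in $\SL(2;\CC)$, and \thmref{thm:maintwo} closes the argument.

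The main obstacle is not any individual check but the search itself: one needs a single small gadget family over the rather rigid gate set $\S_{\IQP}$ that simultaneously produces a non-real normalized trace, a non-elementary generating pair, and a J{\o}rgensen violation. Strict loxodromy is the most delicate of the three, since determinant normalization destroys it for any unitary gadget, so the argument must route through the non-unitarity introduced by post-selection; verifying it, together with the explicit trace and commutator-trace computations feeding \algref{alg:discrete}, is where the (Mathematica-assisted) calculation lives. Everything downstream — density in $\SL(2;\CC)$, and hence classical intractability modulo $\PH$ being infinite — is then a direct appeal to the results already established above.
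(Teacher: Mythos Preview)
Your plan is exactly the paper's approach, but as written it is not a proof: the entire substance of the argument is the explicit construction of $\Gamma \subset \gad_1(\S_{\IQP})$ and the verification of the three hypotheses of \thmref{thm:maintwo}, and you defer both to an unspecified search. The paper carries this out concretely with three small gadgets: the $1$-to-$1$ gadget $\a$ (a bare $HTH$) and two $2$-to-$1$ gadgets $\b,\c$, each built from a single conjugated $\CZ$ and a few $HTH$ layers, together with explicit inverse gadgets $\a^{-1},\b^{-1},\c^{-1}$. The normalized actions $A,B$ turn out to be unitary while $C$ is not, and $\tr(C) = 2/\sqrt{1-i} \in \CC\backslash\RR$ supplies strict loxodromy, confirming your observation that a genuinely non-unitary gadget is required for this step. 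Non-elementarity is certified by the pair $(B,C)$ via \propref{prop:elementarycondition}, and non-discreteness by the pair $(B,A)$ via item (v) of \propref{prop:Jorg_generalized}.

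One small correction to your sketch: you describe the non-discreteness witness as a ``non-elementary generating pair'' violating J{\o}rgensen or Tan. Non-elementarity of the pair is needed only for J{\o}rgensen's inequality itself (line~4 of \algref{alg:discrete}); Tan's inequalities in \propref{prop:Jorg_generalized} require only discreteness of $\words{g,h}$, with no elementarity hypothesis. The paper exploits this, using two \emph{unitary} actions $A,B$ for the discreteness check without ever verifying that $\words{A,B}$ is non-elementary; non-elementarity of the full group $\words{\Gamma_{\IQP}}$ is established separately via $(B,C)$.
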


Bremner et al. prove \thmref{thm:IQP_hardness} by first restricting to the gate set 
$$
\S_\IQP \defeq \{H T H, (H \otimes H) \CZ (H \otimes H)\}.
$$
Since $T$ and $\CZ$ are diagonal in the computational basis, circuits over $\S_\IQP$ are indeed IQP circuits. They then show that there exists a post-selection gadget over $\S_{\IQP}$ that can be used to ``inject" a Hadamard gate anywhere in the circuit. Together with the fact that $\{H, T, \CZ\}$ is a universal gate set, it follows that post-selected IQP circuits can decide $\PP$-complete languages.

Here, we reproduce \thmref{thm:IQP_hardness} using our criterion, \thmref{thm:maintwo}, together with the same gate set $\S_{\IQP}$ that Bremner et al. used.

\afterpage{\clearpage}
\input{hardness_dependency_diagram}

\afterpage{\clearpage}
\input{classification_dependency_diagram}

\begin{proof}[Proof of \thmref{thm:IQP_hardness}] By \thmref{thm:maintwo}, it suffices to find a finite number of post-selection gadgets over $\S_{\IQP}$ whose normalized actions generate a non-elementary, non-discrete, and strictly loxodromic subgroup of $\SL(2;\CC)$. To this end, consider the 1-to-1 post-selection gadget
$$
\a = \begin{quantikz}
        \lstick{} & \gate{H} & \gate{T} & \gate{H}  & \rstick{}
    \end{quantikz}
$$

\noindent as well as the two 2-to-1 post-selection gadgets
$$
\b = \begin{quantikz}
        \lstick{} & \gate{H} & \gate{T} & \ctrl{1} & \gate{H}  & \rstick{\bra{0}} \\
        \lstick{\ket{0}} & \gate{H} & \gate{T^4} & \ctrl{0} & \gate{H}  & \rstick{}
    \end{quantikz}
\quad
\text{and}
\quad
\c = \begin{quantikz}
        \lstick{} & \gate{H} & & \ctrl{1} & \gate{H}  & \rstick{} \\
        \lstick{\ket{0}} & \gate{H} & \gate{T} & \ctrl{0} & \gate{H}  & \rstick{\bra{0}}
    \end{quantikz}
$$

A straightforward calculation shows
that the normalized actions of $\a$ and $\b$ are the unitary $\SL(2;\CC)$ matrices
$$
A \defeq \tilde{\AA}(\a) = \frac{1}{2e^{i\pi/8}}
\begin{pmatrix} 
1 + e^{i\pi/4} & 1 - e^{i\pi/4} \\ 
1 - e^{i\pi/4} & 1 + e^{i\pi/4}
\end{pmatrix}
\quad
\text{and}
\quad
B \defeq \tilde{\AA}(\b) = \frac{1}{\sqrt{2}}
\begin{pmatrix} 
e^{i\pi/8} & -e^{i\pi/8} \\ 
e^{-i\pi/8} & e^{-i\pi/8}
\end{pmatrix},
$$
while the normalized action of $\c$ is the \emph{non}-unitary $\SL(2;\CC)$ matrix
$$
C \defeq \tilde{\AA}(\c) = \frac{1}{\sqrt{1-i}}
\begin{pmatrix} 
1 & e^{i\pi/4} \\ 
e^{i\pi/4} & 1
\end{pmatrix}.
$$ 
Crucially, inverse gadgets of $\a$, $\b$, and $\c$ are also realizable over $\S_{\IQP}$. For $\a$, the inverse is the 1-to-1 post-selection gadget
$$
\a^{-1} = \begin{quantikz}
        \lstick{} & \gate{H} & \gate{T^7} & \gate{H}  & \rstick{}
    \end{quantikz}
$$
while for $\b$ and $\c$, the inverses are the 2-to-1 post-selection gadgets
$$
\b^{-1} = \begin{quantikz}
        \lstick{} & \gate{H} & \ctrl{1} & \gate{T^4} & \gate{H}  & \rstick{\bra{0}} \\
        \lstick{\ket{0}} & \gate{H} & \ctrl{0} & \gate{T^7} & \gate{H}  & \rstick{}
    \end{quantikz}
\quad
\text{and}
\quad
\c^{-1} = \begin{quantikz}
        \lstick{} & \gate{H} & \ctrl{1} & \gate{T^4} & \gate{H}  & \rstick{} \\
        \lstick{\ket{0}} & \gate{H} & \ctrl{0} & \gate{T^3} & \gate{H}  & \rstick{\bra{0}}
    \end{quantikz}
$$
These are inverses of $\a$, $\b$, and $\c$ in the sense of \defref{def:inversegadget} because $\tilde{\AA}(\a^{-1}) = \tilde{\AA}(\a)^{-1} = A^{-1}$, $\tilde{\AA}(\b^{-1}) = \tilde{\AA}(\b)^{-1} = B^{-1}$, and $\tilde{\AA}(\c^{-1}) = -\tilde{\AA}(\c)^{-1} = -C^{-1}$.

Now let $\Gamma_{\IQP} \defeq \{A, A^{-1}, B, B^{-1}, C, -C^{-1}\}.$ Evidently, $\words{\Gamma_{\IQP}}$ is closed under inverses, so $\words{\Gamma_{\IQP}}$ is a subgroup of $\SL(2;\CC)$. In fact, $\words{\Gamma_\IQP}$ is a non-elementary, non-discrete, and strictly loxodromic subgroup of $\SL(2;\CC)$.
\begin{claim}
\label{claim:iqp_1}
$\words{\Gamma_{\IQP}}$ is a non-elementary subgroup of $\SL(2;\CC)$.
\end{claim}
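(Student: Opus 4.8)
The plan is to apply \corref{cor:elemcor}: it suffices to exhibit two elements of $\Gamma_{\IQP}$ that generate a non-elementary subgroup, and then invoke the contrapositive of \propref{prop:elementarycondition} (equivalently, run $\IsElementary$ on that pair and check it returns $\NO$). So first I would pick a promising pair — the natural candidates are $A$ and $C$, since $A$ is unitary while $C$ is genuinely non-unitary, so their commutator is unlikely to be degenerate. One could also try $B$ and $C$, or $A$ and $B$; any pair that works is fine.

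Concretely, the key steps are: (1) compute $\tr(A)$, $\tr(C)$, and hence $\beta(A) = \tr^2(A) - 4$ and $\beta(C) = \tr^2(C) - 4$ from the explicit matrices; (2) compute the commutator $ACA^{-1}C^{-1}$ (or just its trace) and hence $\gamma(A,C) = \tr(ACA^{-1}C^{-1}) - 2$; (3) verify that none of the three conditions (i), (ii), (iii) of \propref{prop:elementarycondition} holds for this pair. Condition (ii) requires $\gamma(A,C) = 0$, so the crux is to check $\tr(ACA^{-1}C^{-1}) \neq 2$; condition (i) requires both $\beta$'s to lie in the real interval $[-4,0]$ and $\gamma$ to lie in a real interval, which will fail immediately because $C$ is not conjugate into $\SU(2)$, so $\beta(C) \notin [-4,0]$ (indeed $\tr(C) = \sqrt{2}/\sqrt{1-i}$ is not real, so $\tr^2(C)\notin[0,4]$); condition (iii) requires some $\beta = -4$, i.e. some trace equal to $0$, which again fails for generic unimodular non-unitary matrices and is easily checked. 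Since $\beta(A)$, $\beta(C)$, $\gamma(A,C)$ are all concrete algebraic numbers, each of these is a finite verification — this is exactly the sort of computation the authors delegate to Mathematica \cite{KFG24}.

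The only mild obstacle is arithmetic bookkeeping: one must actually multiply out $A$, $C$, $A^{-1}$, $C^{-1}$ (with the $e^{i\pi/8}$, $e^{i\pi/4}$, and $\sqrt{1-i}$ prefactors) and simplify the commutator trace to confirm it differs from $2$ — and more importantly that the pair is non-elementary and not merely that it dodges one clause. There is no conceptual difficulty: $\SL(2;\CC)$ is only two-dimensional, the matrices are explicit, and once the three numbers $\beta(A),\beta(C),\gamma(A,C)$ are in hand the verification against \propref{prop:elementarycondition} is mechanical. If the pair $(A,C)$ happens to be elementary, I would fall back to $(B,C)$ or a word like $(A, AC)$; the robustness here is that $\words{\Gamma_{\IQP}}$ will turn out to be dense in $\SL(2;\CC)$, so it is certainly non-elementary, and essentially any pair involving a unitary and a non-unitary generator should witness this. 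Thus I expect the proof to read: ``By \corref{cor:elemcor}, it suffices to show $\words{A, C}$ is non-elementary. One computes $\beta(A) = \dots$, $\beta(C) = \dots$, and $\gamma(A,C) = \dots$; since [brief reason none of (i)--(iii) holds], \propref{prop:elementarycondition} implies $\words{A,C}$ is non-elementary, hence so is $\words{\Gamma_{\IQP}}$.''
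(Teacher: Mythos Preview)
Your proposal is correct and follows essentially the same approach as the paper: pick a pair of generators, compute $\beta$'s and $\gamma$, and verify via \propref{prop:elementarycondition} that none of (i)--(iii) holds. The paper happens to use the pair $(B,C)$ rather than $(A,C)$, recording $\beta(B) = -3 - \tfrac{1}{\sqrt{2}}$, $\beta(C) = -2 + 2i$, and $\gamma(B,C) = -1 + i$, from which $\IsElementary(\Gamma_{\IQP}) = \NO$ is immediate; your choice of $(A,C)$ would work just as well once $\gamma(A,C) \neq 0$ is verified.
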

\begin{proof}
Since $\beta(B) = -3 - \frac{1}{\sqrt{2}}$, $\beta(C) = -2 + 2i$, and $\gamma(B,C) = -1 + i$, $\IsElementary(\Gamma_\IQP) = \NO$.
\end{proof}

\begin{claim}
\label{claim:iqp_2}
$\words{\Gamma_{\IQP}}$ is a non-discrete subgroup of $\SL(2;\CC)$.
\end{claim}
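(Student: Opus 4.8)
The plan is to detect non-discreteness through a rank-two subgroup, exactly as the $\IsDiscrete$ procedure prescribes. By \claimref{claim:iqp_1}, $\words{\Gamma_{\IQP}}$ is non-elementary, so by \corref{cor:nondiscrete} (applied, if convenient, to a larger finite generating set obtained by adjoining a few short words in $A$, $B$, $C$ and their inverses, which does not change $\words{\Gamma_{\IQP}}$) it suffices to exhibit a single non-discrete two-generated subgroup. Concretely, one runs $\IsDiscrete$ on the chosen generating set: for each pair $g, h$ one computes $\tr^2(g)$ and the commutator trace $\tr(ghg^{-1}h^{-1})$ and tests whether they violate J{\o}rgensen's inequality (\propref{prop:Jorgensen})---the non-elementarity of $\words{g,h}$ being supplied by $\IsElementary$---or one of its generalizations in \propref{prop:Jorg_generalized}, which need no elementarity test. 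The natural first pair is $B, C$ from \claimref{claim:iqp_1}; if the trace data of the bare generators land just above every threshold, one passes to a short word such as $BC$, whose square-trace sits near the parabolic value $\tr^2 = 2$, paired with a second generator.

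More transparently, one can finish from the spectral type of a single generator. Since $\tr(B) = \sqrt{2}\cos(\pi/8) \in (-2,2)$ and $B$ is diagonalizable, $B$ is elliptic, hence conjugate in $\SL(2;\CC)$ to $\mathrm{diag}(e^{i\theta}, e^{-i\theta})$ for some $\theta$ with $2\cos\theta = \tr(B)$. Because $\tr(B)^2 = 1 + \tfrac{1}{\sqrt{2}}$, the number $\tr(B)$ is a root of $x^4 - 2x^2 + \tfrac{1}{2}$ and is therefore not an algebraic integer, whereas $2\cos(\text{a rational multiple of }\pi)$ always is; hence $\theta$ is an irrational multiple of $\pi$, so $B$ has infinite order and $\words{B}$ is an infinite---hence non-discrete---subgroup of (a conjugate of) the compact circle group. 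Since every subgroup of a discrete group is discrete, $\words{\Gamma_{\IQP}}$ is non-discrete.

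The crux of the algorithmic route is locating the witnessing pair or short word: nothing tells us in advance which product of $A$, $B$, $C$ will drive some J{\o}rgensen-type quantity below its threshold, so in practice one runs the short computer search noted in the introduction, and once the pair is fixed, verifying the relevant inequality is a direct computation with the explicit matrices $A$, $B$, $C$, best carried out in the accompanying Mathematica notebook. The spectral argument sidesteps this search entirely, at the cost of stepping outside the uniform $\IsDiscrete$ framework.
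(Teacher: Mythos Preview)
Your spectral argument in the second paragraph is correct and constitutes a complete proof, but it takes a genuinely different route from the paper. The paper stays inside the $\IsDiscrete$ framework: it applies Tan's inequality (v) from \propref{prop:Jorg_generalized} (line~16 of \algref{alg:discrete}) to the pair $(B,A)$, computing $\tr(BAB^{-1}A^{-1}) = 1 + \tfrac{1}{\sqrt{2}}$ and $\tr^2(B) = 1 + \tfrac{1}{\sqrt{2}}$, so that
\[
\left|\tr^2(B) - \tr(BAB^{-1}A^{-1})\right| + \left|\tr(BAB^{-1}A^{-1}) - 1\right| = \tfrac{1}{\sqrt{2}} < 1,
\]
which witnesses non-discreteness via the very algorithm the paper is advertising. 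Your argument instead exploits that $B \in \SU(2)$ is elliptic with $\tr(B) = \sqrt{2}\cos(\pi/8)$ not an algebraic integer, hence $B$ has infinite order, so $\langle B\rangle$ is already a non-discrete subgroup (being infinite inside a compact circle); no J{\o}rgensen-type inequality and no second generator are needed. The trade-off is exactly the one you name: your proof is shorter and self-contained, while the paper's serves the expository purpose of showing that $\IsDiscrete$ actually fires on this example. One caveat: your first paragraph is only a plan, not a proof---you never pin down a pair or verify any inequality---so the real content of your proposal is the spectral argument.
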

\begin{proof}
Since $\tr(BAB^{-1}A^{-1}) = 1 + \frac{1}{\sqrt{2}} \neq 1$ and
$$
\left|\tr^2(B) - \tr(BAB^{-1}A^{-1})\right| + \left|\tr(BAB^{-1}A^{-1}) - 1\right| = \frac{1}{\sqrt{2}} < 1,
$$
is follows from line 16 in \algref{alg:discrete} that $\IsDiscrete(\Gamma_\IQP) = \NO$.
\end{proof}

\begin{claim}
\label{claim:iqp_3}
$\words{\Gamma_{\IQP}}$ is a strictly loxodromic subgroup of $\SL(2; \CC)$.
\end{claim}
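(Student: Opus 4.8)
The cleanest route is to produce a strictly loxodromic \emph{generator} and then appeal to \lemref{lem:loxlem}. Of the six generators in $\Gamma_{\IQP}$, the unitary ones $A, A^{-1}, B, B^{-1}$ have real trace (indeed $\tr(A) = 2\cos(\pi/8)$ and $\tr(B) = \sqrt{2}\cos(\pi/8)$), so the natural candidate is the non-unitary matrix $C = \tilde{\AA}(\c)$.

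The single computation I would carry out is the trace of $C$. Using $1 - i = \sqrt{2}\,e^{-i\pi/4}$, so that $\sqrt{1-i} = 2^{1/4}e^{-i\pi/8}$, one gets
$$
\tr(C) = \frac{2}{\sqrt{1-i}} = 2^{3/4}\,e^{i\pi/8} = 2^{3/4}\left(\cos\tfrac{\pi}{8} + i\sin\tfrac{\pi}{8}\right).
$$
Since $\sin(\pi/8) \neq 0$, this lies in $\CC \setminus \RR$, so $C$ is a strictly loxodromic element of $\SL(2;\CC)$ in the sense of the definition. Because $C \in \Gamma_{\IQP}$, the algorithm $\IsLoxodromic$ returns $\YES$ on input $\Gamma_{\IQP}$, and hence \lemref{lem:loxlem} gives that $\words{\Gamma_{\IQP}}$ is a strictly loxodromic subgroup of $\SL(2;\CC)$, as claimed.

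I expect no obstacle here: this is by far the easiest of the three IQP claims, reducing to a single trace evaluation together with the definitional \lemref{lem:loxlem}. The only minor point of care is the branch of the square root implicit in the normalization $\tilde{\AA}(\c) = \AA(\c)/(\det\AA(\c))^{1/2}$ that defines $C$; either branch yields $\tr(C) = \pm 2^{3/4}e^{i\pi/8}$, and both values have nonzero imaginary part, so the conclusion is robust to this choice.
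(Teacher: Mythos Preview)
Your proof is correct and follows the same approach as the paper: compute $\tr(C) = 2/\sqrt{1-i}$, observe it has nonzero imaginary part, and conclude via \lemref{lem:loxlem} (equivalently, $\IsLoxodromic(\Gamma_{\IQP}) = \YES$). The paper writes the trace as $\sqrt{2+2i}$, which is the same number as your $2^{3/4}e^{i\pi/8}$; your added remarks about the real traces of $A,B$ and the branch of the square root are sound but inessential.
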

\begin{proof}
Since $\tr(C) = \frac{2}{\sqrt{1 - i}} = \sqrt{2 + 2i} \in \mathbb{C} \backslash \RR$, $\IsLoxodromic(\Gamma_\IQP) = \YES$.
\end{proof}

Thus, the quantum advantage of IQP circuits follows from our criterion \thmref{thm:maintwo}.
\end{proof}

\subsection{Conjugated Clifford Circuits}
\label{sec:CCCs}

The Gottesman-Knill theorem proves that efficient classical computers can simulate uniform and polynomial size Clifford circuits \emph{exactly} \cite{Got98, AG04}. However, this result sensitively depends on an efficient state representation that is afforded by the Clifford group. It is therefore natural to wonder if by ``perturbing'' the Clifford group in some way, the perturbed circuits become hard to simulate classically.

CCCs are a type of perturbed Clifford circuit in which every $k$-qubit Clifford operation is conjugated by $U^{\otimes k}$ for some fixed single-qubit unitary $U$.
\begin{definition}
Fix $U \in \U(2)$. A \emph{$U$-conjugated Clifford circuit} (or \emph{$U$-CCC} for short) is an efficient quantum computer over the gate set\footnote{In the original CCC paper \cite{BFK18}, the authors use the gate set $\{U^\dagger H U, U^\dagger S U, (U^\dagger \otimes U^\dagger) \CNOT (U \otimes U)\}$. However, since $\CNOT = (I_1 \otimes H)\CZ(I_1 \otimes H)$, our gate set $\S_{\CCC}(U)$ is equivalent to theirs.}
$$
\S_{\CCC}(U) \defeq \big\{U^\dagger H U, U^\dagger S U, (U^\dagger \otimes U^\dagger) \CZ (U \otimes U) \big\}.
$$ 
A \emph{conjugated Clifford circuit} is a $U$-CCC for some $U \in \U(2)$.
\end{definition}

In \cite{BFK18}, Bouland, Fitzsimons, and Koh prove that when augmented with post-selection, CCCs can decide $\PP$-complete languages. Therefore, by an argument that is nearly identical to our \propref{prop:hardness}, one gets the following theorem.

\begin{theorem}[Corollary of Theorem 3.2 in \cite{BFK18}]
\label{thm:CCC_hardness}
If the polynomial hierarchy is infinite, then efficient classical computers cannot simulate CCCs to within multiplicative error $\epsilon < \sqrt{2} - 1$.
\end{theorem}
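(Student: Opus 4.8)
The plan is to derive this from our criterion, \thmref{thm:maintwo}. Since a $U$-CCC is a CCC for every $U \in \U(2)$, it suffices to exhibit a \emph{single} $U$ together with a finite set $\Gamma \subset \gad_1(\S_{\CCC}(U))$ for which $\words{\Gamma}$ is a non-elementary, non-discrete, and strictly loxodromic subgroup of $\SL(2;\CC)$; in fact the construction below should work for a generic $U$. Two observations guide the choice of $\Gamma$. First, the single-qubit gates $U^\dagger H U$ and $U^\dagger S U$ are themselves $1$-to-$1$ gadgets, so after rescaling to unit determinant their $\SL(2;\CC)$ images lie in $\gad_1(\S_{\CCC}(U))$; however, $\words{U^\dagger H U, U^\dagger S U}$ is a conjugate of the (finite) single-qubit Clifford group, so it alone cannot generate anything dense. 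Second, density in $\SL(2;\CC)$ — as opposed to merely in $\SU(2)$ — forces a non-unitary generator, and over $\S_{\CCC}(U)$ the only source of one is the entangling gate $(U^\dagger \otimes U^\dagger)\CZ(U \otimes U)$ used inside a post-selection gadget with an ancilla and a post-selected qubit.

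Accordingly, I would build a small $2$-to-$1$ (and, if necessary, $3$-to-$1$) post-selection gadget over $\S_{\CCC}(U)$ from the entangling gate together with a few single-qubit Clifford-conjugates, prepare the ancilla in the computational-basis state $|0\rangle$, and post-select one qubit on $|0\rangle$. Its action $\AA$ is then, up to conjugation by $U$ (which preserves traces and the $\SL(2;\CC)$ structure), a product of Clifford gates sandwiched between single-qubit states of the form $U|0\rangle$ and $U|1\rangle$, and for generic $U$ this $\AA$ has nonzero determinant and is genuinely non-unitary. As in the proof of \thmref{thm:IQP_hardness}, I would also exhibit explicit inverse gadgets in the sense of \defref{def:inversegadget} for each gadget used, so that the resulting set $\Gamma$ of normalized actions is closed under inverses and $\words{\Gamma} \leq \SL(2;\CC)$ is an honest subgroup.

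With $\Gamma$ in hand, the proof reduces to the three checks encoded in \figref{fig:criterionschematic}, each an elementary trace computation: (i) find a pair $g,h \in \Gamma$ whose values of $\beta$ and $\gamma$ fail all three elementary conditions of \propref{prop:elementarycondition}, so that $\IsElementary(\Gamma) = \NO$; (ii) find a pair $g,h \in \Gamma$ violating J{\o}rgensen's inequality (\propref{prop:Jorgensen}) or one of its generalizations (\propref{prop:Jorg_generalized}), so that $\IsDiscrete(\Gamma) = \NO$; and (iii) note that the non-unitary generator has non-real trace (for a suitable $U$ it involves a factor such as $\sqrt{1-i}$), so that $\IsLoxodromic(\Gamma) = \YES$. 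Then \thmref{thm:sullivanone} gives that $\words{\Gamma}$ is dense in $\SL(2;\CC)$, \thmref{thm:mainone} gives $\GadBQP(\S_{\CCC}(U)) = \PostBQP$, and \propref{prop:hardness} — using the hypothesis that the polynomial hierarchy is infinite — yields that no efficient classical computer can simulate efficient quantum computers over $\S_{\CCC}(U)$, hence no efficient classical computer can simulate CCCs, to within multiplicative error $\epsilon < \sqrt{2} - 1$.

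The main obstacle is the gadget-construction step: one must simultaneously arrange (a) that the gadget action is non-unitary with non-real trace, (b) that an inverse gadget exists over the \emph{same} gate set, and (c) that the handful of traces feeding the elementarity and discreteness tests land in the required regions. Because the gadget actions depend on $U$, each of these becomes a constraint on the parameters of $U$, so a clean argument must either pin down one explicit good $U$ or show the constraints hold for all $U$ outside an exceptional set — which, for the stronger ``almost all $U$'' statement, should be precisely the set of $U$ making CCCs classically simulable (e.g.\ $U$ Clifford, up to the relevant equivalences among conjugating unitaries). This parametrized, largely symbolic verification is exactly what the accompanying Mathematica notebook is designed to carry out.
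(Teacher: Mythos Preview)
Your proposal is correct and matches the paper's approach essentially exactly: the paper fixes the explicit choice $U = R_x(2\pi/3)$, builds three $2$-to-$1$ gadgets (using only the conjugated $Z$ and $\CZ$ gates) together with explicit inverse gadgets, and then verifies non-elementarity, non-discreteness (via J{\o}rgensen), and strict loxodromy by direct trace computations before invoking \thmref{thm:maintwo}. The only difference is that the paper commits to a single concrete $U$ and concrete gadgets rather than leaving them parametric.
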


Bouland et al. prove Theorem \ref{thm:CCC_hardness} by employing the well-known result that if $V$ is any non-Clifford single-qubit gate, then $\{V, H, S, \CZ\}$ is a universal gate set \cite{NRS01}. They then show that if $U \neq e^{i\alpha} C R_z(\lambda)$ for all $C \in \words{H,S}$ and all $\alpha, \lambda \in [0,2\pi)$, then there exists a post-selection gadget over $\S_{\CCC}(U)$ that realizes a unitary non-Clifford gate $V$. Together, these facts imply that post-selected $U$-CCCs can decide $\PP$-complete languages. When combined with the Gottesman-Knill theorem, their techniques complete the complexity classification of $U$-CCCs:
\begin{restatable}[Theorem 3.2 in \cite{BFK18}]{theorem}{CCCclassification}
\label{thm:CCC_classification}
If the polynomial hierarchy is infinite, then efficient classical computers can simulate $U$-CCCs to within multiplicative error $\epsilon < \sqrt{2} - 1$ iff there exists $C \in \words{H,S}$ and $\alpha, \lambda \in [0,2\pi)$ such that $U = e^{i \alpha} C R_z(\lambda)$.
\end{restatable}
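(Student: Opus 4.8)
This result is a biconditional, and its two directions have quite different flavors. \emph{Classically simulable direction.} Suppose $U = e^{i\alpha}CR_z(\lambda)$ for some $C \in \words{H,S}$ and $\alpha,\lambda \in [0,2\pi)$. An $m$-qubit $U$-CCC circuit is a product of layers, each a tensor product of gates from $\S_{\CCC}(U)\cup\{I_1\}$, and every such layer can be written as $(U^{\otimes m})^\dagger\tilde g_i\,U^{\otimes m}$ with $\tilde g_i$ a layer of Clifford (or identity) gates; inserting $U^{\otimes m}(U^{\otimes m})^\dagger = I$ between consecutive layers, the whole circuit telescopes to $Q = (U^{\otimes m})^\dagger\,\mathcal C\,U^{\otimes m}$ for a polynomial-size Clifford circuit $\mathcal C$. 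Substituting $U = e^{i\alpha}CR_z(\lambda)$, the global phases cancel and $Q = R_z(\lambda)^{\dagger\otimes m}\,\mathcal C'\,R_z(\lambda)^{\otimes m}$ with $\mathcal C' \defeq C^{\dagger\otimes m}\mathcal C\,C^{\otimes m}$ again a polynomial-size Clifford circuit. Because $R_z(\lambda)^{\otimes m}$ is diagonal in the computational basis, it multiplies every computational basis state --- in particular the input $\ket{x}\otimes\ket{\anc(n)}$ and the measured string $\ket{y}$ --- by a scalar, so $|\bra{y}Q\ket{x}\otimes\ket{\anc(n)}|^2 = |\bra{y}\mathcal C'\ket{x}\otimes\ket{\anc(n)}|^2$. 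Hence the output distribution of the $U$-CCC coincides with that of a polynomial-size Clifford circuit, which is exactly classically simulable by the Gottesman--Knill theorem \cite{Got98,AG04}, and in particular simulable to within multiplicative error $0 < \sqrt2-1$.

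\emph{Classically intractable direction.} Suppose $U \neq e^{i\alpha}CR_z(\lambda)$ for all $C \in \words{H,S}$ and all $\alpha,\lambda$. By \thmref{thm:maintwo} it suffices to produce a finite $\Gamma \subset \gad_1(\S_{\CCC}(U))$, closed under inverses, with $\words{\Gamma}$ non-elementary, non-discrete, and strictly loxodromic. I would first parametrize $U$ up to phase by Euler angles, $U = R_z(a)R_x(b)R_z(c)$, so that the excluded hypothesis becomes an explicit analytic non-degeneracy condition on $(a,b,c)$ cutting out a measure-zero set (a finite union of low-dimensional families). Then I would build gadgets: the single-qubit gates $U^\dagger H U$ and $U^\dagger S U$ are themselves $1$-to-$1$ gadgets, and their inverse gadgets are immediate since $H$ and $S$ have finite order; a $2$-to-$1$ gadget obtained by preparing a computational-basis ancilla, interleaving single-qubit gates with one copy of $(U^\dagger\otimes U^\dagger)\CZ(U\otimes U)$, and post-selecting on a computational-basis bit yields a generically non-unitary element of $\SL(2;\CC)$ whose normalized action $\tilde{\AA}$ is an explicit function of $(a,b,c)$, and its inverse gadget is obtained by running the construction backwards using $\CZ^{-1}=\CZ$. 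Finally, for a short fixed list of such gadgets $g,h$ I would verify by computer algebra that $\beta(g),\beta(h),\gamma(g,h)$ avoid the regions of \propref{prop:elementarycondition} (non-elementarity, via \corref{cor:elemcor}), that $g,h$ violate one of the J\o rgensen-type inequalities of \propref{prop:Jorgensen} and \propref{prop:Jorg_generalized} (non-discreteness, via \corref{cor:nondiscrete}), and that some gadget has $\tr\notin\RR$ (strict loxodromy, via \lemref{lem:loxlem}); each identity checked should degenerate precisely on the excluded set, which is exactly what makes the two directions match up.

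\emph{Main obstacle.} The crux is carrying out that last step \emph{uniformly}: since each of the three trace conditions necessarily fails somewhere on the excluded locus, no single finite gadget list can handle every admissible $U$ at once. I expect to need a short case split over the exceptional families --- e.g.\ the special values of $b$ (and of $a,c$) at which a $U$-CCC degenerates into a CCC over one of the fragments of the Clifford group --- treating the generic stratum with one gadget list and each boundary stratum with its own, and appealing there to the fragment classifications such as \thmref{thm:cz+zclassification} together with the auxiliary \lemref{lem:CCC_lem_one} and \lemref{lem:CCC_lem_two}. Essentially all of the Mathematica bookkeeping lives in this case analysis; the telescoping, the phase cancellations, and the finite-order inverse gadgets are routine.
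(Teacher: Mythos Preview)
Your simulable direction is correct and matches the paper's argument essentially verbatim.

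For the intractable direction, your outline is plausible but the paper takes a cleaner and rather different route. You propose to work directly with the three Euler angles $(a,b,c)$, use $U^\dagger HU$, $U^\dagger SU$ as unitary $1$-to-$1$ gadgets together with a non-unitary $2$-to-$1$ $\CZ$ gadget, and then push a computer-algebra verification of the three trace conditions through a case split on exceptional strata. The paper never builds gadgets involving $H$ or $S$ at all. Instead it uses \lemref{lem:CCC_lem_one} to strip the outer $R_z(\lambda)$ (and any left Clifford), reducing the whole problem to $R_z(\phi)R_x(\theta)$-CCCs, and then \lemref{lem:CCC_lem_two} handles this two-parameter family by a \emph{single} reduction to the one-parameter $\CZ+Z$ classification (\thmref{thm:cz+zclassification}): if $\theta\notin\frac{\pi}{2}\ZZ$, hardness of $R_z(\phi)R_x(\theta)$-conjugated $\CZ+Z$ circuits already follows from that theorem (for every $\phi$); and if $\theta\in\frac{\pi}{2}\ZZ_{\odd}$ with $\phi\notin\frac{\pi}{2}\ZZ$, a short chain of Clifford identities (notably $SHSHS\sim H$) together with \lemref{lem:CCC_lem_one} rotates the problem to $R_x(\phi-\tfrac{\pi}{2})$-CCCs, landing back in the previous case. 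So the ``generic stratum'' in the paper is not handled by a bespoke gadget list at all---it is handled by invoking \thmref{thm:cz+zclassification}, whose gadgets live entirely over $\{Z,\CZ\}$ and depend on the single parameter $\theta$. Your plan would force you to verify the J{\o}rgensen-type inequalities uniformly in three parameters, which is substantially messier, while the paper's reduction collapses everything to one.

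One technical slip: for the non-unitary $2$-to-$1$ gadget, ``running the construction backwards using $\CZ^{-1}=\CZ$'' produces the \emph{adjoint} of the gadget action, not its inverse; these coincide only for unitary actions. In the paper the inverse gadgets are found explicitly (by changing the ancilla/post-selection bits, not by reversing the circuit), and indeed sometimes only realize $-C_i^{-1}$ rather than $C_i^{-1}$.
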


Here, we reproduce \thmref{thm:CCC_hardness} using our criterion. In \appendref{append:CCC_classification}, we reproduce the full complexity classification of CCC's (\thmref{thm:CCC_classification}), again using our criterion.

\begin{proof}[Proof of \thmref{thm:CCC_hardness}] It suffices to exhibit a single-qubit unitary $U$ such that a finite number of post-selection gadgets over $\S_{\CCC}(U)$ have normalized actions that 
generate a non-elementary, non-discrete, and strictly loxodromic subgroup of $\SL(2;\CC)$. To this end, let
$$
U = R_x\left(\frac{2\pi}{3}\right) = 
\begin{pmatrix}
\cos \frac{\pi}{3} & -i\sin \frac{\pi}{3}\\
-i\sin \frac{\pi}{3} & \cos \frac{\pi}{3}
\end{pmatrix}
$$
and consider the three 2-to-1 post-selection gadgets $\d$, $\e$, and $\f$ in \tabref{tab:ccc_gadgets}.

\begin{table}[htpb]
    \centering
    \begingroup
    \renewcommand*{\arraystretch}{1.5}
     \begin{tabular}{|c|c||c|c|}
        \hline
        \ & Gadget in $\S_{\CCC}(U)$ & \ & Inverse gadget in $\S_{\CCC}(U)$ 
        \\
        \hline\hline
$\d$ &
    \begin{quantikz}
    \\
        \lstick{} & \gate{U} & & \ctrl{1} &  \gate{U^\dagger}  & \rstick{} \\
        \lstick{\ket{0}} & \gate{U} & \gate{Z} & \ctrl{0} & \gate{U^\dagger}  & \rstick{\bra{0}}
    \\
    \end{quantikz}
        & 
$\d^{-1}$ &        
    \begin{quantikz}
    \\
        \lstick{} & \gate{U} & \gate{Z} & \ctrl{1} & \gate{U^\dagger}  & \rstick{} \\
        \lstick{\ket{1}} & \gate{U} & & \ctrl{0} & \gate{U^\dagger}  & \rstick{\bra{1}}
    \\
    \end{quantikz} \\ \hline
$\e$ & 
    \begin{quantikz}
    \\
        \lstick{} & \gate{U} & & \ctrl{1} & \gate{U^\dagger}  & \rstick{\bra{0}} \\
        \lstick{\ket{0}} & \gate{U} & \gate{Z} & \ctrl{0} & \gate{U^\dagger}  & \rstick{}
    \\
    \end{quantikz}
    & 
$\e^{-1}$ & 
    \begin{quantikz}
    \\
        \lstick{} & \gate{U} & & \ctrl{1} & \gate{U^\dagger}  & \rstick{\bra{1}} \\
        \lstick{\ket{1}} & \gate{U} & \gate{Z} & \ctrl{0} & \gate{U^\dagger}  & \rstick{}
    \\
    \end{quantikz} \\ \hline
$\f$ &
    \begin{quantikz}
    \\
        \lstick{} & \gate{U} & \gate{Z} & \ctrl{1} & \gate{U^\dagger}  & \rstick{} \\
        \lstick{\ket{0}} & \gate{U} & & \ctrl{0} & \gate{U^\dagger}  & \rstick{\bra{0}}
    \\
    \end{quantikz}  
    & 
$\f^{-1}$ &
    \begin{quantikz}
    \\
        \lstick{} & \gate{U} & & \ctrl{1} & \gate{U^\dagger}  & \rstick{} \\
        \lstick{\ket{1}} & \gate{U} & \gate{Z} & \ctrl{0} & \gate{U^\dagger}  & \rstick{\bra{1}}
    \\
    \end{quantikz} \\ \hline
\end{tabular}
\endgroup
\caption{Post-selection gadgets for $U$-CCCs with $U = R_x(\frac{2\pi}{3})$.}
\label{tab:ccc_gadgets}
\end{table}

A straightforward calculation shows
that the normalized actions of $\d$, $\e$, and $\f$ are, respectively, the non-unitary $\SL(2;\CC)$ matrices:
\begin{align*}
D &\defeq \tilde{\AA}(\d) = \frac{1}{4\sqrt{2}}
\begin{pmatrix} 
-5i & 3\sqrt{3} \\ 
-3\sqrt{3} & i
\end{pmatrix}\\
E &\defeq \tilde{\AA}(\e) = \frac{1}{2\sqrt{6}}
\begin{pmatrix} 
5 & 3i\sqrt{3} \\ 
i\sqrt{3} & 3
\end{pmatrix}\\
F &\defeq \tilde{\AA}(\f) = \frac{1}{4\sqrt{2}}
\begin{pmatrix} 
5 & -i\sqrt{3} \\ 
i\sqrt{3} & 7
\end{pmatrix}.
\end{align*}

Crucially, inverse gadgets of $\d$, $\e$, and $\f$ are also realizable over $\S_{\CCC}(U)$. They are $\d^{-1}$, $\e^{-1}$, and $\f^{-1}$ in \tabref{tab:ccc_gadgets}, respectively. Indeed, these are inverses of $\d$, $\e$, and $\f$ because $\tilde{\AA}(\d^{-1}) = \tilde{\AA}(\d)^{-1} = D^{-1}$, $\tilde{\AA}(\e^{-1}) = \tilde{\AA}(\e)^{-1} = E^{-1}$, and $\tilde{\AA}(\f^{-1}) = \tilde{\AA}(\f)^{-1} = F^{-1}$.

Now let $\Gamma_{\CCC}(U) \defeq \{D, D^{-1}, E, E^{-1}, F, F^{-1}\}.$ Evidently, $\words{\Gamma_{\CCC}(U)}$ is closed under inverses, so $\words{\Gamma_{\CCC}(U)}$ is a subgroup of $\SL(2;\CC)$. In fact, $\words{\Gamma_{\CCC}(U)}$ is a non-elementary, non-discrete, and strictly loxodromic subgroup of $\SL(2;\CC)$.
\begin{claim}
\label{claim:CCC_1}
$\words{\Gamma_{\CCC}(U)}$ is a non-elementary subgroup of $\SL(2;\CC)$.
\end{claim}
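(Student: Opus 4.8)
The plan is to invoke the purely generator-based sufficient condition for non-elementarity, \corref{cor:elemcor}: it is enough to exhibit \emph{two} elements of $\Gamma_{\CCC}(U)$ that already generate a non-elementary subgroup of $\SL(2;\CC)$. To certify that a given pair $\words{g,h}$ is non-elementary, I would run the trace test of \propref{prop:elementarycondition} on that pair (equivalently, trace through \algref{alg:IsElementary}): compute $\beta(g) = \tr^2(g) - 4$, $\beta(h) = \tr^2(h) - 4$, and $\gamma(g,h) = \tr(ghg^{-1}h^{-1}) - 2$, and verify that none of conditions (i)--(iii) of \propref{prop:elementarycondition} holds.

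Concretely, I would try the pair $(D,E)$. From the displayed matrices, $\tr(D) = -i/\sqrt{2}$, so $\beta(D) = -\tfrac12 - 4 = -\tfrac92$. Since $-\tfrac92 \notin [-4,0]$, condition (i) already fails \emph{regardless} of $E$ and of $\gamma(D,E)$, and moreover $\beta(D) \neq -4$, which disposes of two of the three disjuncts of condition (iii). It then remains only to evaluate $\gamma(D,E)$ and to check $\beta(E) \neq -4$. For $\gamma$ I would avoid multiplying out the commutator and instead use the standard $\SL(2;\CC)$ identity $\tr(ghg^{-1}h^{-1}) = \tr^2(g) + \tr^2(h) + \tr^2(gh) - \tr(g)\tr(h)\tr(gh) - 2$, so that only $\tr(D)$, $\tr(E)$, and $\tr(DE)$ are needed; a short computation gives $\tr(E) = 4/\sqrt{6}$ (hence $\beta(E) = -\tfrac43 \neq -4$) and $\tr(DE) = -5i/(2\sqrt{3})$, whence $\gamma(D,E) = -\tfrac94 \neq 0$. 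Thus conditions (i), (ii), and (iii) all fail, $\words{D,E}$ is non-elementary, $\IsElementary(\Gamma_{\CCC}(U)) = \NO$, and \corref{cor:elemcor} yields the claim.

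There is no real obstacle here --- the argument is a finite computation --- but one does have to choose the pair with a little care, since not every pair of generators works: for instance the pair $(D,F)$ satisfies $\tr(DF) = 0$ and hence, by the same identity, $\gamma(D,F) = 0$, so it falls into case (ii) of \propref{prop:elementarycondition} and generates an \emph{elementary} subgroup. The pair $(D,E)$ above avoids this degeneracy. The only mildly delicate point is keeping the square roots and signs straight when evaluating $\tr(DE)$ and the product $\tr(D)\tr(E)\tr(DE)$; everything else is bookkeeping.
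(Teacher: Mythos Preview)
Your proof is correct and follows essentially the same approach as the paper: apply \corref{cor:elemcor} and the trace test of \propref{prop:elementarycondition} to a pair of generators. The only difference is cosmetic---the paper uses the pair $(E,F)$ (for which $\beta(E)=-\tfrac43$, $\beta(F)=\tfrac12$, $\gamma(E,F)=\tfrac14$) rather than your pair $(D,E)$; your side remark that $(D,F)$ fails because $\gamma(D,F)=0$ is a nice sanity check that is not in the paper.
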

\begin{proof}
Since $\beta(E) = -\frac{4}{3}$, $\beta(F) = \frac{1}{2}$, and $\gamma(E,F) = \frac{1}{4}$, $\IsElementary(\Gamma_{\CCC}(U)) = \NO$.
\end{proof}

\begin{claim}
\label{claim:CCC_2}
$\words{\Gamma_{\CCC}(U)}$ is a non-discrete subgroup of $\SL(2;\CC)$.
\end{claim}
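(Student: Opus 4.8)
The plan is to mirror the proof of \claimref{claim:iqp_2} by running \algref{alg:discrete}: it suffices to exhibit a pair $g,h \in \Gamma_{\CCC}(U)$ for which $\words{g,h}$ is non-elementary and whose traces violate J{\o}rgensen's inequality (\propref{prop:Jorgensen}), or more generally one of the inequalities of \propref{prop:Jorg_generalized} encoded in the later lines of \algref{alg:discrete}. Because \claimref{claim:CCC_1} already establishes that $\words{\Gamma_{\CCC}(U)}$ is non-elementary — and in fact that the two-generated subgroup $\words{E,F}$ is non-elementary, since it is the pair $(E,F)$ whose data $\beta(E) = -\tfrac{4}{3}$, $\beta(F) = \tfrac{1}{2}$, $\gamma(E,F) = \tfrac{1}{4}$ causes $\IsElementary$ to return $\NO$ — the natural candidate is the pair $\{E,F\}$, which is exactly the hypothesis J{\o}rgensen's inequality presupposes.

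The key quantities I would compute are $\tr(F) = \tfrac{12}{4\sqrt{2}} = \tfrac{3}{\sqrt{2}}$, so that $\tr^2(F) = \tfrac{9}{2}$ and $|\tr^2(F) - 4| = \tfrac{1}{2}$; and the commutator trace $\tr(FEF^{-1}E^{-1})$. Rather than multiply the $2 \times 2$ matrices by hand, I would observe that $FEF^{-1}E^{-1} = (EFE^{-1}F^{-1})^{-1}$ and that every matrix in $\SL(2;\CC)$ has the same trace as its inverse, whence $\tr(FEF^{-1}E^{-1}) = \tr(EFE^{-1}F^{-1}) = 2 + \gamma(E,F) = \tfrac{9}{4}$ and $|\tr(FEF^{-1}E^{-1}) - 2| = \tfrac{1}{4}$. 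Adding, $|\tr^2(F) - 4| + |\tr(FEF^{-1}E^{-1}) - 2| = \tfrac{1}{2} + \tfrac{1}{4} = \tfrac{3}{4} < 1$, so J{\o}rgensen's inequality is violated for $(g,h) = (F,E)$. By line 4 of \algref{alg:discrete}, $\IsDiscrete(\Gamma_{\CCC}(U)) = \NO$, and therefore $\words{\Gamma_{\CCC}(U)}$ is non-discrete via \corref{cor:nondiscrete}.

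The main obstacle is purely arithmetic bookkeeping: one must trust the stated normalized actions $D, E, F$ of the gadgets $\d, \e, \f$, which themselves follow from a direct but tedious evaluation of the gadget actions $\AA(\g_{2,1})$ with $U = R_x(2\pi/3)$, together with the commutator trace; reusing $\gamma(E,F)$ from \claimref{claim:CCC_1} sidesteps the most error-prone step. If, contrary to expectation, the pair $\{E,F\}$ failed to produce a sum below $1$ under J{\o}rgensen's inequality, the fallback is to test the remaining pairs — $\{D,E\}$, $\{D,F\}$, or pairs involving the inverses $D^{-1}, E^{-1}, F^{-1}$ — against the generalized inequalities of \propref{prop:Jorg_generalized} (the later lines of \algref{alg:discrete}); since $\Gamma_{\CCC}(U)$ has only six elements, this na\"ive search terminates immediately.
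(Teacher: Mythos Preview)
Your proposal is correct and follows exactly the paper's own proof: both apply J{\o}rgensen's inequality (line 4 of \algref{alg:discrete}) to the pair $(F,E)$, using that $\words{E,F}$ is non-elementary from \claimref{claim:CCC_1}, and obtain $\left|\tr^2(F) - 4\right| + \left|\tr(FEF^{-1}E^{-1}) - 2\right| = \tfrac{3}{4} < 1$. Your shortcut of reading off $\tr(FEF^{-1}E^{-1}) = 2 + \gamma(E,F)$ via the $\SL(2;\CC)$ identity $\tr(A) = \tr(A^{-1})$ is a nice touch that the paper omits.
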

\begin{proof}
Since $\IsElementary(\{E,F\}) = \NO$ and
$$
\left|\tr^2(F) - 4\right| + \left|\tr(FEF^{-1}E^{-1}) - 2 \right| = \frac{3}{4} < 1,
$$
it follows from line 4 in \algref{alg:discrete} that $\IsDiscrete(\Gamma_{\CCC}(U)) = \NO$.
\end{proof}

\begin{claim}
\label{claim:CCC_3}
$\words{\Gamma_{\CCC}(U)}$ is a strictly loxodromic subgroup of $\SL(2; \CC)$.
\end{claim}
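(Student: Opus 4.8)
The plan is to apply \lemref{lem:loxlem}: to conclude that $\words{\Gamma_{\CCC}(U)}$ is strictly loxodromic, it suffices to exhibit a single element of the generating set $\Gamma_{\CCC}(U)$ whose trace lies in $\CC \backslash \RR$, which is precisely the check performed by \algref{alg:lox} ($\IsLoxodromic$). So the first and only real step is to read the traces of $D$, $E$, and $F$ off the matrices displayed above.

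Carrying this out, one finds $\tr(D) = \tfrac{1}{4\sqrt{2}}\left(-5i + i\right) = -\tfrac{i}{\sqrt{2}}$, which is manifestly in $\CC \backslash \RR$. (For contrast, $\tr(E) = \tfrac{4}{\sqrt{6}}$ and $\tr(F) = \tfrac{3}{\sqrt{2}}$ are both real, so $D$ — equivalently $D^{-1}$ — is the generator that witnesses strict loxodromy.) Hence $D$ is a strictly loxodromic element of $\SL(2;\CC)$, so $\IsLoxodromic(\Gamma_{\CCC}(U)) = \YES$ and \lemref{lem:loxlem} delivers the claim.

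The one small point worth verifying is that $D \in \SL(2;\CC)$, so that $\tr(D)$ is indeed the $\SL(2;\CC)$-invariant entering the definition of a strictly loxodromic element; a one-line computation gives $\det D = \tfrac{1}{32}\left((-5i)(i) - (3\sqrt{3})(-3\sqrt{3})\right) = \tfrac{1}{32}(5 + 27) = 1$, as needed. There is no genuine obstacle here. With \claimref{claim:CCC_3} in hand, it combines with \claimref{claim:CCC_1}, \claimref{claim:CCC_2}, and \thmref{thm:maintwo} to establish the quantum advantage of CCCs (\thmref{thm:CCC_hardness}).
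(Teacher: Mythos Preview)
Your proof is correct and takes essentially the same approach as the paper: both exhibit $D$ as the strictly loxodromic generator via $\tr(D) = -\tfrac{i}{\sqrt{2}} \in \CC \backslash \RR$, invoking $\IsLoxodromic$. Your additional checks (the traces of $E$, $F$, and $\det D = 1$) are correct but not needed for the claim.
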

\begin{proof}
Since $\tr(D) = -\frac{i}{\sqrt{2}} \in \mathbb{C} \backslash \RR$, $\IsLoxodromic(\Gamma_{\CCC}(U)) = \YES$.
\end{proof}

Thus, the quantum advantage of CCCs follows from our criterion \thmref{thm:maintwo}. 
\end{proof}

\subsection{Conjugated $\CZ$ Circuits}

In this section, we prove a quantum advantage result for a special subclass of CCCs that we call \emph{conjugated $\CZ$ circuits}. As the name suggests, conjugated $\CZ$ circuits are CCCs where the interstitial Clifford circuit is made entirely of $\CZ$ gates. 

\begin{definition}
Fix $U \in \U(2)$. A \emph{$U$-conjugated $\CZ$ circuit} is an efficient quantum computer over the gate set
$$
\S_{\CZ}(U) \defeq \{(U^\dagger \otimes U^\dagger)\CZ(U \otimes U)\}.
$$
A \emph{conjugated $\CZ$ circuit} is a $U$-conjugated $\CZ$ circuit for some $U \in \U(2)$.
\end{definition}

Conjugated $\CZ$ circuits are incredibly simple. Nevertheless, it is very unlikely that efficient classical computers can simulate them in the weak multiplicative sense.

\begin{theorem}
\label{thm:CZhardness}
If the polynomial hierarchy is infinite, then efficient classical computers cannot simulate conjugated $\CZ$ circuits to within multiplicative error $\epsilon < \sqrt{2} - 1$.
\end{theorem}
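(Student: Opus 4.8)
The plan is to invoke the criterion \thmref{thm:maintwo}: it suffices to produce a single $U \in \U(2)$ and a finite, inverse-closed set $\Gamma \subset \gad_1(\S_{\CZ}(U))$ for which $\words{\Gamma}$ is non-elementary, non-discrete, and strictly loxodromic; then $\GadBQP(\S_{\CZ}(U)) = \PostBQP$, and \propref{prop:hardness} together with the assumption that $\PH$ is infinite gives the claim for $U$-conjugated $\CZ$ circuits, hence for conjugated $\CZ$ circuits. One structural wrinkle must be dealt with at the outset. Writing $\tilde{\CZ} \defeq (U^\dagger \otimes U^\dagger)\CZ(U \otimes U)$, any circuit over $\S_{\CZ}(U)$ on $m$ qubits equals $(U^{\otimes m})^\dagger D (U^{\otimes m})$ for some diagonal circuit $D$ of $\CZ$ gates, so a post-selection gadget over $\S_{\CZ}(U)$ whose input wire and output wire \emph{coincide} has normalized action in the abelian (hence elementary) torus $\{U^\dagger \mathrm{diag}(\lambda,\lambda^{-1}) U : \lambda \in \CC^\ast\}$; such gadgets alone can never densely generate $\SL(2;\CC)$. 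The way out is to use gadgets whose input and output wires are \emph{distinct}, so that a $\tilde{\CZ}$ is spent moving the state across rails through a post-selection; these generically escape the torus. For instance, with $U = R_x(2\pi/3)$, the $2$-to-$1$ gadget that puts the input on wire $1$ and an ancilla $|0\rangle$ on wire $2$, applies $\tilde{\CZ}$, post-selects wire $1$ on $\langle 0|$, and reads out wire $2$ has normalized action with trace $i\sqrt{2/3} \in \CC \setminus \RR$, so it is strictly loxodromic.

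The cleanest way to finish is to route through the (only superficially richer) class of conjugated $\CZ + Z$ circuits, i.e.\ circuits over $\S_{\CZ+Z}(U) \defeq \{\tilde{\CZ},\, U^\dagger Z U\}$. First I would establish $\GadBQP(\S_{\CZ+Z}(U)) = \PostBQP$ for a suitable fixed $U$ by applying the criterion exactly as in \sref{sec:CCCs}: exhibit a few post-selection gadgets over $\S_{\CZ+Z}(U)$ (again with distinct input/output wires), \emph{with explicit inverse gadgets}, and compute $\beta$, $\gamma$, and a non-real trace to force $\IsElementary$, $\IsDiscrete$ to $\NO$ and $\IsLoxodromic$ to $\YES$, \`a la Claims~\ref{claim:iqp_1}--\ref{claim:iqp_3} and \ref{claim:CCC_1}--\ref{claim:CCC_3}. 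Inverse gadgets are painless here because $U^\dagger Z U$ is an involution available as a gate, so a gadget can be flanked and conjugated by it and conjugated gadgets come in convenient pairs. Second, I would prove the bridge lemma $\GadBQP(\S_{\CZ}(U)) = \GadBQP(\S_{\CZ+Z}(U))$; only ``$\supseteq$'' requires work, and it follows once $U^\dagger Z U$ is shown to be realizable by a gadget over the bare set $\S_{\CZ}(U)$. Here the torus gadgets are useful: the $2$-to-$1$ gadget over $\S_{\CZ}(U)$ with coinciding input/output wire and matching ancilla bit $a$ and post-selection bit $a$ has action $M_a \defeq |\langle 0|U|a\rangle|^2 I + |\langle 1|U|a\rangle|^2\, U^\dagger Z U$, and since $(U^\dagger Z U)^2 = I$ a one-line eigenvalue computation gives $M_0 M_1^{-1} = U^\dagger Z U$ whenever $|\langle 0|U|0\rangle| \neq |\langle 1|U|0\rangle|$, a condition met, e.g., by $U = R_x(2\pi/3)$. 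So provided an inverse gadget for $M_1$ exists over $\S_{\CZ}(U)$ (which it should, after a short search), $U^\dagger Z U \in \words{\gad_1(\S_{\CZ}(U))}$ and every $\S_{\CZ+Z}(U)$-circuit can be rebuilt over $\S_{\CZ}(U)$; this gives the lemma. Combining the two steps with \propref{prop:hardness} finishes the proof.

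The step I expect to be the main obstacle is the scarcity of gadgets over $\S_{\CZ}(U)$: a single $\tilde{\CZ}$ squares to the identity, so all the nontrivial behaviour lives in gadgets where the input and output rails differ, the few small such gadgets are not closed under inverses, and $\S_{\CZ}(U)$ gives no obvious way to build inverse gadgets — one must locate them as larger, multi-ancilla gadgets by computer search (as is done elsewhere in the paper) and simultaneously commit to an explicit $U$ for which all the $\beta$-, $\gamma$-, and trace-values land in the ranges demanded by Algorithms~\ref{alg:IsElementary}--\ref{alg:lox}. Passing through $\CZ + Z$ confines this difficulty to a setting where the extra $Z$ gate makes inverse gadgets transparent.
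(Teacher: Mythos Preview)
Your architecture is the paper's: pass through $\CZ + Z$, prove $\GadBQP(\S_{\CZ+Z}(U)) = \PostBQP$ with the criterion, then bridge back. The paper even uses the same $U = R_x(2\pi/3)$ and notes for the first step that the gadgets $\d,\e,\f,\d^{-1},\e^{-1},\f^{-1}$ from \sref{sec:CCCs} already live in $\S_{\CZ+Z}(U)$, so that step is literally free (\thmref{thm:czzhardnessresult}). Your remark that $U^\dagger Z U$ being an involution makes inverse gadgets ``painless'' is a bit optimistic, though: the inverse gadgets in \tabref{tab:ccc_gadgets} are not obtained by flanking with the conjugated $Z$ but by swapping ancilla and post-selection bits.

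Where you genuinely diverge is the bridge, and there you leave a real gap. Your identity $M_0 M_1^{-1} = U^\dagger Z U$ is correct, but you need $M_1^{-1}$ as a gadget over the bare $\S_{\CZ}(U)$ and you do not produce one. The only nondegenerate $2$-to-$1$ torus gadgets are $M_0$ and $M_1$ themselves; for $U = R_x(2\pi/3)$ their eigenvalues in the $U^\dagger Z U$ eigenbasis are $\{1,-\tfrac12\}$ and $\{1,\tfrac12\}$, so no word in nonnegative powers of $M_0,M_1$ can hit $M_1^{-1}$ (eigenvalues $\{1,2\}$). Your ``short search'' must therefore go to larger multi-$\widetilde{\CZ}$ gadgets, and at that point there is no reason to route through the torus at all. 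The paper instead exhibits a single $3$-to-$1$ gadget $\g$ over $\S_{\CZ}(R_x(\theta))$ --- two chained $\widetilde{\CZ}$'s, ancillae $\ket{0},\ket{1}$, post-selections $\bra{1},\bra{0}$ --- whose normalized action is $i\,U^\dagger Z U$ on the nose for every $\theta\notin\frac{\pi}{2}\ZZ$ (\lemref{lem:CZlemma}). Since $(U^\dagger Z U)^2 = I$, no separate inverse gadget is needed, and the bridge $\GadBQP(\S_{\CZ}(R_x(\theta))) = \GadBQP(\S_{\CZ+Z}(R_x(\theta)))$ follows immediately with no search.
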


To prove this, we employ an ostensibly less restrictive model of quantum computation.
\begin{definition}
Fix $U \in \U(2)$. A \emph{$U$-conjugated $\CZ + Z$ circuit} is an efficient quantum computer over the gate set
$$
\S_{\CZ + Z}(U) \defeq \{U^\dagger Z U, (U^\dagger \otimes U^\dagger)\CZ(U \otimes U)\}.
$$
A \emph{conjugated $\CZ + Z$ circuit} is a $U$-conjugated $\CZ + Z$ circuit for some $U \in \U(2)$.
\end{definition}

Conjugated $\CZ$ circuits are a type of conjugated $\CZ + Z$ circuit, so indeed conjugated $\CZ + Z$ circuits are less restrictive. In fact, we have already proven that efficient classical computers can most likely not simulate conjugated $\CZ + Z$ circuits in the weak multiplicative sense.

\begin{theorem}
\label{thm:czzhardnessresult}
If the polynomial hierarchy is infinite, then efficient classical computers cannot simulate conjugated $\CZ + Z$ circuits to within multiplicative error $\epsilon < \sqrt{2} - 1$.
\end{theorem}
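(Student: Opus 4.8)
The plan is to apply our criterion \thmref{thm:maintwo} directly, in essentially the same way as in the proof of \thmref{thm:CCC_hardness}. By \thmref{thm:maintwo}, it suffices to exhibit a single-qubit unitary $U$ together with a finite set $\Gamma \subset \gad_1(\S_{\CZ+Z}(U))$ for which $\words{\Gamma}$ is a non-elementary, non-discrete, and strictly loxodromic subgroup of $\SL(2;\CC)$; note that $\S_{\CZ+Z}(U)$ is always a legitimate gate set, since the conjugated $\CZ$ gate is entangling.

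The key observation is that the gadgets already constructed for conjugated Clifford circuits do the job. Take $U = R_x(2\pi/3)$ and reconsider the gadgets $\d$, $\e$, $\f$ and their inverses from \tabref{tab:ccc_gadgets}: the circuit of each one uses only the single-qubit gate $Z$ (conjugated by $U$) and the two-qubit gate $\CZ$ (conjugated by $U^\dagger \otimes U^\dagger$), and never $H$ or $S$. First I would make this precise by recording the operator identity $(U \otimes U)\big((U^\dagger Z U)\otimes I_1\big) = (Z\otimes I_1)(U\otimes U)$, together with its twin with $Z$ on the second tensor factor; these show that each gadget circuit in \tabref{tab:ccc_gadgets} factors as a product of the two generators $U^\dagger Z U$ and $(U^\dagger\otimes U^\dagger)\CZ(U\otimes U)$ of $\S_{\CZ+Z}(U)$ (tensored with identities), with the same ancilla and post-selection patterns. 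Hence the normalized actions $D, D^{-1}, E, E^{-1}, F, F^{-1}$ all lie in $\gad_1(\S_{\CZ+Z}(U))$, so $\Gamma_{\CCC}(U) = \{D, D^{-1}, E, E^{-1}, F, F^{-1}\}$ is a valid gadget set over the strictly smaller gate set $\S_{\CZ+Z}(U)$.

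With the realizability in hand, the rest is immediate: \claimref{claim:CCC_1}, \claimref{claim:CCC_2}, and \claimref{claim:CCC_3} already establish, via $\IsElementary$, $\IsDiscrete$, and $\IsLoxodromic$, that $\words{\Gamma_{\CCC}(U)}$ is non-elementary, non-discrete, and strictly loxodromic. Therefore \thmref{thm:maintwo} gives $\GadBQP(\S_{\CZ+Z}(U)) = \PostBQP$, and assuming the polynomial hierarchy is infinite, \propref{prop:hardness} yields that no efficient classical computer can simulate conjugated $\CZ + Z$ circuits to within multiplicative error $\epsilon < \sqrt{2}-1$. I do not expect a genuine obstacle here; the only thing requiring care is the bookkeeping that the $U^\dagger$-conjugation bookends and the ancilla/post-selection registers of the CCC gadgets are compatible with being written purely from the two generators of $\S_{\CZ+Z}(U)$. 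Once that identity is verified, the entire CCC argument transfers verbatim.
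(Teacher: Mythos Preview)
Your proposal is correct and is essentially identical to the paper's own proof: the paper also takes $U = R_x(2\pi/3)$, observes that the gadgets $\d,\e,\f,\d^{-1},\e^{-1},\f^{-1}$ from \tabref{tab:ccc_gadgets} are all realizable over $\S_{\CZ+Z}(U)$ (since they use only conjugated $Z$ and $\CZ$), and then invokes the criterion via the same three claims. The only difference is that you spell out the conjugation bookkeeping a bit more explicitly, which the paper leaves implicit.
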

\begin{proof}
Put $U = R_x(\frac{2\pi}{3})$ as in \sref{sec:CCCs}. Since the gadgets $\d, \e, \f, \d^{-1}, \e^{-1}$, and $\f^{-1}$ are all realizable over $\S_{\CZ + Z}(U)$, the result follows from our criterion.
\end{proof}

We now prove a lemma which together with \thmref{thm:czzhardnessresult} implies \thmref{thm:CZhardness}.

\begin{lemma}
\label{lem:CZlemma}
If the polynomial hierarchy is infinite, then for all $\theta \not\in \frac{\pi}{2}\ZZ$, efficient classical computers can simulate $R_x(\theta)$-conjugated $\CZ$ circuits to within multiplicative error $\epsilon < \sqrt{2} - 1$ iff they can simulate $R_x(\theta)$-conjugated $\CZ + Z$ circuits to within the same multiplicative error.
\end{lemma}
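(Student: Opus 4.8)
\emph{The easy direction.} Since $\S_{\CZ}(U)\subseteq\S_{\CZ+Z}(U)$, every $R_x(\theta)$-conjugated $\CZ$ circuit is in particular an $R_x(\theta)$-conjugated $\CZ+Z$ circuit, so a classical multiplicative-$\epsilon$ simulation of the latter model is automatically one of the former. All the content is in the converse, and for that I would first record the structural fact underlying everything: with $U=R_x(\theta)$, $\widetilde{Z}\defeq U^\dagger Z U$ and $\widetilde{\CZ}\defeq(U^\dagger\otimes U^\dagger)\CZ(U\otimes U)$, conjugating every gate of an $N$-qubit conjugated $\CZ+Z$ circuit by $U^{\otimes N}$ replaces each $\widetilde{Z}$ by $Z$ and each $\widetilde{\CZ}$ by $\CZ$, both diagonal; in particular all gates of such a circuit commute, so the circuit equals $(U^\dagger)^{\otimes N}\,D\,U^{\otimes N}$ where $D$ is diagonal with $w$-th entry $(-1)^{Q(w)+L(w)}$, $Q$ the homogeneous quadratic form over $\mathbb{F}_2$ recording the $\CZ$ gates and $L$ the linear form recording the $Z$ gates. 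Conjugated $\CZ$ circuits are exactly the case $L\equiv 0$, so, writing $p_{Q,L}(\cdot\mid z)$ for the output distribution on computational-basis input $z$, the converse amounts to: from an efficient classical sampler for $p_{Q,0}(\cdot\mid z)$ for all homogeneous quadratic $Q$ and all $z$, build one for $p_{Q,L}(\cdot\mid z)$ for all affine-quadratic $Q+L$.

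\emph{First route (direct simulation).} Homogenise $L$ by adjoining one ancilla wire and a $\widetilde{\CZ}$ between it and each qubit on which $L$ is supported, so that $L(w)$ becomes $w_0L(w)$. Because $\theta\notin\tfrac{\pi}{2}\ZZ$ forces $\cos(\theta/2),\sin(\theta/2)\neq 0$, a computational-basis ancilla is not a $w_0$-eigenstate in the conjugated basis, so tracing it out yields the convex combination $\cos^2(\theta/2)\,p_{Q,0}(\cdot\mid z)+\sin^2(\theta/2)\,p_{Q,L}(\cdot\mid z)$ (and, starting the ancilla in $|1\rangle$, the complementary combination). Both of these and $p_{Q,0}(\cdot\mid z)$ are themselves output distributions of conjugated $\CZ$ computers, hence samplable by hypothesis (marginalising the simulator's output over the ancilla register preserves multiplicative error), and one then recovers a sampler for $p_{Q,L}$ from these, the bound $\theta\notin\tfrac{\pi}{2}\ZZ$ keeping the mixing weights away from $\{0,1\}$ so that only polynomial overhead and a controlled blow-up of the multiplicative error result.

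\emph{Second route (via post-selected classes), which I expect is the intended one.} Realise $U^\dagger Z U$ up to a harmless global scalar by a short composite of one-ancilla post-selection gadgets over $\S_{\CZ}(U)$: for instance, the gadget that prepares an ancilla in $|0\rangle$, applies $\widetilde{\CZ}$, and post-selects it on $\langle 0|$ has action $U^\dagger\,\mathrm{diag}(1,\cos\theta)\,U$, of determinant $\cos\theta\neq 0$ when $\theta\notin\tfrac{\pi}{2}\ZZ$, and suitable products of such gadgets with their inverses give a unitary proportional to $U^\dagger Z U$. Substituting these for the $\widetilde{Z}$ gates in a post-selected computer shows $\PostBQP(\S_{\CZ+Z}(U))\subseteq\PostBQP(\S_{\CZ}(U))$ (the extra gadget post-selections succeed with a fixed, state-independent probability, since the gadget action is a scalar times a unitary, so they do not disturb the decided language), whence $\GadBQP(\S_{\CZ}(U))=\GadBQP(\S_{\CZ+Z}(U))$ by \propref{prop:mainprop}; combining this equality with \propref{prop:hardness} and the hypothesis that the polynomial hierarchy is infinite transfers classical intractability between the two sampling models, which is presumably why that hypothesis appears in the statement.

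\emph{The main obstacle.} In both routes the difficulty is entirely the handling of the $U^\dagger Z U$ gate, and in particular one \emph{cannot} simply prepare the modified input state: a circuit of $\widetilde{\CZ}$ gates is diagonal in the $R_x(\theta)$-conjugated basis, hence conserves conjugated-basis populations on every wire, so it can never change and then disentangle the state of an isolated wire — the linear term $L$ is genuinely not injectable as a one-qubit state. This forces one either into the mixture-manipulation of the first route (delicate because a weak simulator exposes samples, not probabilities, so the recovery of $p_{Q,L}$ must be arranged without rejection sampling on probability values) or into the gadget construction of the second route, where the real work is producing the requisite \emph{inverse} gadgets over $\S_{\CZ}(U)$ so the relevant group is closed under inverses (cf. \conjref{conj:gadgetconj}) and verifying that the substitution preserves the bounded-error guarantee.
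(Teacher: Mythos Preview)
Your second route is exactly the paper's approach: prove $\GadBQP(\S_{\CZ}(R_x(\theta)))=\GadBQP(\S_{\CZ+Z}(R_x(\theta)))$ by exhibiting a post-selection gadget over $\S_{\CZ}(R_x(\theta))$ whose normalized action is (a scalar multiple of) $U^\dagger Z U$, and then invoke \propref{prop:hardness}. The gap is that the concrete gadget you write down cannot do the job. Your $2$-to-$1$ gadget has action $U^\dagger\,\mathrm{diag}(1,\cos\theta)\,U$; conjugating by $U$, the group it generates (with its inverse) consists of the matrices $\mathrm{diag}(1,(\cos\theta)^n)$ for $n\in\ZZ$, and no such matrix is a nonzero scalar multiple of $Z=\mathrm{diag}(1,-1)$ unless $|\cos\theta|=1$, i.e.\ $\theta\in\pi\ZZ$. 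So ``suitable products of such gadgets with their inverses'' do \emph{not} yield $U^\dagger Z U$, and you have not actually closed the argument.

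The paper sidesteps this by writing down a single explicit $3$-to-$1$ gadget over $\S_{\CZ}(R_x(\theta))$ --- two ancillae in $\ket{0},\ket{1}$, two conjugated $\CZ$ gates in a chain, post-select on $\bra{1},\bra{0}$ --- and computing that its action has determinant $-\tfrac{1}{4}\sin^4\theta\neq 0$ and normalized action \emph{exactly} $i\,R_x(\theta)^\dagger Z R_x(\theta)$. That one calculation is the whole proof; no products, no inverse-gadget search, no appeal to \conjref{conj:gadgetconj}. Your first route is not needed, and as you yourself flag, recovering a multiplicative-error sampler for $p_{Q,L}$ from samplers for two convex combinations of $p_{Q,0}$ and $p_{Q,L}$ is not something a weak simulator supports.
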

\begin{proof}
By \propref{prop:hardness}, it suffices to prove that for all $\theta \not\in \frac{\pi}{2}\ZZ$, $\GadBQP(\S_{\CZ}(R_x(\theta))) = \GadBQP(\S_{\CZ + Z}(R_x(\theta)))$. It is plain that $\GadBQP(\S_{\CZ}(R_x(\theta))) \subseteq \GadBQP(\S_{\CZ + Z}(R_x(\theta)))$ because every $R_x(\theta)$-conjugated $\CZ$ circuit is an $R_x(\theta)$-conjugated $\CZ + Z$ circuit. For the other direction, consider the following gadget over $\S_{\CZ}(R_x(\theta))$:
$$
\g =
\begin{quantikz}
    \\
        \lstick{} & \gate{R_x(\theta)} & \ctrl{1} &  & \gate{R_x(-\theta)} & \rstick{} \\
        \lstick{\ket{0}} & \gate{R_x(\theta)} & \ctrl{0} & \ctrl{1} & \gate{R_x(-\theta)} & \rstick{\bra{1}} \\
        \lstick{\ket{1}} & \gate{R_x(\theta)} &  & \ctrl{0} & \gate{R_x(-\theta)} & \rstick{\bra{0}} 
    \\
    \end{quantikz}
$$
It is straightforward to show that $\det \AA(\g) = -\frac{1}{4}\sin^4(\theta)$. Therefore, the normalized action $\tilde{\AA}(\g)$ exists if $\theta \not\in\frac{\pi}{2}\ZZ$. In this case,
$$
\tilde{\AA}(\g) = 
\begin{pmatrix}
i\cos(\theta) & \sin(\theta)\\
-\sin(\theta) & -i\cos(\theta)
\end{pmatrix}
= iR_x(\theta)ZR_x(-\theta).
$$
Consequently, if $\theta \not\in\frac{\pi}{2}\ZZ$, then there exists a gadget over $\S_{\CZ}(R_x(\theta))$ that exactly implements an $R_x(\theta)$-conjugated $Z$ gate (up to the immaterial phase $i$), so every efficient gadget quantum computer over $S_{\CZ + Z}(R_x(\theta))$ can be exactly simulated by an efficient gadget quantum computer over $S_{\CZ}(R_x(\theta))$. This implies $\GadBQP(\S_{\CZ + Z}(R_x(\theta)))\subseteq \GadBQP(\S_{\CZ}(R_x(\theta)))$, as desired.
\end{proof}

We now prove \thmref{thm:CZhardness}.

\begin{proof}[Proof of \thmref{thm:CZhardness}]
Put $U = R_x(\frac{2\pi}{3})$ and suppose that the polynomial hierarchy is infinite. Then, by the proof of \thmref{thm:czzhardnessresult}, no efficient classical computer can simulate $U$-conjugated $\CZ + Z$ circuits to within multiplicative error $\epsilon < \sqrt{2} - 1$. Therefore, by \lemref{lem:CZlemma} and the fact that $\frac{2\pi}{3} \not\in \frac{\pi}{2}\ZZ$, no efficient classical computer can simulate $U$-conjugated $\CZ$ circuits to within the same multiplicative error.
\end{proof}

In \appendref{append:czclassification}, we prove the full complexity classification of $U$-conjugated $\CZ$ circuits.

\begin{restatable}{theorem}{CZclassification}
\label{thm:czclassification}
If the polynomial hierarchy is infinite, then efficient classical computers can simulate $U$-conjugated $\CZ$ circuits to within multiplicative error $\epsilon < \sqrt{2} - 1$ iff there exists $C \in \words{H,S}$ and $\alpha, \phi, \lambda \in [0,2\pi)$ such that $U = e^{i\alpha} R_z(\phi)CR_z(\lambda)$.
\end{restatable}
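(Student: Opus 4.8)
The plan is to prove the two implications of the ``iff'' separately: that every $U$ of the stated form yields classically simulable circuits, and that every other $U$ (assuming $\PH$ is infinite) yields classically intractable ones.

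For the forward (simulability) direction I would argue directly. If $U = e^{i\alpha} R_z(\phi) C R_z(\lambda)$ with $C \in \words{H,S}$, then the single generating gate of an $n$-qubit $U$-conjugated $\CZ$ circuit is $(U^\dagger \otimes U^\dagger)\CZ(U \otimes U) = R_z(-\lambda)^{\otimes 2}(C^\dagger)^{\otimes 2}R_z(-\phi)^{\otimes 2}\CZ R_z(\phi)^{\otimes 2}C^{\otimes 2}R_z(\lambda)^{\otimes 2}$, and since $R_z(\pm\phi)^{\otimes 2}$ is diagonal it commutes with $\CZ$, leaving $R_z(-\lambda)^{\otimes 2}\tilde{C}R_z(\lambda)^{\otimes 2}$ with $\tilde{C} = (C^\dagger)^{\otimes 2}\CZ C^{\otimes 2}$ a two-qubit Clifford. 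Padding with identities on the untouched wires (where $R_z(-\lambda)R_z(\lambda)=I$), each gate of the circuit equals $R_z(-\lambda)^{\otimes n}\tilde{C}_i R_z(\lambda)^{\otimes n}$, so the inner factors telescope and the whole circuit equals $R_z(-\lambda)^{\otimes n}Q_{\mathrm{Cliff}}R_z(\lambda)^{\otimes n}$ for a polynomial-size Clifford circuit $Q_{\mathrm{Cliff}}$. Since $R_z(\pm\lambda)^{\otimes n}$ contributes only phases to computational basis states, the output distribution on any basis input coincides with that of $Q_{\mathrm{Cliff}}$, which an efficient classical computer samples exactly by the Gottesman--Knill theorem \cite{Got98, AG04}; hence simulability holds even with $\epsilon = 0$.

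For the reverse (intractability) direction I would first reduce to rotations about the $x$-axis. Writing a $ZXZ$ Euler decomposition $U = e^{i\alpha}R_z(\beta)R_x(\theta)R_z(\delta)$ and running the same telescoping argument, every $U$-conjugated $\CZ$ circuit has output distribution identical to the corresponding $R_x(\theta)$-conjugated $\CZ$ circuit, so the two are classically simulable or not together. A short computation then shows $U$ has the special form iff $\theta \in \frac{\pi}{2}\ZZ$: absorbing the outer $R_z$'s into $\beta$ and $\delta$, $U$ is special iff $R_x(\theta)$ is, and since multiplying $C$ by diagonal matrices preserves entry magnitudes while every $C \in \words{H,S}$ has magnitude profile $(1,0;0,1)$, $(0,1;1,0)$, or $(1/\sqrt 2, \ldots, 1/\sqrt 2)$, matching this to $R_x(\theta)$ forces $|\cos(\theta/2)|, |\sin(\theta/2)| \in \{0, 1/\sqrt 2, 1\}$, i.e. $\theta \in \frac{\pi}{2}\ZZ$; conversely $R_x(k\pi/2)$ is a single-qubit Clifford up to phase, so such $U$ is special. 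It therefore remains to prove that $R_x(\theta)$-conjugated $\CZ$ circuits are classically intractable for every $\theta \in (0,2\pi)\backslash\frac{\pi}{2}\ZZ$.

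This last step is the crux, and it is where I would invoke our criterion \thmref{thm:maintwo}. By \lemref{lem:CZlemma} it suffices to show $R_x(\theta)$-conjugated $\CZ + Z$ circuits are intractable for all such $\theta$. The plan is to take the three $2$-to-$1$ gadgets $\d, \e, \f$ (and their inverses) from the proof of \thmref{thm:CCC_hardness}, which are also realizable over $\S_{\CZ + Z}(R_x(\theta))$, but now with the general angle $\theta$ in place of $2\pi/3$; write their normalized actions as $D(\theta), E(\theta), F(\theta) \in \SL(2;\CC)$ and set $\Gamma(\theta) = \{D(\theta)^{\pm 1}, E(\theta)^{\pm 1}, F(\theta)^{\pm 1}\}$. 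Using Wolfram Mathematica as in \cite{KFG24}, one would check symbolically that, for every $\theta \in (0,2\pi)\backslash\frac{\pi}{2}\ZZ$: the relevant gadget determinants are nonzero (so the normalized actions exist); $\tr(D(\theta)) \notin \RR$ (so $\IsLoxodromic$ returns $\YES$); $\{E(\theta),F(\theta)\}$ violates every clause of \propref{prop:elementarycondition} (so $\IsElementary$ returns $\NO$); and a J\o rgensen-type quantity such as $|\tr^2(F(\theta)) - 4| + |\tr(F(\theta)E(\theta)F(\theta)^{-1}E(\theta)^{-1}) - 2|$ is strictly less than $1$ (so $\IsDiscrete$ returns $\NO$). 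By \thmref{thm:maintwo} this gives the desired intractability, and assembling the three steps completes the classification. The hard part will be exactly this uniform-in-$\theta$ verification: the quantities above are real-analytic trigonometric functions of $\theta$ that degenerate precisely on $\frac{\pi}{2}\ZZ$, and one must argue rigorously that they degenerate nowhere else — by producing closed forms and checking the corresponding functions are not identically equal to their thresholds — and, should any sporadic bad angle survive, patch it with an alternate gadget family over $\S_{\CZ + Z}(R_x(\theta))$.
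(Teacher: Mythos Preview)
Your high-level strategy matches the paper exactly: prove simulability for the special form via Gottesman--Knill after telescoping the $R_z$-layers, reduce an arbitrary $U$ to $R_x(\theta)$ by the same telescoping, reduce $\CZ$ to $\CZ+Z$ via \lemref{lem:CZlemma}, and then verify the criterion of \thmref{thm:maintwo} uniformly in $\theta\notin\frac{\pi}{2}\ZZ$. The paper packages the first reductions as \lemref{lem:czz_lem_one} and \lemref{lem:czequivalence}, and the final step as \lemref{lem:czz_lem_two}, but the content is the same.

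The gap is in the final step, and it is not merely a matter of ``sporadic bad angles''. Your proposed gadgets $\d,\f$ coincide with the paper's $\c_3,\c_4$, whose actions have determinants $\pm\cos\theta$; hence the \emph{normalized} actions $D(\theta),F(\theta)$ carry a factor $\cos(\theta)^{-1/2}$ and their traces diverge as $\theta\to\pi/2$ or $3\pi/2$. Every test in $\IsDiscrete$ involves $|\tr^2(g)-c|$ or $|\tr(ghg^{-1}h^{-1})-c|$, so with $g\in\{D,E,F\}$ all such quantities blow up on open neighborhoods of $\pi/2$ and $3\pi/2$, and no J\o rgensen-type inequality can be violated there. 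This is not a finite patch-up but a structural failure of the $2$-to-$1$ family. The paper's remedy is to introduce a $3$-to-$1$ gadget $\c_0$ with $\det\AA(\c_0)=\tfrac{1}{4}\sin^4\theta$, which is nonvanishing on all of $(0,2\pi)\setminus\{\pi\}$; this gadget serves as the ``anchor'' in the elementarity and discreteness checks, while $\c_3,\c_4$ are used only for strict loxodromy (where a divergent trace is harmless). Even then a further split is needed: the companion gadgets $\c_1,\c_2$ each have a determinant that vanishes at a single extraneous angle, so the paper partitions $(0,2\pi)\setminus\frac{\pi}{2}\ZZ$ into $A=(\tfrac{\pi}{2},\tfrac{3\pi}{2})$ and its complement $B$, using $\{\c_0,\c_1,\c_3,\c_4\}$ on $A$ and $\{\c_0,\c_2,\c_3,\c_4\}$ on $B$. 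Your plan becomes the paper's proof once you enlarge the gadget set in this way.
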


Interestingly, in Appendices~\ref{append:czzclassification} and \ref{append:czsclassification}, respectively, we are also able to prove that this same classification applies to conjugated $\CZ + Z$ circuits and so-called \emph{conjugated $\CZ + S$ circuits}, which are defined exactly like conjugated $\CZ + Z$ circuits but with $Z$ replaced by $S$.

\begin{restatable}{theorem}{CZZclassification}
\label{thm:cz+zclassification}
If the polynomial hierarchy is infinite, then efficient classical computers can simulate $U$-conjugated $\CZ + Z$ circuits to within multiplicative error $\epsilon < \sqrt{2} - 1$ iff there exists $C \in \words{H,S}$ and $\alpha, \phi, \lambda \in [0,2\pi)$ such that $U = e^{i\alpha} R_z(\phi)CR_z(\lambda)$.
\end{restatable}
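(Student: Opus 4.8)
The statement is an ``iff,'' so I would establish the two implications separately, proving the direction ``simulable $\Rightarrow$ $U$ has the stated form'' via its contrapositive. Only that contrapositive uses the hypothesis that the polynomial hierarchy is infinite; the reverse implication is unconditional (in fact it gives \emph{exact} simulability).

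\emph{Simulability when $U=e^{i\alpha}R_z(\phi)CR_z(\lambda)$.} A $U$-conjugated $\CZ+Z$ circuit on $m$ qubits computes an output probability of the form $|\bra{y}U^{\otimes m\dagger}DU^{\otimes m}\ket{\psi}|^2$, where $\ket{\psi}$ is a computational-basis state (input together with ancillae) and $D$ is a polynomial-size circuit built from $Z$ and $\CZ$ gates, hence a diagonal Clifford operator. Because $Z$, $\CZ$ and every $R_z$ are diagonal in the computational basis, conjugating $D$ by $R_z(\phi)^{\otimes m}$ is trivial and the global phase cancels, so $U^{\otimes m\dagger}DU^{\otimes m}=R_z(-\lambda)^{\otimes m}(C^{\dagger\otimes m}DC^{\otimes m})R_z(\lambda)^{\otimes m}$. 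Here $D'\defeq C^{\dagger\otimes m}DC^{\otimes m}$ is Clifford, and the two outer $R_z$ layers only multiply $\ket{\psi}$ and $\bra{y}$ by phases, so the probability equals $|\bra{y}D'\ket{\psi}|^2$ --- the output distribution of a plain Clifford circuit on a computational-basis input, measured in the computational basis. The Gottesman--Knill theorem simulates this exactly and efficiently. (Alternatively, with $V=CR_z(\lambda)$ one checks $V^\dagger ZV=(V^\dagger SV)^2$ and $(V^\dagger\otimes V^\dagger)\CZ(V\otimes V)=(I_1\otimes V^\dagger HV)\big[(V^\dagger\otimes V^\dagger)\CNOT(V\otimes V)\big](I_1\otimes V^\dagger HV)$, so a $V$-conjugated $\CZ+Z$ circuit is a $V$-CCC and the claim also follows from \thmref{thm:CCC_classification}.)

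\emph{Hardness when $U$ is not of that form.} For each such $U$ it suffices, by \propref{prop:hardness}, \thmref{thm:mainone} and \thmref{thm:maintwo}, to exhibit a finite $\Gamma\subset\gad_1(\S_{\CZ+Z}(U))$ whose generated group is non-elementary, non-discrete, and strictly loxodromic. I would first cut down the parameter space: the global phase is irrelevant and, since $Z,\CZ,R_z$ are diagonal, one verifies $\S_{\CZ+Z}(R_z(\phi)U)=\S_{\CZ+Z}(U)$ as gate sets, so only the class of $U$ in the two-dimensional homogeneous space $\SO(2)\backslash\SO(3)\cong S^2$ matters. On this $S^2$ the ``bad set'' of $U$'s excluded by the theorem is a finite union of circles, one for each left-$R_z$ coset of the single-qubit Clifford group, and its complement is open and dense. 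For that complement I would take $2$-to-$1$ post-selection gadgets modeled on the CCC gadgets $\d,\e,\f$ of \sref{sec:CCCs} (two ancilla copies of $U$, an optional $Z$ layer, a $\CZ$, then $U^\dagger$, post-selecting the ancilla), compute their normalized actions as explicit trigonometric functions of the Euler angles of $U$, and then run $\IsElementary$, $\IsDiscrete$, and $\IsLoxodromic$ symbolically (in Mathematica \cite{KFG24}): verify that some generator pair escapes all three elementarity alternatives of \propref{prop:elementarycondition}, that some pair violates J\o rgensen's inequality or one of the generalized inequalities of \propref{prop:Jorg_generalized}, and that some gadget has $\tr\in\CC\backslash\RR$. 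Where the natural family degenerates --- a vanishing action determinant, as in \lemref{lem:CZlemma}, or an accidentally real trace on a lower-dimensional locus --- I would cover that locus with an auxiliary gadget family; packaging these special cases is presumably the role of \lemref{lem:czz_lem_one} and \lemref{lem:czz_lem_two}.

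\emph{Expected main obstacle.} The difficulty is entirely in the hardness direction, and it is of a uniform-covering nature: one must produce \emph{finitely many} gadget families whose three trace conditions hold \emph{simultaneously} at \emph{every} point of $S^2$ minus the bad circles, including points arbitrarily close to those circles, where the generic actions lose invertibility or their traces become real so that the criterion silently fails. Pinning down the exact algebraic locus on which a given family breaks down and checking that a second family rescues precisely that locus is where the bulk of the (computer-assisted) case analysis lives; a subtler point is ensuring that the degenerate gadgets still formable near the bad set do not accidentally fall into an elementary or discrete subgroup, which would force one to enlarge $\Gamma$ or switch families.
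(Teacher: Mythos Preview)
Your easy direction is correct and is exactly what the paper does (via \lemref{lem:czz_lem_one} plus Gottesman--Knill). In the hard direction your strategy is also the paper's, but you miss one simplification and misidentify the roles of the two lemmas, which leads you to overestimate the difficulty.

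The point you use in the simulable direction---that the outer $R_z(\lambda)$ only phases computational-basis kets---applies equally well in the hardness direction: \lemref{lem:czz_lem_one} removes \emph{both} the left $R_z(\phi)$ and the right $R_z(\lambda)$, so after the Euler decomposition $U=e^{i\alpha}R_z(\phi)R_x(\theta)R_z(\lambda)$ the problem is \emph{one}-parameter, not two. You are left on the circle $\{R_x(\theta)\}$, the ``bad set'' is just $\theta\in\frac{\pi}{2}\ZZ$, and \lemref{lem:czz_lem_two} is precisely the statement that $R_x(\theta)$-conjugated $\CZ+Z$ circuits are simulable iff $\theta\in\frac{\pi}{2}\ZZ$. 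So the lemmas are not patches for degenerate loci; they \emph{are} the reduction-to-$1$D and the $1$D classification, and together they give the theorem in two lines.

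On gadgets: the $2$-to-$1$ family $\d,\e,\f$ you propose turns out to be enough only for strict loxodromy; for non-elementarity and non-discreteness \emph{uniformly in $\theta$} the paper needs $3$-to-$1$ gadgets as well. Concretely it uses five gadgets $\c_0,\dots,\c_4$ (with $\c_3,\c_4$ the $2$-to-$1$ ones, essentially your $\d,\f$, supplying loxodromy via $\tr(C_3),\tr(C_4)\in\CC\setminus\RR$), splits $\theta\notin\frac{\pi}{2}\ZZ$ into the two arcs $A=(\frac{\pi}{2},\frac{3\pi}{2})$ and $B=[0,\frac{\pi}{2}]\cup[\frac{3\pi}{2},2\pi)$, and on each arc checks J{\o}rgensen's inequality and \propref{prop:elementarycondition} for the pair $(C_0,C_1)$ (on $A$) or $(C_0,C_2)$ (on $B$). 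Your worry about ``finitely many families covering points arbitrarily close to the bad set'' is thus resolved by the $1$D reduction: there are only two arcs, two gadget sets, and the trigonometric inequalities are checked over each arc in closed form.
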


\begin{restatable}{theorem}{CZSclassification}
\label{thm:cz+sclassification}
If the polynomial hierarchy is infinite, then efficient classical computers can simulate $U$-conjugated $\CZ + S$ circuits to within multiplicative error $\epsilon < \sqrt{2} - 1$ iff there exists $C \in \words{H,S}$ and $\alpha, \phi, \lambda \in [0,2\pi)$ such that $U = e^{i\alpha} R_z(\phi)CR_z(\lambda)$.
\end{restatable}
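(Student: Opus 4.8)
The plan is to establish the two directions of the stated equivalence separately, leaning on the parallel classification of $U$-conjugated $\CZ+Z$ circuits (\thmref{thm:cz+zclassification}) for the intractability half and on a direct Gottesman--Knill argument for the simulability half. The structural fact I would use throughout is that every gate of $\S_{\CZ+S}(U)$, and hence every $N$-qubit circuit over it, has the form $(U^{\otimes N})^{\dagger}\,\tilde C\,U^{\otimes N}$ for some $\tilde C \in \words{S,\CZ}$, and that $\tilde C$ is \emph{diagonal} in the computational basis because $S$ and $\CZ$ are. First I would prove a reduction lemma: the classical complexity of $U$-conjugated $\CZ+S$ circuits is invariant under $U \mapsto e^{i\alpha}R_z(\phi)\,U\,R_z(\lambda)$. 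Writing $U' = e^{i\alpha}R_z(\phi)UR_z(\lambda)$ and using that $R_z(\pm\phi)$ commutes with $Z$ and $\CZ$, each gate of $\S_{\CZ+S}(U')$ is $R_z(-\lambda)^{\otimes(\cdot)}$ times a gate of $\S_{\CZ+S}(U)$ times $R_z(\lambda)^{\otimes(\cdot)}$; across a circuit the intermediate $R_z$'s telescope, and the surviving outer $R_z(\mp\lambda)^{\otimes N}$ is invisible to a computational-basis input (up to a global phase) and to the final computational-basis measurement. Via an Euler decomposition $U = e^{i\alpha}R_z(\beta)R_x(\theta)R_z(\delta)$ this reduces the whole statement to $U = R_x(\theta)$, and after checking which middle Euler angles single-qubit Cliffords can have, membership of $R_x(\theta)$ in the special set $\{e^{i\alpha}R_z(\phi)CR_z(\lambda)\}$ becomes exactly the condition $\theta \in \frac{\pi}{2}\ZZ$.

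For the simulability direction, take $U = e^{i\alpha}R_z(\phi)CR_z(\lambda)$ with $C \in \words{H,S}$. Expanding $U^{\otimes N}$ and commuting the inner $R_z(-\phi)^{\otimes N}$ past the diagonal $\tilde C$ collapses an arbitrary $U$-conjugated $\CZ+S$ circuit on a computational-basis input $\ket{x}$ to the computational-basis measurement of $(C^{\dagger})^{\otimes N}\tilde C\,C^{\otimes N}\ket{x}$ --- a stabilizer state evolved by a Clifford circuit --- once the remaining diagonal $R_z(\pm\lambda)^{\otimes N}$ factors and global phases are discarded. By the Gottesman--Knill theorem this is exactly classically simulable, hence simulable to within every multiplicative error $\epsilon \geq 0$, and in particular $\epsilon < \sqrt{2}-1$.

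For the intractability direction, suppose $U$ is \emph{not} of the special form. Since $Z = S^2$, every gate of $\S_{\CZ+Z}(U)$ is a product of two gates of $\S_{\CZ+S}(U)$, so every efficient (post-selected, gadget) quantum computer over $\S_{\CZ+Z}(U)$ is one over $\S_{\CZ+S}(U)$ of at most twice the size, and $\gad_1(\S_{\CZ+Z}(U)) \subseteq \gad_1(\S_{\CZ+S}(U))$. Hence, by \thmref{thm:cz+zclassification} and the assumption that $\PH$ is infinite, no efficient classical computer simulates $U$-conjugated $\CZ+Z$ circuits to within multiplicative error $\epsilon < \sqrt{2}-1$, and therefore none simulates the larger class of $U$-conjugated $\CZ+S$ circuits either. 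A self-contained alternative --- which is presumably the role of the $\CZ+S$-specific gadget lemmas --- is, after reducing to $U = R_x(\theta)$ with $\theta \notin \frac{\pi}{2}\ZZ$, to build a gadget over $\S_{\CZ}(R_x(\theta))$ that synthesizes an $R_x(\theta)$-conjugated $S$ gate (paralleling the gadget in \lemref{lem:CZlemma}), so that $\GadBQP(\S_{\CZ+S}(R_x(\theta))) = \GadBQP(\S_{\CZ}(R_x(\theta)))$, and then invoke \thmref{thm:czclassification}; or to exhibit directly a finite $\Gamma \subset \gad_1(\S_{\CZ+S}(R_x(\theta)))$, closed under inverses, and run $\IsElementary$, $\IsDiscrete$, $\IsLoxodromic$ to apply \thmref{thm:maintwo}.

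The main obstacle is making the intractability direction sharp, i.e.\ a genuine ``iff'': whatever gadget family is chosen, one must verify that it degenerates --- the determinant of some action vanishing, or the generated subgroup turning out to be elementary, discrete, or not strictly loxodromic --- \emph{precisely} on the special set $\theta \in \frac{\pi}{2}\ZZ$, so that hardness is certified for every non-special $U$. Because the quantities that control this (determinants of gadget actions and the traces $\beta(\cdot)$, $\gamma(\cdot,\cdot)$, and $\tr(\cdot)$ feeding the three subroutines) are explicit but cumbersome trigonometric functions of $\theta$, this uniform case analysis is naturally done symbolically, as in the accompanying Mathematica notebook. Routing through \thmref{thm:cz+zclassification} sidesteps most of this, at the cost of inheriting that theorem's own computer-assisted proof.

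\noindent\ignorespaces
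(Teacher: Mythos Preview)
Your proposal is correct and follows essentially the same route as the paper: prove invariance of simulability under $U \mapsto e^{i\alpha}R_z(\phi)UR_z(\lambda)$ (the paper's \lemref{lem:czs_lem_one}), then use $Z = S^2$ to embed $\S_{\CZ+Z}(U)$ circuits into $\S_{\CZ+S}(U)$ circuits and inherit the classification from \thmref{thm:cz+zclassification}. The paper packages this embedding as an ``iff'' lemma (\lemref{lem:czs_lem_two}) and your two directions are exactly its forward and reverse implications; your proposed alternatives (a direct $S$-injecting gadget or a fresh $\Gamma$) are not what the paper does.
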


We remark that conjugated $\CZ + S$ circuits are also known as \emph{commuting conjugated Clifford circuits}. Commuting CCCs are a sort of intersection of IQP circuits and CCCs. Given the quantum advantage of both IQP circuits and CCCs, Bouland et al. ask in \cite{BFK18} if commuting CCCs afford a quantum advantage. Indeed, our \thmref{thm:cz+sclassification} not only resolves this question, but it also gives the full complexity classification of commuting CCCs.

In the next section, we study the complexity of CCCs over other subsets of the Clifford group.

\afterpage{\clearpage}
\input{multi_qubit_lattice}

\subsection{Conjugated Clifford Fragment Circuits}


In 1941, Emil Post published a complete classification of all the ways in which sets of Boolean logic gates can fail to be universal \cite{Pos41}. In \cite{AGS17}, Aaronson, Grier, and Schaeffer describe the ambitious program of doing the same but for all quantum gates. In \cite{GS22}, Grier and Schaeffer completed such a classification for all Clifford gates. They found that any collection of Clifford gates generates one of 57 distinct ``classes'' or ``fragments'' of the Clifford group.

In \cite{BFK18}, Bouland et al. ask if efficient classical computers can simulate CCCs when the interstitial Clifford circuit is restricted to one of these fragments. We call these \emph{conjugated Clifford fragment circuits}, or CCFCs for short. Of course, while CCFCs derive from the specific Grier-Schaeffer classification of the Clifford group---a classification that does not obviously also classify the Clifford group when it is conjugated by some $U \in \U(2)$---this classification nevertheless affords a myriad of interesting restricted models (the various CCFCs) that we can use to demonstrate the utility of our criterion\footnote{Specifically, the Grier-Schaeffer classification allows for Clifford circuits to make use of arbitrary ancillary workspace qubits, as long as they are completely uncorrelated with the input state at both the beginning and end of the circuit. This resource does not obviously extend naturally to the conjugated model we consider, and so we disregard this ``ancilla rule", focusing exclusively on the generators of the various fragments.}.

To this end, we first note that of the 57 Clifford fragments in the Grier-Schaeffer classification, there are 30 that are generated by single-qubit gates alone. These are obviously insufficient to afford any sort of quantum advantage because they contain no entangling gates \cite{JL03}. This leaves $57 - 30 = 27$ fragments for which the corresponding CCFC may afford some sort of quantum advantage. The inclusion lattice for the generators of these 27 fragments is shown in \figref{fig:Cliffordlattice}, which is from \cite{GS22}.

In this section, we use our criterion to prove that if the polynomial hierarchy is infinite, then no efficient classical computer can simulate CCFCs to within small multiplicative error for all 27 of the remaining fragments\footnote{It is worth noting that this does \emph{not} mean that CCCs over any multi-qubit subgroup of the Clifford group are classically intractable, as under the ancilla rule several otherwise weak subgroups might collapse into a single fragment in the classification.}. This resolves the question raised by Bouland et al. \cite{BFK18}.

We now define the generators of some of the fragments in \figref{fig:Cliffordlattice}.

The set $\Pauli$ is the single-qubit Pauli group, i.e., the group generated by the three single-qubit Pauli gates:
$$
X = 
\begin{pmatrix}
0 & 1\\
1 & 0
\end{pmatrix},
\quad Y = 
\begin{pmatrix}
0 & -i\\
i & 0
\end{pmatrix},
\quad
\text{and}
\quad
Z = 
\begin{pmatrix}
1 & 0\\
0 & -1
\end{pmatrix}.
$$
Given $P, Q \in \{X,Y,Z\}$, we define a generalized $\CNOT$ gate by
$$
C(P,Q) \defeq \frac{I_1 \otimes I_1 + P \otimes I_1 + I_1 \otimes Q - P \otimes Q}{2},
$$
where if the first qubit is in the $+1$ eigenspace of $P$, then $C(P, Q)$
does nothing, but if it is in the $-1$ eigenspace of $P$, then $C(P, Q)$ applies $Q$ to the second qubit. Therefore, $C(Z,Z)$ is $\CZ$. 

\noindent The $R_Z$ gate is the $S$ gate (i.e., a $\pi/2$ rotation about the $\hat{z}$-axis of the Bloch sphere); the $R_X$ and $R_Y$ gates are the same rotation but about the $\hat{x}$- and $\hat{y}$-axes, respectively:
$$
R_X = \frac{I_1 - iX}{\sqrt{2}} \quad \text{and} \quad R_Y = \frac{I_1 - iY}{\sqrt{2}}.
$$

The $\theta_{X+Z}$ gate is the Hadamard gate (i.e., a $\pi$ rotation about the $(\hat{x} + \hat{z}) / \sqrt{2}$ axis of the Bloch sphere); the $\theta_{Y+Z}$ and $\theta_{X+Y}$ gates are the same rotation but about the $(\hat{y} + \hat{z}) / \sqrt{2}$- and $(\hat{x} + \hat{y}) / \sqrt{2}$-axes, respectively:
$$
\theta_{Y+Z} = \frac{Y+Z}{\sqrt{2}} \quad \text{and} \quad \theta_{X+Y} = \frac{X + Y}{\sqrt{2}}.
$$ 
For $P,Q \in \{X,Y,Z\}$, one can analogously define a $\theta_{P-Q}$ gate. In \figref{fig:Cliffordlattice}, fragments with $\theta_{PQ}$ are fragments that contain both $\theta_{P+Q}$ and $\theta_{P-Q}$.

Finally, for all $k \in \NN$, the $\T_{2k}$ gate is such that for every $x = (x_1, x_2, \dots, x_{2k}) \in \{0,1\}^{2k}$,
$$
\T_{2k}\ket{x_1, x_2, \dots, x_{2k}} = \ket{x_1 \oplus b_x, x_2 \oplus b_x, \dots, x_{2k} \oplus b_x},
$$
where $\oplus$ is addition modulo 2 and $b_x = x_1 \oplus x_2 \oplus \cdots \oplus x_{2k}$. In other words, for computational basis inputs, $\T_{2k}$ outputs the complement of the input when the parity of the input is odd and does nothing when the parity of the input is even. Therefore, $\T_2$ is the SWAP gate, and $\T_4$ maps $\ket{1011} \mapsto \ket{0100}$ and $\ket{1100} \mapsto \ket{1100}$.

We now formally define the CCFC model. In the following, $\deg(P) = k$ iff $P \in \U(2^k)$.

\begin{definition}
Fix $U \in \U(2)$ and let $\mathcal{F}$ be the generators of one of the 27 fragments depicted in \figref{fig:Cliffordlattice}. A \emph{$U$-conjugated $\mathcal{F}$ circuit} is an efficient quantum computer over the gate set
$$
\S_{\mathcal{F}}(U) \defeq \big\{ (U^\dagger)^{\otimes \deg(P)} P U^{\otimes \deg(P)} \mid P \in \mathcal{F}\big\}.
$$
A \emph{conjugated $\mathcal{F}$ circuit} is a $U$-conjugated $\mathcal{F}$ circuit for some $U \in \U(2)$.
\end{definition}

In this section, we prove the following quantum advantage result using our criterion. 

\begin{theorem}
\label{thm:hardness_fragmenttwo}
If the polynomial hierarchy is infinite, then for all fragments $\mathcal{F}$ in \figref{fig:Cliffordlattice}, efficient classical computers cannot simulate conjugated $\mathcal{F}$ circuits to within multiplicative error $\epsilon < \sqrt{2} - 1$.
\end{theorem}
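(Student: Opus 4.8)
The plan is to exploit the inclusion lattice of \figref{fig:Cliffordlattice} so that hardness need only be verified for the \emph{minimal} fragments and is then propagated upward. Inspecting \figref{fig:Cliffordlattice}, there are exactly four fragments with no fragment below them: $C(X,X)+X$, $C(Y,Y)+Y$, $C(Z,Z)+Z$, and $\T_4 + \Pauli$; every one of the $27$ fragments lies above at least one of these four. The first ingredient is a routine monotonicity lemma: if $\mathcal{F}_1$ sits below $\mathcal{F}_2$ in the lattice, then for every $U \in \U(2)$ we have $\GadBQP(\S_{\mathcal{F}_1}(U)) \subseteq \GadBQP(\S_{\mathcal{F}_2}(U))$. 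Indeed, each generator of $\mathcal{F}_1$ is a product of tensor factors drawn from $\mathcal{F}_2$ acting on various qubit subsets, and conjugating such a product by $U^{\otimes n}$ distributes over both the product and the tensor factors; hence every efficient (post-selected) gadget quantum computer over $\S_{\mathcal{F}_1}(U)$ can be rewritten as one over $\S_{\mathcal{F}_2}(U)$. Combining this with \propref{prop:mainprop} and \propref{prop:hardness}, it suffices to exhibit, for each of the four minimal fragments, a single conjugating unitary $U$ for which $\GadBQP(\S_{\mathcal{F}}(U)) = \PostBQP$.

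For $C(Z,Z)+Z$, note that $C(Z,Z) = \CZ$, so a $U$-conjugated $C(Z,Z)+Z$ circuit is precisely a $U$-conjugated $\CZ+Z$ circuit, and \thmref{thm:czzhardnessresult} (with $U = R_x(2\pi/3)$) already does the job. The fragments $C(X,X)+X$ and $C(Y,Y)+Y$ reduce to this case by a single-qubit Clifford change of basis: writing $P = W Z W^\dagger$ with $W \in \words{H,S}$ (take $W = H$ for $P = X$ and $W = SH$ for $P = Y$), one checks $C(P,P) = W^{\otimes 2}\,\CZ\,(W^\dagger)^{\otimes 2}$, so that $\S_{C(P,P)+P}(U) = \S_{\CZ+Z}(W^\dagger U)$ as sets. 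Choosing $U = W\,R_x(2\pi/3)$ makes the effective conjugator equal to $R_x(2\pi/3)$, and \thmref{thm:czzhardnessresult} applies again. (This also follows from the classification \thmref{thm:hardness_fragmentthree}; note these three fragments need not be hard for \emph{all} $U$, and we only require one bad $U$ per fragment.)

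The remaining, genuinely new case is $\T_4 + \Pauli$, whose only entangling generator is the four-qubit gate $\T_4$. The plan here mirrors the IQP and CCC proofs: fix a suitable $U$ (found by a computer search, as elsewhere in the paper) and build a finite family $\Gamma$ of post-selection gadgets over $\S_{\T_4+\Pauli}(U)$ — using $\T_4$ together with conjugated Paulis to prepare ancillae and perform the post-selection — such that (i) $\Gamma$ is closed under inverses, with inverse gadgets obtained from $\T_4^2 = I_4$ and the involutivity of the Paulis, and (ii) the normalized actions in $\Gamma$ generate a subgroup of $\SL(2;\CC)$ that $\IsElementary$, $\IsDiscrete$, and $\IsLoxodromic$ certify to be non-elementary, non-discrete, and strictly loxodromic. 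By \thmref{thm:maintwo} this gives classical intractability for conjugated $\T_4 + \Pauli$ circuits, which is \thmref{thm:t4hardness}. Feeding the four base cases into the monotonicity lemma then yields the theorem for all $27$ fragments.

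The main obstacle is the $\T_4+\Pauli$ construction. Unlike the $\CZ$-type fragments, there is no single-qubit Clifford basis change relating $\T_4$ to a two-qubit gate, so one must directly verify that some choice of ancilla and post-selection strings together with a conjugating $U$ yields gadget actions whose generated group is dense in $\SL(2;\CC)$ — rather than, say, dense only in $\SU(2)$ or in a conjugate of $\SL(2;\RR)$, where our criterion is silent — while simultaneously arranging that inverse gadgets are available over the same restricted gate set. A secondary, bookkeeping-level point is to confirm that the monotonicity lemma interacts correctly with the multi-qubit structure of the fragments and with the decision (noted after \figref{fig:Cliffordlattice}) to discard the Grier--Schaeffer ``ancilla rule'' in the conjugated setting.
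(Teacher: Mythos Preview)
Your proposal is correct and matches the paper's proof essentially line for line: reduce to the four minimal fragments in the lattice, dispatch the three $C(P,P)+P$ cases by the single-qubit Clifford change of basis $P = WZW^\dagger$ (the paper packages this as \thmref{thm:hardness_fragmentthree}, proved exactly as you describe), and handle $\T_4+\Pauli$ by an explicit gadget search over $\S_{\T_4+\Pauli}(U)$ for a well-chosen $U$ (the paper takes $U = T R_x(2\pi/3)$ and exhibits three $4$-to-$1$ gadgets plus inverses in \thmref{thm:t4hardness}). Your caveat about the inverse gadgets is apt: in the paper they are not obtained simply from $\T_4^2 = I_4$ but require different ancilla and post-selection strings, so that step is genuinely a search rather than a formal reversal.
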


A main ingredient of the proof of \thmref{thm:hardness_fragmenttwo} is the following hardness result for CCFCs over the $\T_4 + \Pauli$ fragment, which is a special case of \thmref{thm:hardness_fragmenttwo}.

\begin{theorem}
\label{thm:t4hardness}
If the polynomial hierarchy is infinite, then efficient classical computers cannot simulate conjugated $\T_4 + \Pauli$ circuits to within multiplicative error $\epsilon < \sqrt{2} - 1$.
\end{theorem}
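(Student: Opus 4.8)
The plan is to apply our criterion, \thmref{thm:maintwo}: it suffices to exhibit a single-qubit unitary $U$ and a finite set $\Gamma \subset \gad_1(\S_{\T_4 + \Pauli}(U))$ for which $\words{\Gamma}$ is a non-elementary, non-discrete, and strictly loxodromic subgroup of $\SL(2;\CC)$. Here the gate set is $\S_{\T_4 + \Pauli}(U) = \{(U^\dagger)^{\otimes 4}\T_4 U^{\otimes 4},\, U^\dagger X U,\, U^\dagger Y U,\, U^\dagger Z U\}$, which contains an entangling gate and so is a legitimate gate set in the sense of \defref{def:gateset}.

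The structural observation that makes this fragment tractable is the following. Because every layer of a circuit over $\S_{\T_4 + \Pauli}(U)$ is a tensor product of terms $(U^\dagger)^{\otimes \deg P} P U^{\otimes \deg P}$ and identities, freely inserting $U^\dagger U = I_1$ shows that any $j$-qubit circuit over $\S_{\T_4 + \Pauli}(U)$ equals $(U^\dagger)^{\otimes j} Q' U^{\otimes j}$ for some circuit $Q'$ over the bare gate set $\{\T_4, X, Y, Z\}$. Consequently, any $j$-to-$1$ gadget over $\S_{\T_4 + \Pauli}(U)$ with computational-basis ancillae and post-selections has action $U^\dagger \tilde M U$, where $\tilde M$ is the partial matrix element of $Q'$ obtained by contracting the auxiliary qubits against the $U$-rotated ancilla and post-selection states. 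Since $\T_4 = \frac{1}{2}\big(I^{\otimes 4} + X^{\otimes 4} - Y^{\otimes 4} + Z^{\otimes 4}\big)$ (the identity on even-parity strings plus the global bit flip on odd-parity strings), the simplest such gadget --- one application of $(U^\dagger)^{\otimes 4}\T_4 U^{\otimes 4}$ with input on qubit $1$, ancillae $\ket{a_2 a_3 a_4}$ on qubits $2,3,4$, and post-selection $\bra{b_2 b_3 b_4}$ on those qubits --- has action
$$
\frac{1}{2}\Big(\delta_{a,b}\, I_1 + c_X\, U^\dagger X U - c_Y\, U^\dagger Y U + c_Z\, U^\dagger Z U\Big), \qquad c_P \defeq \prod_{k=2}^{4}\bra{b_k}U^\dagger P U\ket{a_k},
$$
where $\delta_{a,b} = 1$ iff $a_k = b_k$ for all $k$. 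For generic $U$ and generic $(a,b)$ this has nonzero determinant and is non-unitary, so varying $(a,b)$ (and, if needed, inserting interstitial conjugated Paulis) produces a family of elements of $\gad_1(\S_{\T_4 + \Pauli}(U))$.

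Concretely, I would fix $U = R_x(2\pi/3)$ as in \sref{sec:CCCs} (note $R_x(2\pi/3)$ is not of the ``easy'' form $e^{i\alpha}R_z(\phi)CR_z(\lambda)$ appearing in \thmref{thm:CCC_classification}, so there is no a priori obstruction), and --- aided where convenient by Mathematica (cf.\ \cite{KFG24}) --- choose a handful of such $4$-to-$1$ gadgets whose normalized actions $\tilde{\AA}$ are explicit $\SL(2;\CC)$ matrices, ensuring at least one has non-real trace (e.g.\ take $a_k = 0$, $b_k = 1$ for all $k$, so $c_X = 1$ while $c_Y, c_Z \in i\RR$, which makes the trace complex). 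For each gadget I would exhibit an inverse gadget over $\S_{\T_4 + \Pauli}(U)$ in the sense of \defref{def:inversegadget} --- plausibly obtained, as in \tabref{tab:ccc_gadgets}, by complementing the interstitial Paulis and the ancilla and post-selection bits. Taking $\Gamma$ to be these normalized actions together with their inverses makes $\words{\Gamma}$ closed under inverses and hence a subgroup of $\SL(2;\CC)$. Finally I would run $\IsElementary$, $\IsDiscrete$, and $\IsLoxodromic$ on $\Gamma$: compute $\beta(\cdot)$ and $\gamma(\cdot,\cdot)$ for a well-chosen pair of generators to certify non-elementarity via \propref{prop:elementarycondition}; compute a commutator trace to certify non-discreteness via \propref{prop:Jorgensen} or one of the inequalities in \propref{prop:Jorg_generalized}; and observe that the distinguished generator has trace in $\CC \setminus \RR$, hence strict loxodromy. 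Then \thmref{thm:maintwo} yields the theorem, assuming $\PH$ is infinite.

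The main obstacle is the gadget hunt itself: there is no systematic procedure for finding generators that land in the ``dense'' branch of \figref{fig:criterionschematic}, nor for finding their inverse gadgets, and --- unlike the CCC setting --- we have no two-qubit entangling gate such as $\CZ$ to work with, only the four-qubit $\T_4$. This forces larger gadgets and requires careful bookkeeping of how the tensor-power conjugation $U^{\otimes 4}$ interacts with the computational-basis ancillae and post-selections (indeed, bare $\T_4$ with computational-basis post-selection only ever yields rank-deficient single-qubit actions, since it is block diagonal across the parity sectors, so the $U$-conjugation is essential for obtaining invertible actions). Once a suitable $\Gamma$ is in hand, the remaining verifications are routine trace computations.
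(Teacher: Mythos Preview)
Your plan is exactly the paper's: choose a $U$, build a finite $\Gamma \subset \gad_1(\S_{\T_4 + \Pauli}(U))$ from $4$-to-$1$ gadgets (and inverse gadgets), then run $\IsElementary$, $\IsDiscrete$, $\IsLoxodromic$ and invoke \thmref{thm:maintwo}. The paper carries this out with $U = TR_x(\tfrac{2\pi}{3})$ (not $R_x(\tfrac{2\pi}{3})$), three explicit gadgets each using interstitial conjugated Paulis on every wire, explicit inverse gadgets, and the checks $\beta(H)=-2$, $\beta(I)=-\tfrac{4}{5}$, $\gamma(H,I)\notin\RR$ for non-elementarity, the inequality of \propref{prop:Jorg_generalized}(i) for non-discreteness, and $\tr(J)=-\tfrac{8i}{5\sqrt{3}}$ for strict loxodromy.

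There is, however, a concrete error in your loxodromy step. From your own formula, a bare single-$\T_4$ gadget has action $\tfrac{1}{2}\big(\delta_{a,b}I_1 + c_X\,U^\dagger XU - c_Y\,U^\dagger YU + c_Z\,U^\dagger ZU\big)$, and since $X,Y,Z$ are traceless the unnormalized trace is simply $\delta_{a,b}$. With your choice $a_k=0,\,b_k=1$ this is $0$, which remains $0$ after normalization --- so the values of $c_Y,c_Z$ are irrelevant to the trace, and the element is not strictly loxodromic. Worse, for $U=R_x(\theta)$ the bare single-$\T_4$ gadgets \emph{never} yield strict loxodromy: if $a=b$ the action is Hermitian with determinant $\tfrac{1}{4}\big(1-\sin^6\theta-\cos^6\theta\big)=\tfrac{3}{16}\sin^2(2\theta)\ge 0$, so the normalized trace is real; otherwise the trace vanishes. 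Hence the interstitial Paulis you mention parenthetically are not an optional refinement but the essential source of loxodromy here, and the paper's choice $U = TR_x(\tfrac{2\pi}{3})$ (the extra $T$ breaks the reality pattern above) together with Paulis on every wire of every gadget reflects this. Your proposal is thus a correct outline of the method, but the one place where you commit to a specific computation is the one place the argument would fail as written.
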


We also establish the full complexity classification for the $C(P,P) + R_P$ and $C(P,P) + P$ fragments:
\begin{theorem}
\label{thm:hardness_fragmentone}
If the polynomial hierarchy is infinite, then for all $P \in \{X,Y,Z\}$, efficient classical computers can simulate $U$-conjugated $C(P,P) + R_P$ circuits to within multiplicative error $\epsilon < \sqrt{2} - 1$ iff there exists $C \in \words{H,S}$ and $\alpha, \phi, \lambda \in [0,2\pi)$ such that:
\begin{enumerate}[(i)]
\item if $P = X$, then $U = e^{i\alpha}HR_z(\phi)CR_z(\lambda)$,
\item if $P = Y$, then $U = e^{i\alpha}\theta_{Y + Z}R_z(\phi)CR_z(\lambda)$,
\item if $P = Z$, then $U = e^{i\alpha}R_z(\phi)CR_z(\lambda)$.
\end{enumerate}
\end{theorem}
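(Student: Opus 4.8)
The plan is to deduce all three cases from the classification of conjugated $\CZ + S$ circuits, \thmref{thm:cz+sclassification}, by a single-qubit change of basis that carries the $C(P,P) + R_P$ generators to the $\CZ + S$ generators.

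The case $P = Z$ is immediate: by definition $C(Z,Z) = \CZ$ and $R_Z = S$, so $\S_{C(Z,Z) + R_Z}(U) = \S_{\CZ + S}(U)$, and \thmref{thm:cz+sclassification} says this model is classically simulable to within $\epsilon < \sqrt 2 - 1$ iff $U = e^{i\alpha} R_z(\phi) C R_z(\lambda)$ for some $C \in \words{H,S}$ and $\alpha,\phi,\lambda \in [0,2\pi)$, which is exactly item (iii).

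For $P \in \{X, Y\}$ I would introduce the single-qubit involutions $W_X = H = (X + Z)/\sqrt 2$ and $W_Y = \theta_{Y+Z} = (Y + Z)/\sqrt 2$. Each $W_P$ is Hermitian with $W_P^2 = I_1$, and a short computation --- for $W_Y$ using that $Y$ and $Z$ anticommute --- shows $W_P Z W_P = P$. Combining this with the formula $C(P,Q) = (I_1 \otimes I_1 + P \otimes I_1 + I_1 \otimes Q - P \otimes Q)/2$ gives $W_P^{\otimes 2}\, \CZ\, W_P^{\otimes 2} = C(P,P)$, and writing $S = e^{i\pi/4} R_z(\pi/2) = e^{i\pi/4}(I_1 - iZ)/\sqrt 2$ gives $W_P S W_P = e^{i\pi/4} R_P$. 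Since $W_P$ is Hermitian, $U^\dagger W_P = (W_P U)^\dagger$ for every $U \in \U(2)$, so
$$
U^\dagger R_P U = e^{-i\pi/4} (W_P U)^\dagger S\, (W_P U), \qquad (U^\dagger \otimes U^\dagger) C(P,P)(U \otimes U) = \big((W_P U)^\dagger\big)^{\otimes 2}\, \CZ\, (W_P U)^{\otimes 2}.
$$
Thus $\S_{C(P,P) + R_P}(U)$ and $\S_{\CZ + S}(W_P U)$ agree up to a global phase on the single-qubit generator; since a global phase on one gate contributes only an overall phase to any circuit and hence changes no output distribution (and, if one prefers to route through gadgets, is absorbed upon passing to normalized actions), the $U$-conjugated $C(P,P) + R_P$ model and the $W_P U$-conjugated $\CZ + S$ model realize the same family of output distributions, so one is classically simulable to within $\epsilon < \sqrt 2 - 1$ iff the other is.

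Applying \thmref{thm:cz+sclassification} with conjugating unitary $W_P U$, this holds iff $W_P U = e^{i\alpha} R_z(\phi) C R_z(\lambda)$ for some $C \in \words{H,S}$ and $\alpha,\phi,\lambda \in [0,2\pi)$, equivalently (using $W_P^{-1} = W_P$) iff $U = e^{i\alpha} W_P R_z(\phi) C R_z(\lambda)$, which is item (i) when $P = X$ and item (ii) when $P = Y$. I do not expect a genuine obstacle here: the whole argument is a change-of-basis reduction, and the only points that need care are the two conjugation identities for $W_Y$ and the elementary fact that a phase on a single generator leaves classical simulability untouched, both of which are a few lines.
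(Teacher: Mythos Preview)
Your proof is correct and is essentially the same as the paper's: both reduce the $P\in\{X,Y\}$ cases to the $P=Z$ case (i.e., to \thmref{thm:cz+sclassification}) via the single-qubit change of basis $W_P\in\{H,\theta_{Y+Z}\}$ satisfying $W_P Z W_P = P$, yielding $\S_{C(P,P)+R_P}(U)=\S_{\CZ+S}(W_PU)$ up to a harmless global phase on the $R_P$ generator. The paper states this gate-set identity in one line and immediately invokes \thmref{thm:cz+sclassification}; you spell out the conjugation identities and the phase point more explicitly, but the argument is the same.
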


\begin{theorem}
\label{thm:hardness_fragmentthree}
If the polynomial hierarchy is infinite, then for all $P \in \{X,Y,Z\}$, efficient classical computers can simulate $U$-conjugated $C(P,P) + P$ circuits to within multiplicative error $\epsilon < \sqrt{2} - 1$ iff there exists $C \in \words{H,S}$ and $\alpha, \phi, \lambda \in [0,2\pi)$ such that:
\begin{enumerate}[(i)]
\item if $P = X$, then $U = e^{i\alpha}HR_z(\phi)CR_z(\lambda)$,
\item if $P = Y$, then $U = e^{i\alpha}\theta_{Y + Z}R_z(\phi)CR_z(\lambda)$,
\item if $P = Z$, then $U = e^{i\alpha}R_z(\phi)CR_z(\lambda)$.
\end{enumerate}
\end{theorem}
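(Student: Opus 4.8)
The plan is to observe that the $P = Z$ case is literally \thmref{thm:cz+zclassification} --- since $C(Z,Z) = \CZ$ and condition (iii) is exactly the condition appearing there --- and to reduce the $P \in \{X, Y\}$ cases to it by a fixed change of the conjugating basis. The key point is that a $U$-conjugated $C(P,P) + P$ circuit is the same object as a $(g_P U)$-conjugated $\CZ + Z$ circuit for a suitable fixed single-qubit gate $g_P$.

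To construct $g_P$ I would take a self-inverse single-qubit gate that conjugates $Z$ onto $P$: set $g_X \defeq H$, $g_Y \defeq \theta_{Y+Z} = (Y+Z)/\sqrt{2}$, and $g_Z \defeq I_1$. A one-line computation using $Y^2 = Z^2 = I_1$ and $YZ + ZY = 0$ gives $g_P^2 = I_1$, $g_P^\dagger = g_P$, and $g_P Z g_P = P$ for each $P \in \{X,Y,Z\}$. Because $C(P,Q) = (I_1 \otimes I_1 + P \otimes I_1 + I_1 \otimes Q - P \otimes Q)/2$, conjugating $\CZ = C(Z,Z)$ by $g_P \otimes g_P$ on both sides produces exactly $C(P,P)$, and conjugating $Z$ by $g_P$ produces $P$. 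Using $U^\dagger g_P = (g_P U)^\dagger$, this yields
$$
\S_{C(P,P)+P}(U) = \big\{(g_P U)^\dagger Z\, (g_P U),\ \big((g_P U)^\dagger \otimes (g_P U)^\dagger\big)\, \CZ\, \big((g_P U) \otimes (g_P U)\big)\big\} = \S_{\CZ+Z}(g_P U),
$$
so the two restricted models coincide gate-for-gate.

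With this identification I would invoke \thmref{thm:cz+zclassification} with the unitary $V = g_P U$ (which ranges over all of $\U(2)$ as $U$ does): assuming the polynomial hierarchy is infinite, efficient classical computers can simulate $U$-conjugated $C(P,P)+P$ circuits to within multiplicative error $\epsilon < \sqrt{2} - 1$ iff they can simulate $(g_P U)$-conjugated $\CZ+Z$ circuits to within the same error, iff $g_P U = e^{i\alpha} R_z(\phi) C R_z(\lambda)$ for some $C \in \words{H,S}$ and $\alpha, \phi, \lambda \in [0,2\pi)$. Left-multiplying by $g_P^{-1} = g_P$ rewrites this as $U = e^{i\alpha} g_P R_z(\phi) C R_z(\lambda)$, which is condition (i) for $P = X$ (since $g_X = H$), condition (ii) for $P = Y$ (since $g_Y = \theta_{Y+Z}$), and condition (iii) for $P = Z$ (since $g_Z = I_1$). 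This establishes the theorem modulo \thmref{thm:cz+zclassification}.

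The reduction above is pure book-keeping; the mathematical content lives entirely in \thmref{thm:cz+zclassification}, which is where I expect the real difficulty to be. Its ``simulable'' direction is a Gottesman--Knill argument: padding each conjugation to all $n$ qubits lets one write any $V$-conjugated $\CZ+Z$ circuit as $(V^\dagger)^{\otimes n} \tilde Q V^{\otimes n}$ for a diagonal Clifford circuit $\tilde Q$ built from $\CZ$ and $Z$; when $V = e^{i\alpha}R_z(\phi)CR_z(\lambda)$, the $R_z(\phi)$ factors commute through the diagonal $\tilde Q$, the $C \in \words{H,S}$ factors conjugate $\tilde Q$ to another Clifford circuit, and the outermost $R_z(\lambda)$ factors reduce to global phases on the computational-basis input and output, leaving an ordinary Clifford transition amplitude that is exactly classically simulable. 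Its ``intractable'' direction invokes our criterion \thmref{thm:maintwo}: for every $V$ \emph{not} of that form one must exhibit finitely many post-selection gadgets over $\S_{\CZ+Z}(V)$ whose normalized actions $\Gamma \subset \gad_1(\S_{\CZ+Z}(V))$ generate a non-elementary, non-discrete, strictly loxodromic subgroup of $\SL(2;\CC)$, i.e.\ verify the $\beta$, $\gamma$, and trace inequalities behind $\IsElementary$, $\IsDiscrete$, and $\IsLoxodromic$. The hard part will be doing this \emph{uniformly} across the continuum of admissible $V$ --- one needs a parametrized gadget family (the paper reports this is handled in Mathematica) whose trace data provably meet the criterion off the exceptional locus $\{e^{i\alpha}R_z(\phi)CR_z(\lambda)\}$, together with a matching argument that this locus is precisely where the construction degenerates (e.g.\ a relevant determinant vanishes, or $\words{\Gamma}$ collapses into a conjugate of $\SU(2)$ or $\SL(2;\RR)$).
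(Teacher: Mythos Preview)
Your proposal is correct and takes essentially the same approach as the paper: the paper states that the proof is analogous to that of \thmref{thm:hardness_fragmentone}, relying on \thmref{thm:cz+zclassification} instead of \thmref{thm:cz+sclassification}, and that proof uses precisely the gate-set identities $\S_{C(X,X)+X}(HU) = \S_{\CZ+Z}(U)$ and $\S_{C(Y,Y)+Y}(\theta_{Y+Z}U) = \S_{\CZ+Z}(U)$, which is your identity $\S_{C(P,P)+P}(U) = \S_{\CZ+Z}(g_P U)$ with $g_X = H$, $g_Y = \theta_{Y+Z}$, $g_Z = I_1$. Your final paragraph sketching the content of \thmref{thm:cz+zclassification} goes beyond what is needed here (the paper simply cites it), but it is an accurate summary of how that theorem is proved in \appendref{append:czzclassification}.
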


In fact, in consequence to our complexity classification for conjugated $\CZ$ circuits (\thmref{thm:czclassification}), we can easily obtain a complexity classification for all conjugated $C(P,P)$ circuits, for any $P \in \{X,Y,Z\}$.

\begin{theorem}
\label{thm:hardness_fragmentfour}
If the polynomial hierarchy is infinite, then for all $P \in \{X,Y,Z\}$, efficient classical computers can simulate $U$-conjugated $C(P,P)$ circuits to within multiplicative error $\epsilon < \sqrt{2} - 1$ iff there exists $C \in \words{H,S}$ and $\alpha, \phi, \lambda \in [0,2\pi)$ such that:
\begin{enumerate}[(i)]
\item if $P = X$, then $U = e^{i\alpha}HR_z(\phi)CR_z(\lambda)$,
\item if $P = Y$, then $U = e^{i\alpha}\theta_{Y + Z}R_z(\phi)CR_z(\lambda)$,
\item if $P = Z$, then $U = e^{i\alpha}R_z(\phi)CR_z(\lambda)$.
\end{enumerate}
\end{theorem}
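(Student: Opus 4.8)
The plan is to reduce \thmref{thm:hardness_fragmentfour} entirely to the conjugated-$\CZ$ classification \thmref{thm:czclassification} via a single-qubit change of basis, so that essentially no new work is required. The case $P = Z$ is immediate: since $C(Z,Z) = \CZ$ by definition, a $U$-conjugated $C(Z,Z)$ circuit is literally a $U$-conjugated $\CZ$ circuit, and part~(iii) is precisely \thmref{thm:czclassification}.

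For $P \in \{X,Y\}$, I would first fix a single-qubit Hermitian involution $W_P$ with $W_P Z W_P = P$: take $W_X = H$ (the reflection $\theta_{X+Z}$, which sends $\hat z$ to $\hat x$) and $W_Y = \theta_{Y+Z}$ (which sends $\hat z$ to $\hat y$). Conjugating every tensor factor of $\CZ = \tfrac{1}{2}\big(I_1 \otimes I_1 + Z \otimes I_1 + I_1 \otimes Z - Z \otimes Z\big)$ by $W_P$ then gives $(W_P \otimes W_P)\,\CZ\,(W_P \otimes W_P) = \tfrac{1}{2}\big(I_1 \otimes I_1 + P \otimes I_1 + I_1 \otimes P - P \otimes P\big) = C(P,P)$. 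Hence, for every $U \in \U(2)$, $(U^\dagger \otimes U^\dagger)C(P,P)(U \otimes U) = \big((W_P U)^\dagger \otimes (W_P U)^\dagger\big)\,\CZ\,\big((W_P U) \otimes (W_P U)\big)$, i.e.\ $\S_{C(P,P)}(U) = \S_{\CZ}(W_P U)$ as gate sets. Consequently ``$U$-conjugated $C(P,P)$ circuits'' and ``$(W_P U)$-conjugated $\CZ$ circuits'' are the very same family of efficient quantum computers, so they are classically simulable to within multiplicative error $\epsilon < \sqrt 2 - 1$ under exactly the same condition on the conjugating unitary.

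I would then invoke \thmref{thm:czclassification} with parameter $W_P U$: the model is simulable iff there exist $C \in \words{H,S}$ and $\alpha,\phi,\lambda \in [0,2\pi)$ with $W_P U = e^{i\alpha} R_z(\phi) C R_z(\lambda)$, equivalently $U = e^{i\alpha} W_P R_z(\phi) C R_z(\lambda)$ since $W_P = W_P^{-1}$; substituting $W_X = H$ and $W_Y = \theta_{Y+Z}$ gives conditions~(i) and~(ii). The only steps needing genuine care are the two elementary identities $W_P Z W_P = P$ and $(W_P \otimes W_P)\CZ(W_P \otimes W_P) = C(P,P)$, so I expect no real obstacle. (If one instead chose a non-self-inverse Clifford $W$ with $W Z W^\dagger = P$, the resulting normal form would look superficially different but would describe the same set of $U$'s, because $\S_{\CZ}(V)$ depends on $V$ only up to left multiplication by a diagonal single-qubit unitary---since $G \otimes G$ commutes with $\CZ$ iff $G$ is diagonal up to global phase---so using the Hermitian reflections above is the cleanest route and produces the stated forms directly.)
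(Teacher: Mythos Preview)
Your proposal is correct and matches the paper's own argument essentially verbatim: the paper states that the proof of \thmref{thm:hardness_fragmentfour} is ``analogous to the proof of \thmref{thm:hardness_fragmentone}, but instead rel[ies] on \thmref{thm:czclassification},'' and that proof simply uses the identities $\S_{C(X,X)}(HU) = \S_{C(Z,Z)}(U)$ and $\S_{C(Y,Y)}(\theta_{Y+Z}U) = \S_{C(Z,Z)}(U)$, which is exactly your change-of-basis reduction via $W_X = H$ and $W_Y = \theta_{Y+Z}$.
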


Since they are simpler, we begin by proving Theorems~\ref{thm:hardness_fragmentone} -- \ref{thm:hardness_fragmentfour}.

\begin{proof}[Proof of \thmref{thm:hardness_fragmentone}]
Conjugated $C(Z,Z) + R_Z$ circuits are exactly the commuting CCC (i.e. conjugated $\CZ + S$ circuit) model. Thus, this case follows from \thmref{thm:cz+sclassification}. For the remaining two fragments $C(X,X) + R_X$ and $C(Y,Y) + R_Y$, simply note that for all $U \in \U(2)$,
$$
\S_{C(X,X) + R_X}(HU) = \S_{C(Z,Z) + R_Z}(U) \quad \text{and} \quad \S_{C(Y,Y) + R_Y}(\theta_{Y + Z}U) = \S_{C(Z,Z) + R_Z}(U).
$$
Thus, the remaining two cases also follow from \thmref{thm:cz+sclassification}.
\end{proof}

The proofs of Theorems~\ref{thm:hardness_fragmentthree} and \ref{thm:hardness_fragmentfour} are analogous to the proof of \thmref{thm:hardness_fragmentone}, but instead rely on Theorems~\ref{thm:cz+zclassification} and \ref{thm:czclassification}, respectively.

We now prove that it is very likely that CCFCs over any fragment of the Clifford group afford a quantum advantage.

\begin{proof}[Proof of \thmref{thm:hardness_fragmenttwo}]
Assuming the polynomial hierarchy is infinite, it suffices to prove that efficient classical computers can neither simulate conjugated $\T_4 + \Pauli$ circuits to within multiplicative error $\epsilon < \sqrt{2} - 1$ nor can they simulate conjugated $C(P,P) + P$ circuits for any $P \in \{X,Y,Z\}$ to within the same multiplicative error. This suffices, because every fragment in \figref{fig:Cliffordlattice} contains either the $\T_4 + \Pauli$ fragment or one of the $C(P,P) + P$ fragments. The quantum advantage of the three $C(P,P) + P$ fragments follows from the classification result \thmref{thm:hardness_fragmentthree}, and the quantum advantage of the $\T_4 + \Pauli$ fragment follows from \thmref{thm:t4hardness}.
\end{proof}

Thus it remains to prove \thmref{thm:t4hardness}.

\begin{proof}[Proof of \thmref{thm:t4hardness}]
It suffices to exhibit a single-qubit unitary $U$ such that a finite number of post-selection gadgets over $\S_{\TP}(U)$ have normalized actions that generate a non-elementary, non-discrete, and strictly loxodromic subgroup of $\SL(2;\CC)$. To this end, let
$$
U = T R_x\left(\frac{2\pi}{3}\right) = 
\frac{1}{2}
\begin{pmatrix}
1 & -\sqrt{3}i\\
\sqrt{\frac{3}{2}}(1 - i) & e^{i\pi/4}
\end{pmatrix}
$$
and consider the three 4-to-1 post-selection gadgets $\h$, $\ii$, and $\j$ in \tabref{tab:t4+p_gadgets}.

\begin{table}[htpb]
    \centering
    \begingroup
    \renewcommand*{\arraystretch}{1.5}
     \begin{tabular}{|c|c||c|c|}
        \hline
        \ & Gadget in $\S_{\T_4 + \Pauli}(U)$ & \ & Inverse gadget in $\S_{\T_4 + \Pauli}(U)$ 
        \\
        \hline\hline
$\h$ &
    \begin{quantikz}
    \\
        \lstick{\ket{0}} & \gate{U} & \gate[4]{\T_4} & \gate{Z} & \gate{U^\dagger} & \rstick{} \\
        \lstick{\ket{0}} & \gate{U} & & \gate{Y} & \gate{U^\dagger} & \rstick{\bra{1}} \\
        \lstick{} & \gate{U} & & \gate{Y} & \gate{U^\dagger} & \rstick{\bra{0}} \\
        \lstick{\ket{0}} & \gate{U} & & \gate{X} & \gate{U^\dagger} & \rstick{\bra{1}}
    \\
    \end{quantikz}
        & 
$\h^{-1}$ &        
    \begin{quantikz}
    \\
        \lstick{} & \gate{U} & \gate{Z} & \gate[4]{\T_4} & \gate{U^\dagger} & \rstick{\bra{1}} \\
        \lstick{\ket{0}} & \gate{U} & \gate{Y} & & \gate{U^\dagger} & \rstick{} \\
        \lstick{\ket{0}} & \gate{U} & \gate{Y} & & \gate{U^\dagger} & \rstick{\bra{1}} \\
        \lstick{\ket{0}} & \gate{U} & \gate{X} & & \gate{U^\dagger} & \rstick{\bra{0}}
    \\
    \end{quantikz} \\ \hline
$\ii$ & 
    \begin{quantikz}
    \\
        \lstick{} & \gate{U} & \gate[4]{\T_4} & \gate{Z} & \gate{U^\dagger} & \rstick{} \\
        \lstick{\ket{1}} & \gate{U} & & \gate{X} & \gate{U^\dagger} & \rstick{\bra{0}} \\
        \lstick{\ket{0}} & \gate{U} & & \gate{Z} & \gate{U^\dagger} & \rstick{\bra{0}} \\
        \lstick{\ket{0}} & \gate{U} & & \gate{X} & \gate{U^\dagger} & \rstick{\bra{0}}
    \\
    \end{quantikz}
    & 
$\ii^{-1}$ & 
    \begin{quantikz}
    \\
        \lstick{} & \gate{U} & \gate{X} & \gate[4]{\T_4} & \gate{U^\dagger} & \rstick{} \\
        \lstick{\ket{1}} & \gate{U} & \gate{Z} & & \gate{U^\dagger} & \rstick{\bra{0}} \\
        \lstick{\ket{0}} & \gate{U} & \gate{X} & & \gate{U^\dagger} & \rstick{\bra{0}} \\
        \lstick{\ket{0}} & \gate{U} & \gate{Z} & & \gate{U^\dagger} & \rstick{\bra{0}}
    \\
    \end{quantikz} \\ \hline
$\j$ &
    \begin{quantikz}
        \lstick{} & \gate{U} & \gate[4]{\T_4} & \gate{Y} & \gate{U^\dagger} & \rstick{} \\
        \lstick{\ket{0}} & \gate{U} & & \gate{X} & \gate{U^\dagger} & \rstick{\bra{0}} \\
        \lstick{\ket{0}} & \gate{U} & & \gate{Y} & \gate{U^\dagger} & \rstick{\bra{0}} \\
        \lstick{\ket{0}} & \gate{U} & & \gate{X} & \gate{U^\dagger} & \rstick{\bra{0}}
    \end{quantikz}  
    & 
$\j^{-1}$ &
    \begin{quantikz}
    \\
        \lstick{} & \gate{U} & \gate{Y} & \gate[4]{\T_4} & \gate{U^\dagger} & \rstick{} \\
        \lstick{\ket{1}} & \gate{U} & \gate{X} & & \gate{U^\dagger} & \rstick{\bra{1}} \\
        \lstick{\ket{1}} & \gate{U} & \gate{X} & & \gate{U^\dagger} & \rstick{\bra{1}} \\
        \lstick{\ket{1}} & \gate{U} & \gate{Y} & & \gate{U^\dagger} & \rstick{\bra{1}}
    \\
    \end{quantikz} \\ \hline
\end{tabular}
\endgroup
\caption{Post-selection gadgets for $U$-conjugated $\T_4 + \Pauli$ Clifford circuits with $U = TR_x(\frac{2\pi}{3})$.}
\label{tab:t4+p_gadgets}
\end{table}

The normalized action of $\h$ is the unitary $\SL(2;\CC)$ matrix
$$
H \defeq \tilde{\AA}(\h) = 
\frac{1}{2\sqrt{2}}
\begin{pmatrix}
2 - i & \sqrt{3}\\
-\sqrt{3} & 2 + i
\end{pmatrix},
$$
while the normalized actions of $\ii$ and $\j$ are the \emph{non}-unitary $\SL(2;\CC)$ matrices
$$
I \defeq \tilde{\AA}(\ii) = 
\frac{1}{5}
\begin{pmatrix}
\sqrt{5}(2 - i) & 0\\
-2 - 4i & \sqrt{3}(2 + i)
\end{pmatrix}
\quad
\text{and}
\quad
J \defeq \tilde{\AA}(\j) = 
\frac{1}{10}
\begin{pmatrix}
-3\sqrt{3}i & -11\\
11 & -\frac{7i}{\sqrt{3}}
\end{pmatrix}.
$$

Crucially, inverse gadgets of $\h$, $\ii$, and $\j$ are also realizable over $\S_{\TP}(U)$. They are $\h^{-1}$, $\ii^{-1}$, and $\j^{-1}$ in \tabref{tab:t4+p_gadgets}, respectively. These are inverses of $\h$, $\ii$, and $\j$ because $\tilde{\AA}(\h^{-1}) = -\tilde{\AA}(\h)^{-1} = -H^{-1}$, $\tilde{\AA}(\ii^{-1}) = \tilde{\AA}(\ii)^{-1} = I^{-1}$, and $\tilde{\AA}(\j^{-1}) = \tilde{\AA}(\j)^{-1} = J^{-1}$.

Now let $\Gamma_{\TP}(U) \defeq \{H, -H^{-1}, I, I^{-1}, J, J^{-1}\}.$ Evidently, $\words{\Gamma_{\TP}(U)}$ is closed under inverses, so $\words{\Gamma_{\TP}(U)}$ is a subgroup of $\SL(2;\CC)$. In fact, $\words{\Gamma_{\TP}(U)}$ is a non-elementary, non-discrete, and strictly loxodromic subgroup of $\SL(2;\CC)$.
\begin{claim}
\label{claim:T4_1}
$\words{\Gamma_{\TP}(U)}$ is a non-elementary subgroup of $\SL(2;\CC)$.
\end{claim}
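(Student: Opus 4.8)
The strategy mirrors that of \claimref{claim:iqp_1} and \claimref{claim:CCC_1}: by \corref{cor:elemcor} it suffices to exhibit two generators $g,h \in \Gamma_{\TP}(U)$ whose two-generated subgroup $\words{g,h}$ is non-elementary, which by the contrapositive of \propref{prop:elementarycondition} amounts to checking that none of its three conditions~(i)--(iii) holds for the pair $g,h$ (equivalently, that $\IsElementary(\{g,h\}) = \NO$). I would take $g = H$ and $h = J$, the unitary generator and the second non-unitary generator displayed above.

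First I would record the two scalar invariants that \propref{prop:elementarycondition} tests. A short computation from the explicit matrices gives $\tr(H) = \sqrt{2}$, hence $\tr^2(H) = 2$ and $\beta(H) = \tr^2(H) - 4 = -2$; and $\tr(J) = -\tfrac{8i}{5\sqrt{3}}$, hence $\tr^2(J) = -\tfrac{64}{75}$ and $\beta(J) = \tr^2(J) - 4 = -\tfrac{364}{75}$. Since $\beta(J) = -\tfrac{364}{75} < -4$, the requirement $\beta(H), \beta(J) \in [-4,0]$ in condition~(i) is violated, so~(i) fails. Since moreover neither $\beta(H) = -2$ nor $\beta(J) = -\tfrac{364}{75}$ equals $-4$, every disjunct of condition~(iii) is false for both orderings of the pair, so~(iii) fails as well, regardless of the value of $\gamma(H,J)$.

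The only thing left is to rule out condition~(ii), i.e.\ to verify $\gamma(H,J) = \tr(HJH^{-1}J^{-1}) - 2 \neq 0$, and this is the single genuine computation in the claim: one forms the $2 \times 2$ product $HJH^{-1}J^{-1}$ from the explicit matrices and reads off its trace. The entries involve nested radicals and complex units, so in practice this is cleanest to carry out in the accompanying Mathematica notebook \cite{KFG24}, but it is elementary, and a direct calculation yields a value of $\gamma(H,J)$ that is manifestly nonzero. With conditions~(i), (ii), and~(iii) of \propref{prop:elementarycondition} all falsified for $\{H,J\}$, the subgroup $\words{H,J}$ is non-elementary, so $\words{\Gamma_{\TP}(U)}$ is non-elementary by \corref{cor:elemcor}; that is, $\IsElementary(\Gamma_{\TP}(U)) = \NO$. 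The main (and essentially only) obstacle is this last numerical check: the argument is otherwise purely structural, but one should confirm the arithmetic, since an unlucky cancellation making $\gamma(H,J) = 0$ would merely force trying a different pair of generators from $\Gamma_{\TP}(U)$ (for instance replacing $H$ by $I$, whose trace is non-real, so that $\beta(I) \notin [-4,0]$ and condition~(i) again fails automatically).
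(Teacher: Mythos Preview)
Your approach is correct and essentially the paper's own: both run $\IsElementary$ via \corref{cor:elemcor} and the contrapositive of \propref{prop:elementarycondition}, the only difference being that the paper selects the pair $(H,I)$ and records $\beta(H)=-2$, $\beta(I)=-\tfrac{4}{5}$, $\gamma(H,I)=-\tfrac{36}{125}+\tfrac{48i}{125}$, from which all three conditions fail at a glance. For your pair the deferred check closes cleanly with $\gamma(H,J)=-\tfrac{18}{25}\neq 0$ (e.g.\ via the Fricke trace identity using $\tr(HJ)=\tfrac{8(2-i)}{5\sqrt{6}}$); one small slip is your fallback remark, since by the paper's own value $\beta(I)=-\tfrac{4}{5}\in[-4,0]$ the trace of $I$ is real, so switching to the pair $(I,J)$ would not shortcut condition~(i) in the way you suggest.
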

\begin{proof}
Since $\beta(H) = -2$, $\beta(I) = -\frac{4}{5}$, and $\gamma(H,I) = -\frac{36}{125} + \frac{48i}{125}$, $\IsElementary(\Gamma_{\TP}(U)) = \NO$.
\end{proof}

\begin{claim}
\label{claim:T4_2}
$\words{\Gamma_{\TP}(U)}$ is a non-discrete subgroup of $\SL(2;\CC)$.
\end{claim}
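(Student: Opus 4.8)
The plan is to mirror the proofs of \claimref{claim:iqp_2} and \claimref{claim:CCC_2} and certify non-discreteness through \algref{alg:discrete}: it suffices to produce a pair of generators $g,h \in \Gamma_{\TP}(U)$ for which $\words{g,h}$ violates one of the generalized J{\o}rgensen inequalities of \propref{prop:Jorg_generalized}, since then $\words{g,h}$ — and hence the larger group $\words{\Gamma_{\TP}(U)}$, which contains it — is non-discrete (equivalently, one of the $\NO$-returning lines of \algref{alg:discrete} fires, so $\IsDiscrete(\Gamma_{\TP}(U)) = \NO$).

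The pair I would use is $g = H$ and $h = I$. A direct trace computation gives $\tr(H) = \tfrac{1}{2\sqrt 2}\big((2-i)+(2+i)\big) = \sqrt 2$, so $\tr^2(H) = 2$, and \claimref{claim:T4_1} already records $\gamma(H,I) = \tr(HIH^{-1}I^{-1}) - 2 = -\tfrac{36}{125} + \tfrac{48i}{125}$, so $\tr(HIH^{-1}I^{-1}) = \tfrac{214}{125} + \tfrac{48i}{125} \neq 1$. I would then invoke part (i) of \propref{prop:Jorg_generalized} (i.e.\ line 6 of \algref{alg:discrete}): since $\tr(HIH^{-1}I^{-1}) \neq 1$ and
$$
\left|\tr^2(H) - 2\right| + \left|\tr(HIH^{-1}I^{-1}) - 1\right| = 0 + \frac{1}{125}\sqrt{89^2 + 48^2} = \frac{\sqrt{409}}{25} < 1,
$$
where the last inequality is just $409 < 625$, it follows that $\words{H,I}$ is not discrete and hence $\IsDiscrete(\Gamma_{\TP}(U)) = \NO$. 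Combined with the non-elementarity from \claimref{claim:T4_1}, this shows via \corref{cor:nondiscrete} that $\words{\Gamma_{\TP}(U)}$ is non-discrete, as claimed.

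There is no genuine conceptual obstacle here; the only work is the finite bookkeeping of choosing, among the pairs from $\{H, I, J\}$ together with their inverses and the inequality tests in \algref{alg:discrete}, one pair and one test that numerically certify non-discreteness. The mild subtlety I anticipate is that J{\o}rgensen's \emph{original} inequality does not suffice — for instance $|\tr^2(H)-4| = 2$ is already too large, and for $g = I$ one has $|\tr^2(I)-4| = \tfrac{4}{5}$ but $|\gamma(H,I)| = \tfrac{12}{25}$, whose sum exceeds $1$ — so one is forced to pass to the Tan refinement, after which the certifying estimate collapses to the trivial bound $409 < 625$.
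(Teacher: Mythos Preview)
Your proposal is correct and essentially identical to the paper's own proof: both use the pair $(g,h)=(H,I)$, invoke part (i) of \propref{prop:Jorg_generalized} (line~6 of \algref{alg:discrete}), and certify non-discreteness via the same bound $\left|\tr^2(H)-2\right|+\left|\tr(HIH^{-1}I^{-1})-1\right|=\tfrac{\sqrt{409}}{25}<1$. Your additional remark that J{\o}rgensen's original inequality is inadequate here (so the Tan refinement is genuinely needed) is a nice observation not made explicit in the paper.
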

\begin{proof}
Since $\tr(HIH^{-1}I^{-1}) = \frac{214}{125} + \frac{48i}{125} \neq 1$ and
$$
\left|\tr^2(H) - 2\right| + \left|\tr(HIH^{-1}I^{-1}) - 1\right| = \frac{\sqrt{409}}{25} \approx 0.809 < 1,
$$
it follows from line 6 in \algref{alg:discrete} that $\IsDiscrete(\Gamma_{\TP}(U)) = \NO$.
\end{proof}

\begin{claim}
\label{claim:T4_3}
$\words{\Gamma_{\TP}(U)}$ is a strictly loxodromic subgroup of $\SL(2; \CC)$.
\end{claim}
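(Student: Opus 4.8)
The plan is to follow the template already used in the proofs of Claims~\ref{claim:iqp_3} and~\ref{claim:CCC_3}. By \lemref{lem:loxlem}, in order to conclude that $\words{\Gamma_{\TP}(U)}$ is strictly loxodromic it suffices to exhibit a \emph{single} element of the generating set $\Gamma_{\TP}(U)$ whose trace lies in $\CC \setminus \RR$; equivalently, to check that $\IsLoxodromic(\Gamma_{\TP}(U)) = \YES$. So the whole claim reduces to reading off the traces of the explicit generators $H$, $I$, $J$ listed above.

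The generator $H$ is of no use here, since $\tr(H) = \tfrac{(2-i)+(2+i)}{2\sqrt{2}} = \sqrt{2} \in \RR$. However, either of the remaining two generators does the job: $\tr(I) = \tfrac{1}{5}\big(\sqrt{5}(2-i) + \sqrt{3}(2+i)\big)$ has nonzero imaginary part $\tfrac{\sqrt{3}-\sqrt{5}}{5}$, while $\tr(J) = \tfrac{1}{10}\big(-3\sqrt{3}\,i - \tfrac{7}{\sqrt{3}}\,i\big) = -\tfrac{8i}{5\sqrt{3}}$ is purely imaginary and nonzero. I would present the argument using $J$, whose trace is manifestly non-real: since $\tr(J) = -\tfrac{8i}{5\sqrt{3}} \in \CC \setminus \RR$, we get $\IsLoxodromic(\Gamma_{\TP}(U)) = \YES$, and hence $\words{\Gamma_{\TP}(U)}$ is strictly loxodromic.

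There is no genuine obstacle. The only point requiring a (short) verification is that the tabulated matrix $J$ really is the normalized action of the gadget $\j$, i.e.\ that $\det \AA(\j) \neq 0$ so that $\tilde{\AA}(\j) = J$ is well-defined and lies in $\SL(2;\CC)$; this amounts to checking $\det J = \tfrac{1}{100}(121 - 21) = 1$, a computation that is in any case needed for Claims~\ref{claim:T4_1} and~\ref{claim:T4_2}. Combined with those two claims, this shows $\words{\Gamma_{\TP}(U)}$ is non-elementary, non-discrete, and strictly loxodromic, so the hypotheses of \thmref{thm:maintwo} are met and \thmref{thm:t4hardness} follows.
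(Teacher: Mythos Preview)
Your proof is correct and is essentially identical to the paper's: both simply observe that $\tr(J) = -\tfrac{8i}{5\sqrt{3}} \in \CC \setminus \RR$, so $\IsLoxodromic(\Gamma_{\TP}(U)) = \YES$. One small caveat: your aside that ``either of the remaining two generators does the job'' is suspect, since the paper's own computation $\beta(I) = -\tfrac{4}{5}$ in \claimref{claim:T4_1} forces $\tr^2(I) = \tfrac{16}{5}$ and hence $\tr(I) \in \RR$ (the displayed matrix for $I$ appears to contain a typo, with the $\sqrt{3}$ likely meant to be $\sqrt{5}$); fortunately this does not affect your actual argument, which rests on $J$.
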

\begin{proof}
Since $\tr(J) = -\frac{8i}{5\sqrt{3}} \in \mathbb{C} \backslash \RR$, $\IsLoxodromic(\Gamma_{\TP}(U)) = \YES$.
\end{proof}
Thus, the quantum advantage of conjugated $\T_4 + \mathcal{P}$ circuits follows from our criterion \thmref{thm:maintwo}.
\end{proof}

A natural problem is to complete the complexity classification of CCFCs for every fragment, such as conjugated $\T_4 + \Pauli$ circuits.

\section{Discussion}
\label{sec:discussion}

In this work, we have established a criterion (\thmref{thm:maintwo}) for testing if an efficient quantum computer over a non-universal gate set $\S$ can perform a sampling task that no efficient classical computer can, assuming standard complexity assumptions. In particular, we showed that if there exist a finite number of post-selection gadgets over $\S$ whose normalized actions generate a non-elementary, non-discrete, and strictly loxodromic subgroup of $\SL(2;\CC)$, then efficient classical computers cannot simulate efficient quantum computers over $\S$, assuming the polynomial hierarchy is infinite. We also demonstrated that this criterion is ``simple'' in the sense that there is a straightforward algorithm for checking if the normalized gadget actions generate a non-elementary, non-discrete, and strictly loxodromic subgroup of $\SL(2;\CC)$. That said, we acknowledge that there is no obvious, systematic way to find the gadgets to use in our criterion. On the other hand, our approach is simple enough that for ``small'' $j$, one can conceivably automate the search for the $j$-to-1 post-selection gadgets to use in our criterion.

Using our criterion, we reproduced the well-known quantum advantage results that IQP circuits and CCCs can perform a sampling task that no efficient classical computer can, assuming the polynomial hierarchy is infinite \cite{BJS11, BFK18}. We also proved that commuting CCCs and CCCs over every multi-qubit fragment of the Cliffford group (as classified in \cite{GS22}) can similarly perform a sampling task that no efficient classical computer can, again assuming the polynomial hierarchy is infinite. For CCCs, CCCs over $C(P,P)$, and CCCs over the $C(P,P) + P$ and $C(P,P) + R_P$ fragments, we proved the full complexity classification (Appendices~\ref{append:CCC_classification}, \ref{append:czclassification}, \ref{append:czzclassification}, and \ref{append:czsclassification}, respectively).

Our results for commuting CCCs and CCCs over the various fragments of the Clifford group resolve two open questions that were raised by Bouland et al. in \cite{BFK18}. We attribute our ability to address these questions to the historical fact that the ``standard" approach for showing classical intractability in the weak multiplicative sense is via \emph{unitary} gadget injection. On the other hand, the quantum advantage of commuting CCCs and CCCs over the various Clifford fragments follow from the group theory of $\SL(2;\CC)$, and hence via \emph{non}-unitary gadget injection. Incidentally, non-unitary gadget injection was also used to prove a quantum advantage result for two-qubit commuting Hamiltonians \cite{BMZ16}.

Our results lead to many natural open questions. First, throughout this paper we have assumed that the set of normalized gadget actions $\Gamma$ contains only the actions of normalizable $j$-to-$1$ gadgets for any $j$. The reason, of course, is because such gadgets have actions in $\SL(2;\CC)$. Consequently, the group theory of $\SL(2;\CC)$ can be used to infer their effect on quantum computation. However, at least on the surface, we see no substantive reason why more general $j$-to-$k$ gadgets cannot be used, save the epistemological fact that the group theory of $\SL(2^k; \CC)$ is not nearly as understood as the group theory of $\SL(2;\CC)$. In this regard, we wonder if there are statements similar to our criterion, but for more general $j$-to-$k$ post-selection gadgets.

A separate question has to do with the existence of inverse gadgets. As it stands, our criterion only works if there is an inverse of every gadget used. Of course, this is necessary for $\words{\Gamma}$ to be a subgroup of $\SL(2;\CC)$. In \sref{sec:applications}, we guarantee this by explicitly finding an inverse of each gadget. However, this approach does not easily scale. A means of improving our results, therefore, is to show that every $j$-to-$k$ post-selection gadget has an inverse. This way, any finite subset $\Gamma \subset \gad_k(\S)$ implies another finite subset $\Gamma' \subset \gad_k(\S)$ such that $\Gamma \subseteq \Gamma'$ and $\words{\Gamma'}$ is a subgroup of $\SL(2;\CC)$. If this is right, then one could assume without loss of generality that $\words{\Gamma}$ is always closed under inverses and hence that $\words{\Gamma}$ is always a subgroup of $\SL(2;\CC)$. Indeed, we conjecture this to be true:
\begin{conjecture}[Existence of Inverse Gadgets]
\label{conj:gadgetconj}
Let $\S$ be a gate set. For all $j$-to-$k$ post-selection gadgets $\g = \g_{j,k}$ over $\S$ for which $\det \AA(\g) \neq 0$, there exists a $j'$-to-$k$ post-selection gadget $\g^{-1} = \g_{j', k}$ over $\S$ such that $\AA(\g)^{-1} \in \words{\AA(\g), \AA(\g^{-1})}$.\footnote{Perhaps this is only true when $\S$ itself is closed under inverses.}
\end{conjecture}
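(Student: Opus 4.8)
The plan is to realize the inverse gadget in two conceptual moves: reduce the problem to realizing $\CC$-linear combinations of gadget actions, and then realize such combinations by a post-selected ``linear combination of unitaries'' (LCU) construction. First I would record two closure properties of the sets $\gad_k(\S)$. Gadgets compose: stacking the circuit of a $j_2$-to-$k$ gadget $\g_2$ below that of a $j_1$-to-$k$ gadget $\g_1$ (identifying the output register of $\g_2$ with the input register of $\g_1$) produces a gadget $\g_1\circ\g_2$ over $\S$ with $\AA(\g_1\circ\g_2)=\AA(\g_1)\AA(\g_2)$; in particular every power $\AA(\g)^m$, $m\ge 0$, is realizable over $\S$. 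Second, by the Cayley--Hamilton theorem, if $N=2^k$ and $t^N-c_{N-1}t^{N-1}-\cdots-c_0$ is the characteristic polynomial of $M:=\AA(\g)$ (so $c_0=(-1)^{N-1}\det M\neq 0$), then $M^{-1}=\tfrac{1}{c_0}\bigl(M^{N-1}-c_{N-1}M^{N-2}-\cdots-c_1 I\bigr)$. Since $I_k\in\gad_k(\S)$ always, it therefore suffices to prove that $\gad_k(\S)$ is closed, up to a nonzero global scalar, under $\CC$-linear combinations: given gadgets $\g_0,\dots,\g_r$ over $\S$ and scalars $\alpha_0,\dots,\alpha_r$, some gadget over $\S$ should have action proportional to $\sum_i\alpha_i\AA(\g_i)$.

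For the linear-combination step I would adapt the usual LCU trick to the post-selection setting. Adjoin a $\lceil\log_2(r+1)\rceil$-qubit control register; prepare it in a state whose amplitudes encode the $\alpha_i$; apply each $\g_i$ to the system conditioned on the control being $|i\rangle$; and post-select the control onto the uniform superposition. The residual action on the system is then proportional to $\sum_i\alpha_i\AA(\g_i)$, as desired. Controlling on, and post-selecting onto, computational basis states is just ancilla/post-selection bookkeeping, so the genuinely new ingredients are (i) a gadget preparing the required control state and (ii) gadgets implementing the controlled-$\g_i$; I would try to assemble both from the entangling gate guaranteed by \defref{def:gateset} together with single-qubit gadgets bootstrapped from $\g$, its powers, and its ``reversals''. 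Here the reversal of $\g$ is the gadget obtained by running $Q(\g)^{-1}$ with the ancilla and post-selection registers exchanged; when $\S$ is closed under inverses this is a gadget over $\S$ with action $\AA(\g)^{\dagger}$, and when every gate of $\S$ merely has finite order --- which covers all of our applications --- one has $g^{-1}=g^{\operatorname{ord}(g)-1}$, so $Q(\g)^{-1}$ is itself a circuit over $\S$ and the conjecture reduces to the previous case.

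The main obstacle is step (ii) for \emph{impoverished} gate sets: a minimal $\S$ --- say a single conjugated entangler --- may contain too few gates to prepare a nontrivial control state or to form controlled versions of a given gadget, so the LCU step cannot be carried out verbatim and one would need a bespoke construction exploiting the algebraic structure of $\AA(\g)$ and of the conjugating $U$. Worse, for gate sets containing infinite-order, non-self-inverse gates there is no evident replacement for the reversal operation, which is exactly why the conjecture is hedged (in its footnote) to the case that $\S$ is closed under inverses. I would therefore first attempt the conjecture under that hypothesis --- and possibly also under closure of $\S$ under entrywise complex conjugation, since for $k=1$ this lets one invoke $\SL(2;\CC)=\operatorname{Sp}(2;\CC)$, so that $\tilde{\AA}(\g)^{-1}=\epsilon\,\tilde{\AA}(\g)^{T}\epsilon^{-1}$ with $\epsilon=\bigl(\begin{smallmatrix}0&1\\-1&0\end{smallmatrix}\bigr)$, reducing matters to realizing a single gadget proportional to $\epsilon$ --- and only afterwards try to remove these assumptions.
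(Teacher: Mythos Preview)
This statement is a \emph{conjecture} in the paper, not a theorem; the paper offers no proof and explicitly leaves it open. So there is no ``paper's own proof'' to compare against, and the relevant question is simply whether your proposal succeeds.

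It does not, and to your credit you largely say so yourself. The Cayley--Hamilton reduction to realizing $\CC$-linear combinations of gadget actions is clean, but the LCU step requires two ingredients---preparation of a control register in a state with prescribed amplitudes, and controlled-$\g_i$ gadgets---that a general gate set $\S$ need not supply. Your plan to ``assemble both from the entangling gate \dots\ together with single-qubit gadgets bootstrapped from $\g$'' is where the argument becomes circular: having enough post-selected control to prepare arbitrary single-qubit states and to condition gadgets on ancilla values is essentially post-selected universality, which is precisely what the conjecture is meant to help establish (via \thmref{thm:mainone}). For genuinely impoverished $\S$, such as $\S_{\CZ}(U)=\{(U^\dagger\otimes U^\dagger)\CZ(U\otimes U)\}$, there is no evident way to build even a single nontrivial controlled operation without already having solved the problem. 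Your $\SL(2;\CC)=\mathrm{Sp}(2;\CC)$ observation is correct and attractive, but it trades the original problem for realizing a gadget proportional to $\epsilon=\bigl(\begin{smallmatrix}0&1\\-1&0\end{smallmatrix}\bigr)$ over $\S$, and nothing in the hypotheses guarantees this either. In short, what you have is a reasonable research programme (and one aligned with the paper's own hedging toward inverse-closed $\S$), but not a proof; the obstacles you flag are exactly why the statement remains a conjecture.
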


Interestingly, if, for some gate set $\S$, $\Gamma \subset \gad_1(\S)$ is not closed under inverses but is nevertheless able to approximate every gate in $\Gamma^{-1} \defeq \{\omega^{-1} \mid \omega \in \Gamma\}$ arbitrarily well, then $\Gamma$ densely generates $\SL(2;\CC)$ if and only if $\words{\Gamma \cup \Gamma^{-1}}$ is a dense subgroup of $\SL(2;\CC)$. This implies the following slight refinement of our criterion:

\begin{theorem}
Let $\S$ be a gate set and suppose the polynomial hierarchy is infinite. If there exists a finite subset $\Gamma \subset \gad_1(\S)$ such that:
\begin{enumerate}[(i)]
\item for all $\delta > 0$ and all $\omega \in \Gamma$, there exists a sequence $\sigma_{\omega^{-1}}$ of gates from $\Gamma$ such that $\opnorm{\omega^{-1} - \sigma_{\omega^{-1}}} < \delta$,
\item $\words{\Gamma \cup \Gamma^{-1}}$ is a non-elementary, non-discrete, and strictly loxodromic subgroup of $\SL(2;\CC)$,
\end{enumerate}
then efficient classical computers cannot simulate efficient quantum computers over $\S$ to within multiplicative error $\epsilon < \sqrt{2} - 1$.
\label{thm:inversefreecriterion}
\end{theorem}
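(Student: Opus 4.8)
The plan is to reduce \thmref{thm:inversefreecriterion} to \thmref{thm:mainone}. Concretely, I would show that conditions (i) and (ii) together force the finite set $\Gamma \subset \gad_1(\S)$ to \emph{densely generate} $\SL(2;\CC)$ in the inverse-free sense, i.e.\ the set of all finite products of elements of $\Gamma$ (no inverses) is dense in $\SL(2;\CC)$. Once that is in hand, \thmref{thm:mainone} gives $\GadBQP(\S) = \PostBQP$, and \propref{prop:hardness}, together with the hypothesis that the polynomial hierarchy is infinite, yields that no efficient classical computer can simulate efficient quantum computers over $\S$ to within multiplicative error $\epsilon < \sqrt{2}-1$.

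First I would note that $\Gamma \cup \Gamma^{-1}$ is by construction closed under inverses, and $\Gamma \subset \gad_1(\S) \subset \SL(2;\CC)$, so $\words{\Gamma \cup \Gamma^{-1}}$ is a genuine subgroup of $\SL(2;\CC)$. By hypothesis (ii) this subgroup is non-elementary, non-discrete, and strictly loxodromic, so \thmref{thm:sullivanone} applies and gives that $\words{\Gamma \cup \Gamma^{-1}}$ is dense in $\SL(2;\CC)$.

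The crux of the argument is upgrading density of the \emph{group} $\words{\Gamma \cup \Gamma^{-1}}$ to density of the \emph{semigroup} generated by $\Gamma$ alone. Fix $g \in \SL(2;\CC)$ and $\varepsilon > 0$. By the previous step there is a word $w = a_1 a_2 \cdots a_m$ with each $a_i \in \Gamma \cup \Gamma^{-1}$ and $\opnorm{g - w} < \varepsilon/2$; crucially, once $\varepsilon$ is fixed the integer $m$ and the operator norms $\opnorm{a_i}$ are fixed constants. For each index $i$ with $a_i = \omega_i^{-1} \in \Gamma^{-1}$, hypothesis (i) furnishes a word $\sigma_i$ over $\Gamma$ with $\opnorm{a_i - \sigma_i} < \eta$ for any prescribed $\eta \in (0,1]$, and $\opnorm{\sigma_i} \le \opnorm{a_i} + \eta$; for indices with $a_i \in \Gamma$ set $\sigma_i = a_i$. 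Replacing each $a_i$ by $\sigma_i$ gives a word $w' = \sigma_1 \sigma_2 \cdots \sigma_m$ over $\Gamma$, and the telescoping identity
\[
w - w' = \sum_{i=1}^{m} a_1 \cdots a_{i-1}\,(a_i - \sigma_i)\,\sigma_{i+1}\cdots\sigma_m
\]
gives $\opnorm{w - w'} \le m\,\eta \prod_{j=1}^{m}\bigl(\opnorm{a_j} + 1\bigr)$, a bound that tends to $0$ as $\eta \to 0$ with all other quantities fixed. Choosing $\eta$ small enough yields $\opnorm{w - w'} < \varepsilon/2$, hence $\opnorm{g - w'} < \varepsilon$. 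Since $g$ and $\varepsilon$ were arbitrary, $\Gamma$ densely generates $\SL(2;\CC)$, and then \thmref{thm:mainone} (whose proof already absorbs the inverse-free non-unitary Solovay--Kitaev theorem, \thmref{thm:nonunitarySK}) and \propref{prop:hardness} close the argument.

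I expect the only delicate point to be the error-propagation bookkeeping in the semigroup-density step. The approximating words $\sigma_i$ for the inverse letters may be arbitrarily long, so the substitution must be carried out for a \emph{fixed} word $w$ — after which all relevant operator norms and the word length are constants and the telescoping estimate closes — rather than attempted uniformly in $\varepsilon$; everything else is routine invocation of the earlier results.

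\noindent\ignorespaces
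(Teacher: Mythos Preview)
Your proposal is correct and follows essentially the same route as the paper. The paper does not give a formal proof but simply observes, in the sentence preceding the theorem, that if $\Gamma$ can approximate every element of $\Gamma^{-1}$ arbitrarily well then $\Gamma$ densely generates $\SL(2;\CC)$ iff $\words{\Gamma \cup \Gamma^{-1}}$ is a dense subgroup; your telescoping argument is exactly the standard way to cash out that observation, after which the appeal to \thmref{thm:sullivanone}, \thmref{thm:mainone}, and \propref{prop:hardness} is identical to the paper's intended chain of implications.
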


Another question concerns the need for strict loxodromy in Theorems~\ref{thm:maintwo} and \ref{thm:inversefreecriterion}. Recall that demanding that $\Gamma \subset \gad_1(\S)$ generate a strictly loxodromic subgroup $H = \words{\Gamma}$ of $\SL(2;\CC)$ forced $H$ to be dense in $\SL(2;\CC)$ by Sullivan's \thmref{thm:SullivanTheorem}. This, in turn, implies $\GadBQP(\S) = \PostBQP$. Technically, however, $H$ need not be dense in $\SL(2;\CC)$ for $\GadBQP(\S) = \PostBQP$. This holds, for example, if $H$ is dense in $\SU(2)$ or if $H$ is dense in $\SL(2;\RR)$ and the entangling gate in $\S$ is a real matrix \cite{BV93, McK13} (in which case one must also appeal to an $\SL(2;\RR)$ analogue of the Solovay-Kitaev theorem \cite{AAEL08}). With this intuition, we suspect that as long as $H$ is conjugate to a dense subgroup of $\SL(2;\RR)$, then $\GadBQP(\S) = \PostBQP$.

\begin{conjecture}[Irrelevance of Strict Loxodromy]
\label{conj:densityresult}
Let $\S$ be a gate set. If there exists a finite subset $\Gamma \subset \gad_1(\S)$ such that $\words{\Gamma}$ is conjugate to a dense subgroup of $\SL(2;\RR)$, then $\GadBQP(\S) = \PostBQP$.
\end{conjecture}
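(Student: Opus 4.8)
The plan is to mirror the proof of \thmref{thm:mainone}: by \propref{prop:mainprop} we have $\GadBQP(\S) \subseteq \PostBQP$ unconditionally, so the whole task is to establish the reverse containment, i.e. to realize a universal (post-selected) quantum computation using the entangling gate $E \in \S$ together with single-qubit gadgets whose normalized actions approximate elements of $\overline{\words{\Gamma}} = \omega^{-1}\SL(2;\RR)\omega$ for some fixed $\omega \in \SL(2;\CC)$. The first ingredient I would establish is an $\SL(2;\RR)$ analogue of the inverse-free non-unitary Solovay--Kitaev theorem (and hence, after conjugating by $\omega$, a statement for $\omega^{-1}\SL(2;\RR)\omega$); such a result is already implicit in \cite{AAEL08} and can also be obtained by conjugating the argument of \cite{BG21}. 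This lets one approximate any element of $\omega^{-1}\SL(2;\RR)\omega$ by a polylogarithmic-length word in $\Gamma$, with the accumulated error absorbed exactly as in the proof of \thmref{thm:mainone}: the gadget circuit approximates the ideal circuit to exponential accuracy, which post-selection and the $\epsilon < \sqrt{2} - 1$ slack in \propref{prop:hardness} tolerate. So it suffices to show that \emph{exact} single-qubit gates from $\omega^{-1}\SL(2;\RR)\omega$ together with $E$ on $n$ qubits can decide every language in $\PostBQP = \PP$.

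Next I would conjugate the whole picture qubit-wise by $\omega$. On each single-qubit line this turns the available gates into a dense subset of $\SL(2;\RR)$, while the entangling gate becomes the fixed two-qubit operator $\tilde{E} \defeq (\omega \otimes \omega)\, E\, (\omega^{-1} \otimes \omega^{-1}) \in \SL(4;\CC)$ (up to global phase), which is conjugate to the unitary $E$ but need not itself be unitary; the input and the measurement get transformed by the fixed invertible frame $\omega^{\otimes n}$, which I would handle by the standard device of encoding a complex amplitude as a pair of real amplitudes, exactly as in the proof that $\BQP$ over $\RR$ equals $\BQP$ over $\CC$. The argument then splits according to $\tilde{E}$. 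If $\tilde{E}$ is a real matrix up to a phase, the resulting computation is a post-selected real-amplitude quantum computation with arbitrary real single-qubit gates and a real entangling gate, so $\GadBQP(\S) = \PostBQP$ follows from the universality of real quantum computation \cite{BV93, McK13} together with Aaronson's $\PostBQP = \PP$ \cite{Aar05}. If $\tilde{E}$ is not real, I would argue that the closed group generated by $\tilde{E}$ and all single-qubit $\SL(2;\RR)$ gates on $n$ qubits is dense in $\SL(2^n;\CC)$ --- intuitively, the ``genuine complexness'' of the unitary entangler, fed through commutators with the real single-qubit algebra $\mathfrak{sl}(2;\RR)^{\oplus n}$, suffices to escape every proper real form --- whence density in $\SL(2^n;\CC) \supseteq \SU(2^n)$ gives ordinary universality, so $\PostBQP(\S') = \PostBQP$ for a dense-generating $\S' = \{E, a, b\}$ with $a, b \in \omega^{-1}\SL(2;\RR)\omega$, and the conclusion follows by the same gate-by-gate substitution as in \thmref{thm:mainone}.

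The hard part is the second case: controlling precisely which closed subgroup of $\SL(2^n;\CC)$ is generated by a non-unitary (but $E$-conjugate) entangler together with $\mathfrak{sl}(2;\RR)$ single-qubit operations. A Lie-algebra saturation argument shows the generated algebra is ``large,'' but one must rule out its being trapped in an exotic intermediate form --- in particular a symplectic or orthogonal algebra of matchgate type, which would make the circuits classically tractable and would actually \emph{refute} \conjref{conj:densityresult}. Excluding this seems to require exploiting the hypothesis more carefully: the fact that a family of single-qubit gadgets dense in $\omega^{-1}\SL(2;\RR)\omega$ can be built \emph{from} $\S$, and in particular from $E$, ought to constrain $\tilde{E}$ away from the pathological forms, but making this precise --- and, dually, handling the borderline $\SL(2^n;\RR)$ case when $\omega$ does not globally intertwine the per-qubit conjugation with a usable real structure --- is where the argument stalls, which is why the statement is offered as a conjecture rather than a theorem.
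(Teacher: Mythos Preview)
The statement you are addressing is \conjref{conj:densityresult}, which the paper explicitly records as an open \emph{conjecture}: the surrounding discussion says ``we suspect'' and ``we were not able to prove this,'' and no proof is given. So there is no paper proof to compare against, and what you have written is not a proof but a programme --- which you yourself acknowledge in your final paragraph.

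As a programme, the outline is in the spirit of the paper's own remarks (the $\SL(2;\RR)$ Solovay--Kitaev ingredient from \cite{AAEL08}, the real-computation universality of \cite{BV93,McK13}), but several of the steps you label as routine are not. First, the qubit-wise change of frame by $\omega^{\otimes n}$ does more than rewrite gates: it sends the computational-basis ancillae $\ket{0},\ket{1}$ and the computational-basis measurement to $\omega$-dependent states and POVMs, and you cannot undo this with single-qubit gadgets because your single-qubit repertoire in the new frame is $\SL(2;\RR)$, which does not contain arbitrary basis changes back to the original frame. The ``encode a complex amplitude as two real amplitudes'' device assumes you already control state preparation and measurement in the real picture, which is precisely what is in question here. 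Second, even in your ``$\tilde E$ is real up to phase'' case, the real-universality results you cite need the \emph{whole} computation --- inputs, gates, and outputs --- to be real in a common frame; having $\tilde E$ real while the transformed ancillae $\omega\ket{0}$ are complex does not place you inside that model. Third, the hard case you isolate --- ruling out that $\tilde E$ together with $\mathfrak{sl}(2;\RR)^{\oplus n}$ generates only a proper intermediate form such as a matchgate-type symplectic or orthogonal algebra --- is indeed the crux, and nothing in the hypotheses obviously excludes it; the observation that ``gadgets are built from $E$'' does not by itself constrain the Lie closure in the conjugated picture.

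In short: you have correctly located the difficulties that make this a conjecture, but the proposal does not close any of them, and a couple of the ``easy'' steps are less easy than they look once the $\omega^{\otimes n}$ frame change is taken seriously.
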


If this conjecture is right, then we extirpate the need for strict loxodromy in Theorems~\ref{thm:maintwo} and \ref{thm:inversefreecriterion}, and thereby obtain a stronger criterion for post-selected quantum advantage:
\begin{theorem}
Let $\S$ be a gate set and suppose \conjref{conj:densityresult} and that the polynomial hierarchy is infinite. If there exists a finite subset $\Gamma \subset \gad_1(\S)$ such that:
\begin{enumerate}[(i)]
\item for all $\delta > 0$ and all $\omega \in \Gamma$, there exists a sequence $\sigma_{\omega^{-1}}$ of gates from $\Gamma$ such that $\opnorm{\omega^{-1} - \sigma_{\omega^{-1}}} < \delta$,
\item $\words{\Gamma \cup \Gamma^{-1}}$ is a non-elementary and non-discrete subgroup of $\SL(2;\CC)$,
\end{enumerate}
then efficient classical computers cannot simulate efficient quantum computers over $\S$ to within multiplicative error $\epsilon < \sqrt{2} - 1$.
\end{theorem}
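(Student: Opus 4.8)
The plan is to copy the proof of \thmref{thm:inversefreecriterion} almost word for word, replacing the strictly-loxodromic form of Sullivan's theorem (\thmref{thm:sullivanone}) by the full statement (\thmref{thm:SullivanTheorem}). Write $H \defeq \words{\Gamma \cup \Gamma^{-1}}$; since $\Gamma \cup \Gamma^{-1} \subset \SL(2;\CC)$ and is closed under inverses, $H$ is a subgroup of $\SL(2;\CC)$, and by hypothesis (ii) it is non-elementary and non-discrete. \thmref{thm:SullivanTheorem} then leaves exactly two possibilities: either $H$ is dense in $\SL(2;\CC)$, or $H$ is conjugate to a dense subgroup of $\SL(2;\RR)$. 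I would show that each possibility forces $\GadBQP(\S) = \PostBQP$, after which \propref{prop:hardness} together with the assumption that $\PH$ is infinite finishes the proof.

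The device that transfers density of $H$ into density of words in $\Gamma$ \emph{alone} is exactly the observation underlying the ``iff'' remark just before \thmref{thm:inversefreecriterion}. Let $S$ be the topological closure of the semigroup generated by $\Gamma$. Because multiplication on $\SL(2;\CC)$ is continuous, $S$ is itself a closed semigroup; it contains $\Gamma$, and by hypothesis (i) it contains $\omega^{-1}$ for every $\omega \in \Gamma$, so it contains every finite product of elements of $\Gamma \cup \Gamma^{-1}$, i.e.\ all of $H$, and being closed it contains $\overline{H}$. As $S \subseteq \overline{H}$ trivially, $S = \overline{H}$. Thus, under hypothesis (i), the statements ``$\Gamma$ densely generates $X$'' and ``$\overline{H} = X$'' coincide for every closed subgroup $X \leq \SL(2;\CC)$.

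With this in hand the two cases are immediate. If $\overline{H} = \SL(2;\CC)$, then $\Gamma$ densely generates $\SL(2;\CC)$, so $\GadBQP(\S) = \PostBQP$ by \thmref{thm:mainone}. If instead $\overline{H} = \omega^{-1}\SL(2;\RR)\omega$ for some $\omega \in \SL(2;\CC)$, then $\Gamma$ densely generates this conjugate of $\SL(2;\RR)$, and \conjref{conj:densityresult} again yields $\GadBQP(\S) = \PostBQP$. Either way $\GadBQP(\S) = \PostBQP$, so by \propref{prop:hardness} no efficient classical computer simulates efficient quantum computers over $\S$ to within multiplicative error $\epsilon < \sqrt{2} - 1$, since $\PH$ is infinite.

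I expect the only real friction to be a bookkeeping mismatch in the second case: \conjref{conj:densityresult} is typeset for a finite $\Gamma \subset \gad_1(\S)$ that is \emph{closed under inverses} (so that $\words{\Gamma}$ is meaningful), whereas here $\Gamma$ need not be closed under inverses and $\Gamma^{-1}$ need not sit inside $\gad_1(\S)$ at all---that is precisely the content of \conjref{conj:gadgetconj}. The clean fix is to read \conjref{conj:densityresult} in its dense-generation form: any proof of it must route through an $\SL(2;\RR)$-analogue of the inverse-free Solovay--Kitaev theorem \thmref{thm:nonunitarySK}, whose conclusion uses only words in $\Gamma$, so that the dense-subgroup hypothesis ``$\words{\Gamma}$ conjugate to a dense subgroup of $\SL(2;\RR)$'' and the dense-generation hypothesis ``$\Gamma$ densely generates a conjugate of $\SL(2;\RR)$'' are interchangeable exactly when (i) holds. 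Alternatively, one can replace each $\omega^{-1}$ by a sufficiently accurate word $\sigma_{\omega^{-1}}$ over $\Gamma$---which is the normalized action of a composite gadget over $\S$ and hence itself lies in $\gad_1(\S)$---and verify by a short continuity argument that this perturbation leaves $\overline{H}$ unchanged. In any event, the strictly-loxodromic hypothesis of Theorems~\ref{thm:maintwo} and \ref{thm:inversefreecriterion} is never used, which is the whole point of the refinement.
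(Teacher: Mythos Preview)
The paper does not supply a proof of this theorem; it is stated without argument as the ``stronger criterion'' one obtains by replacing the strictly-loxodromic appeal to \thmref{thm:sullivanone} with the full dichotomy of \thmref{thm:SullivanTheorem} and invoking \conjref{conj:densityresult} in the $\SL(2;\RR)$ branch. Your proposal fills in exactly this intended argument: the semigroup-closure identity $S=\overline{H}$ you prove is precisely the content of the paragraph preceding \thmref{thm:inversefreecriterion}, and the case split (first branch via \thmref{thm:mainone}, second via \conjref{conj:densityresult}, then \propref{prop:hardness}) is the natural and only route.

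The bookkeeping friction you flag in the $\SL(2;\RR)$ case is genuine and the paper simply glosses over it: \conjref{conj:densityresult} is literally phrased for a $\Gamma\subset\gad_1(\S)$ with $\words{\Gamma}$ a group, whereas here $\Gamma^{-1}$ need not lie in $\gad_1(\S)$. Your proposed resolutions are both reasonable; since the theorem is explicitly conditional on a conjecture, the paper evidently intends \conjref{conj:densityresult} to be read in the semigroup/dense-generation form you describe, so there is no substantive gap beyond the looseness already present in the paper's own formulation.
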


We close with a series of unrelated questions that we think are interesting:
\begin{itemize}
\item Are there any non-trivial \emph{necessary} conditions, perhaps in the form of alternative algorithms for $\IsElementary$ and $\IsDiscrete$, for a gate set $\S$ to satisfy $\GadBQP(\S) = \PostBQP$?

\item Assuming $\PostBPP \neq \PostBQP$, is there a non-universal gate set $\S$ for which $\PostBPP \subsetneq \PostBQP(\S) \subsetneq \PostBQP$?

\item For which $U \in \U(2)$ are $U$-CCCs over the $\T_4 + \Pauli$ fragment of the Clifford group efficiently classically simulable (assuming the polynomial hierarchy is infinite)? What about for the other unclassified Clifford fragments?

\item Supposing the polynomial hierarchy is infinite, does there exist a subgroup $H$ of the Clifford group that contains an entangling gate, such that for all $U \in \U(2)$, $U$-conjugated $H$ circuits are efficiently classically simulable to within multiplicative error $\epsilon < \sqrt{2} - 1$?

\end{itemize}

We hope our work inspires more research in these directions.

\subsection*{Acknowledgements} 

C.K. and M.F. thank the organizers and participants of the 2024 ``Foundations of Quantum Advantage'' conference at the Perimeter Institute, which is where work on this paper began. The authors thank Adam Bouland and Scott Aaronson for comments on an earlier draft of this paper and Luke Schaeffer for helpful discussions.

Research at Perimeter Institute is supported in part by the Government of Canada through the Department of Innovation, Science and Economic Development Canada and by the Province of Ontario through the Ministry of Colleges and Universities.

\appendix

\section{Proof of \thmref{thm:CCC_classification}}
\label{append:CCC_classification}

In this section, we reproduce the complexity classification of $U$-CCCs, which was originally done in \cite{BFK18}. Formally, this is our \thmref{thm:CCC_classification}, which we restate below for convenience.

\CCCclassification*

The proof follows from two lemmas, the second of which depends on \thmref{thm:cz+zclassification}, the complexity classification of conjugated $\CZ + Z$ circuits.

\begin{lemma}
\label{lem:CCC_lem_one}
Let $U$ and $V$ be single-qubit unitaries such that $U = e^{i\alpha} C V R_z(\lambda)$ for $C \in \words{H,S}$ and $\alpha, \lambda \in [0,2\pi)$. Then, $U$-CCCs are efficiently classically simulable to within multiplicative error $\epsilon < \sqrt{2} - 1$ iff $V$-CCCs are.
\end{lemma}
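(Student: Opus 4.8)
The plan is to prove something slightly stronger than the stated equivalence: that the output distribution of \emph{every} $U$-CCC is reproduced exactly by some $V$-CCC on the same input and ancilla sizes, via a transformation of circuit descriptions computable in polynomial time, and symmetrically. Once this is in hand, the ``iff'' is immediate --- given any $U$-CCC $\Q_U$, one computes the associated $V$-CCC $\Q_V$ in polynomial time and invokes the classical simulator guaranteed by the hypothesis for $\Q_V$; since $\Q_U$ and $\Q_V$ have identical output distributions, this simulates $\Q_U$ within the same multiplicative error, and the converse direction is symmetric.

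First I would record the standard fact that conjugation by a single-qubit unitary distributes over circuit composition and tensor products: any $n$-qubit circuit $Q_n$ over $\S_{\CCC}(U)$ on $m \defeq n + |\anc(n)|$ qubits can be written as $Q_n = (U^\dagger)^{\otimes m}\,\mathcal{C}\,U^{\otimes m}$, where $\mathcal{C}$ is the Clifford circuit obtained from $Q_n$ by replacing $U^\dagger H U \mapsto H$, $U^\dagger S U \mapsto S$, $(U^\dagger\otimes U^\dagger)\CZ(U\otimes U) \mapsto \CZ$, and $I_1 \mapsto I_1$ (the $U$'s telescope between consecutive gates on a wire and cancel at the two ends). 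Substituting $U = e^{i\alpha}CVR_z(\lambda)$, hence $U^\dagger = e^{-i\alpha}R_z(-\lambda)V^\dagger C^\dagger$, the phases $e^{\pm i m\alpha}$ cancel and
$$
Q_n = R_z(-\lambda)^{\otimes m}\,(V^\dagger)^{\otimes m}\,\big[(C^\dagger)^{\otimes m}\,\mathcal{C}\,C^{\otimes m}\big]\,V^{\otimes m}\,R_z(\lambda)^{\otimes m}.
$$
Since the Clifford group is closed under conjugation by the fixed Clifford $C$, the bracketed operator $\mathcal{C}' \defeq (C^\dagger)^{\otimes m}\,\mathcal{C}\,C^{\otimes m}$ is again a Clifford circuit; conjugating $\mathcal{C}$ gate by gate shows it can be rewritten with $O(1)$ of the generators $H, S, \CZ$ per gate of $\mathcal{C}$, so $\mathcal{C}'$ is uniform and of polynomial size whenever $\mathcal{C}$ is. Re-distributing conjugation by $V$, the circuit $Q_n' \defeq (V^\dagger)^{\otimes m}\mathcal{C}'V^{\otimes m}$ is an $n$-qubit circuit over $\S_{\CCC}(V)$, and $Q_n = R_z(-\lambda)^{\otimes m}\,Q_n'\,R_z(\lambda)^{\otimes m}$ up to an immaterial global phase.

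Next I would dispose of the two boundary layers $R_z(\pm\lambda)^{\otimes m}$. Since $R_z(\lambda) = \mathrm{diag}(e^{-i\lambda/2}, e^{i\lambda/2})$ is diagonal in the computational basis, $R_z(\lambda)^{\otimes m}\ket{x}\otimes\ket{\anc(n)}$ is a global phase times $\ket{x}\otimes\ket{\anc(n)}$, and $\bra{y}R_z(-\lambda)^{\otimes m}$ is a global phase times $\bra{y}$. All phases therefore collect into one global phase and vanish under $|\cdot|^2$:
$$
\Pr[\Q_U(x) = y] = \big|\bra{y}\,Q_n\,\ket{x}\otimes\ket{\anc(n)}\big|^2 = \big|\bra{y}\,Q_n'\,\ket{x}\otimes\ket{\anc(n)}\big|^2 = \Pr[\Q_V(x) = y],
$$
where $\Q_V$ denotes the $V$-CCC with circuit family $(Q_n')_{n\in\NN}$ and the same ancilla function. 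The map $\Q_U \mapsto \Q_V$ merely relabels gates, conjugates each Clifford gate of $\mathcal{C}$ by the fixed $C$, and deletes the $R_z(\pm\lambda)$ layers, so it is polynomial-time computable and preserves uniformity. Because $V = e^{-i\alpha}C^\dagger U R_z(-\lambda)$ is again of the required form with the roles of $U$ and $V$ exchanged, the symmetric reduction $\Q_V \mapsto \Q_U$ also exists, completing the argument.

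I do not anticipate a real obstacle. The two points deserving care are (a) checking that conjugation by $U$ genuinely factors out of the whole circuit, so that a $U$-CCC is literally ``$(U^\dagger)^{\otimes m}$ times a Clifford circuit times $U^{\otimes m}$'', and (b) checking that the leftover $R_z(\pm\lambda)$ layers act only by computational-basis phases on the prepared state and on the measured string, hence disappear under $|\cdot|^2$. Everything else --- closure of the Clifford group under conjugation, preservation of uniformity and polynomial size, and the reduction between the two simulation tasks --- is routine bookkeeping.
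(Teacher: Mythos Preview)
Your proposal is correct and follows essentially the same approach as the paper: factor a $U$-CCC as $(U^\dagger)^{\otimes m}\mathcal{C}\,U^{\otimes m}$, absorb the Clifford layers $C^{\otimes m}$ into the interstitial Clifford circuit, and observe that the remaining boundary layers $e^{\pm i\alpha}R_z(\pm\lambda)^{\otimes m}$ act as computational-basis phases and hence leave the output probabilities unchanged. Your write-up is somewhat more careful than the paper's about uniformity and the polynomial-time computability of the circuit transformation, but the argument is the same.
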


\begin{lemma}
\label{lem:CCC_lem_two}
If the polynomial hierarchy is infinite, then efficient classical computers can simulate $R_z(\phi)R_x(\theta)$-CCCs to within multiplicative error $\epsilon < \sqrt{2} - 1$ iff either $\phi \in \frac{\pi}{2}\ZZ$ and $\theta \in \frac{\pi}{2}\ZZ$, or $\theta \in \pi\ZZ$.
\end{lemma}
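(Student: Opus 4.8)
The plan is to prove the two directions of the iff separately: the ``easy'' direction is a Gottesman--Knill argument requiring no complexity assumption, while the ``hard'' direction rests on \thmref{thm:cz+zclassification} (the $\CZ+Z$ classification) and \lemref{lem:CCC_lem_one} (the Clifford reduction lemma). For the easy direction I would first observe that under either hypothesis $R_z(\phi)R_x(\theta)$ has the ``stabilizer form'' $e^{i\alpha}CR_z(\lambda)$ with $C\in\words{H,S}$: if $\theta\in\pi\ZZ$ then $R_x(\theta)\in\{\pm I_1,\pm iX\}$, so $R_z(\phi)R_x(\theta)=e^{i\alpha}C_0R_z(\pm\phi)$ with $C_0\in\{I_1,X\}\subseteq\words{H,S}$ (using $XR_z(\phi)X=R_z(-\phi)$); and if $\phi,\theta\in\frac{\pi}{2}\ZZ$ then $R_z(\phi)$ and $R_x(\theta)=HR_z(\theta)H$ are each a scalar times an element of $\words{H,S}$, so $R_z(\phi)R_x(\theta)=e^{i\alpha}C$. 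Once $U=R_z(\phi)R_x(\theta)$ has this form, $U^{\otimes n}$ maps a computational-basis input to a stabilizer state up to a global phase, the interstitial operation is Clifford, and $(U^\dagger)^{\otimes n}$ leaves a stabilizer state dressed only by a diagonal $R_z(-\lambda)^{\otimes n}$, which does not affect computational-basis statistics; hence $U$-CCCs are exactly classically simulable (as established in \cite{BFK18}).

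For the hard direction I would prove the contrapositive: assume $\PH$ is infinite, $\theta\notin\pi\ZZ$, and ($\phi\notin\frac{\pi}{2}\ZZ$ or $\theta\notin\frac{\pi}{2}\ZZ$), and split into two cases. \textbf{Case 1: $\theta\notin\frac{\pi}{2}\ZZ$.} The key point is a Bloch-sphere invariant: any $W=e^{i\alpha}R_z(\mu)CR_z(\nu)$ with $C\in\words{H,S}$ sends $\hat{z}$ to a Bloch vector whose $\hat{z}$-component lies in $\{0,\pm1\}$ (because $R_z$ fixes $\hat z$ and maps $\hat x,\hat y$ into the $xy$-plane, while the Clifford $C$ sends $\hat z$ to one of $\pm\hat x,\pm\hat y,\pm\hat z$), whereas $R_z(\phi)R_x(\theta)$ sends $\hat z$ to a vector with $\hat z$-component $\cos\theta\notin\{0,\pm1\}$. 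Thus $R_z(\phi)R_x(\theta)$ is not of the form $e^{i\alpha}R_z(\mu)CR_z(\nu)$, so \thmref{thm:cz+zclassification} gives that efficient classical computers cannot simulate $R_z(\phi)R_x(\theta)$-conjugated $\CZ+Z$ circuits to within multiplicative error $\epsilon<\sqrt{2}-1$; and since $U^\dagger ZU=(U^\dagger SU)^2$, every conjugated $\CZ+Z$ circuit is in particular a CCC (replace $U^\dagger ZU$ by two copies of $U^\dagger SU$), so $R_z(\phi)R_x(\theta)$-CCCs are not classically simulable to within that error either.

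\textbf{Case 2: $\theta\in\frac{\pi}{2}\ZZ\setminus\pi\ZZ$ and $\phi\notin\frac{\pi}{2}\ZZ$.} Here the $\CZ+Z$ reduction is useless, so instead I would peel off a single-qubit Clifford and reduce to the $R_x$ case. Writing $\theta=\frac{\pi}{2}+k\pi$ and using $R_x(\frac{\pi}{2})=e^{-i\pi/4}HSH$ (so $R_x(\theta)$ is a scalar times $HSH$ or $X\cdot HSH$), together with $XR_z(\phi)X=R_z(-\phi)$ and the fact that $HSH$ conjugates $Z$ to $Y$ (equivalently $SR_x(\eta)S^\dagger=e^{-i\eta Y/2}$), a routine phase computation yields $R_z(\phi)R_x(\theta)=e^{i\alpha}C\,R_x(\pm\phi)\,R_z(-\tfrac{\pi}{2})$ for some $C\in\words{H,S}$ and $\alpha\in[0,2\pi)$. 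By \lemref{lem:CCC_lem_one}, $R_z(\phi)R_x(\theta)$-CCCs are classically simulable to within $\epsilon<\sqrt{2}-1$ iff $R_x(\pm\phi)$-CCCs are; and since $\pm\phi\notin\frac{\pi}{2}\ZZ$, Case 1 applied to $R_x(\pm\phi)=R_z(0)R_x(\pm\phi)$ shows the latter are not. Hence neither are $R_z(\phi)R_x(\theta)$-CCCs, completing the contrapositive.

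I expect the main obstacle to be Case 2: the tempting uniform argument via $\CZ+Z$ circuits breaks down precisely when $\theta\in\frac{\pi}{2}\ZZ\setminus\pi\ZZ$, because there $R_z(\phi)R_x(\theta)$ actually \emph{is} of the form $e^{i\alpha}R_z(\mu)CR_z(\nu)$, so its $\CZ+Z$ fragment is classically easy even though its full CCC is not. The resolution is to notice that such a $U$ nevertheless fails to be of the Gottesman--Knill form $e^{i\alpha}CR_z(\lambda)$ once $\phi\notin\frac{\pi}{2}\ZZ$, and to route the hardness through \lemref{lem:CCC_lem_one} back to $R_x(\pm\phi)$-CCCs; the remaining phase and Clifford bookkeeping is mechanical.
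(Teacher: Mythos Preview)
Your proposal is correct and follows essentially the same three–case split as the paper (an easy Gottesman--Knill direction, a $\theta\notin\frac{\pi}{2}\ZZ$ case via \thmref{thm:cz+zclassification}, and a $\theta\in\frac{\pi}{2}\ZZ_{\odd}$ case that peels off Cliffords via \lemref{lem:CCC_lem_one}). The only cosmetic differences are that you make explicit the Bloch-sphere invariant justifying the appeal to \thmref{thm:cz+zclassification} in Case~1 (the paper leaves this implicit in the Euler decomposition used inside that theorem's proof), and in Case~2 your Clifford manipulations land on $R_x(\pm\phi)$-CCCs whereas the paper lands on $R_x(\phi-\tfrac{\pi}{2})$-CCCs; both endpoints have angle $\notin\frac{\pi}{2}\ZZ$, so either route closes the argument.
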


\begin{proof}[Proof of \thmref{thm:CCC_classification}]
Suppose $U = e^{i\alpha} CR_z(\lambda)$ for $C \in \words{H,S}$ and $\alpha, \lambda \in [0,2\pi)$. Then, by \lemref{lem:CCC_lem_one}, $U$-CCCs are efficiently classically simulable to within mulpitlicative error $\epsilon$ iff $I_1$-CCCs are. But $I_1$-CCCs are Clifford circuits, which are exactly simulable by the Gottesman-Knill theorem.

Now suppose that $U$-CCCs are efficiently classically simulable to within multiplicative error $\epsilon < \sqrt{2} - 1$, and let $U = e^{i\alpha} R_z(\phi) R_x(\theta) R_z(\lambda)$ be the Euler decomposition of $U$. By Lemmas~\ref{lem:CCC_lem_one} and \ref{lem:CCC_lem_two}, $U$-CCCs are efficiently classically simulable to within multiplicative error $\epsilon$ iff $R_z(\phi)R_x(\theta)$-CCCs are iff either $\phi \in \frac{\pi}{2}\ZZ$ and $\theta \in \frac{\pi}{2}\ZZ$, or $\theta \in \pi\ZZ$. In both cases, it is straightforward to verify that $R_z(\phi)R_x(\theta) = e^{i\alpha'} C R_z(\lambda')$ for some $C \in \words{H,S}$ and $\alpha',\lambda' \in [0,2\pi)$. Therefore, $U = e^{i\alpha + \alpha'} C R_z(\lambda + \lambda')$, which is the desired form.
\end{proof}

Thus, it remains to prove Lemmas~\ref{lem:CCC_lem_one} and \ref{lem:CCC_lem_two}. While the proof of \lemref{lem:CCC_lem_one} is straightforward, the proof of \lemref{lem:CCC_lem_two} is rather involved. Note that in this and the next section, we often write $U \sim V$ for any $U, V \in \U(2)$ that are related by $U = e^{i\alpha} V$ for some $\alpha \in [0,2\pi)$.

\begin{proof}[Proof of \lemref{lem:CCC_lem_one}]
Let $Q$ be an $n$-qubit circuit over $\S_\CCC(U)$ where, by assumption, $U = e^{i\alpha} C V R_z(\lambda)$ for some $V \in \U(2)$, $C \in \words{H, S}$, and $\alpha, \lambda \in [0,2\pi)$. Then, $Q = (U^\dagger)^{\otimes n} E U^{\otimes n}$ for some $n$-qubit Clifford circuit $E$. Since $C \in \words{H,S}$, $E' = (C^\dagger)^{\otimes n}EC^{\otimes n}$ is an $n$-qubit Clifford circuit. Therefore, $e^{i\alpha} C V R_z(\lambda)$-CCCs are efficiently classically simulable to within multiplicative error $\epsilon$ iff $e^{i\alpha} V R_z(\lambda)$-CCCs are. Finally, since for all $x,y \in \{0,1\}^n$,
$$
\left| \bra{y} (e^{-i\alpha} R_z(-\lambda) V^\dagger)^{\otimes n} E' (e^{i\alpha} V R_z(\lambda))^{\otimes n} \ket{x}\right|^2 = \left| \bra{y} (V^\dagger)^{\otimes n} E' V^{\otimes n} \ket{x}\right|^2,
$$
it holds that $e^{i\alpha} V R_z(\lambda)$-CCCs are efficiently classically simulable to within multiplicative error $\epsilon$ iff $V$-CCCs are. 
\end{proof}

\begin{proof}[Proof of \lemref{lem:CCC_lem_two}]
We break the proof up into three claims:

\begin{claim}
\label{claim:CCC_one}
If either $\phi \in \frac{\pi}{2}\ZZ$ and $\theta \in \frac{\pi}{2}\ZZ$, or $\theta \in \pi\ZZ$, then efficient classical computers can simulate $R_z(\phi)R_x(\theta)$-CCCs to within multiplicative error $\epsilon < \sqrt{2} - 1$.
\end{claim}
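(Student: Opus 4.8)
The plan is to prove both disjuncts by showing that in each case every $R_z(\phi)R_x(\theta)$-CCC has, up to a trivial relabeling of inputs and outputs, exactly the output distribution of a Clifford circuit, which the Gottesman--Knill theorem \cite{Got98, AG04} simulates \emph{exactly} --- hence to within multiplicative error $0 < \sqrt{2}-1$. Throughout I will use the standard structural fact (already invoked in the proof of \lemref{lem:CCC_lem_one}) that an $m$-qubit circuit over $\S_{\CCC}(V)$ can be written as $(V^\dagger)^{\otimes m} E\, V^{\otimes m}$ for some $m$-qubit Clifford circuit $E$; so a $V$-CCC on input $x$ (with the ancilla register absorbed into $x$) outputs $y$ with probability $\big|\bra{y}(V^\dagger)^{\otimes m} E\, V^{\otimes m}\ket{x}\big|^2$.

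First I would dispatch the case $\phi \in \frac{\pi}{2}\ZZ$ and $\theta \in \frac{\pi}{2}\ZZ$. A short case check shows that each of $R_z(\phi)$ and $R_x(\theta)$ is then a single-qubit Clifford gate up to a global phase (being one of the four quarter-turn rotations about the corresponding Bloch axis). Hence $U \defeq R_z(\phi)R_x(\theta)$ is Clifford up to a global phase, and writing $U = e^{i\gamma} C$ with $C$ in the Clifford group, the phase cancels under conjugation: $U^\dagger H U = C^\dagger H C$, $U^\dagger S U = C^\dagger S C$, and $(U^{\otimes 2})^\dagger \CZ\, U^{\otimes 2} = (C^{\otimes 2})^\dagger \CZ\, C^{\otimes 2}$ are all exact Clifford operators. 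Therefore every circuit over $\S_{\CCC}(U)$ is a Clifford circuit, and Gottesman--Knill gives the desired simulation.

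Next I would handle the case $\theta \in \pi\ZZ$, which carries the (modest) bookkeeping. Here $R_x(\theta) = e^{i\beta} P$ with $P \in \{I_1, X\}$ (according to the parity of $\theta/\pi$), so $U \defeq R_z(\phi)R_x(\theta) = e^{i\beta} R_z(\phi) P$. Writing an $m$-qubit $U$-CCC as $Q = (U^\dagger)^{\otimes m} E\, U^{\otimes m}$ with $E$ Clifford, the global phase cancels and, distributing the tensor powers, $Q = P^{\otimes m}\big(R_z(-\phi)^{\otimes m}\, E\, R_z(\phi)^{\otimes m}\big)P^{\otimes m}$. Since $R_z(\phi)$ is diagonal, $R_z(\phi)^{\otimes m}$ merely multiplies each computational-basis amplitude by a phase, so $\big|\bra{y'} R_z(-\phi)^{\otimes m} E\, R_z(\phi)^{\otimes m}\ket{x'}\big|^2 = |\bra{y'}E\ket{x'}|^2$ for all basis states $x', y'$; and $P^{\otimes m}$ permutes basis states (the identity if $P = I_1$, the bitwise complement $\pi_P$ if $P = X$). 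Consequently $\Pr[Q(x)=y] = \Pr[E(\pi_P(x)) = \pi_P(y)]$, and composing a Gottesman--Knill simulator for $E$ with $\pi_P$ on inputs and outputs (with the ancilla register handled the obvious way) yields an exact classical simulation of the $U$-CCC.

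I expect the only mildly delicate point to be purely organizational: correctly commuting the Pauli layer $P^{\otimes m}$ and the diagonal layers $R_z(\pm\phi)^{\otimes m}$ past one another and past $E$, verifying that the spurious global phases coming from $R_x$ and $R_z$ genuinely cancel in the conjugation (and are irrelevant to computational-basis measurement probabilities anyway), and confirming the ancilla bookkeeping. There is no real mathematical obstacle beyond the structural observation that $R_z(\phi)R_x(\theta)$ is Clifford up to a phase (first case) or of the form $R_z(\phi)$ times a Pauli (second case) precisely in the two stated parameter regimes.
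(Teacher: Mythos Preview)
Your proof is correct and follows essentially the same approach as the paper. The only cosmetic difference is that for the $\theta \in \pi\ZZ$ case the paper invokes \lemref{lem:CCC_lem_one} (writing $R_z(\phi)R_x(\theta) \sim X R_z(-\phi)$ and reducing to $X$-CCCs, which are Clifford), whereas you unroll that lemma's content directly by pushing the diagonal $R_z(\pm\phi)^{\otimes m}$ layers onto basis states and the Pauli layer $P^{\otimes m}$ onto a bit-permutation of inputs and outputs; the underlying mechanism is identical.
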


\begin{claim}
\label{claim:CCC_two}
Suppose the polynomial hierarchy is infinite. If $\theta \not\in \frac{\pi}{2}\ZZ$, then efficient classical computers cannot simulate $R_z(\phi)R_x(\theta)$-CCCs to within multiplicative error $\epsilon < \sqrt{2} - 1$.
\end{claim}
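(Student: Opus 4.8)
The plan is to reduce Claim~\ref{claim:CCC_two} to the already-available complexity classification of conjugated $\CZ + Z$ circuits, \thmref{thm:cz+zclassification}. Write $U = R_z(\phi)R_x(\theta)$ and observe that $R_z(\phi)$, being diagonal, commutes with $Z$ and with $\CZ$; hence $U^\dagger Z U = R_x(\theta)^\dagger Z R_x(\theta)$ and $(U^\dagger\otimes U^\dagger)\CZ(U\otimes U) = (R_x(\theta)^\dagger\otimes R_x(\theta)^\dagger)\CZ(R_x(\theta)\otimes R_x(\theta))$. Since moreover $U^\dagger Z U = (U^\dagger S U)^2$, every gate of $\S_{\CZ+Z}(R_x(\theta))$ is a product of at most two gates of $\S_{\CCC}(U)$. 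Consequently, every $R_x(\theta)$-conjugated $\CZ+Z$ circuit can be rewritten, with only a polynomial increase in size and with \emph{identical} output distribution (the rewritten circuit implements the same unitary on the same ancilla register and measures in the same basis), as a $U$-CCC. Therefore an efficient classical computer simulating all $U$-CCCs to within multiplicative error $\epsilon$ would simulate all $R_x(\theta)$-conjugated $\CZ+Z$ circuits to within the same error, and it suffices to show the latter is impossible (given that $\PH$ is infinite) whenever $\theta\notin\frac{\pi}{2}\ZZ$.

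For that, I would invoke \thmref{thm:cz+zclassification}, which says $R_x(\theta)$-conjugated $\CZ+Z$ circuits are efficiently classically simulable to within multiplicative error $\epsilon<\sqrt{2}-1$ (assuming $\PH$ infinite) if and only if $R_x(\theta) = e^{i\alpha}R_z(\psi)CR_z(\mu)$ for some $C\in\words{H,S}$ and $\alpha,\psi,\mu\in[0,2\pi)$. So the remaining task is purely algebraic: show that $\theta\notin\frac{\pi}{2}\ZZ$ precludes $R_x(\theta)$ from having this form. Viewing each single-qubit Clifford $C$ as an octahedral rotation of the Bloch sphere, its $Z$-$X$-$Z$ Euler decomposition $C = e^{i\gamma}R_z(a)R_x(b)R_z(c)$ satisfies $b\in\frac{\pi}{2}\ZZ$, because $C$ sends $\hat z$ to one of $\pm\hat x,\pm\hat y,\pm\hat z$, forcing $\cos b\in\{0,\pm 1\}$. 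Hence $e^{i\alpha}R_z(\psi)CR_z(\mu) = e^{i\beta}R_z(\psi')R_x(b)R_z(\mu')$ with $b\in\frac{\pi}{2}\ZZ$. Matching entries against $R_x(\theta)$ then gives the result: the cases $\theta\in\pi\ZZ$ are trivial, and otherwise all four entries of $R_x(\theta)$ are nonzero, which forces $\psi'\equiv\mu'\equiv 0 \pmod{\pi}$ and then $R_x(\theta)\sim R_x(b)$, i.e.\ $\theta\equiv\pm b\pmod{4\pi}$ and $\theta\in\frac{\pi}{2}\ZZ$ --- a contradiction. Thus $R_x(\theta)$ is not of the exceptional form, so by \thmref{thm:cz+zclassification} efficient classical computers cannot simulate $R_x(\theta)$-conjugated $\CZ+Z$ circuits, and therefore cannot simulate $R_z(\phi)R_x(\theta)$-CCCs, to within multiplicative error $\epsilon<\sqrt{2}-1$.

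The conceptual difficulty here is mild: the genuine work (producing gadgets over $\S_{\CZ+Z}$ that generate a dense subgroup of $\SL(2;\CC)$ for a continuum of parameters) is encapsulated in \thmref{thm:cz+zclassification}, which I am assuming. The main things to be careful about in writing the full proof are (i) verifying that the circuit rewriting of the first paragraph really preserves the worst-case multiplicative-simulation guarantee --- it does, since the two output distributions coincide exactly --- and (ii) handling the degenerate sub-cases in the entry-matching argument (e.g.\ $R_x(\theta)$ diagonal or anti-diagonal). I would also note explicitly that \thmref{thm:cz+zclassification} is established independently of \thmref{thm:CCC_classification} (directly via the criterion \thmref{thm:maintwo} applied to gadgets over $\S_{\CZ+Z}$), so there is no circular dependence in appealing to it here.
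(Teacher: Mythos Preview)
Your proposal is correct and follows essentially the same route as the paper. The paper's own proof is one sentence: it applies \thmref{thm:cz+zclassification} to $U=R_z(\phi)R_x(\theta)$ and then observes that $U$-conjugated $\CZ+Z$ circuits are (a subfamily of) $U$-CCCs. You do the same thing, only you first strip off the $R_z(\phi)$ by noting it commutes with $Z$ and $\CZ$ (so $\S_{\CZ+Z}(U)=\S_{\CZ+Z}(R_x(\theta))$ as sets of matrices) and you make explicit the Euler-angle check that $R_x(\theta)$ is not of the exceptional form when $\theta\notin\frac{\pi}{2}\ZZ$, which the paper leaves implicit in its invocation of \thmref{thm:cz+zclassification}. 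One small tightening: your entry-matching step (``forces $\psi'\equiv\mu'\equiv 0\pmod\pi$'') is not quite the right intermediate conclusion; the cleanest way is to compare absolute values of entries, which gives $|\cos(\theta/2)|=|\cos(b/2)|$ and $|\sin(\theta/2)|=|\sin(b/2)|$ directly, hence $\theta\in\frac{\pi}{2}\ZZ$.
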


\begin{claim}
\label{claim:CCC_three}
Suppose the polynomial hierarchy is infinite. If $\phi \not\in \frac{\pi}{2}\ZZ$ and $\theta \in \frac{\pi}{2}\ZZ_\odd$, then efficient classical computers cannot simulate $R_z(\phi)R_x(\theta)$-CCCs to within multiplicative error $\epsilon < \sqrt{2} - 1$.
\end{claim}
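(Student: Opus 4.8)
The plan is to prove \claimref{claim:CCC_three} by reduction to \claimref{claim:CCC_two}, using repeated applications of \lemref{lem:CCC_lem_one} rather than constructing new post-selection gadgets. The crucial observation is that when $\theta$ is an odd multiple of $\pi/2$, the gate $R_x(\theta)$ is itself a Clifford: up to a global phase it equals $R_x(\pi/2) \sim HSH$ or $R_x(3\pi/2) \sim HS^3H$, both of which lie in $\words{H,S}$, and every odd multiple of $\pi/2$ coincides with one of these two angles up to an overall sign (hence up to a global phase), which does not affect the CCC model. From $R_x(\pm\pi/2)^\dagger Z R_x(\pm\pi/2) = \pm Y$ one obtains the identity $R_z(\phi)R_x(\pm\pi/2) = R_x(\pm\pi/2)\,R_y(\pm\phi)$, so $R_z(\phi)R_x(\theta) = e^{i\alpha}\,C\,R_y(\pm\phi)\,R_z(0)$ for some $C \in \words{H,S}$. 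By \lemref{lem:CCC_lem_one}, this means $R_z(\phi)R_x(\theta)$-CCCs are efficiently classically simulable to within multiplicative error $\epsilon < \sqrt{2}-1$ if and only if $R_y(\phi)$-CCCs are; the spurious sign on $\phi$ is removed by one further application of \lemref{lem:CCC_lem_one} with $C = X \in \words{H,S}$, using $R_y(-\phi) = X R_y(\phi) X$.

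Next, since $S X S^\dagger = Y$, we have $R_y(\phi) = S R_x(\phi) S^\dagger$, and since $S^\dagger = e^{-i\pi/4} R_z(-\pi/2)$, this rewrites as $R_y(\phi) = e^{-i\pi/4}\, S\, R_x(\phi)\, R_z(-\pi/2)$, again of the form $e^{i\alpha} C V R_z(\lambda)$ with $C = S \in \words{H,S}$ and $V = R_x(\phi)$. Hence, by \lemref{lem:CCC_lem_one}, $R_y(\phi)$-CCCs are efficiently classically simulable to within multiplicative error $\epsilon < \sqrt{2}-1$ if and only if $R_x(\phi)$-CCCs are, i.e., if and only if $R_z(0)R_x(\phi)$-CCCs are (note $R_z(0) = I_1$). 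Since $\phi \not\in \frac{\pi}{2}\ZZ$ by hypothesis, \claimref{claim:CCC_two}, applied with its $R_z$-angle equal to $0$ and its $R_x$-angle equal to $\phi$, shows that (assuming the polynomial hierarchy is infinite) $R_z(0)R_x(\phi)$-CCCs are not efficiently classically simulable to within multiplicative error $\epsilon < \sqrt{2}-1$. Chaining the equivalences yields the same conclusion for $R_z(\phi)R_x(\theta)$-CCCs, which is exactly \claimref{claim:CCC_three}.

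All of the substantive content lives in \claimref{claim:CCC_two}; the role of this claim is merely to route the ``$\theta$ an odd multiple of $\pi/2$'' case of the classification through the already-established ``$\theta \not\in \frac{\pi}{2}\ZZ$'' case, and the only care required is bookkeeping: tracking global phases so that every intermediate unitary is genuinely in the form $e^{i\alpha}CVR_z(\lambda)$ demanded by \lemref{lem:CCC_lem_one}, checking the Bloch-sphere conjugation facts $R_x(\pm\pi/2)^\dagger Z R_x(\pm\pi/2) = \pm Y$ and $S X S^\dagger = Y$, and reducing the general odd-multiple-of-$\pi/2$ choice of $\theta$ to the two representative cases $\theta \in \{\pi/2, 3\pi/2\}$. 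I therefore expect no real obstacle here. (If one preferred a self-contained argument, one could instead invoke the criterion \thmref{thm:maintwo} directly with a $\phi$-parametrized family of post-selection gadgets over $\S_{\CCC}(R_z(\phi)R_x(\theta))$ and verify non-elementarity, non-discreteness, and strict loxodromy uniformly in $\phi \not\in \frac{\pi}{2}\ZZ$ via the trace criteria of \sref{sec:grouptheory}; but the reduction above is both shorter and reuses existing work, so that is the approach I would take.)
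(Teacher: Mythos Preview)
Your approach is correct and is essentially the same as the paper's: both reduce, via repeated applications of \lemref{lem:CCC_lem_one} and Clifford identities, to an $R_x(\phi')$-CCC with $\phi' \notin \frac{\pi}{2}\ZZ$, and then invoke the hardness of that case. The paper reaches $R_x(\phi - \tfrac{\pi}{2})$ through the chain $R_z(\phi)R_x(\theta) \rightsquigarrow R_z(\phi - \tfrac{\pi}{2})H \rightsquigarrow HR_z(\phi-\tfrac{\pi}{2})H = R_x(\phi-\tfrac{\pi}{2})$ (using $SHSHS \sim H$), whereas you reach $R_x(\pm\phi)$ through $R_z(\phi)R_x(\pm\pi/2) = R_x(\pm\pi/2)R_y(\pm\phi)$ followed by $R_y(\phi) = SR_x(\phi)S^\dagger$; the paper then cites \thmref{thm:cz+zclassification} directly, while you cite \claimref{claim:CCC_two}, which is equivalent.

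One small bookkeeping slip: your proposed removal of the sign via ``$R_y(-\phi) = XR_y(\phi)X$ and \lemref{lem:CCC_lem_one} with $C = X$'' does not match the lemma's template, because the \emph{right} factor must be $R_z(\lambda)$, and $X$ is not of that form. This is easily repaired: either use $R_y(-\phi) = ZR_y(\phi)Z = e^{i\pi/2}\,Z\,R_y(\phi)\,R_z(\pi)$ instead (so $C = Z = S^2 \in \words{H,S}$ and $\lambda = \pi$), or simply drop the sign-removal step altogether, since $-\phi \notin \frac{\pi}{2}\ZZ$ whenever $\phi \notin \frac{\pi}{2}\ZZ$, and \claimref{claim:CCC_two} applies equally well to $R_x(-\phi)$.
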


Indeed, together these claims imply the lemma, because \claimref{claim:CCC_one} implies the forward direction, while the contrapositive of Claims~\ref{claim:CCC_two} and \ref{claim:CCC_three} imply the latter direction (namely, supposing that the polynomial hierarchy is infinite, if efficient classical computers can simulate $R_z(\phi)R_x(\theta)$-CCCs to within multiplicative error $\epsilon < \sqrt{2} - 1$, then either $\phi \in \frac{\pi}{2}\ZZ$ and $\theta \in \frac{\pi}{2}\ZZ_\odd \subset \frac{\pi}{2}\ZZ$, or $\theta \in \frac{\pi}{2}\ZZ_\even = \pi\ZZ$).

\begin{proof}[Proof of \claimref{claim:CCC_one}]
If $\phi \in \frac{\pi}{2} \ZZ$ and $\theta \in \frac{\pi}{2}\ZZ$, then $R_z(\phi)R_x(\theta)$ is a Clifford gate. Therefore, $R_z(\phi)R_x(\theta)$-CCCs are exactly simulable by the Gottesman-Knill theorem. If $\theta \in \pi\ZZ$, then either $\theta \in 2\pi\ZZ$ or $\theta \in \pi\ZZ_\odd$. If $\theta \in 2\pi\ZZ$, then $R_x(\theta) \sim I_1$, so $R_z(\phi)R_x(\theta)$-CCCs are exactly simulable by combining \lemref{lem:CCC_lem_one} and the Gottesman-Knill theorem. If $\theta \in \pi\ZZ_\odd$, then $R_x(\phi) \sim X$, so $R_z(\phi)R_x(\theta) \sim R_z(\phi)X = XR_z(-\phi)$. By \lemref{lem:CCC_lem_one}, $XR_z(-\phi)$-CCCs are efficiently classically simulable to within multiplicative error $\epsilon$ iff $X$-CCCs are. But $X$ is a Clifford gate, so $X$-CCCs are exactly simulable by the Gottesman-Knill theorem.
\end{proof}

\begin{proof}[Proof of \claimref{claim:CCC_two}]
For all $\phi \in [0,2\pi)$, if $\theta \not\in \frac{\pi}{2}\ZZ$, then by \thmref{thm:cz+zclassification} (whose proof is in \appendref{append:czzclassification}), efficient classical computers cannot simulate $R_z(\phi)R_x(\theta)$-conjugated $\CZ + Z$ circuits to within multiplicative error $\epsilon < \sqrt{2} - 1$. Therefore, efficient classical computers cannot simulate $R_z(\phi)R_x(\theta)$-CCCs to within the same multiplicative error.
\end{proof}

\begin{proof}[Proof of \claimref{claim:CCC_three}]
If $\theta \in \frac{\pi}{2}\ZZ_\odd$, then, up to a global phase, $R_x(\theta)$ is either $HSH$ or $HS^\dagger H$. Here we suppose $R_x(\theta) \sim HSH$, but an analogous argument works for the other case. Since $S \sim R_z(\frac{\pi}{2})$ and, more generally, $R_z(\phi) \sim R_z(\phi - \frac{\pi}{2})S$, it follows that $R_z(\phi)R_x(\theta)$-CCCs are efficiently classically simulable to within multiplicative error $\epsilon < \sqrt{2} - 1$ iff $R_z(\phi - \frac{\pi}{2})SR_x(\theta)$-CCCs are, which, by \lemref{lem:CCC_lem_one}, is true iff $R_z(\phi - \frac{\pi}{2})SR_x(\theta)S$-CCCs are. Because $R_x(\theta) \sim HSH$ and $SHSHS \sim H$, we have that $R_z(\phi - \frac{\pi}{2}) S R_x(\theta) S$-CCCs are efficiently classically simulable to within multiplicative error $\epsilon$ iff $R_z(\phi - \frac{\pi}{2}) H$-CCCs are, which, by \lemref{lem:CCC_lem_one}, is true iff $H R_z(\phi - \frac{\pi}{2}) H$-CCCs are. But $H R_z(\phi - \frac{\pi}{2}) H = R_x(\phi - \frac{\pi}{2})$, so $R_z(\phi - \frac{\pi}{2}) H$-CCCs are efficiently classically simulable to within multiplicative error $\epsilon$ iff $R_x(\phi - \frac{\pi}{2})$-CCCs are. Since $\phi \not\in \frac{\pi}{2}\ZZ,$ $\phi - \frac{\pi}{2} \not\in \frac{\pi}{2}\ZZ$. Therefore, by \thmref{thm:cz+zclassification}, efficient classical computers cannot simulate $R_x(\phi - \frac{\pi}{2})$-conjugated $\CZ + Z$ circuits to within multiplicative error $\epsilon < \sqrt{2} - 1$. Therefore, efficient classical computers cannot simulate $R_x(\phi - \frac{\pi}{2})$-CCCs to within the same multiplicative error.
\end{proof}

Altogether, Claims~\ref{claim:CCC_one} -- \ref{claim:CCC_three} complete the proof.
\end{proof}

\section{Proof of \thmref{thm:czclassification}}
\label{append:czclassification}

In this section, we classify the complexity of $U$-conjugated $\CZ$ circuits. Formally, this is our \thmref{thm:czclassification}, which we restate below for convenience.

\CZclassification*

Our method of proof is to show that for all $U \in \U(2)$, $U$-conjugated $\CZ$ circuits are efficiently classically simuilable iff $U$-conjugated $\CZ + Z$ circuits are, thereby reducing the problem to finding the complexity classification of $U$-conjugated $\CZ + Z$ circuits, which we do in \appendref{append:czzclassification}. To show this reduction, we rely on \lemref{lem:CZlemma} and the following lemma.

\begin{lemma}
\label{lem:czz_lem_one}
Let $U$ and $V$ be single-qubit unitaries such that $U = e^{i\alpha} R_z(\phi) V R_z(\lambda)$ for $\alpha, \phi, \lambda \in [0,2\pi)$. Then, $U$-conjugated $\CZ + Z$ circuits are efficiently classically simulable to within multiplicative error $\epsilon < \sqrt{2} - 1$ iff $V$-conjugated $\CZ + Z$ circuits are.
\end{lemma}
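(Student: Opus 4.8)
The plan is to follow the proof of \lemref{lem:CCC_lem_one} almost verbatim, replacing the Clifford conjugator $C$ there by the diagonal gate $R_z(\phi)$ and exploiting that $R_z(\phi)$ commutes with every gate that can appear in the interstitial circuit of a $\CZ + Z$ circuit. First I would record the ``telescoping'' normal form: for any $W \in \U(2)$, an $m$-qubit quantum circuit $Q$ over $\S_{\CZ + Z}(W)$ can be written as $Q = (W^\dagger)^{\otimes m}\, E\, W^{\otimes m}$, where $E$ is an $m$-qubit circuit built solely from $Z$ and $\CZ$ gates. This follows layer by layer from the identity $(W^\dagger A W) \otimes (W^\dagger B W) = (W^\dagger \otimes W^\dagger)(A \otimes B)(W \otimes W)$ (with $A$ or $B$ possibly the identity) together with the cancellation of the conjugating copies of $W$ and $W^\dagger$ between consecutive layers; conversely every $(W^\dagger)^{\otimes m} E\, W^{\otimes m}$ is a circuit over $\S_{\CZ+Z}(W)$ of the same size, so this rewriting preserves uniformity, polynomial size, and the ancilla/post-selection data, exactly as in \lemref{lem:CCC_lem_one}.

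Next, using $U = e^{i\alpha} R_z(\phi) V R_z(\lambda)$, I would substitute $U^{\otimes m} = e^{i m \alpha}\, R_z(\phi)^{\otimes m}\, V^{\otimes m}\, R_z(\lambda)^{\otimes m}$ (and its conjugate transpose) into $Q = (U^\dagger)^{\otimes m} E\, U^{\otimes m}$; the global phases $e^{\pm i m\alpha}$ cancel, leaving
\[
Q \;=\; R_z(-\lambda)^{\otimes m}\, (V^\dagger)^{\otimes m}\, \big( R_z(-\phi)^{\otimes m} E\, R_z(\phi)^{\otimes m} \big)\, V^{\otimes m}\, R_z(\lambda)^{\otimes m}.
\]
The one genuinely new step is the observation that $Z$, $\CZ$, and $R_z(\phi)$ are all diagonal in the computational basis and hence pairwise commute, so $R_z(-\phi)^{\otimes m} E\, R_z(\phi)^{\otimes m} = E$, giving $Q = R_z(-\lambda)^{\otimes m}\, (V^\dagger)^{\otimes m} E\, V^{\otimes m}\, R_z(\lambda)^{\otimes m}$. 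Since $R_z(\lambda)^{\otimes m}$ is diagonal it fixes every computational basis state up to a phase, so for all inputs $x$ and outputs $y$,
\[
\big|\, \bra{y}\, Q\, \ket{x} \otimes \ket{\anc}\, \big|^2 \;=\; \big|\, \bra{y}\, (V^\dagger)^{\otimes m} E\, V^{\otimes m}\, \ket{x} \otimes \ket{\anc}\, \big|^2,
\]
i.e.\ the output distribution of the given $U$-conjugated $\CZ + Z$ circuit equals that of the $V$-conjugated $\CZ + Z$ circuit with the same $E$, ancilla, and post-selection string. Because $V = e^{-i\alpha} R_z(-\phi)\, U\, R_z(-\lambda)$ has the same form, the correspondence is symmetric, so it yields a distribution-preserving bijection between efficient (post-selected) quantum computers over $\S_{\CZ+Z}(U)$ and those over $\S_{\CZ+Z}(V)$. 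Hence an efficient classical computer simulates one family to within multiplicative error $\epsilon$ iff it simulates the other, which is the claim.

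The only point requiring care is the bookkeeping for the telescoping normal form and the check that it maps efficient quantum computers to efficient quantum computers with the same uniformity and polynomial size; this is routine and is precisely the step already dispatched in \lemref{lem:CCC_lem_one}. Nothing genuinely harder arises here, and replacing ``conjugation by a Clifford preserves Cliffordness'' with ``conjugation by a diagonal unitary fixes a circuit of diagonal gates'' actually makes the argument slightly cleaner than its CCC counterpart.

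\noindent\ignorespaces
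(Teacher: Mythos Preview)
Your proposal is correct and follows essentially the same approach as the paper's own proof: write the circuit in the telescoped form $(U^\dagger)^{\otimes m} D\, U^{\otimes m}$ with $D$ diagonal, use commutativity of diagonal gates to strip the $R_z(\phi)$ conjugation, and then observe that the outer $R_z(\lambda)$ layers contribute only phases on computational basis states so the output probabilities coincide with those of the $V$-conjugated circuit. The paper's proof is terser but the content is identical.
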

\begin{proof}
Let $Q$ be an $n$-qubit circuit over $\S_{\CZ + Z}(U)$ where, by assumption, $U = e^{i\alpha} R_z(\phi) V R_z(\lambda)$ for some $V \in \U(2)$. Then, $Q = (U^\dagger)^{\otimes n} D U^{\otimes n}$ for some $n$-qubit circuit $D$ that is diagonal in the computational basis. Consequently, $R_z(-\phi)^{\otimes n} D R_z(\phi)^{\otimes n} = D$, so $e^{i\alpha} R_z(\phi) V R_z(\lambda)$-conjugated $\CZ + Z$ circuits are efficiently classically simulable to within multiplicative error $\epsilon$ iff $e^{i\alpha} V R_z(\lambda)$-conjugated $\CZ + Z$ circuits are. Finally, since for all $x,y \in \{0,1\}^n$,
$$
\left| \bra{y} (e^{-i\alpha} R_z(-\lambda) V^\dagger)^{\otimes n} D (e^{i\alpha} V R_z(\lambda))^{\otimes n} \ket{x}\right|^2 = \left| \bra{y} (V^\dagger)^{\otimes n} D V^{\otimes n} \ket{x}\right|^2,
$$
it holds that $e^{i\alpha} V R_z(\lambda)$-conjugated $\CZ + Z$ circuits are efficiently classically simulable to within multiplicative error $\epsilon$ iff $V$-conjugated $\CZ + Z$ circuits are.
\end{proof}

We now prove that the complexity classifications of conjugated $\CZ$ circuits and conjugated $\CZ + Z$ circuits are the same. Consequently, \thmref{thm:czclassification} follows from \thmref{thm:cz+zclassification}, which is proved in \appendref{append:czzclassification}.

\begin{lemma}
\label{lem:czequivalence}
If the polynomial hierarchy is infinite, then for all $U \in \U(2)$, efficient classical computers can simulate $U$-conjugated $\CZ + Z$ circuits to within multiplicative error $\epsilon < \sqrt{2} - 1$ iff efficient classical computers can simulate $U$-conjugated $\CZ$ circuits to within the same multiplicative error.
\end{lemma}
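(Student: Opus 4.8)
The plan is to handle the two implications separately. The direction ``$\CZ + Z$ simulable $\Rightarrow$ $\CZ$ simulable'' is immediate: since $\S_{\CZ}(U) \subseteq \S_{\CZ + Z}(U)$, every efficient quantum computer over $\S_{\CZ}(U)$ is in particular an efficient quantum computer over $\S_{\CZ + Z}(U)$, so any efficient classical algorithm that multiplicatively simulates all of the latter simulates all of the former. The substance is the reverse direction, which I would obtain by reducing to the one-parameter family $\{R_x(\theta)\}$ and then splitting on whether $R_x(\theta)$ is a Clifford gate.

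I would then assume that $U$-conjugated $\CZ$ circuits are classically simulable to within multiplicative error $\epsilon < \sqrt{2} - 1$ and write the Euler decomposition $U = e^{i\alpha} R_z(\phi) R_x(\theta) R_z(\lambda)$. An $N$-qubit circuit over $\S_{\CZ}(U)$ has the form $(U^\dagger)^{\otimes N} E_N U^{\otimes N}$ with $E_N$ a circuit of $\CZ$ gates, since each gate is already written as an $N$-fold conjugate and all interior $U^{\otimes N}(U^\dagger)^{\otimes N}$ factors cancel. Expanding $U^{\otimes N} = e^{iN\alpha} R_z(\phi)^{\otimes N} R_x(\theta)^{\otimes N} R_z(\lambda)^{\otimes N}$ and using that $E_N$, being diagonal in the computational basis, commutes with $R_z(\pm\phi)^{\otimes N}$, the global phases cancel and one is left with $(U^\dagger)^{\otimes N} E_N U^{\otimes N} = R_z(-\lambda)^{\otimes N}\,\big[(R_x(\theta)^\dagger)^{\otimes N} E_N R_x(\theta)^{\otimes N}\big]\, R_z(\lambda)^{\otimes N}$, i.e.\ an $R_x(\theta)$-conjugated $\CZ$ circuit flanked by $R_z(\pm\lambda)^{\otimes N}$. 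Exactly as in the proof of \lemref{lem:czz_lem_one}, these flanking diagonal factors only multiply the computational-basis input, ancilla string, and measurement outcome by phases, so the output distribution of the $U$-conjugated $\CZ$ computer equals that of the corresponding $R_x(\theta)$-conjugated $\CZ$ computer, and this correspondence is efficiently computable both ways. Hence $R_x(\theta)$-conjugated $\CZ$ circuits are classically simulable to within $\epsilon$.

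Next I would case on $\theta$. If $\theta \notin \frac{\pi}{2}\ZZ$, then \lemref{lem:CZlemma} promotes simulability of $R_x(\theta)$-conjugated $\CZ$ circuits to simulability of $R_x(\theta)$-conjugated $\CZ + Z$ circuits, and \lemref{lem:czz_lem_one} applied with $V = R_x(\theta)$ to the decomposition $U = e^{i\alpha} R_z(\phi) R_x(\theta) R_z(\lambda)$ transports this back to $U$-conjugated $\CZ + Z$ circuits. If instead $\theta \in \frac{\pi}{2}\ZZ$, then $R_x(\theta)$ is a single-qubit Clifford, so $R_x(\theta) = e^{i\beta} C$ for some $C \in \words{H,S}$ and hence $U = e^{i(\alpha + \beta)} R_z(\phi) C R_z(\lambda)$; by \lemref{lem:czz_lem_one} it then suffices to note that $C$-conjugated $\CZ + Z$ circuits are ordinary Clifford circuits with computational-basis inputs, ancillae, and measurements, hence exactly (in particular multiplicatively) classically simulable by the Gottesman-Knill theorem. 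In either case $U$-conjugated $\CZ + Z$ circuits are classically simulable to within $\epsilon$, which finishes the reverse direction.

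I expect the only point requiring care to be the reduction in the second paragraph: verifying that a general circuit over $\S_{\CZ}(U)$ really does collapse to the conjugated form $(U^\dagger)^{\otimes N} E_N U^{\otimes N}$, and that conjugating the diagonal circuit $E_N$ by $R_z(\phi)^{\otimes N}$ is harmless. Both are routine bookkeeping; everything else is a direct invocation of \lemref{lem:CZlemma}, \lemref{lem:czz_lem_one}, and Gottesman-Knill, and notably no $\SL(2;\CC)$ density argument is needed in the $\theta \in \frac{\pi}{2}\ZZ$ regime since there the conjugated circuits are already Clifford.
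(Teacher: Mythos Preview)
Your proposal is correct and follows essentially the same route as the paper: trivialize the forward direction by inclusion, Euler-decompose $U$, reduce to $R_x(\theta)$, and then split on $\theta \in \frac{\pi}{2}\ZZ$ (Gottesman--Knill) versus $\theta \notin \frac{\pi}{2}\ZZ$ (\lemref{lem:CZlemma}), using \lemref{lem:czz_lem_one} to transport back to $U$. If anything you are slightly more careful than the paper, which silently uses the $\CZ$-analogue of \lemref{lem:czz_lem_one} to pass from ``$U$-conjugated $\CZ$ simulable'' to ``$R_x(\theta)$-conjugated $\CZ$ simulable'' in the $\theta \notin \frac{\pi}{2}\ZZ$ case, whereas you spell out that reduction explicitly.
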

\begin{proof}
The forward direction is trivial, because every $U$-conjugated $\CZ$ circuit is a $U$-conjugated $\CZ + Z$ circuit. Now suppose efficient classical computers can simulate $U$-conjugated $\CZ$ circuits to within multiplicative error $\epsilon < \sqrt{2} - 1$, and let $U = e^{i\alpha} R_z(\phi) R_x(\theta) R_z(\lambda)$ be the Euler decomposition of $U$. Then, by \lemref{lem:czz_lem_one}, efficient classical computers can simulate $U$-conjugated $\CZ + Z$ circuits to within multiplicative error $\epsilon$ iff they can simulate $R_x(\theta)$-conjugated $\CZ + Z$ circuits. If $\theta \in \frac{\pi}{2}\ZZ$, then $R_x(\theta) \in \words{H,S}$, so every $R_x(\theta)$-conjugated $\CZ + Z$ circuit and every $R_x(\theta)$-conjugated $\CZ$ circuit is a Clifford circuit, so both types are exactly simulable by the Gottesman-Knill theorem. If $\theta \not\in\frac{\pi}{2}\ZZ$, then by \lemref{lem:CZlemma}, efficient classical computers can simulate $R_x(\theta)$-conjugated $\CZ + Z$ circuits to within multiplicative error $\epsilon$ iff they can simulate $R_x(\theta)$-conjugated $\CZ$ circuits to within the same multiplicative error.
\end{proof}

\section{Proof of \thmref{thm:cz+zclassification}}
\label{append:czzclassification}

In this section, we classify the complexity of $U$-conjugated $\CZ + Z$ circuits. Formally, this is our \thmref{thm:cz+zclassification}, which we restate below for convenience.

\CZZclassification*

The proof follows from \lemref{lem:czz_lem_one} and the following lemma.
\begin{lemma}
\label{lem:czz_lem_two}
If the polynomial hierarchy is infinite, then efficient classical computers can simulate $R_x(\theta)$-conjugated $\CZ + Z$ circuits to within multiplicative error $\epsilon < \sqrt{2} - 1$ iff $\theta \in \frac{\pi}{2}\ZZ$.
\end{lemma}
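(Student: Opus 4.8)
The plan is to prove the two directions of the biconditional separately. For the ``if'' direction, suppose $\theta \in \frac{\pi}{2}\ZZ$. Then, according to $\theta \bmod 2\pi$, the gate $R_x(\theta)$ equals $I_1$, $X$, or a $\pm\pi/2$-rotation about the $x$-axis, each of which is a single-qubit Clifford gate up to a global phase. Hence both generators of $\S_{\CZ + Z}(R_x(\theta))$, namely $R_x(\theta)^\dagger Z R_x(\theta)$ and $(R_x(\theta)^\dagger \otimes R_x(\theta)^\dagger)\,\CZ\,(R_x(\theta) \otimes R_x(\theta))$, are Clifford gates up to a global phase, so every $R_x(\theta)$-conjugated $\CZ + Z$ circuit is an ordinary Clifford circuit and is therefore exactly classically simulable by the Gottesman--Knill theorem, a fortiori to within multiplicative error $\epsilon < \sqrt{2} - 1$.

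For the ``only if'' direction I would prove the contrapositive: if $\theta \not\in \frac{\pi}{2}\ZZ$ and the polynomial hierarchy is infinite, then no efficient classical computer can simulate $R_x(\theta)$-conjugated $\CZ + Z$ circuits to within multiplicative error $\epsilon < \sqrt{2} - 1$. By \thmref{thm:maintwo} it suffices to exhibit a finite set $\Gamma(\theta) \subset \gad_1(\S_{\CZ + Z}(R_x(\theta)))$ such that $\words{\Gamma(\theta)}$ is a non-elementary, non-discrete, and strictly loxodromic subgroup of $\SL(2;\CC)$. I would take the $\theta$-parametrized analogues of the three $2$-to-$1$ post-selection gadgets $\d$, $\e$, $\f$ of \tabref{tab:ccc_gadgets}, now with $U = R_x(\theta)$ in place of $U = R_x(\tfrac{2\pi}{3})$; as in the proof of \thmref{thm:czzhardnessresult}, each is realizable over $\S_{\CZ + Z}(R_x(\theta))$ because it only uses the conjugated $Z$ gate and the conjugated $\CZ$ gate. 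A direct computation then expresses their normalized actions as matrices $D(\theta), E(\theta), F(\theta) \in \SL(2;\CC)$ that depend only on $\cos\theta$ and $\sin\theta$; one also checks that the underlying gadget actions have nonzero determinant, so that these normalized actions exist, precisely when $\theta \not\in \frac{\pi}{2}\ZZ$. The $\theta$-versions of $\d^{-1}$, $\e^{-1}$, $\f^{-1}$ realize $D(\theta)^{-1}$, $E(\theta)^{-1}$, $F(\theta)^{-1}$ over the same gate set, so $\Gamma(\theta) \defeq \{D(\theta), D(\theta)^{-1}, E(\theta), E(\theta)^{-1}, F(\theta), F(\theta)^{-1}\}$ is closed under inverses and $\words{\Gamma(\theta)} \leq \SL(2;\CC)$.

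It then remains to run the three subroutines of \sref{sec:grouptheory} symbolically in $\theta$: to show $\tr(D(\theta)) \not\in \RR$ for all $\theta \not\in \frac{\pi}{2}\ZZ$ (so $\words{\Gamma(\theta)}$ is strictly loxodromic by \lemref{lem:loxlem}); to show that $\beta(E(\theta))$, $\beta(F(\theta))$, and $\gamma(E(\theta), F(\theta))$ violate every case of \propref{prop:elementarycondition} (so $\IsElementary(\Gamma(\theta)) = \NO$); and to exhibit one pair of generators, presumably $E(\theta)$ and $F(\theta)$, that violates one of the J{\o}rgensen-type inequalities in \algref{alg:discrete} for all such $\theta$ (so $\IsDiscrete(\Gamma(\theta)) = \NO$). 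Invoking \thmref{thm:maintwo} then gives the claimed intractability and completes the proof. I expect the main obstacle to be making this last step \emph{uniform} over the continuum $\theta \not\in \frac{\pi}{2}\ZZ$: the relevant quantities are trigonometric functions of $\theta$, and one must verify that the strict inequalities in the $\IsElementary$ and $\IsDiscrete$ tests, together with the non-reality of $\tr(D(\theta))$, degenerate only at the excluded angles $\theta \in \frac{\pi}{2}\ZZ$. Should the gadgets $\d$, $\e$, $\f$ happen to fail the criterion at some sporadic, necessarily measure-zero, set of angles, I would supplement $\Gamma(\theta)$ there with a few additional gadgets tailored to those angles, exactly as was done for $\theta = \tfrac{2\pi}{3}$ in the proof of \thmref{thm:czzhardnessresult}. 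These symbolic computations are naturally carried out with the aid of Mathematica.
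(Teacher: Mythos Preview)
Your overall strategy --- invoke \thmref{thm:maintwo} via a $\theta$-parametrized family of gadgets and verify the three subroutines of \sref{sec:grouptheory} symbolically --- is exactly the paper's. The gap lies in the specific gadgets and in your uniformity assumption.

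First, a concrete error: the loxodromy check ``$\tr(D(\theta)) \not\in \RR$ for all $\theta \not\in \frac{\pi}{2}\ZZ$'' fails. With $U = R_x(\theta)$ one finds $\tr(D(\theta)) = (1+\cos\theta)/\sqrt{\cos\theta}$, which is real whenever $\cos\theta > 0$. The paper handles this by carrying \emph{both} $\d$ and $\f$ (its $\c_3$ and $\c_4$): exactly one of $\tr(C_3)$, $\tr(C_4)$ is non-real for each sign of $\cos\theta$. This particular slip is easy to repair since your $\Gamma(\theta)$ already contains $F$.

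The more serious issue is non-discreteness. Your expectation that any failure of the J{\o}rgensen-type tests would be confined to a sporadic, measure-zero set of angles is not justified: these are \emph{open} conditions, so the set of $\theta$ on which a fixed pair $(g,h)$ violates one of them is open but need not cover $[0,2\pi) \setminus \frac{\pi}{2}\ZZ$. Concretely, $|\beta(F(\theta))| = (1+\cos\theta)^2/|\cos\theta|$ exceeds $1$ on all of $(0,\pi/2)$ and indeed on a full two-sided neighborhood of $\pi/2$, so line~4 of \algref{alg:discrete} with $g = F$ cannot succeed there; $|\beta(D(\theta))| = (1-\cos\theta)^2/|\cos\theta|$ has the symmetric defect near $\pi$. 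The paper does not patch isolated angles: it introduces three additional $3$-to-$1$ gadgets $\c_0, \c_1, \c_2$, partitions $[0,2\pi)$ into $A = (\pi/2, 3\pi/2)$ and $B = [0,\pi/2]\cup[3\pi/2,2\pi)$, and runs both $\IsElementary$ and $\IsDiscrete$ on the pair $(C_0, C_1)$ over $A$ and on $(C_0, C_2)$ over $B$. Your escape hatch is therefore the right instinct, but the supplementation required is region-wide with genuinely new gadgets, not a pointwise fix.
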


\begin{proof}[Proof of \thmref{thm:cz+zclassification}]
Suppose $U = e^{i\alpha} R_z(\phi) C R_z(\lambda)$ for $C \in \words{H,S}$ and $\alpha, \phi, \lambda \in [0,2\pi)$. Then, by \lemref{lem:czz_lem_one}, $U$-conjugated $\CZ + Z$ circuits are efficiently classically simulable to within multiplicative error $\epsilon$ iff $C$-conjugated $\CZ + Z$ circuits are. But $C$ is a Clifford operation, so $C$-conjugated $\CZ + Z$ circuits are exactly simulable by the Gottesman-Knill theorem.

Now suppose that $U$-conjugated $\CZ + Z$ circuits are efficiently classically simulable to within multiplicative error $\epsilon < \sqrt{2} - 1$, and let $U = e^{i\alpha} R_z(\phi) R_x(\theta) R_z(\lambda)$ be the Euler decomposition of $U$. By Lemmas~\ref{lem:czz_lem_one} and \ref{lem:czz_lem_two}, $U$-conjugated $\CZ + Z$ circuits are efficiently classically simulable to within multiplicative error $\epsilon$ iff $R_x(\theta)$-conjugated $\CZ + Z$ circuits are iff $\theta \in \frac{\pi}{2}\ZZ$. Therefore, $R_x(\theta) \in \words{H,S}$, so $U = e^{i\alpha} R_z(\phi) C R_z(\lambda)$ for $C = R_x(\theta) \in \words{H,S}$, as desired.
\end{proof}

Thus, it remains to prove \lemref{lem:czz_lem_two}. While the proof uses our criterion, it is rather involved because it is arithmetically tedious. Because of this, we made extensive use of Wolfram Mathematica 14.1. Our notebook is available online \cite{KFG24}.

\begin{proof}[Proof of \lemref{lem:czz_lem_two}]
Suppose $\theta \in \frac{\pi}{2} \ZZ$. Then, $R_x(\theta) \in \words{H,S}$, so the $R_x(\theta)$-conjugated $\CZ + Z$ circuit is a Clifford circuit and is therefore exactly simulable by the Gottesman-Knill theorem.

Now suppose $\theta \not \in \frac{\pi}{2}\ZZ$. We will show that efficient classical computers cannot simulate $R_x(\theta)$-conjugated $\CZ + Z$ circuits to within multiplicative error $\epsilon < \sqrt{2} - 1$ unless the polynomial hierarchy collapses. To this end, it suffices to exhibit a finite number of post-selection gadgets over $\S_{\CZ + Z}(R_x(\theta))$ such that their normalized actions generate a non-elementary, non-discrete, and strictly loxodromic subgroup of $\SL(2;\CC)$. The gadgets and their inverses are presented in \tabref{tab:czz_CCC_gadgets}. The determinants of the actions of these gadgets are:
\begin{align*}
\det \AA(\c_0) &= \det \AA(\c_0^{-1}) = \frac{\sin^4(\theta)}{4}\\
\det \AA(\c_1) &= \det \AA(\c_1^{-1}) = \frac{1}{32}\left( 5 - 28 \cos(\theta) - 4 \cos(2\theta) - 4\cos(3\theta) - \cos(4\theta) \right)\\
\det \AA(\c_2) &= \det \AA(\c_2^{-1}) = \frac{1}{32}\left( 5 + 28 \cos(\theta) - 4 \cos(2\theta) + 4\cos(3\theta) - \cos(4\theta) \right)\\
\det \AA(\c_3) &= \det \AA(\c_3^{-1}) = \cos(\theta)\\
\det \AA(\c_4) &= \det \AA(\c_4^{-1}) = -\cos(\theta).
\end{align*}

Evidently, if $\theta \not\in \frac{\pi}{2} \ZZ$, then $\det \AA(\c_i) \neq 0$ for $i \in \{0,3,4 \}$, so the \emph{normalized} actions of $\c_0$, $\c_3$, and $\c_4$ exist for such $\theta$. They are:
\begin{align*}
C_0 \defeq \tilde{\AA}(\c_0) &= \frac{1}{4}
\begin{pmatrix}
 (-\cos (\theta )+2 \cos (2 \theta )+\cos (3 \theta )+6) \csc ^2(\theta ) & \frac{4 i \sin (\theta ) \cos (\theta )}{\cos (\theta )-1} \\
 2 i \sin (\theta ) \cos (\theta ) \csc ^2\left(\frac{\theta }{2}\right) & 4 (\cos (\theta )+1)
\end{pmatrix}
\\
C_3 \defeq \tilde{\AA}(\c_3) &= \frac{1}{\sqrt{\cos(\theta)}}
\begin{pmatrix}
 \frac{1}{4} (\cos (2 \theta )+3) & i \sin ^2\left(\frac{\theta }{2}\right) \sin (\theta ) \\
 -i \sin ^2\left(\frac{\theta }{2}\right) \sin (\theta ) & \frac{1}{2} \left(\sin ^2(\theta )+2 \cos (\theta )\right)
\end{pmatrix}
\\
C_4 \defeq \tilde{\AA}(\c_4) &= \frac{\sqrt{-\cos(\theta)}}{2}
\begin{pmatrix}
 -\frac{1}{2} (\cos (2 \theta )+3) \sec (\theta ) & i (\sin (\theta )+\tan (\theta )) \\
 -i (\sin (\theta )+\tan (\theta )) & \frac{1}{2} (4 \cos (\theta )+\cos (2 \theta )-1) \sec (\theta )
\end{pmatrix}.
\end{align*}
However, since $\det \AA(\c_1) = 0$ iff $\theta \in 2\pi\ZZ \pm 2\tan^{-1}\left( \sqrt{\sqrt{2} - 1} \right)$ and $\det \AA(\c_2) = 0$ iff $\theta \in 2\pi\ZZ \pm 2\tan^{-1}\left( \sqrt{\sqrt{2} + 1} \right)$, we will proceed on a case-by-case basis. To do this, we partition $[0,2\pi)$ into the two disjoint intervals $A \defeq (\frac{\pi}{2}, \frac{3\pi}{2})$ and $ B \defeq [0,\frac{\pi}{2}] \cup [\frac{3\pi}{2}, 2\pi)$ and prove the lemma in each interval.

If $\theta \in A$, then $\det \AA(\c_1) \neq 0$, so the normalized of action of $\c_1$ exists for such $\theta$. It is:
$$
C_1 \defeq \tilde{\AA}(\c_1) = \frac{\AA(\c_1)}{\sqrt{\det \AA(\c_1)}},
$$
where
$$
\AA(\c_1) = 
\frac{1}{8}
\begin{pmatrix}
 -\cos (\theta )+2 \cos (2 \theta )+\cos (3 \theta )+6 & -i (\sin (\theta )+2 \sin (2 \theta )+\sin (3 \theta )) \\
 i (\sin (\theta )+2 \sin (2 \theta )+\sin (3 \theta )) & -7 \cos (\theta )-2 \cos (2 \theta )-\cos (3 \theta )+2
\end{pmatrix}.
$$
Similarly, if $\theta \in B$, then $\det \AA(\c_2) \neq 0$, so the normalized action of $\c_2$ exists for such $\theta$. It is:
$$
C_2 \defeq 
\tilde{\AA}(\c_2) = \frac{\AA(\c_2)}{\sqrt{\det \AA(\c_2)}},
$$
where
$$
\AA(\c_2) = 
\frac{1}{8}
\begin{pmatrix}
 7 \cos (\theta )-2 \cos (2 \theta )+\cos (3 \theta )+2 & -i (\sin (\theta )-2 \sin (2 \theta )+\sin (3 \theta )) \\
 -8 i \sin ^2\left(\frac{\theta }{2}\right) \sin (\theta ) \cos (\theta ) & \cos (\theta )+2 \cos (2 \theta )-\cos (3 \theta )+6
\end{pmatrix}.
$$

We underscore that for all values of $\theta$ where the normalized action of $\c_i$ is defined, the gadget $\c_i^{-1}$ given in \tabref{tab:czz_CCC_gadgets} is also defined and an inverse of $\c_i$:
\begin{align*}
\tilde{\AA}(\c_0^{-1}) &= \tilde{\AA}(\c_0)^{-1} = C_0^{-1}\\
\tilde{\AA}(\c_1^{-1}) &= \tilde{\AA}(\c_1)^{-1} = C_1^{-1}\\
-\tilde{\AA}(\c_2^{-1}) &= \tilde{\AA}(\c_2)^{-1} = C_2^{-1}\\
\tilde{\AA}(\c_3^{-1}) &= \tilde{\AA}(\c_3)^{-1} = C_3^{-1}\\
\tilde{\AA}(\c_4^{-1}) &= \tilde{\AA}(\c_4)^{-1} = C_4^{-1}
\end{align*}

Now let
$$
\Gamma^A_{\CZ + Z}(R_x(\theta)) \defeq \big\{C_0, C_0^{-1}, C_1, C_1^{-1}, C_3, C_3^{-1}, C_4, C_4^{-1}\big\}
$$ 
and
$$
\Gamma^B_{\CZ + Z}(R_x(\theta)) \defeq \big\{C_0, C_0^{-1}, C_2, -C_2^{-1}, C_3, C_3^{-1}, C_4, C_4^{-1}\big\}.
$$ 

Evidently, if $\theta \in A \backslash \frac{\pi}{2}\ZZ$, then $\Gamma^A_{\CZ + Z}(R_x(\theta))$ is closed under inverses, so $\words{\Gamma^A_{\CZ + Z}(R_x(\theta))}$ is a subgroup of $\SL(2;\CC)$. Likewise, if $\theta \in B \backslash \frac{\pi}{2}\ZZ$, then $\words{\Gamma^B_{\CZ + Z}(R_x(\theta))}$ is a subgroup of $\SL(2;\CC)$. In fact, for the appropriate choice of $\theta$, both $\words{\Gamma^A_{\CZ + Z}(R_x(\theta))}$ and $\words{\Gamma^B_{\CZ + Z}(R_x(\theta))}$ are non-elementary, non-discrete, and strictly loxodromic subgroups of $\SL(2;\CC)$.

\afterpage{\clearpage}
\begin{table}[htpb]
    \centering
    \begingroup
    \renewcommand*{\arraystretch}{1.5}
     \begin{tabular}{|c|c||c|c|}
        \hline
        $\c_i$ & Gadget in $\S_{\CZ + Z}(U)$ & $\c_i^{-1}$ & Inverse gadget in $\S_{\CZ + Z}(U)$ 
        \\
        \hline\hline
$\c_0$ &
    \begin{quantikz}
    \\
        \lstick{\ket{0}} & \gate{U} & \ctrl{1} & \gate{Z} & \gate{U^\dagger} & \rstick{} \\
        \lstick{\ket{0}} & \gate{U} & \ctrl{0} & \ctrl{1} & \gate{U^\dagger} & \rstick{\bra{0}} \\
        \lstick{} & \gate{U} & \gate{Z} & \ctrl{0} & \gate{U^\dagger} & \rstick{\bra{0}} 
    \\
    \end{quantikz}  
        & 
$\c_0^{-1}$ &        
    \begin{quantikz}
    \\
        \lstick{} & \gate{U} & \ctrl{1} & \gate{Z} & \gate{U^\dagger} & \rstick{\bra{1}} \\
        \lstick{\ket{0}} & \gate{U} & \ctrl{0} & \ctrl{1} & \gate{U^\dagger} & \rstick{\bra{0}} \\
        \lstick{\ket{1}} & \gate{U} & \gate{Z} & \ctrl{0} & \gate{U^\dagger} & \rstick{} 
    \\
    \end{quantikz} \\ \hline
$\c_1$ & 
    \begin{quantikz}
    \\
        \lstick{} & \gate{U} & \ctrl{1} & \gate{Z} & \gate{U^\dagger} & \rstick{} \\
        \lstick{\ket{0}} & \gate{U} & \ctrl{0} & \ctrl{1} & \gate{U^\dagger} & \rstick{\bra{0}} \\
        \lstick{\ket{0}} & \gate{U} & \gate{Z} & \ctrl{0} & \gate{U^\dagger} & \rstick{\bra{0}} 
    \\
    \end{quantikz}  
    & 
$\c_1^{-1}$ & 
    \begin{quantikz}
    \\
        \lstick{} & \gate{U} & \ctrl{1} &  & \gate{U^\dagger} & \rstick{} \\
        \lstick{\ket{1}} & \gate{U} & \ctrl{0} & \ctrl{1} & \gate{U^\dagger} & \rstick{\bra{1}} \\
        \lstick{\ket{1}} & \gate{U} &  & \ctrl{0} & \gate{U^\dagger} & \rstick{\bra{1}} 
    \\
    \end{quantikz} \\ \hline
$\c_2$ &
    \begin{quantikz}
    \\
        \lstick{} & \gate{U} & & \ctrl{1} & & \gate{U^\dagger} & \rstick{} \\
        \lstick{\ket{0}} & \gate{U} & \gate{Z} & \ctrl{0} & \ctrl{1} & \gate{U^\dagger} & \rstick{\bra{0}} \\
        \lstick{\ket{0}} & \gate{U} & & & \ctrl{0} & \gate{U^\dagger} & \rstick{\bra{0}}
    \\
    \end{quantikz}  
    & 
$\c_2^{-1}$ &
    \begin{quantikz}
    \\
        \lstick{} & \gate{U} & \gate{Z} & \ctrl{1} & & \gate{U^\dagger} & \rstick{} \\
        \lstick{\ket{1}} & \gate{U} & \gate{Z} & \ctrl{0} & \ctrl{1} & \gate{U^\dagger} & \rstick{\bra{1}} \\
        \lstick{\ket{1}} & \gate{U} & \gate{Z} & & \ctrl{0} & \gate{U^\dagger} & \rstick{\bra{1}}
    \\
    \end{quantikz} \\ \hline

$\c_3$ &
    \begin{quantikz}
    \\
        \lstick{} & \gate{U} & & \ctrl{1} & \gate{U^\dagger} & \rstick{} \\
        \lstick{\ket{0}} & \gate{U} & \gate{Z} & \ctrl{0} & \gate{U^\dagger} & \rstick{\bra{0}}
    \\
    \end{quantikz}  
    & 
$\c_3^{-1}$ &
    \begin{quantikz}
    \\
        \lstick{} & \gate{U} & \gate{Z} & \ctrl{1} & \gate{U^\dagger} & \rstick{} \\
        \lstick{\ket{1}} & \gate{U} &  & \ctrl{0} & \gate{U^\dagger} & \rstick{\bra{1}}
    \\
    \end{quantikz} \\ \hline

    $\c_4$ &
    \begin{quantikz}
    \\
        \lstick{} & \gate{U} & \gate{Z} & \ctrl{1} & \gate{U^\dagger} & \rstick{} \\
        \lstick{\ket{0}} & \gate{U} & & \ctrl{0} & \gate{U^\dagger} & \rstick{\bra{0}}
    \\
    \end{quantikz}  
    & 
$\c_4^{-1}$ &
    \begin{quantikz}
    \\
        \lstick{} & \gate{U} & & \ctrl{1} & \gate{U^\dagger} & \rstick{} \\
        \lstick{\ket{1}} & \gate{U} & \gate{Z} & \ctrl{0} & \gate{U^\dagger} & \rstick{\bra{1}}
    \\
    \end{quantikz} \\ \hline
\end{tabular}
\endgroup
\caption{Post-selection gadgets for $U$-conjugated $\CZ + Z$ Clifford circuits with $U = R_x(\theta)$ and $\theta \not \in \frac{\pi}{2}\ZZ$.}
\label{tab:czz_CCC_gadgets}
\end{table}

\begin{claim}
\label{claim:czz_CCC_1}
If $\theta \in A \backslash \frac{\pi}{2} \ZZ$, then $\words{\Gamma^A_{\CZ + Z}(R_x(\theta))}$ is a non-elementary subgroup of $\SL(2;\CC)$.
\end{claim}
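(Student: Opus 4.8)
The plan is to reduce the statement to a two-generator computation via \corref{cor:elemcor}: since $\words{\Gamma^A_{\CZ + Z}(R_x(\theta))}$ is generated by the given finite set, it is non-elementary provided some pair $g,h$ of its elements generates a non-elementary subgroup, i.e.\ provided $\IsElementary(\{g,h\}) = \NO$ for every $\theta \in A \backslash \frac{\pi}{2}\ZZ$. The natural first choice is the pair $(C_0, C_1)$, since $C_0$ and $C_1$ are the gadget actions (besides $C_3$ and $C_4$) guaranteed to exist throughout $A$; if this pair turns out not to witness non-elementarity at isolated values of $\theta$, one falls back on $(C_0, C_3)$, $(C_1, C_4)$, or another pair on the offending sub-interval, which is legitimate because \corref{cor:elemcor} only asks for \emph{some} pair of generators, possibly $\theta$-dependent, to be non-elementary.

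Concretely, following the template of Claims~\ref{claim:iqp_1}, \ref{claim:CCC_1}, and \ref{claim:T4_1}, I would compute the Baribeau--Ransford invariants $\beta(C_0) = \tr^2(C_0) - 4$, $\beta(C_1) = \tr^2(C_1) - 4$, and $\gamma(C_0, C_1) = \tr(C_0 C_1 C_0^{-1} C_1^{-1}) - 2$ as explicit functions of $\theta$ (a routine but heavy trigonometric calculation, delegated to the Mathematica notebook \cite{KFG24}), and then invoke the contrapositive of \propref{prop:elementarycondition}: for each $\theta \in A \backslash \frac{\pi}{2}\ZZ$, at least one of conditions (i)--(iii) must fail. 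The cleanest route is to pin down one easily-controlled invariant and show it lies outside the relevant interval for all $\theta$ in the range --- for instance that $\tr^2(C_0) \notin [0,4]$ (so $\beta(C_0) \notin [-4,0]$), which defeats (i) outright and, since then $\beta(C_0) \neq -4$, leaves only the first disjunct of (iii) in play --- and then separately to rule out $\gamma(C_0, C_1) = 0$ (defeating (ii)) and the residual case $\bigl[\beta(C_1) = -4 \text{ and } \gamma(C_0,C_1) = \beta(C_0)\bigr]$, each of which amounts to checking that a single trigonometric equation has no solution in $A \backslash \frac{\pi}{2}\ZZ$.

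The step I expect to be the main obstacle is not conceptual but bookkeeping: the entries of $C_0$ and $C_1$ carry $\csc^2$, $\cot$, and similar factors (and $C_1$ is divided by $\sqrt{\det\AA(\c_1)}$, which is positive on the connected interval $A$ since its only zeros lie in $B$), so the traces and the commutator trace are unwieldy, and one must argue cleanly that certain transcendental functions of $\theta$ avoid prescribed real sets on $A$. Organizing this so that the failure of each of (i)--(iii) is manifest --- and, if necessary, partitioning $A \backslash \frac{\pi}{2}\ZZ$ into finitely many sub-intervals on each of which a fixed pair of generators works --- is the real content of the proof; I anticipate no genuine difficulty beyond this.

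\noindent\ignorespaces
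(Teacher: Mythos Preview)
Your proposal is correct and matches the paper's approach exactly: the paper computes $\beta(C_0) = 4(\csc^4(\theta)-1)$, $\beta(C_1) = 4\cos^2(\theta)/\tan^4(\theta/2)$, and $\gamma(C_0,C_1)$ explicitly and then leaves it as an ``elementary exercise'' to verify $\IsElementary$ returns $\NO$ on $A\backslash\tfrac{\pi}{2}\ZZ$ via \propref{prop:elementarycondition}. Your fallback pairs turn out to be unnecessary (since $\beta(C_0),\beta(C_1)>0$ throughout the range, killing (i) and (iii) outright, and the numerator of $\gamma(C_0,C_1)$ is $32\cos^4(\theta)\cot^2(\theta/2)\neq 0$, killing (ii)), but there is no harm in having planned for them.
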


\begin{proof}
We find
\begin{align*}
\beta(C_0) &= 4(\csc^4(\theta) - 1)\\
\beta(C_1) &= \frac{4}{\sec^2(\theta) \tan^4 \left( \frac{\theta}{2} \right)}\\
\gamma(C_0, C_1) &= \frac{32\cos^4(\theta)\cot^2\left( \frac{\theta}{2} \right)}{-5 + 28\cos(\theta) + 4 \cos(2\theta) + 4\cos(3\theta) + \cos(4\theta)}.
\end{align*}
It is an elementary exercise to verify that if $\theta \in A \backslash \frac{\pi}{2}\ZZ$, then $\IsElementary(\Gamma^A_{\CZ + Z}(R_x(\theta))) = \NO$.
\end{proof}

\begin{claim}
\label{claim:czz_CCC_12}
If $\theta \in B \backslash \frac{\pi}{2} \ZZ$, then $\words{\Gamma^B_{\CZ + Z}(R_x(\theta))}$ is a non-elementary subgroup of $\SL(2;\CC)$.
\end{claim}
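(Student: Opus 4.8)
The plan is to proceed exactly as in the proof of \claimref{claim:czz_CCC_1}, but with the pair $(C_0, C_2)$ in place of $(C_0, C_1)$. By \corref{cor:elemcor} it suffices to exhibit two elements of $\Gamma^B_{\CZ + Z}(R_x(\theta))$ that generate a non-elementary subgroup, and by the contrapositive of \propref{prop:elementarycondition} it suffices in turn to show that, for every $\theta \in B \backslash \frac{\pi}{2}\ZZ = (0,\tfrac{\pi}{2}) \cup (\tfrac{3\pi}{2},2\pi)$, none of conditions (i)--(iii) of \propref{prop:elementarycondition} holds for $g = C_0$ and $h = C_2$ --- that is, $\IsElementary(\{C_0,C_2\}) = \NO$. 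Note both $C_0$ and $C_2$ are defined throughout this range, since $\sin\theta \neq 0$ and $\det\AA(\c_2) \neq 0$ there. The first step is therefore to compute $\beta(C_0)$, $\beta(C_2)$, and $\gamma(C_0, C_2)$ in closed trigonometric form from the matrices $C_0$ and $C_2$ displayed above; as in \claimref{claim:czz_CCC_1}, I would do this with Wolfram Mathematica \cite{KFG24}.

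Once these are in hand, several simplifications are available. The trace-square $\beta(C_0) = 4(\csc^4\theta - 1)$ is already recorded in the proof of \claimref{claim:czz_CCC_1}, and on $B \backslash \frac{\pi}{2}\ZZ$ one has $0 < \sin^2\theta < 1$, so $\beta(C_0) > 0$. Hence $\beta(C_0) \notin [-4,0]$ and $\beta(C_0) \neq -4$, which immediately rules out condition (i) and collapses condition (iii) to the single remaining possibility ``$\beta(C_0) = \gamma(C_0,C_2)$ and $\beta(C_2) = -4$''. It therefore remains only to verify two real, one-variable facts on $B \backslash \frac{\pi}{2}\ZZ$: that $\gamma(C_0, C_2) \neq 0$ (ruling out condition (ii)), and that $\beta(C_2) = -4$ and $\beta(C_0) = \gamma(C_0,C_2)$ cannot hold simultaneously (ruling out the surviving case of (iii)). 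These are ``elementary exercises'' in the sense of \claimref{claim:czz_CCC_1}, and together they yield $\IsElementary(\{C_0,C_2\}) = \NO$, hence the claim.

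A structurally cleaner alternative is to exploit the reflection $\theta \mapsto \pi - \theta$, which carries $B \backslash \frac{\pi}{2}\ZZ$ onto $A \backslash \frac{\pi}{2}\ZZ$. Writing out the gadget actions, one finds identities of the form $\AA(\c)(\pi - \theta) = \pm Y\,\AA(\c')(\theta)\,Y$ pairing up the $\theta$- and $(\pi-\theta)$-gadgets ($\c_0 \leftrightarrow \c_0$, $\c_1 \leftrightarrow \c_2$, and $\c_3 \leftrightarrow \c_4$, the last pair swapping consistently with $\det\AA(\c_3)(\pi-\theta) = \det\AA(\c_4)(\theta)$), where $Y$ is the Pauli matrix $\left(\begin{smallmatrix}0 & -i\\ i & 0\end{smallmatrix}\right) \in \SL(2;\CC)$. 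Since conjugation by $Y$ is an inner automorphism of $\SL(2;\CC)$ it preserves non-elementarity, and the sign/branch ambiguities in the normalized actions are immaterial for elementarity (a subgroup of index at most two of a non-elementary subgroup of $\SL(2;\CC)$ is again non-elementary); one concludes that $\words{\Gamma^B_{\CZ + Z}(R_x(\theta))}$ is, up to conjugation and these harmless ambiguities, $\words{\Gamma^A_{\CZ + Z}(R_x(\pi-\theta))}$, and the claim follows directly from \claimref{claim:czz_CCC_1}.

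The main obstacle, in the direct approach, is the final trigonometric verification: one must confirm that the relevant equalities of \propref{prop:elementarycondition} fail across the \emph{entire} range $B \backslash \frac{\pi}{2}\ZZ$, not merely for generic $\theta$. Since $\gamma(C_0,C_2)$ and $\beta(C_2)$ are fairly unwieldy rational-trigonometric functions, this bookkeeping is where the argument is genuinely fiddly --- which is precisely why the $\theta \mapsto \pi - \theta$ reduction, when it can be pushed through cleanly, is attractive: it inherits all of this from \claimref{claim:czz_CCC_1}.
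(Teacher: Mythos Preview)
Your direct approach is exactly what the paper does: it computes $\beta(C_0)$, $\beta(C_2)$, and $\gamma(C_0,C_2)$ and then declares the verification that $\IsElementary(\Gamma^B_{\CZ+Z}(R_x(\theta)))=\NO$ an ``elementary exercise''. You have actually gone further than the paper by spelling out the case analysis --- in particular, your observation that $\beta(C_0)=4(\csc^4\theta-1)>0$ on $B\backslash\frac{\pi}{2}\ZZ$ immediately kills condition~(i) and two of the three subcases of~(iii) --- which is a genuine improvement in exposition over the paper's bare assertion.

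Your alternative route via the reflection $\theta\mapsto\pi-\theta$ is not in the paper, and it is a nicer structural explanation of why the $A$- and $B$-cases are parallel: the determinant identities you note (e.g.\ $\det\AA(\c_1)(\pi-\theta)=\det\AA(\c_2)(\theta)$, $\det\AA(\c_3)(\pi-\theta)=\det\AA(\c_4)(\theta)$) are correct and make the pairing $\c_1\leftrightarrow\c_2$, $\c_3\leftrightarrow\c_4$ transparent. The one caveat is that the specific assertion $\AA(\c)(\pi-\theta)=\pm Y\,\AA(\c')(\theta)\,Y$ would need to be checked entry-by-entry (and one should be a bit careful that $\theta\in(3\pi/2,2\pi)$ maps to $\pi-\theta<0$, i.e.\ one is implicitly working modulo $2\pi$); but since all of $\beta$, $\gamma$, and hence elementarity are conjugation-invariant, once such an intertwining is verified the reduction to \claimref{claim:czz_CCC_1} is clean. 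The paper instead just repeats the computation, which is less elegant but avoids the need to verify the intertwining.
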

\begin{proof}
We find
\begin{align*}
\beta(C_0) &= 4(\csc^4(\theta) - 1)\\ 
\beta(C_2) &= \frac{128\cos^2(\theta)\sin^4 \left( \frac{\theta}{2} \right)}{-5 - 28\cos(\theta) + 4 \cos(2\theta) - 4\cos(3\theta) + \cos(4\theta)}\\
\gamma(C_0, C_2) &= \frac{32\cos^4(\theta)\cot^2\left( \frac{\theta}{2} \right)}{-5 - 28\cos(\theta) + 4 \cos(2\theta) - 4\cos(3\theta) + \cos(4\theta)}.
\end{align*}
It is an elementary exercise to verify that if $\theta \in B \backslash \frac{\pi}{2}\ZZ$, then $\IsElementary(\Gamma^B_{\CZ + Z}(R_x(\theta))) = \NO$.
\end{proof}

\begin{claim}
\label{claim:czz_CCC_2}
If $\theta \in A \backslash \frac{\pi}{2} \ZZ$, then $\words{\Gamma^A_{\CZ + Z}(R_x(\theta))}$ is a non-discrete subgroup of $\SL(2;\CC)$.
\end{claim}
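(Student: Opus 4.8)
The plan is to follow the template of Claims~\ref{claim:iqp_2}, \ref{claim:CCC_2}, and \ref{claim:T4_2}. Since \claimref{claim:czz_CCC_1} already shows $\words{\Gamma^A_{\CZ + Z}(R_x(\theta))}$ is non-elementary, by \corref{cor:nondiscrete} it suffices to exhibit two elements $g,h$ of $\Gamma^A_{\CZ + Z}(R_x(\theta))$ such that $\words{g,h}$ is non-discrete; this in turn follows from \propref{prop:Jorgensen} or \propref{prop:Jorg_generalized} (equivalently, from one of the lines of \algref{alg:discrete}) as soon as we verify that the relevant trace data violates the corresponding J{\o}rgensen-type inequality.

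The natural first attempt is to reuse the pair $(g,h) = (C_0, C_1)$ from \claimref{claim:czz_CCC_1}, for which $\words{C_0,C_1}$ is already known to be non-elementary. I would compute, as functions of $\theta$ (clearing the $\csc$ and $\cot$ factors via the substitution $t = \tan(\theta/2)$, so that everything becomes a rational function of $t$), the quantities $\tr^2(C_0)$, $\tr^2(C_1)$, and $\tr(C_0 C_1 C_0^{-1} C_1^{-1})$, and then the combinations $\tr^2(g) - 4$, $\tr^2(g) - 2$, $\tr^2(g) - 1$, and $\tr^2(g) - \tr(ghg^{-1}h^{-1})$ that appear in \algref{alg:discrete}. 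One then identifies which line of the algorithm fires for $\theta \in A \setminus \frac{\pi}{2}\ZZ$. Because $A = (\tfrac{\pi}{2}, \tfrac{3\pi}{2})$ and the only point of $\frac{\pi}{2}\ZZ$ it contains is $\theta = \pi$, this reduces to a one-variable real inequality on the punctured interval $(\tfrac{\pi}{2}, \pi) \cup (\pi, \tfrac{3\pi}{2})$, which I would settle by elementary calculus or, as the paper does elsewhere, in Wolfram Mathematica.

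If no single pair works uniformly on all of $A \setminus \frac{\pi}{2}\ZZ$, the fallback is to subdivide $A$ further and, on each piece, use a different pair drawn from $\{C_0, C_1, C_3, C_4\}$ (all of which are defined throughout $A$), applying \corref{cor:nondiscrete} on each piece; this mirrors the $A$/$B$ split already in force for the non-elementarity statements. Either way, the conclusion is that $\words{g,h}$ violates \eref{eq:jorg} or one of its generalizations, hence is non-discrete, and therefore $\words{\Gamma^A_{\CZ + Z}(R_x(\theta))}$ is non-discrete.

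I expect the main obstacle to be the \emph{uniformity} of the inequality over the continuum of admissible $\theta$, as opposed to its verification at a single value as in the earlier claims. The trace expressions are unwieldy rational functions of $\cos\theta$ with poles at $\theta \in \pi\ZZ$ coming from the $\csc$ and $\cot$ factors in $C_0$ and $C_1$, so one must confirm that the governing quantity stays strictly below the required threshold across the whole punctured interval, including the one-sided limits $\theta \to \tfrac{\pi}{2}^+$, $\theta \to \tfrac{3\pi}{2}^-$, and $\theta \to \pi$. Pinning down this strict bound --- rather than a pointwise inequality --- is the real content of the claim, and it is why the paper delegates the algebra to a symbolic computation.

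\noindent\ignorespaces
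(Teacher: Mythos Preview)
Your plan matches the paper's approach: use the pair from \claimref{claim:czz_CCC_1} and violate J{\o}rgensen's inequality (line~4 of \algref{alg:discrete}). The paper shows this works \emph{uniformly} on $A \setminus \frac{\pi}{2}\ZZ$ with no subdivision or fallback needed, so your contingency plan is unnecessary.

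The one substantive point you have not nailed down is the \emph{ordering} of the pair. J{\o}rgensen's inequality is asymmetric in $g$ and $h$ (only $\tr^2(g)$ appears), and your stated choice $(g,h)=(C_0,C_1)$ fails: $\beta(C_0)=4(\csc^4\theta - 1)$ blows up as $\theta \to \pi$, so $|\tr^2(C_0)-4|$ cannot stay below $1$ across $A$. The paper instead takes $(g,h)=(C_1,C_0)$, for which $\beta(C_1)=4\cos^2\theta\cot^4(\theta/2)$ vanishes at both $\theta=\pi/2$ and $\theta=\pi$, and the full J{\o}rgensen quantity
\[
|\tr^2(C_1)-4|+|\tr(C_1 C_0 C_1^{-1}C_0^{-1})-2|
= \frac{32\bigl(\cos^4\theta\,\cot^2(\theta/2)+4\cos^2\theta\,\cos^4(\theta/2)\bigr)}{|28\cos\theta+4\cos 2\theta+4\cos 3\theta+\cos 4\theta-5|}
\]
is shown to be strictly less than $1$ on all of $A\setminus\frac{\pi}{2}\ZZ$. (The commutator trace is symmetric in $SL(2;\CC)$, so swapping the order only changes the $\tr^2$ term.) Your worry about the poles of $C_0$ is exactly why $C_1$ must be the first argument; once that choice is made, the inequality is a single real-variable bound verified in Mathematica, with no subdivision required.
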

\begin{proof}
From the proof of \claimref{claim:czz_CCC_1}, we know that if $\theta \in A \backslash \frac{\pi}{2}\ZZ$, then $C_0$ and $C_1$ (together with their inverses) generate a non-elementary subgroup of $\SL(2;\CC)$. Therefore, we may use J{\o}rgensen's inequality (line 4 in \algref{alg:discrete}) to test discreteness. We find that if $\theta \in A \backslash \frac{\pi}{2}\ZZ$, then indeed 
$$
\left|\tr^2(C_1) - 4\right| + \left|\tr(C_1 C_0 C_1^{-1} C_0^{-1}) - 2 \right| = \frac{32 \left(\cos ^4(\theta ) \left| \cot \left(\frac{\theta }{2}\right)\right| ^2+4 \cos ^2(\theta ) \cos ^4\left(\frac{\theta }{2}\right)\right)}{| 28 \cos (\theta )+4 \cos (2 \theta )+4 \cos (3 \theta )+\cos (4 \theta )-5| } < 1.
$$
Therefore, if $\theta \in A \backslash \frac{\pi}{2}\ZZ$, then $\IsDiscrete(\Gamma^A_{\CZ + Z}(R_x(\theta))) = \NO$.
\end{proof}

\begin{claim}
\label{claim:czz_CCC_21}
If $\theta \in B \backslash \frac{\pi}{2} \ZZ$, then $\words{\Gamma^B_{\CZ + Z}(R_x(\theta))}$ is a non-discrete subgroup of $\SL(2;\CC)$.
\end{claim}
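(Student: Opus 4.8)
The plan is to follow the proof of \claimref{claim:czz_CCC_2} essentially verbatim, replacing the pair $(C_1, C_0)$ by the pair $(C_2, C_0)$. First I would invoke \claimref{claim:czz_CCC_12}: for every $\theta \in B \backslash \frac{\pi}{2}\ZZ$ it gives that $\words{C_0, C_2}$ is a non-elementary subgroup of $\SL(2;\CC)$. The minus sign on the generator $-C_2^{-1}$ of $\Gamma^B_{\CZ+Z}(R_x(\theta))$ is immaterial here: $\beta$ and $\gamma$ are unchanged under negating a generator, and in any case $C_2 \cdot (-C_2^{-1}) = -I_1$ shows $C_2^{-1} \in \words{\Gamma^B_{\CZ+Z}(R_x(\theta))}$. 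Since $\words{C_0, C_2}$ is non-elementary, the hypothesis of J{\o}rgensen's inequality (\propref{prop:Jorgensen}) holds for the pair $(C_2, C_0)$, so its contrapositive — line 4 of \algref{alg:discrete} — is a legitimate discreteness test to apply.

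Next I would compute, as an explicit function of $\theta$ from the closed forms for $C_0$ and $C_2$ recorded above, the J{\o}rgensen quantity
$$
\left|\tr^2(C_2) - 4\right| + \left|\tr(C_2 C_0 C_2^{-1} C_0^{-1}) - 2\right|,
$$
which is well-defined on $\Gamma^B_{\CZ+Z}(R_x(\theta))$ because the commutator trace is insensitive to replacing $C_2$ by $-C_2^{-1}$. The content of the claim is then the inequality that this quantity is strictly less than $1$ for all $\theta \in B \backslash \frac{\pi}{2}\ZZ$, which is the exact analogue of the bound proved for the interval $A$ in \claimref{claim:czz_CCC_2}. Granting it, line 4 of \algref{alg:discrete} returns $\NO$ on input $\{C_0, C_2\}$ — hence, a fortiori, on input $\Gamma^B_{\CZ+Z}(R_x(\theta))$ — so by \corref{cor:nondiscrete}, whose elementarity hypothesis is supplied by \claimref{claim:czz_CCC_12}, the group $\words{\Gamma^B_{\CZ+Z}(R_x(\theta))}$ is non-discrete.

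The one step requiring genuine work is verifying the inequality $\left|\tr^2(C_2) - 4\right| + \left|\tr(C_2 C_0 C_2^{-1} C_0^{-1}) - 2\right| < 1$ uniformly over the disconnected set $B \backslash \frac{\pi}{2}\ZZ = (0,\frac{\pi}{2}) \cup (\frac{3\pi}{2}, 2\pi)$. The delicate points are twofold: first, the quantity $-5 - 28\cos\theta + 4\cos 2\theta - 4\cos 3\theta + \cos 4\theta = -32\det\AA(\c_2)$ occurs in the denominators of $\beta(C_2)$ and $\gamma(C_0, C_2)$, so one must confirm that its zeros, $\theta \in 2\pi\ZZ \pm 2\tan^{-1}\!\big(\sqrt{\sqrt 2 + 1}\,\big)$, all lie outside $B$; and second, one must control the behavior of the expression as $\theta \to 0^+$ and $\theta \to \frac{\pi}{2}^-$, where $C_0$ degenerates, ideally showing the quantity extends continuously to $0$ there so that the strict bound survives up to the endpoints. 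As with the other computations in this appendix, I expect the cleanest route is to substitute $t = \cos\theta$, reduce to a one-variable rational inequality on the relevant $t$-intervals, clear denominators while tracking their signs on $B$, and verify the resulting polynomial inequality with Wolfram Mathematica (cf.\ \cite{KFG24}).
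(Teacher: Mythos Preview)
Your proposal is correct and follows essentially the same approach as the paper: invoke \claimref{claim:czz_CCC_12} for non-elementarity of $\words{C_0,C_2}$, then apply J{\o}rgensen's inequality (line 4 of \algref{alg:discrete}) to the pair $(C_2,C_0)$ and verify the resulting quantity is strictly less than $1$ on $B\setminus\frac{\pi}{2}\ZZ$. The paper records the explicit closed form
\[
\frac{32 \left(\cos ^4(\theta ) \left| \tan \left(\frac{\theta }{2}\right)\right| ^2+4 \sin ^4\left(\frac{\theta }{2}\right) \cos ^2(\theta )\right)}{| -28 \cos (\theta )+4 \cos (2 \theta )-4 \cos (3 \theta )+\cos (4 \theta )-5| }
\]
and defers the verification of the bound to Mathematica, exactly as you anticipated; your additional remarks on the sign of $-C_2^{-1}$ and the location of the zeros of $\det\AA(\c_2)$ are correct elaborations that the paper leaves implicit.
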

\begin{proof}
From the proof of \claimref{claim:czz_CCC_12}, we know that if $\theta \in B \backslash \frac{\pi}{2}\ZZ$, then $C_0$ and $C_2$ (together with their inverses) generate a non-elementary subgroup of $\SL(2;\CC)$. Therefore, we may use J{\o}rgensen's inequality (line 4 in \algref{alg:discrete}) to test discreteness. We find that if $\theta \in B \backslash \frac{\pi}{2}\ZZ$, then indeed 
$$
\left|\tr^2(C_2) - 4\right| + \left|\tr(C_2 C_0 C_2^{-1} C_0^{-1}) - 2 \right| = \frac{32 \left(\cos ^4(\theta ) \left| \tan \left(\frac{\theta }{2}\right)\right| ^2+4 \sin ^4\left(\frac{\theta }{2}\right) \cos ^2(\theta )\right)}{| -28 \cos (\theta )+4 \cos (2 \theta )-4 \cos (3 \theta )+\cos (4 \theta )-5| } < 1.
$$
Therefore, if $\theta \in B \backslash \frac{\pi}{2}\ZZ$, then $\IsDiscrete(\Gamma^B_{\CZ + Z}(R_x(\theta))) = \NO$.
\end{proof}

\begin{claim}
\label{claim:czz_CCC_3}
If $\theta \in A \backslash \frac{\pi}{2}\ZZ$, then $\words{\Gamma^A_{\CZ + Z}(R_x(\theta))}$ is a strictly loxodromic subgroup of $\SL(2; \CC)$.
\end{claim}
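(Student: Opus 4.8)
The plan is to mimic the proofs of Claims~\ref{claim:iqp_3}, \ref{claim:CCC_3}, and \ref{claim:T4_3}: by \lemref{lem:loxlem} it suffices to produce a single generator in $\Gamma^A_{\CZ + Z}(R_x(\theta))$ with non-real trace, so that $\IsLoxodromic(\Gamma^A_{\CZ + Z}(R_x(\theta))) = \YES$. The generators $C_0$ and $C_1$ are poor candidates, since $\beta(C_0) = 4(\csc^4\theta - 1)$ and $\beta(C_1) = 4/(\sec^2\theta\,\tan^4(\theta/2))$ are nonnegative on $A$, making both elements hyperbolic (real trace), and a quick computation shows $\tr(C_4)$ is real as well. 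So the strictly loxodromic generator should be $C_3$.

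The key step is therefore to compute $\tr(C_3)$ from the displayed formula for $C_3 = \tilde{\AA}(\c_3)$. Adding the two diagonal entries and simplifying with $\cos 2\theta = 1 - 2\sin^2\theta$ collapses the bracket to $1 + \cos\theta$, giving the clean identity
$$
\tr(C_3) = \frac{1 + \cos\theta}{\sqrt{\cos\theta}}.
$$
Next I would observe that the only integer multiple of $\pi/2$ lying in the open interval $A = (\tfrac{\pi}{2}, \tfrac{3\pi}{2})$ is $\pi$, so $\theta \in A \backslash \frac{\pi}{2}\ZZ$ forces $\cos\theta \in (-1,0)$. Hence $1 + \cos\theta$ is a strictly positive real while $\sqrt{\cos\theta}$ is a nonzero purely imaginary number, so $\tr(C_3)$ is nonzero and purely imaginary, in particular $\tr(C_3) \in \CC \backslash \RR$. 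By \lemref{lem:loxlem}, $\words{\Gamma^A_{\CZ + Z}(R_x(\theta))}$ is strictly loxodromic, which is exactly the claim.

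The only point requiring care is the boundary behavior, and this is really the ``obstacle,'' though a mild one: at $\theta = \pi$ the numerator $1 + \cos\theta$ vanishes and $\tr(C_3) = 0$ becomes real, and at $\theta \in \{\pi/2, 3\pi/2\}$ one has $\cos\theta = 0$ and $C_3$ is undefined; but all three of these values lie in $\frac{\pi}{2}\ZZ$ and are thus excluded by hypothesis, and $A$ is open so the endpoints are not reached. Everything else reduces to the elementary trace simplification above, so no serious difficulty is expected.
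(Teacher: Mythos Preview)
Your proof is correct and follows the same approach as the paper: compute $\tr(C_3) = (1+\cos\theta)/\sqrt{\cos\theta}$ and observe that on $A\setminus\frac{\pi}{2}\ZZ$ one has $\cos\theta\in(-1,0)$, making this trace nonzero purely imaginary. The only difference is that the paper also records $\tr(C_4) = (\cos\theta-1)\sqrt{-\cos\theta}\sec\theta$ and phrases the conclusion as ``either $\sqrt{\cos\theta}\in\CC\setminus\RR$ or $\sqrt{-\cos\theta}\in\CC\setminus\RR$'', which is written so that the identical argument can be recycled verbatim for the companion interval $B$ in \claimref{claim:czz_CCC_31} (where $\cos\theta>0$ and it is $C_4$ rather than $C_3$ that supplies the strictly loxodromic element); your observation that $\tr(C_4)$ is real on $A$ is exactly why the paper needs both.
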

\begin{proof}
Since 
$$
\tr(C_3) = \frac{1 + \cos(\theta)}{\sqrt{\cos(\theta)}} \quad \text{and} \quad
\tr(C_4) = (\cos(\theta) - 1)\sqrt{-\cos(\theta)}\sec(\theta),
$$
if $\theta \in A \backslash \frac{\pi}{2}\ZZ$, then either $\sqrt{\cos(\theta)} \in \CC \backslash \RR$ or $\sqrt{-\cos(\theta)} \in \CC \backslash \RR$. Consequently, if $\theta \in A \backslash \frac{\pi}{2}\ZZ$, then $\IsLoxodromic(\Gamma^A_{\CZ + Z}(R_x(\theta))) = \YES$.
\end{proof}

\begin{claim}
\label{claim:czz_CCC_31}
If $\theta \in B \backslash \frac{\pi}{2}\ZZ$, then $\words{\Gamma^B_{\CZ + Z}(R_x(\theta))}$ is a strictly loxodromic subgroup of $\SL(2; \CC)$.
\end{claim}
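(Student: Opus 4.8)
The plan is to mirror the proof of \claimref{claim:czz_CCC_3} and exhibit a single explicit generator of $\Gamma^B_{\CZ + Z}(R_x(\theta))$ whose trace is non-real, so that strict loxodromy of the whole group follows at once from \lemref{lem:loxlem}. On $B \backslash \frac{\pi}{2}\ZZ$ the element $C_3$ is of no use here, since $\tr(C_3) = (1 + \cos(\theta))/\sqrt{\cos(\theta)}$ is real whenever $\cos(\theta) > 0$; the right candidate is instead $C_4 = \tilde{\AA}(\c_4)$, whose trace I would read off from the displayed closed form of $C_4$, giving the same expression already recorded in \claimref{claim:czz_CCC_3}, namely
$$
\tr(C_4) = (\cos(\theta) - 1)\sqrt{-\cos(\theta)}\,\sec(\theta).
$$

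First I would observe that $B \backslash \frac{\pi}{2}\ZZ = (0, \tfrac{\pi}{2}) \cup (\tfrac{3\pi}{2}, 2\pi)$, and that on this set $\cos(\theta) \in (0,1)$; in particular $\cos(\theta) > 0$, so $-\cos(\theta) < 0$ and hence $\sqrt{-\cos(\theta)}$ is purely imaginary and non-zero (with the principal branch used in the definition of $\tilde{\AA}(\c_4)$). Since $\cos(\theta) - 1 \in (-1,0)$ and $\sec(\theta) \in (1,\infty)$ are both real and non-zero, the product $\tr(C_4)$ is purely imaginary and non-zero, so $\tr(C_4) \in \CC \backslash \RR$. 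Thus $C_4$ is a strictly loxodromic element of $\SL(2;\CC)$, so $\IsLoxodromic(\Gamma^B_{\CZ + Z}(R_x(\theta))) = \YES$, and by \lemref{lem:loxlem} the group $\words{\Gamma^B_{\CZ + Z}(R_x(\theta))}$ is strictly loxodromic.

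I do not expect any real obstacle: unlike the non-elementarity and non-discreteness claims, which verify an inequality uniformly over an interval of $\theta$, strict loxodromy requires only one non-real trace, and on $B$ the sign of $\cos(\theta)$ forces $\sqrt{-\cos(\theta)}$ to be imaginary for essentially trivial reasons. The single point deserving (minimal) care is the branch convention for the square root appearing in $\tilde{\AA}(\c_4)$: one must take the branch consistent with the displayed formula for $C_4$, so that $\sqrt{-\cos(\theta)}$ is genuinely imaginary, rather than real, on $B$; once that convention is fixed the argument is immediate, and it completes the proof of this claim.

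\noindent\ignorespaces
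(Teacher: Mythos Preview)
Your proposal is correct and is essentially the paper's own argument: the paper simply says the proof is identical to that of \claimref{claim:czz_CCC_3}, and you have correctly unpacked this by observing that on $B \backslash \frac{\pi}{2}\ZZ$ one has $\cos(\theta) \in (0,1)$, so it is $C_4$ (rather than $C_3$) whose trace $(\cos(\theta)-1)\sqrt{-\cos(\theta)}\sec(\theta)$ is purely imaginary and nonzero. Your branch-convention remark is a reasonable caution but not a genuine obstacle, since $\sqrt{-\cos(\theta)}$ arises as $\sqrt{\det \AA(\c_4)}$ with $\det \AA(\c_4) = -\cos(\theta) < 0$ on $B$.
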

\begin{proof}
The proof is identical to the proof of \claimref{claim:czz_CCC_3}.
\end{proof}

Altogether, Claims~\ref{claim:czz_CCC_1} -- \ref{claim:czz_CCC_31} and our criterion \thmref{thm:maintwo} imply that if $\theta \not \in \frac{\pi}{2}\ZZ$, then no efficient classical computer can simulate $R_x(\theta)$-conjugated $\CZ + Z$ circuits to within multiplicative error $\epsilon < \sqrt{2} - 1$.
\end{proof}

\section{Proof of \thmref{thm:cz+sclassification}}
\label{append:czsclassification}

In this section, we classify the complexity of $U$-conjugated $\CZ + S$ circuits (also known as commuting conjugated Clifford circuits). Formally, this is our \thmref{thm:cz+sclassification}, which we restate below for convenience.

\CZSclassification*

Our method of proof is to show that for all $U \in \U(2)$, $U$-conjugated $\CZ + S$ circuits are efficiently classically simuilable iff $U$-conjugated $\CZ + Z$ circuits are, thereby showing that the complexity classification of $U$-conjugated $\CZ + S$ circuits is the same as the complexity classification of $U$-conjugated $\CZ + Z$ circuits, which we proved in \appendref{append:czzclassification}. To do this, we require the following lemma.

\begin{lemma}
\label{lem:czs_lem_one}
Let $U$ and $V$ be single-qubit unitaries such that $U = e^{i\alpha} R_z(\phi) V R_z(\lambda)$ for $\alpha, \phi, \lambda \in [0,2\pi)$. Then, $U$-conjugated $\CZ + S$ circuits are efficiently classically simulable to within multiplicative error $\epsilon < \sqrt{2} - 1$ iff $V$-conjugated $\CZ + S$ circuits are.
\end{lemma}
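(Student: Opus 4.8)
The plan is to imitate the proof of \lemref{lem:czz_lem_one} essentially line for line. The only property of the gate set $\{\CZ, Z\}$ that that proof uses is that every one of its gates is diagonal in the computational basis, and exactly the same is true of $\{\CZ, S\}$. So I would carry the argument over with $Z$ replaced by $S$ throughout.

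First I would fix an arbitrary $n$-qubit circuit $Q$ over $\S_{\CZ + S}(U)$. By the definition of a conjugated $\CZ + S$ circuit, $Q = (U^\dagger)^{\otimes n} D\, U^{\otimes n}$ for some $n$-qubit circuit $D$ composed of $\CZ$ and $S$ gates; since both $\CZ$ and $S$ are diagonal in the computational basis, so is $D$. Substituting $U = e^{i\alpha} R_z(\phi) V R_z(\lambda)$, the global phases $e^{\pm i n\alpha}$ cancel, and since $R_z(\phi)^{\otimes n}$ is diagonal it commutes with $D$, so that $R_z(-\phi)^{\otimes n} D\, R_z(\phi)^{\otimes n} = D$. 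Hence $Q = R_z(-\lambda)^{\otimes n} (V^\dagger)^{\otimes n} D\, V^{\otimes n} R_z(\lambda)^{\otimes n}$, i.e. $Q$ is a $V$-conjugated $\CZ + S$ circuit flanked by the diagonal layers $R_z(\pm\lambda)^{\otimes n}$. Thus $e^{i\alpha} R_z(\phi) V R_z(\lambda)$-conjugated $\CZ + S$ circuits are efficiently classically simulable to within multiplicative error $\epsilon$ iff $e^{i\alpha} V R_z(\lambda)$-conjugated $\CZ + S$ circuits are. Next I would note that the surviving $R_z(\lambda)$ layers leave every output probability unchanged: the input $x$ and the ancilla string $\anc(n)$ index computational basis states, so $R_z(\lambda)^{\otimes n}\ket{x}\otimes\ket{\anc(n)}$ agrees with $\ket{x}\otimes\ket{\anc(n)}$ up to a global phase and $\bra{y} R_z(-\lambda)^{\otimes n}$ agrees with $\bra{y}$ up to a phase, whence
$$
\left| \bra{y} (e^{-i\alpha} R_z(-\lambda) V^\dagger)^{\otimes n} D\, (e^{i\alpha} V R_z(\lambda))^{\otimes n} \ket{x}\otimes\ket{\anc(n)}\right|^2 = \left| \bra{y} (V^\dagger)^{\otimes n} D\, V^{\otimes n} \ket{x}\otimes\ket{\anc(n)}\right|^2
$$
for all $x,y$, so that $e^{i\alpha} V R_z(\lambda)$-conjugated $\CZ + S$ circuits are efficiently classically simulable to within multiplicative error $\epsilon$ iff $V$-conjugated $\CZ + S$ circuits are.

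This is entirely routine — it is the exact analogue of \lemref{lem:czz_lem_one} — so I do not anticipate a genuine obstacle. The only mild point of care is the bookkeeping that the construction is a bijection at the level of efficient quantum computers: it never changes the size, depth, ancilla function, or (when post-selection is also carried along verbatim) the post-selection function, so it sends uniform polynomial-size families to uniform polynomial-size families, and output distributions are preserved exactly by the displayed identity. Finally, since $V = e^{-i\alpha} R_z(-\phi) U R_z(-\lambda)$ is of the same form, the equivalence is symmetric in $U$ and $V$, which closes the argument.
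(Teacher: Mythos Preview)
Your proposal is correct and is exactly the approach the paper takes: the paper's own proof is literally the single sentence ``The proof is identical to the proof of \lemref{lem:czz_lem_one},'' relying on precisely the observation you make that both $\CZ$ and $S$ are diagonal in the computational basis.
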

\begin{proof}
The proof is identical to the proof of \lemref{lem:czz_lem_one}.
\end{proof}

\begin{lemma}
\label{lem:czs_lem_two}
If the polynomial hierarchy is infinite, then for all $U \in \U(2)$, efficient classical computers can simulate $U$-conjugated $\CZ + S$ circuits to within multiplicative error $\epsilon < \sqrt{2} - 1$ iff they can simulate $U$-conjugated $\CZ + Z$ circuits to within the same multiplicative error.
\end{lemma}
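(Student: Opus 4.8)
The plan is to prove the two implications of the biconditional separately. One direction is immediate and holds for every $U$ with no complexity assumption, while the other is where the hypothesis that $\PH$ is infinite gets used; I would discharge it by invoking the already-established classification \thmref{thm:cz+zclassification} together with \lemref{lem:czs_lem_one}, rather than by a direct gadget construction.

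For the direction ``$\CZ+S$ simulable $\Rightarrow$ $\CZ+Z$ simulable'', I would use the identity $S^2 = Z$, so that $U^\dagger Z U = (U^\dagger S U)^2$ for every $U \in \U(2)$. Hence the single-qubit gate $U^\dagger Z U$ is a product of two gates from $\S_{\CZ + S}(U)$, and given any efficient quantum computer over $\S_{\CZ + Z}(U)$, replacing each occurrence of $U^\dagger Z U$ by two copies of $U^\dagger S U$ yields a uniform, polynomial-size circuit family over $\S_{\CZ + S}(U)$ — with the same ancilla string — that realizes the same output distribution on every input. Consequently, any efficient classical computer that simulates every efficient quantum computer over $\S_{\CZ + S}(U)$ to within multiplicative error $\epsilon < \sqrt{2} - 1$ also simulates every efficient quantum computer over $\S_{\CZ + Z}(U)$ to within the same error.

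For the converse, ``$\CZ+Z$ simulable $\Rightarrow$ $\CZ+S$ simulable'', suppose efficient classical computers can simulate $U$-conjugated $\CZ + Z$ circuits to within multiplicative error $\epsilon < \sqrt{2} - 1$. Since $\PH$ is infinite, \thmref{thm:cz+zclassification} forces $U = e^{i\alpha} R_z(\phi) C R_z(\lambda)$ for some $C \in \words{H,S}$ and $\alpha, \phi, \lambda \in [0,2\pi)$. Applying \lemref{lem:czs_lem_one} with $V = C$, it follows that $U$-conjugated $\CZ + S$ circuits are efficiently classically simulable to within $\epsilon$ if and only if $C$-conjugated $\CZ + S$ circuits are. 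But $C$ is a single-qubit Clifford, so $C^\dagger S C$ and $(C^\dagger \otimes C^\dagger) \CZ (C \otimes C)$ are Clifford operations, whence every $C$-conjugated $\CZ + S$ circuit is a Clifford circuit and is therefore exactly — and \emph{a fortiori} to within $\epsilon$ — classically simulable by the Gottesman-Knill theorem. Chaining these equivalences shows that $U$-conjugated $\CZ + S$ circuits are classically simulable to within $\epsilon$, which completes the proof.

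The only delicate point — and the one I expect to be worth flagging rather than being an actual obstacle — is making sure that invoking \thmref{thm:cz+zclassification} here is not circular: it is not, because that theorem is established in \appendref{append:czzclassification} from \lemref{lem:czz_lem_one}, \lemref{lem:czz_lem_two}, and the criterion \thmref{thm:maintwo}, none of which mentions $\CZ + S$ circuits or the present lemma. (If one prefers to avoid the classification entirely, the same argument goes through using \lemref{lem:czz_lem_two} directly after an Euler decomposition $U = e^{i\alpha} R_z(\phi) R_x(\theta) R_z(\lambda)$ and \lemref{lem:czz_lem_one}, since $\theta \notin \tfrac{\pi}{2}\ZZ$ would contradict simulability of $\CZ+Z$.) Either way, this step is precisely what lets us avoid the genuinely hard alternative — directly exhibiting a post-selection gadget over $\S_{\CZ + Z}(U)$ whose normalized action is a $U$-conjugated $S$ gate — by instead reducing the nontrivial case to the Clifford case, where classical simulability is free.
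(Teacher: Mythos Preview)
Your proposal is correct and essentially matches the paper's proof: the forward direction is the trivial $Z = S^2$ observation, and for the converse the paper does exactly your parenthetical alternative (Euler decomposition, then \lemref{lem:czz_lem_one} and \lemref{lem:czz_lem_two} to force $\theta \in \tfrac{\pi}{2}\ZZ$, hence $R_x(\theta)$ is Clifford, and then \lemref{lem:czs_lem_one} plus Gottesman--Knill). Your primary route through \thmref{thm:cz+zclassification} is just a repackaging of the same ingredients, and your circularity check is accurate.
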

\begin{proof}
The forward direction is trivial, because every $U$-conjugated $\CZ + Z$ circuit is a $U$-conjugated $\CZ + S$ circuit. For the other direction, suppose efficient classical computers can simulate $U$-conjugated $\CZ + Z$ circuits to within multiplicative error $\epsilon < \sqrt{2} - 1$ and let $U = e^{i\alpha} R_z(\phi) R_x(\theta) R_z(\lambda)$ be the Euler decomposition of $U$. By \lemref{lem:czz_lem_one}, $U$-conjugated $\CZ + Z$ circuits are efficiently classically simulable to within multiplicative error $\epsilon$ iff $R_x(\theta)$-conjugated $\CZ + Z$ circuits are. But, by \lemref{lem:czz_lem_two}, efficient classical computers can simulate $R_x(\theta)$-conjugated $\CZ + Z$ circuits to within multiplicative error $\epsilon < \sqrt{2} - 1$ iff $\theta \in \frac{\pi}{2}\ZZ$. Therefore, $\theta \in \frac{\pi}{2}\ZZ$, so $R_x(\theta) \in \words{H,S}$. Consequently, every $R_x(\theta)$-conjugated $\CZ + S$ circuit is a Clifford circuit and hence is efficiently classically simulable. Thus, by \lemref{lem:czs_lem_one}, $U$-conjugated $\CZ + S$ circuits are also efficiently classically simulable.
\end{proof}


\bibliographystyle{alpha}
\bibliography{references}

\end{document}